\definecolor{redi}{RGB}{255,38,0}
\definecolor{redii}{RGB}{200,50,30}
\definecolor{yellowi}{RGB}{255,251,0}
\definecolor{bluei}{RGB}{0,150,255}
\definecolor{blueii}{RGB}{135,247,210}
\definecolor{blueiii}{RGB}{91,205,250}
\definecolor{blueiv}{RGB}{115,244,253}
\definecolor{bluev}{RGB}{1,58,215}
\definecolor{orangei}{RGB}{220,160, 20}
\definecolor{orangeii}{RGB}{240,90, 10}
\definecolor{yellowii}{RGB}{222,247,100}
\definecolor{greeni}{RGB}{85,102,0}
\definecolor{greenii}{RGB}{20,140,10}
\definecolor{navy}{RGB}{17, 10, 102}
\definecolor{brown}{RGB}{60, 40, 0}
\definecolor{oxford}{RGB}{0, 0, 100}
\definecolor{plum}{rgb}{0.36078, 0.20784, 0.4}
\definecolor{chameleon}{rgb}{0.30588, 0.60392, 0.023529}
\definecolor{cornflower}{rgb}{0.12549, 0.29020, 0.52941}
\definecolor{scarlet}{rgb}{0.8, 0, 0}
\definecolor{brick}{rgb}{0.64314, 0, 0}
\definecolor{sunrise}{rgb}{0.80784, 0.36078, 0}
\definecolor{lightblue}{rgb}{0.15,0.35,0.75}
\definecolor{carolina}{RGB}{153, 186, 221}
\definecolor{darkblue}{rgb}{0.05,0.25,0.65}
\definecolor{greenii}{RGB}{20,140,10}
\definecolor{darkblue}{rgb}{0.05,0.25,0.65}
\definecolor{darkyellow}{rgb}{.934,.934,0}
\newcolumntype{L}[1]{>{\raggedright\let\newline\\\arraybackslash\hspace{0pt}}m{#1}}
\newcolumntype{C}[1]{>{\centering\let\newline\\\arraybackslash\hspace{0pt}}m{#1}}
\newcolumntype{R}[1]{>{\raggedleft\let\newline\\\arraybackslash\hspace{0pt}}m{#1}}
\newcommand{\gt}{>}
\newcommand{\raisemath}[1]{\mathpalette{\raisem@th{#1}}}
\newcommand{\raisem@th}[3]{\raisebox{#1}{$#2#3$}}
\newif\if@sup
\newtoks\@sups
\def\append@sup#1{\edef\act{\noexpand\@sups={\the\@sups #1}}\act}%
\def\reset@sup{\@supfalse\@sups={}}%
\def\mk@scripts#1#2{\if #2/ \if@sup ^{\the\@sups}\fi \else%
  \ifx #1_ \if@sup ^{\the\@sups}\reset@sup \fi {}_{#2}%
  \else \append@sup#2 \@suptrue \fi%
  \expandafter\mk@scripts\fi}
\def\tensor#1#2{\reset@sup#1\mk@scripts#2_/}
\def\multiscripts#1#2#3{\reset@sup{}\mk@scripts#1_/#2%
  \reset@sup\mk@scripts#3_/}
\newbox\slashbox \setbox\slashbox=\hbox{$/$}
\def\itex@pslash#1{\setbox\@tempboxa=\hbox{$#1$}
  \@tempdima=0.5\wd\slashbox \advance\@tempdima 0.5\wd\@tempboxa
  \copy\slashbox \kern-\@tempdima \box\@tempboxa}
\def\slash{\protect\itex@pslash}
\def\clap#1{\hbox to 0pt{\hss#1\hss}}
\def\mathllap{\mathpalette\mathllapinternal}
\def\mathrlap{\mathpalette\mathrlapinternal}
\def\mathclap{\mathpalette\mathclapinternal}
\def\mathllapinternal#1#2{\llap{$\mathsurround=0pt#1{#2}$}}
\def\mathrlapinternal#1#2{\rlap{$\mathsurround=0pt#1{#2}$}}
\def\mathclapinternal#1#2{\clap{$\mathsurround=0pt#1{#2}$}}
\let\oldroot\root
\def\root#1#2{\oldroot #1 \of{#2}}
\renewcommand{\sqrt}[2][]{\oldroot #1 \of{#2}}
\DeclareSymbolFont{symbolsC}{U}{txsyc}{m}{n}
\DeclareSymbolFont{stmry}{U}{stmry}{m}{n}
\DeclareFontFamily{OMX}{MnSymbolE}{}
\DeclareSymbolFont{mnomx}{OMX}{MnSymbolE}{m}{n}
\DeclareFontShape{OMX}{MnSymbolE}{m}{n}{
    <-6>  MnSymbolE5
   <6-7>  MnSymbolE6
   <7-8>  MnSymbolE7
   <8-9>  MnSymbolE8
   <9-10> MnSymbolE9
  <10-12> MnSymbolE10
  <12->   MnSymbolE12}{}
\def\Decl@Mn@Delim#1#2#3#4{%
  \if\relax\noexpand#1%
    \let#1\undefined
  \fi
  \DeclareMathDelimiter{#1}{#2}{#3}{#4}{#3}{#4}}
\def\Decl@Mn@Open#1#2#3{\Decl@Mn@Delim{#1}{\mathopen}{#2}{#3}}
\def\Decl@Mn@Close#1#2#3{\Decl@Mn@Delim{#1}{\mathclose}{#2}{#3}}
\Decl@Mn@Open{\llangle}{mnomx}{'164}
\Decl@Mn@Close{\rrangle}{mnomx}{'171}
\Decl@Mn@Open{\lmoustache}{mnomx}{'245}
\Decl@Mn@Close{\rmoustache}{mnomx}{'244}
\DeclareRobustCommand\widecheck[1]{{\mathpalette\@widecheck{#1}}}
\def\@widecheck#1#2{%
    \setbox\z@\hbox{\m@th$#1#2$}%
    \setbox\tw@\hbox{\m@th$#1%
       \widehat{%
          \vrule\@width\z@\@height\ht\z@
          \vrule\@height\z@\@width\wd\z@}$}%
    \dp\tw@-\ht\z@
    \@tempdima\ht\z@ \advance\@tempdima2\ht\tw@ \divide\@tempdima\thr@@
    \setbox\tw@\hbox{%
       \raise\@tempdima\hbox{\scalebox{1}[-1]{\lower\@tempdima\box
\tw@}}}%
    {\ooalign{\box\tw@ \cr \box\z@}}}
\def\udots{\mathinner{\mkern2mu\raise\p@\hbox{.}
\mkern2mu\raise4\p@\hbox{.}\mkern1mu
\raise7\p@\vbox{\kern7\p@\hbox{.}}\mkern1mu}}
\def\1{{\bf 1}}
\def\<{\langle}
\def\>{\rangle}
\newcommand{\Z}{\ensuremath{\mathbb Z}}
\renewcommand{\(}{\begin{equation}}
\renewcommand{\)}{\end{equation}}
\newcommand{\bea}{\begin{eqnarray*}}
\newcommand{\eea}{\end{eqnarray*}}
\theoremstyle{italics}
\newtheorem{theorem}{Theorem}
\newtheorem{lemma}[theorem]{Lemma}
\newtheorem{prop}[theorem]{Proposition}
\theoremstyle{definition}
\newtheorem{defn}[theorem]{Definition}
\newtheorem{notation}[theorem]{Notation}
\newtheorem{example}[theorem]{Example}
\newtheorem{remark}[theorem]{Remark}
\newtheorem{note[theorem]}{Note}
\newcommand{\underoverset}[3]{\underset{#1}{\overset{#2}{#3}}}
\definecolor{darkblue}{rgb}{0.05,0.25,0.65}
\definecolor{darkgreen}{rgb}{0.00,0.85,0.1}
\definecolor{plum}{rgb}{0.36078, 0.20784, 0.4}
\begin{document}

\title{Super-exceptional embedding construction of the heterotic M5:
\\ Emergence of $\mathrm{SU}(2)$-flavor sector }

\author{Domenico Fiorenza, \; Hisham Sati, \; Urs Schreiber}
%

\maketitle

\begin{abstract}
  A new super-exceptional embedding construction of the heterotic M5-brane's
  sigma-model was recently shown to produce, at leading order in the
  super-exceptional vielbein components, the super-Nambu-Goto
  (Green-Schwarz-type) Lagrangian for the embedding fields plus the 
  Perry-Schwarz Lagrangian for the free abelian self-dual higher gauge field.
  Beyond that, further fields and interactions emerge in the model, 
  arising from
  probe M2- and probe M5-brane wrapping modes.
  Here we classify the full super-exceptional field content
  and work out some of its characteristic interactions from the
  rich super-exceptional Lagrangian of the model.
  We show that $\mathrm{SU}(2) \times U(1)$-valued scalar and
  vector fields emerge from probe M2- and M5-branes
  wrapping the vanishing cycle in the $\mathrm{A}_1$-type singularity;
  together with a pair of spinor fields
  of $U(1)$-hypercharge $\pm 1$ and
  each transforming as $\mathrm{SU}(2)$ iso-doublets.
  Then we highlight the appearance of a WZW-type term in the
  super-exceptional PS-Lagrangian and find that on the
  electromagnetic field it gives the first-order non-linear DBI-correction, while
  on the iso-vector scalar field it has the form characteristic of
  the coupling of vector mesons to pions via the Skyrme baryon current.
  We discuss how this is suggestive of a form of
  $\mathrm{SU}(2)$-flavor chiral hadrodynamics emerging on the
  single ($N=1$) M5 brane, different from, but akin to, holographic large-$N$ QCD.

\end{abstract}

\medskip

\tableofcontents

\medskip

\newpage

\section{Introduction}
\label{Introduction}

\noindent {\bf The general open problem of coincident M5-branes.}
It is widely appreciated that the problem of identifying/formulating
the expected non-abelian higher gauge theory
(``non-abelian gerbe theory'' \cite[p. 6, 15]{Witten02})
on coincident M5-branes remains open
(e.g., \cite[p. 77]{Moore12}\cite[p. 49]{Lambert12}\cite[6.3]{Lambert19}).
 This is a key aspect of the wider open problem
(e.g., \cite[6]{Duff96}\cite[p. 2]{HoweLambertWest97}\cite[p. 6]{Duff98}\cite[p. 2]{NicolaiHelling98}\cite[p. 330]{Duff99}\cite[12]{Moore14}\cite[p. 2]{ChesterPerlmutter18}\cite{Witten19}\cite{Duff19}) of formulating M-theory \cite{Duff98}\cite{Duff99} itself, the non-perturbative completion of
the perturbation theory formerly known as string theory \cite{Duff96}.

M5-branes compactified to four dimensions
with probe flavor D8-branes plausibly yield, once made rigorous,
an analytic first-principles working theory of hadrons
(e.g., \cite{Me88}\cite{MachleidtEntem11}): the Witten-Sakai-Sugimoto model of holographic QCD
\cite{Witten98}\cite{KarchKatz02}\cite{SakaiSugimoto04}\cite{SakaiSugimoto04}
(reviewed in \cite{RhoEtAl16}) or its variant with flavor D4-branes
\cite{VanRaamsdonkWhyte10}\cite{Seki10}.
Hence the solution of non-abelian M5-branes in M-theory
could solve the {\it Confinement Problem} (see \cite{Greensite11})
of quantum chromodynamics (QCD), famously one of the Millennium Problems
\cite{CMI}\cite{JaffeWitten}.

\vspace{.1cm}

\noindent {\bf The case of single but heterotic M5-branes.}
Remarkably, even the case of a single M5-brane,
while commonly thought to have been understood long ago
\cite{PerrySchwarz97}\cite{Schwarz97}\cite{PastiSorokinTonin97}\cite{BLNPST97},
is riddled with subtleties, indicating foundational issues not fully understood yet.
One of these puzzlements is (or was) that the modern ``super-embedding construction''
of $\kappa$-symmetric Lagrangians defining single super $p$-branes
\cite{BPSTV95}\cite{BSV95}\cite{Sorokin99} \emph{fails} for M5-branes
(works only for EOMs).
This is noteworthy because there is a boundary case
of non-abelian gauge theories on M5-branes
which nominally has to do with \emph{single} branes:
the single \emph{heterotic} M5-brane. This is the lift to
heterotic M-theory (Ho{\v r}ava-Witten theory, see \cite{HoravaWitten95}\cite{Witten96}\cite{HoravaWitten96}\cite{Ovrut02}\cite{Ovrut18})
of the NS5-brane of heterotic string theory \cite{Lechner10},
and dually of the D4-brane of Type I' string theory. These branes are expected
\cite{Witten95}\cite{GimonPolchinski96}\cite{AspinwallGross96}\cite{AspinwallMorrison97}
to carry a non-abelian gauge field, specifically
with gauge group $\mathrm{Sp}(1) \simeq \mathrm{SU}(2)$,
understood as being the special case of $N$ coincident
M5-branes for $N=2$, but with the two branes identified as mirror pairs of
an orientifold $\mathbb{Z}_2$-action. The orbi-orientifold singularity of this action,
regarded as the far-horizon geometry of a solitonic brane \cite[3]{AFFHS99},
is known as the $\tfrac{1}{2}\mathrm{NS}5$ in string theory
\cite[6]{GKST01}\cite[6]{DHTV14}\cite[p. 18]{ApruzziFazzi17},
or dually as the
$\tfrac{1}{2}${M}5 in M-theory \cite[Ex. 2.2.7]{HSS18}\cite[4]{FSS19d}\cite[4.1]{SS19a}:

\vspace{-.45cm}

\begin{equation}
 \label{TheBranes}
\raisebox{-275pt}{
 \scalebox{.9}{
 \small
 \begin{tikzpicture}

  \begin{scope}[shift={(0,1.3)}]

  \draw (-2.1,1)
    node
    {
      $
      \underset{
        \phantom{
        \overbrace{
          {\phantom{-----------}}
        }
        }
      }{
        \xymatrix{
          \;\;\;\;\;
          \mathllap{
            \mbox{
              \tiny
              \color{darkblue} \bf
              \begin{tabular}{c}
                bosonic part of
                \\
                super-Minkowski
                \\
               spacetime
              \end{tabular}
            }
            \;
          }
          \big(
            \mathbb{R}^{10,1\vert\mathbf{32}}_{\mathrlap{\phantom{\mathrm{ex}_s}}}
          \big)_{\mathrm{bos}}
          \ar@(ul,ur)^-{
            \mathllap{
              \mbox{
                \tiny
                \color{darkblue} \bf
                \begin{tabular}{c}
                  $\tfrac{1}{2}\mathrm{M5}$--
                  \\
                  orientifolding
                \end{tabular}
              }
              \;
            }
            \mathbb{Z}_2^{\mathrm{HW}}
              \times
            \mathbb{Z}_2^{A}
          }
        }
        =
      }
      $
    };

  \draw (.5,1)
    node
    {
      $
      \underset{
        \overbrace{
          {\phantom{-----}}
        }
      }{
        \xymatrix{
          \mathbb{R}^{1,1}
          \ar@[white]@(ul,ur)^{\phantom{G}}
        }
      }
      $
    };

  \draw (1.5,.85)
    node
    { $\times$ };

  \draw (5.25,.85)
    node
    { $\times$ };

  \draw (2.5,1)
    node
    {
      $
      \underset{
        \overbrace{
          {\phantom{-----}}
        }
      }{
        \xymatrix{
          \mathbb{C}^{\mathrlap{\phantom{1}}}
          \ar@[white]@(ul,ur)^-{
            \phantom{
            1_{\phantom{\mathrlap{ADE}}}
            \mathrlap{\phantom{\langle \mathbf{\Gamma}_{1234}\rangle}}
            }
          }
        }
      }
      $
    };

  \draw (4.5,1)
    node
    {
      $
      \underset{
        \overbrace{
          {\phantom{-----}}
        }
      }{
        \xymatrix{
          \mathbb{C}^{\mathrlap{\phantom{1}}}
          \mathrlap{\mathrm{j}}
          \ar@(ul,ur)^-{
            \mathrm{U}(1)_V
            \phantom{
            \mathrlap{\phantom{\langle \mathbf{\Gamma}_{1234}\rangle}}
            }
          }
        }
      }
      $
    };

  \draw (3.5,.85)
    node
    { $\times$ };

  \draw (8.5,1)
    node
    {
      $
      \underset{
        \overbrace{
          {\phantom{-----------}}
        }
      }{
        \xymatrix{
          \mathbb{H}^{\mathrlap{\phantom{1}}}
          \mathrlap{\ell}
          \ar@(ul,ur)^-{
            \mathbb{Z}_2^A
            \mathrlap{\;\subset \mathrm{SU}(2)_L  \subset \mathrm{SU}(2)_L \times \mathrm{SU}(2)_R}
          }
        }
      }
      $
    };

  \draw (8.5,2.3)
    node
    {
      \mbox{
        \tiny
        \color{darkblue} \bf
        \begin{tabular}{c}
          Atiyah--Witten
          \\
          orbi-folding
        \end{tabular}
      }
    };

  \draw (6+1+.5,.85)
    node
    { $\times$ };

  \draw (6,1)
    node
    {
      $
      \underset{
        \overbrace{
          {\phantom{---}}
        }
      }{
        \xymatrix{
          \mathbb{R}^{\mathrlap{\phantom{1}}}
          \ar@(ul,ur)^-{
            \mathbb{Z}_2^{\mathrm{HW}}
          }
        }
      }
      $
    };

  \draw (6,2.3)
    node
    {
      \mbox{
        \tiny
        \color{darkblue} \bf
        \begin{tabular}{c}
          Ho{\v r}ava--Witten
          \\
          orienti-folding
        \end{tabular}
      }
    };

  \draw (-1,0)
    node
    {
      \footnotesize
      $
        \mathllap{
          \mbox{
            \tiny
            \color{darkblue} \bf
            M-spacetime indices
          }
          \;\;
        }
        a =
      $
    };

  \draw (-1,-.55)
    node
    {
      \footnotesize
      $
        \mathllap{
          \mbox{
            \tiny
            \color{darkblue} \bf
            4d spacetime indices
          }
          \;\;
        }
        \!\mu =
      $
    };

  \draw (-1,-.55-.55-.275)
    node
    {
      \footnotesize
      $
        \mathllap{
          \mbox{
            \tiny
            \color{darkblue} \bf
            internal indices
          }
          \;\;
          \phantom{
            I\, =
          }
        }
      $
    };

  \draw (-1.15,-.55-.55)
    node
    {
      \footnotesize
      \;\;
      $
        z\, =
      $
    };

  \draw (-1.15,-.55-.55-.55)
    node
    {
      \footnotesize
      \;\;
      $
        I\, =
      $
    };

  \draw (0,0)
    node
    {
      \footnotesize
      $0$
    };

  \draw (1,0)
    node
    {
      \footnotesize
      $1$
    };

  \draw (2,0)
    node
    {
      \footnotesize
      $2$
    };

  \draw (3,0)
    node
    {
      \footnotesize
      $3$
    };

  \draw (4,0)
    node
    {
      \footnotesize
      $4$
    };

  \draw (5,0)
    node
    {
      \footnotesize
      $5$
    };

  \draw (6,0)
    node
    {
      \footnotesize
      $5'$
    };

  \draw (7,0)
    node
    {
      \footnotesize
      $6$
    };

  \draw (8,0)
    node
    {
      \footnotesize
      $7$
    };

  \draw (9,0)
    node
    {
      \footnotesize
      $8$
    };

  \draw (10,0)
    node
    {
      \footnotesize
      $9$
    };

  \draw (0,-.55)
    node
    {
      \footnotesize
      $0$
    };

  \draw (1,-.55)
    node
    {
      \footnotesize
      $1$
    };

  \draw (2,-.55)
    node
    {
      \footnotesize
      $2$
    };

  \draw (3,-.55)
    node
    {
      \footnotesize
      $3$
    };

  \draw (4,-.55-.55)
    node
    {
      \footnotesize
      $4$
    };

  \draw (5,-.55-.55)
    node
    {
      \footnotesize
      $5$
    };

  \draw (8,-.55-.55-.55)
    node
    {
      \footnotesize
      $7$
    };

  \draw (9,-.55-.55-.55)
    node
    {
      \footnotesize
      $8$
    };

  \draw (10,-.55-.55-.55)
    node
    {
      \footnotesize
      $9$
    };

  \end{scope}

  \draw (-3,-1.7)
    node
    {
      \mbox{
        \tiny
        \color{darkblue} \bf
        \begin{tabular}{c}
          fixed-point strata {\color{black}/}
          \\
          black branes
        \end{tabular}
      }
    };

  \draw (-3,-3.4)
    node
    {
      \mbox{
        \tiny
        \color{darkblue} \bf
        \begin{tabular}{c}
          sigma-model
          \\
          flavor brane
        \end{tabular}
      }
    };

  \draw (-1,-1)
    node
    {
      $\mathrm{MO}9$
    };

  \draw[darkblue, line width=10pt] (0-.45,-1) to (5+.45,-1);
  \draw[darkblue, line width=10pt] (7-.45,-1) to (10+.45,-1);

  \draw (-1,-1-.55)
    node
    {
      $\mathrm{MK}6$
    };

  \draw[darkyellow, line width=10pt] (0-.45,-1-.55) to (6+.45,-1-.55);

  \draw (-1,-1-.55-.55-.2)
    node
    {
      $\tfrac{1}{2}\mathrm{M}5$
    };

  \draw[gray, line width=10pt, opacity=.4]
    (0-.45,-1-.55-.55-.2) to (4+.45,-1-.55-.55-.2);
  \draw[gray, line width=10pt, opacity=.4]
    (0-.45,-1-.55-.55-.2) to (3+.45,-1-.55-.55-.2);
  \draw[greenii, line width=10pt,opacity=.5]
    (0-.45,-1-.55-.55-.2) to (5+.45,-1-.55-.55-.2);

  \draw (-1,-1-.55-.55-.2-1)
    node
    {
      $\mathrm{M}5_{\mathrm{flav}}$
    };

  \draw[gray, line width=10pt]
    (0-.45,-1-.55-.55-.2-1) to (4+.45,-1-.55-.55-.2-1);
  \draw[gray, line width=10pt]
    (4-.45+2,-1-.55-.55-.2-1) to (4+.45+2,-1-.55-.55-.2-1);

  \draw (7,-5.3)
    node
    {
\scalebox{.9}{

\begin{tikzpicture}

  \draw (-1.3,.5) node
    {
      $\mathbb{R}/\mathbb{Z}_2^{\mathrm{HW}}$
    };

  \draw (4.5,.8) node
    {
      $\mathbb{H}/\mathbb{Z}_2^A$
    };

  \shadedraw[draw opacity=0, top color=darkyellow, bottom color=yellow]
    (-2-1,.2) rectangle (0,-.2);
  \shadedraw[draw opacity=.0, fill opacity=.4, top color=gray, bottom color=gray]
    (-.4,.2) rectangle (0,-.2);

  \shadedraw[draw opacity=0, top color=darkyellow, bottom color=yellow]
    (-2.3-1,.2) rectangle (-2.15-1,-.2);
  \shadedraw[draw opacity=0, top color=darkyellow, bottom color=yellow]
    (-2.45-1,.2) rectangle (-2.6-1,-.2);
  \shadedraw[draw opacity=0, top color=darkyellow, bottom color=yellow]
    (-2.75-1,.2) rectangle (-2.9-1,-.2);

 \shadedraw[draw opacity=0, greenii, top color=greenii, bottom color=green] (0,0) circle (.2);

 \draw (-.1,-.55)
   node
   {
     \scalebox{.9}{
     \color{darkblue} \bf
     \begin{tabular}{c}
       $\tfrac{1}{2}\mathrm{M5}$
     \end{tabular}
     }
   };

 \draw (-4,-.7)
   node
   {
     \begin{rotate}{0}
       \scalebox{.9}{
       \color{darkblue} \bf
       \begin{tabular}{c}
         Ho{\v r}ava-Witten bulk
       \end{tabular}
       }
     \end{rotate}
   };

 \draw (-.1+.3,-.55-.3)
   node
   {
     \begin{rotate}{-45}
       \scalebox{.9}{
       \color{darkblue} \bf
       \begin{tabular}{c}
         transversal cone
       \end{tabular}
       }
     \end{rotate}
   };

 \shadedraw[draw opacity=0, top color=darkblue, bottom color=cyan]
   (0,0) -- (3,3) .. controls (2,2) and (2,-2) ..  (3,-3) -- (0,0);

 \draw[draw opacity=0, top color=white, bottom color=darkblue]
   (3,3)
     .. controls (2,2) and (2,-2) ..  (3,-3)
     .. controls (4,-3.9) and (4,+3.9) ..  (3,3);

\end{tikzpicture}

}
    };

 \end{tikzpicture}
 }
 }
\end{equation}

\newpage

\noindent {\bf Expected $\mathrm{SU}(2_f)$ flavor theory on heterotic M5.}
In fact, this $\mathrm{SU}(2)$ gauge symmetry expected on single heterotic M5-branes
came to be understood as being a \emph{flavor symmetry}
(e.g. \cite[4.2]{GHKKLV18}\cite[2.3.1]{Ohmori18}) just like the ``hidden local
symmetry'' \cite{Sakurai60}\cite{BKY88} (reviewed in \cite[6]{MH16}) of chiral
perturbation theory for confined hadrodynamics (reviewed in \cite{Me88}\cite{Scherer03}\cite{BM07}\cite{MachleidtEntem11}),
instead of a color symmetry as in the quark model of quantum chromodynamics.
This is informally argued in the literature by
\begin{enumerate}[{\bf (i)}]
\vspace{-2mm}
\item  invoking
(see \cite[2.3]{CabreraHananySperling19}) M/IIA duality along the
$S^1 \subset \mathrm{Sp}(1)_L$-action in \eqref{TheBranes} for identifying the
$\tfrac{1}{2}\mathrm{M}5$-brane configuration with
a $\mathrm{NS}5 \!\parallel\! \mathrm{D}6 \!\perp\! \mathrm{O}8$-brane
configuration in Type I' string theory, and then

\vspace{-3mm}
\item appealing (\cite{HananyZaffaroni98}\cite{BrunnerKarch98}, reviewed in
\cite[2.1]{ApruzziFazzi17}) to open string theory for identifying the
$\mathrm{D}6$-branes emanating from the $\mathrm{NS}5$,
corresponding to the $\mathrm{MK}6$-brane in \eqref{TheBranes},
with flavor branes, due to their semi-infinite transverse extension.
\end{enumerate}

\vspace{-2mm}
\noindent This suggests that single but heterotic M5-branes are not just a toy example for
the more general open problem of non-abelian gauge enhancement in M-theory,
but, when viewed through the lens of holographic hadrodynamics,
possibly the core example for making contact with phenomenology.
Hence the first open problem to address is:

\medskip

\noindent {\bf The specific open problem.}
{\it  A derivation/formulation of the
(higher) $\mathrm{SU}(2)$-flavor gauge theory
emerging on single heterotic M5-branes.}

\medskip
In \cite{FSS20a} we had discussed this problem in the ``topological sector'', i.e.,
focusing on gauge- and gravitational instanton sectors and their topological
global anomaly cancellation conditions, while temporarily putting to the side local
field/differential form data. There we had proven that, under the hypothesis that
the M-theory C-field is charge-quantized in J-twisted Cohomotopy cohomology theory
(``Hypothesis H'' \cite{Sati13}\cite{FSS19b}\cite{FSS19c}, review in
\cite{Sc20}), a topological
$\mathrm{Sp}(1) \simeq \mathrm{SU}(2)$ gauge field sector indeed
emerges in the sigma-model for single M5-branes.

\vspace{1mm}
Here we set out to discover the local differential form data
to complete this topological picture of the
non-abelian heterotic M5-brane theory,
by identifying its  {\it local field content} and its {\it couplings}.

\medskip

\noindent {\bf Solution via super-exceptional geometry.}
In \cite{FSS19d} we had demonstrated that the failure of the
super-embedding approach
to produce the M5-brane Lagrangian is resolved, for heterotic M5-branes,
by enhancing to \emph{super-exceptional} embeddings. This means
enhancing the $\mathcal{N}=1$, $D=11$-dimensional target super-spacetime
$\mathbb{R}^{10,1\vert \mathbf{32}}$ to  the \emph{super-exceptional spacetime}
$\mathbb{R}^{10,1\vert \mathbf{32}}_{\mathrm{ex}_s}$
\cite{Bandos17}\cite{FSS18}\cite{SS18}\cite[3]{FSS19d}
(recalled as Def. \ref{SuperExceptionalSpacetimeAsLieAlgebra} below).
This has the virtue of unifying graviton and gravitino modes with M2- and M5-brane wrapping modes and
with a pre-gaugino field \cite[Rem. 5.4]{FSS19d}, or rather to its heterotic/type-I' version
$\mathbb{R}^{9,1\vert \mathbf{16}}_{\mathrm{ex}_s}$ \cite[4]{FSS19d}:

\vspace{-.6cm}

\begin{equation}
\label{SuperExceptionalSpacetimesAsRepresentation}
\small
\hspace{1.8cm}
\raisebox{-80pt}{
\begin{tikzpicture}

 \draw node (0,0)
 {
  \xymatrix@R=-5pt@C=1pt{
  &
    {\phantom{AAA}}
  &
  \overset{
    \mbox{
      \tiny
      \color{darkblue} \bf
      \begin{tabular}{c}
        graviton
        \\
        \phantom{a}
      \end{tabular}
    }
  }{
    \mathbb{R}^{10,1}
  }
  &
    \times
  &
  \overset{
    \mathclap{
    \mbox{
      \tiny
      \color{darkblue} \bf
      \begin{tabular}{c}
      gravitino
      \\
      \phantom{a}
      \end{tabular}
    }
    }
  }{
    \mathbb{R}^{0\vert \mathbf{32}}
  }
  &
    \times
  &
  \overset{
    \mathclap{
    \mbox{
      \tiny
      \color{darkblue} \bf
      \begin{tabular}{c}
        probe M2-brane
        \\
        wrapping modes
        \\
        \phantom{a}
      \end{tabular}
    }
    }
  }{
    \wedge^2(\mathbb{R}^{10,1})^\ast
  }
  &
    \times
  &
  \overset{
    \mathclap{
    \mbox{
      \tiny
      \color{darkblue} \bf
      \begin{tabular}{c}
        probe M5-brane
        \\
        wrapping modes
        \\
        \phantom{a}
      \end{tabular}
    }
    }
  }{
    \wedge^5(\mathbb{R}^{10,1})^\ast
  }
  &
    \times
  &
  \overset{
    \mbox{
      \tiny
      \color{darkblue} \bf
      \begin{tabular}{c}
      \end{tabular}
    }
  }{
    \mathbb{R}^{0\vert \mathbf{32}}
  }
  \\
  \mathllap{
    \mbox{
      \tiny
      \color{darkblue} \bf
      \begin{tabular}{c}
        Super-exceptional
        \\
        M-theory
        \\
        Minkowski spacetime
      \end{tabular}
    }
    \;
  }
  \mathbb{R}^{10,1\vert \mathbf{32}}_{\mathrm{ex}_s}
  \ar@{}[drr]|-{
    \begin{rotate}{-30}
      $
        \mathclap{
        \underset{
          \!
          \mbox{
            \tiny
            \color{darkblue}
            \bf
            \begin{tabular}{c}
              \phantom{a}
              \\
              as a $\mathrm{Pin}^+(10,1)$-
              \\
              representation
            \end{tabular}
          }
        }{
          \mathclap{\simeq}
        }
        }
      $
    \end{rotate}
  }
  \ar@{}[urr]|-{
    \begin{rotate}{30}
      $
        \mathclap{
        \overset{
          \mbox{
            \tiny
            \color{darkblue} \bf
            \begin{tabular}{c}
              as a super-
              \\
              manifold
              \\
              $\phantom{a}$
            \end{tabular}
          }
        }{
          \mathclap{\simeq}
        }
        }
      $
    \end{rotate}
  }
  &
  &
  \\
  &
    {\phantom{AAA}}
  &
  \underset{
    \mbox{
      \tiny
      \color{darkblue} \bf
    }
  }{
    \mathbf{11}
  }
  &
    \oplus
  &
  \underset{
    \mbox{
      \tiny
      \color{darkblue} \bf
    }
  }{
    \mathbf{32}
  }
  &
    \oplus
  &
  \underset{
    \mbox{
      \tiny
      \color{darkblue}
      \bf
      \begin{tabular}{c}
        \phantom{a}
        \\
        probe M9-brane
        \\
        wrapping modes
      \end{tabular}
    }
  }{
    \wedge^9 \mathbf{11}
  }
  &
    \oplus
  &
  \underset{
    \mbox{
      \tiny
      \color{darkblue} \bf
    }
  }{
    \wedge^5 \mathbf{11}
  }
  &
    \oplus
  &
  \underset{
    \mbox{
      \tiny
      \color{darkblue} \bf
    }
  }{
    \mathbf{32}
  }
  \\
  &
  &
  \\
  {\phantom{A}}
  \\
  \mathllap{
    \mbox{
      \tiny
      \color{darkblue} \bf
      \begin{tabular}{c}
        Super-exceptional
        \\
        heterotic/type-I'
        \\
        Minkowski spacetime
      \end{tabular}
    }
    \;
  }
  \mathbb{R}^{9,1\vert \mathbf{16} }_{\mathrm{ex}_s}
  \ar@{^{(}->}[uuuu]
  &
  \underset{
    \mathclap{
    \mbox{
      \tiny
      \color{darkblue}
      \bf
      \begin{tabular}{c}
        $\phantom{a}$
        \\
        as a $\mathrm{Pin}^+(9,1)$-
        \\
        representation
      \end{tabular}
    }
    }
  }{
    \simeq
  }
  &
  \;
  \mathbf{10}
  \;\oplus\;
  \underset{
    \mathclap{
    \mbox{
      \tiny
      \color{darkblue}
      \bf
      \begin{tabular}{c}
        \phantom{a}
        \\
        dilaton
      \end{tabular}
    }
    }
  }{
    \mathbf{1}
  }
  &
  \oplus
  &
  \mathbf{16}
  &
  \oplus
  &
  \;\;
  \underset{
    \mathclap{
    \mbox{
      \tiny
      \color{darkblue}
      \bf
      \begin{tabular}{c}
        \phantom{a}
        \\
        probe $\mathrm{F}1_{\Omega}$-brane
        \\
        wrapping modes
      \end{tabular}
    }
    \;\;\;
    }
  }{
    \mathbf{10}_{\scalebox{.5}{$\mathrm{sgn}$}}
  }
  \;\;\;
  \oplus
  \;\;\;
  \underset{
     \mathclap{
     \mbox{
       \tiny
       \color{darkblue}
       \bf
       \begin{tabular}{c}
         \phantom{a}
         \\
         probe $\mathrm{D}8$-brane
         \\
         wrapping modes
       \end{tabular}
     }
     }
  }{
    \wedge^8 \mathbf{10}_{\scalebox{.5}{$\phantom{\mathrm{sgn}}$}}
  }
  &
  \;\oplus\;
  &
  \underset{
    \mathclap{
    \mbox{
      \tiny
      \color{darkblue}
      \bf
      \begin{tabular}{c}
        \phantom{a}
        \\
        probe $\mathrm{NS}5$-brane
        \\
        wrapping modes
      \end{tabular}
    }
    }
  }{
    \wedge^5 \mathbf{10}_{\scalebox{.5}{$\phantom{\mathrm{sgn}}$}}
  }
  \;\;\;
  \oplus
  \;\;\;
  \underset{
    \mathclap{
    \;\;
    \mbox{
      \tiny
      \color{darkblue}
      \bf
      \begin{tabular}{c}
        \phantom{a}
        \\
        probe $\mathrm{D}4$-brane
        \\
        wrapping modes
      \end{tabular}
    }
    }
  }{
    \wedge^4 \mathbf{10}_{\scalebox{.5}{$\phantom{\mathrm{sgn}}$}}
  }
  &
  \;\;
  \oplus
  &
  \underset{
    \mbox{
      \tiny
      \color{darkblue}
      \bf
      \begin{tabular}{c}
        \phantom{s}
        \\
        gaugino
      \end{tabular}
    }
  }{
    \mathbf{16}_{\scalebox{.5}{$\phantom{\mathrm{sgn}}$}}
  }
  }
 };

 \draw[draw=green, fill=green, opacity=.05] (-1.3,2.1) rectangle (6.9,-2.1);

 \draw[draw=green, fill=green, opacity=.05] (-2.15,2.1) rectangle (-4.8,-2.1);

\end{tikzpicture}
}
\end{equation}

\vspace{-2mm}

The dual $\mathrm{M9}$-brane wrapping modes in the second line,
anticipated in \cite[p. 8-9]{Hull98}, follow by
rigorous analysis \cite[4.26]{FSS18}
(see Prop. \ref{SpinRepresentationsOfSuperExceptional} and Remark \ref{Rem-hull} below)
and lead to
$\mathrm{D}4$/$\mathrm{D}8$-brane modes in 10d, as shown in the last line.
The underlying super-symmetry algebra of
this super-exceptional spacetime $\mathbb{R}^{10,1\vert \mathbf{32}}_{\mathrm{ex}_s}$
is the ``hidden supergroup of $D=11$ supergravity''
\cite{DAuriaFre82}\cite{BAIPV04}\cite{ADR16},
whose role or purpose had previously remained elusive.
We may understand it \cite[Rem. 3.9]{FSS19d}\cite[4.6]{FSS18}\cite{SS18}
as that supermanifold whose real cohomology
accommodates that of the classifying 2-stack of the M5-brane
sigma-model \cite{FSS15} under Hypothesis H \cite{FSS19c}.

\medskip

\noindent {\bf Result -- Emergent chiral $\mathrm{SU}(2_f)$-theory on the heterotic M5. }
As a consequence of the super-exceptional enhancement
\eqref{SuperExceptionalSpacetimesAsRepresentation} of target spacetime,
additional worldvolume fields emerge also on the
single heterotic M5-brane locus \eqref{TheBranes},
originating in M2- and M5-brane wrapping modes.
Here we classify this emergent super-exceptional field content
by computing the representations of the residual group actions (see \cref{ElementsOfRepresentation Theory})
on the super-exceptional vielbein fields after passage to the
super-exceptional $\mathrm{MK}6$-locus (Def. \ref{TheSuperExceptionalMK6} below),
reduced to 4d:
\begin{equation}
  \label{RepresentationTheoreticQuestion}
  \hspace{-.8cm}
  \footnotesize
  \raisebox{50pt}{
  \xymatrix@R=-4pt@C=15pt{
    &
    \mathrm{Spin}(10,1)
    \ar@{<-^{)}}[rr]^-{ \iota }
    &&
     \;\;\;\;\;\;
     \mathrm{Spin}(6,1)
     \,
     \times
     \;\;\;\;\;\;\;\;\;
       \mathrm{Spin}(4)
     \;\;\;\;\;\;\;\;\;
    \ar@{}@<30pt>[dd]|-{
      \begin{rotate}{-90}
        $\mathclap{\simeq}$
      \end{rotate}
    }
    \ar@{<-^{)}}[rr]
    &
    &
    \;\;\;\;\;\;\;\;\;\;
      \mathrm{Spin}(5,1)
    \;\;\;\;\;\;\;\;
    \times
    \mathrm{SU}(2)_R
    \\
    {\phantom{a}}
    \\
    \mbox{
      \tiny
      \color{darkblue} \bf
      \begin{tabular}{c}
        Symmetry
        \\
        groups
      \end{tabular}
    }
    &&&
    \;\;\;\;\;\;\;\;\;\;
    \mathrm{Spin}(6,1)
      \times
    \mathrm{SU}(2)_L
      \times
    \mathrm{SU}(2)_R
    \ar@{<-^{)}}[rr]
    &&
    \;
    \mathrm{Spin}(3,1) \times \mathrm{Spin}(2)\times \mathrm{SU}(2)_R
    \ar@{}@<0pt>[dd]|-{
      \begin{rotate}{-90}
        $\mathclap{\simeq}$
      \end{rotate}
    }
    \ar@<+14pt>@{^{(}->}[uu]
    \\
    {\phantom{A}}
    \\
    & & & & &
    \mathrm{Spin}(3,1) \times \;\mathrm{U}(1)_V\;\;\; \times \mathrm{SU}(2)_R
    \\
    {\phantom{a}}
    \\
    \mbox{
      \tiny
      \color{darkblue} \bf
      \begin{tabular}{c}
        Representation
        \\
        rings
      \end{tabular}
    }
    &
        \!\!\!\!    \!\!\!\!     \!\!\!\!     \!\!\!\!
    \mathrm{Rep}_{\mathbb{R}}\big(\mathrm{Spin}(10,1)\big)
    \ar[rr]^-{
      \mbox{\color{darkblue} \bf \tiny pass to}
    }_-{
      \mbox{
      \tiny
      \color{darkblue} \bf
      \begin{tabular}{c}
        fixed representation
        \\
        of $\bf \mathbb{Z}_2^A \subset \mathrm{SU}(2)_L$
      \end{tabular}
    }}
    &&
    \mathrm{Rep}_{\mathbb{R}}
    \big(
      \mathrm{Spin}(6,1) \times \mathrm{SU}(2)_R
    \big)
    \ar[rr]^-{
      \mbox{\color{darkblue} \bf \tiny pass to}
    }_-{
      \mbox{
      \tiny
      \color{darkblue} \bf
      \begin{tabular}{c}
        restricted
        \\
        representation
      \end{tabular}
    }
    }
    &&
    \mathrm{Rep}_{\mathbb{R}}
    \big(
      \mathrm{Spin}(3,1)
        \times
      \mathrm{U}(1)_V
        \times
      \mathrm{SU}(2)_R
    \big)
    \\
     \mbox{
      \tiny
      \color{darkblue} \bf
      \begin{tabular}{c}
        Super-exceptional
        \\
        spacetimes
      \end{tabular}
    }
    &
    \!\!\!\!
   \underset{
      \mathclap{
      \mbox{
        \tiny
        \color{darkblue} \bf
        \begin{tabular}{c}
          super-exceptional
          \\
          M-theory spacetime
        \end{tabular}
      }
      }
    }{
      \mathbb{R}^{10,1\vert \mathbf{32}}_{\mathrm{ex}_s}
    }
    \ar@{|->}[rr]_{
      \mbox{
        \tiny
        \color{darkblue} \bf
        \begin{tabular}{c}
          localize onto brane
        \end{tabular}
      }
    }
    &&
    \underset{
      \mathclap{
      \mbox{
        \tiny
        \color{darkblue} \bf
        \begin{tabular}{c}
          super-exceptional
          \\
          $\mathrm{MK}6$-locus
        \end{tabular}
      }
      }
    }{
      \big(
        \mathbb{R}^{10,1\vert \mathbf{32}}_{\mathrm{ex}_s}
      \big)^{\mathbb{Z}_2^A}
    }
    \ar@{|->}[rr]_{
      \mbox{
        \tiny
        \color{darkblue} \bf
        \begin{tabular}{c}
          dimensionally reduce to 4d
        \end{tabular}
      }
    }
    &&
    \underset{
      \mbox{
        \tiny
        \color{darkblue} \bf
        \begin{tabular}{c}
          super-exceptional
          \\
          MK6-locus
          \\
          reduced to 4d
        \end{tabular}
      }
    }{
      \big(
        \mathbb{R}^{10,1\vert \mathbf{32}}_{\mathrm{ex}_s}
      \big)^{\mathbb{Z}_2^A}_{4d}
    }
    }
  }
\end{equation}

\vspace{-.3cm}

The result is shown in the following diagram,
discussed in detail in \cref{FieldContent} below:

\vspace{-.7cm}

\begin{equation}
  \label{TheFieldContent}
  \;\;
\end{equation}

\vspace{-.5cm}

{
\scalebox{.9}{
  \hspace{-1.4cm}
  \footnotesize
\begin{tikzpicture}

\draw (0,0) node {
  \xymatrix@R=12pt@C=15pt{
    {\phantom{\Big(}}
    \overset{
      \mathclap{
      \mbox{
        \tiny
        \color{darkblue} \bf
        \begin{tabular}{c}
          Super-exceptional
          \\
          M-theory spacetime
        \end{tabular}
      }
      }
    }{
      \mathbb{R}^{10,1\vert\mathbf{32}}_{\mathrm{ex}_s}
    }
    {\phantom{
      \Big)^{
      \overset{
        \mathclap{
        \mbox{
          \tiny
          \color{darkblue} \bf
          \begin{tabular}{c}
          \end{tabular}
        }
        }
      }{
        \mathbb{Z}_2
      }
      }
    }
    }
    \!\!\!\!\!\!\!
    \ar@{}[r]|<<<{\simeq}
    &
    \overset{
      \mathclap{
      \mbox{
        \tiny
        \color{darkblue} \bf
        \begin{tabular}{c}
          11d
          graviton
        \end{tabular}
      }
      }
    }{
      \mathbf{11}
    }
    \ar@[white]@(dl,dr)_{\ }="s"
    \ar@{}[r]|>>>{\mbox{$\oplus$}}
    &
    \ar@{}[r]|-{\mbox{$
      \underset{
        \phantom{
        \tiny
        \begin{tabular}{c}
          a
          \\
          a
          \\
          a
          \\
          a
        \end{tabular}
        }
      }{
      \overset{
        \mathclap{
        \mbox{
          \tiny
          \color{darkblue} \bf
          \begin{tabular}{c}
            $\phantom{\vert^{\vert^{\vert^{\vert^{\vert^{\vert^\vert}}}}}}$
            \\
            Probe M2-brane
            \\
            wrapping modes
          \end{tabular}
        }
        }
      }{
        \wedge^2 \mathbf{11}
      }
      }
    $}}_{\ }="s1"
    &
    \ar@{}[rr]|>>>>>>>>>>>>>>>>>>>>>>>>>>>{\mbox{$\oplus$}}
    &
    \ar@{}[rr]|-{\mbox{$
      \underset{
        \phantom{
          \tiny
          \begin{tabular}{c}
            a
            \\
            a
            \\
            a
            \\
            a
          \end{tabular}
        }
      }{
      \overset{
        \mathclap{
        \mbox{
          \tiny
          \color{darkblue} \bf
          \begin{tabular}{c}
            $\phantom{\vert^{\vert^{\vert^{\vert^{\vert^{\vert^{\vert}}}}}}}$
            \\
            Probe M5-brane
            \\
            wrapping modes
          \end{tabular}
        }
        }
      }{
        \wedge^5 \mathbf{11}
      }
      }
    $}}_{\ }="s2"
    \ar@{}[rrr]|-{\mbox{$\oplus$}}
    &
    &
    &
    \overset{
      \mbox{
        \tiny
        \color{darkblue}
        \bf
        \begin{tabular}{c}
          gravitino
        \end{tabular}
      }
    }{
      \mathbf{32}
    }
    \ar@{<-^{)}}[dd]
    \ar@{}[r]|-{\mbox{$\oplus$}}
    &
    \mathbf{32}
    \ar@{<-^{)}}[dd]
    &
    \!\!\!\!\!\!\!\!\!\!\!\!
    \scalebox{.8}{
    \begin{rotate}{-90}
       $\mathclap{
         \!\!\!\!\!\!\!\!\!\!\!\!
         \mathrm{Rep}_{\mathbb{R}}
         \big(
           \mathrm{Spin}(10,1)
         \big)
       }$
    \end{rotate}
    }
    \\
    \\
      \underset{
        \mathclap{
        \mbox{
          \tiny
          \color{darkblue} \bf
          \begin{tabular}{c}
            super-exceptional
            \\
            $\mathrm{MK}6$-locus
          \end{tabular}
        }
        }
      }{
      {
        \big(
          \mathbb{R}^{10,1\vert\mathbf{32}}_{\mathrm{ex}_s}
        \big)^{
          \overset{
            \mathclap{
            \mbox{
              \tiny
              \color{greenii} \bf
              \begin{tabular}{c}
              \end{tabular}
            }
            }
          }{
            \mathbb{Z}_2^A
          }
        }
      }
      }
    \ar@<+2pt>@{^{(}->}[uu]|-{
      \mbox{
        \tiny
        \color{greenii} \bf
        \begin{tabular}{c}
          super-exceptional embedding
          \\
          of $\mathrm{A}_1$-singularity
        \end{tabular}
      }
    }_<<<{ i_{\mathrm{MK}} }
    \!\!\!\!\!\!\!
    \ar@{}[r]|-{\simeq}
    &
    \mathbf{7} \boxtimes \mathbf{1}
    \ar@[white]@(ul,ur)^{\ }="t"
    \ar@{=}[d]
    \ar@{}[r]|-{\mbox{$\oplus$}}
    &
    (\wedge^2 \mathbf{7})
    \!\boxtimes\!
    \mathbf{1}
    \ar@[white]@(ul,ur)^{\ }="t1"
    \ar@{=}[d]
    \ar@{}[r]|-{\mbox{$\oplus$}}
    &
    \mathbf{1}
      \!\boxtimes\!
    (\wedge^2 \mathbf{4})^{\mathbb{Z}_2^A}
    \ar@[white]@(ul,ur)^{\ }="t2"
    \ar@{=}[d]
    \ar@{}[r]|-{\mbox{$\oplus$}}
    &
    (\wedge^3 \mathbf{7})
    \!\boxtimes\!
    (\wedge^2 \mathbf{4})^{\mathbb{Z}_2^A}
    \ar@[white]@(ul,ur)^{\ }="t4"
    \ar@{=}[d]
    \ar@{}[r]|-{\mbox{$\oplus$}}
    &
    (\wedge^5 \mathbf{7})
    \!\boxtimes\!
    \mathbf{1}
    \ar@[white]@(ul,ur)^{\ }="t3"
    \ar@{=}[d]
    \ar@{}[r]|-{\mbox{$\oplus$}}
    &
    \mathbf{7}
    \!\boxtimes\!
    (\wedge^4 \mathbf{4})^{\mathbb{Z}_2^A}
    \ar@{=}[d]
    \ar@[white]@(ul,ur)^{\ }="t5"
    \ar@{}[r]|-{\mbox{$\oplus$}}
    &
    (\mathbf{32})^{\mathbb{Z}_2^A}
    \ar@{}[r]|-{\mbox{$\oplus$}}
    \ar@{=}[d]
    &
    (\mathbf{32})^{\mathbb{Z}_2^A}
    \ar@{=}[d]
    &
    \!\!\!\!\!\!\!\!\!\!\!\!
    \scalebox{.8}{
    \begin{rotate}{-90}
       $\mathclap{
         \;\;\;\;\;\;\;\;\;\;\;\;\;\;\;\;\;
         \mathrm{Rep}_{\mathbb{R}}
         \big(
           \mathrm{Spin}(6,1)
           \!\times\!
           \mathbf{SU}(2)_R
         \big)
       }$
    \end{rotate}
    }
    \\
    &
    \overset{
      \mbox{
        \tiny
        \color{darkblue}
        \bf
        \begin{tabular}{c}
        \end{tabular}
      }
    }{
      \mathbf{7} \!\boxtimes\! \mathbf{1}
    }
    \ar@{=}[d]
    \ar@{}[r]|-{\mbox{$\oplus$}}
    &
    \overset{
      \mbox{
        \tiny
        \color{darkblue} \bf
        \begin{tabular}{c}
        \end{tabular}
      }
    }{
      (\wedge^2 \mathbf{7})
      \boxtimes
      \mathbf{1}
    }
    \ar@{=}[d]
    \ar@{}[r]|-{\mbox{$\oplus$}}
    &
    \overset{
      \mbox{
        \tiny
        \color{darkblue} \bf
        \begin{tabular}{c}
        \end{tabular}
      }
    }{
      \mathbf{1} \!\boxtimes\! \mathbf{3}
    }
    \ar@{=}[d]
    \ar@{}[r]|-{\mbox{$\oplus$}}
    &
    \overset{
      \mbox{
        \tiny
        \color{darkblue} \bf
        \begin{tabular}{c}
        \end{tabular}
      }
    }{
      (\wedge^3 \mathbf{7})
        \!\boxtimes\!
      \mathbf{3}
    }
    \ar@{=}[d]
    \ar@{}[r]|-{\mbox{$\oplus$}}
    &
    (\wedge^2 \mathbf{7})
    \boxtimes
    \mathbf{1}
    \ar@[white]@(ul,ur)
    \ar@{=}[d]
    \ar@{}[r]|-{\mbox{$\oplus$}}
    &
    \overset{
      \mbox{
        \tiny
        \color{darkblue} \bf
        \begin{tabular}{c}
        \end{tabular}
      }
    }{
      \mathbf{7}
      \!\boxtimes\!
      \mathbf{1}
    }
    \ar@{=}[d]
    \ar@{}[r]|-{\mbox{$\oplus$}}
    &
    \mathbf{16}
    \ar@{=}[d]
    \ar@{}[r]|-{\mbox{$\oplus$}}
    &
    \mathbf{16}
    \ar@{=}[d]
    \\
    \underset{
      \mbox{
        \tiny
        \color{darkblue}
        \bf
        \begin{tabular}{c}
          super-exceptional
          \\
          $\mathrm{MK}6$-locus
          \\
          reduced to 4d
        \end{tabular}
      }
    }{
    \big(
      \mathbb{R}^{10,1\vert \mathbf{32}}_{\mathrm{ex}_s}
    \big)^{\mathbb{Z}_2^A}_{4d}
    }
    \ar@{=}[d]
    \ar@{}[r]|-{\simeq}
    \ar@{=}[uu]|>>>>>>>>>>>{
      \mbox{
        \tiny
        \color{greenii}
        \bf
        \begin{tabular}{c}
          super-exceptional
          \\
          dimensional reduction
        \end{tabular}
      }
    }
    &
    {\begin{array}{c}
      \overset{
        \mathclap{
        \mbox{
          \tiny
          \color{darkblue}
          \bf
          \begin{tabular}{c}
          4d
          graviton
          \end{tabular}
        }
        }
      }{
        \mathclap{
          \mathbf{4} \boxtimes \mathbf{1} \boxtimes \mathbf{1}
        }
        \mathrlap{
          \phantom{\wedge^2}
        }
      }
      \\
      \oplus
      \\
      \overset{
        \mbox{
          \tiny
          \color{darkblue}
          \phantom{A}
        }
      }{
        \mathclap{
        \mathbf{1}
          \!\boxtimes\!
        \mathbf{1}^{\!-}_{\mathbb{C}}
          \!\boxtimes\!
        \mathbf{1}
        }
      }
      \\
      \oplus
      \\
      \overset{
        \mbox{
          \tiny
          \color{darkblue}
          \bf
          dilaton
        }
      }{
        \mathclap{
        \mathbf{1}
          \!\boxtimes\!
        \mathbf{1}
          \!\boxtimes\!
        \mathbf{1}
        }
      }
    \end{array}}
    \ar@{}@<-3pt>[r]|-{\mbox{$\oplus$}}
    \ar@{=}[d]
    &
    {\begin{array}{c}
      \overset{
        \mbox{
          \tiny
          \color{darkblue}
          \bf
          photon
        }
      }{
        \mathclap{
          (\mathbf{4} \oplus \wedge^2 \mathbf{4})
            \!\boxtimes\!
          \mathbf{1}
            \!\boxtimes\!
          \mathbf{1}
        }
      }
      \\
      \oplus
      \\
      \overset{
        \mbox{
          \tiny
          \color{darkblue}
          \bf
          \phantom{A}
        }
      }{
        \mathclap{
          \mathbf{4}
            \!\boxtimes\!
          \mathbf{1}^{\!-}_{\mathbb{C}}
            \!\boxtimes\!
          \mathbf{1}
        }
      }
      \\
      \oplus
      \\
      \overset{
        \mbox{
          \tiny
          \color{darkblue}
          \bf
          \phantom{$\sigma$-meson}
        }
      }{
        \mathclap{
          \mathbf{1}
            \!\boxtimes\!
          \mathbf{1}
            \!\boxtimes\!
          \mathbf{1}
        }
      }
    \end{array}}
    \ar@{=}[d]
    \ar@<-3pt>@{}[r]|-{\mbox{$\oplus$}}
    &
    {\begin{array}{c}
      {\phantom{
        \overset{
          \mbox{
            \tiny
            \bf
            t
          }
        }{
          A
        }
      }}
      \\
      {\phantom{\oplus}}
      \\
      {\phantom{
        \overset{
          \mbox{
            \tiny
            \bf
            t
          }
        }{
          A
        }
      }}
      \\
      {\phantom{\oplus}}
      \\
      \overset{
        \mbox{
          \tiny
          \color{darkblue}
          \bf
          pion
        }
      }{
        \mathclap{
        \mathbf{1}
          \!\boxtimes\!
        \mathbf{1}
          \!\boxtimes\!
        \mathbf{3}
        }
      }
    \end{array}}
    \ar@{=}[d]
    \ar@<-3pt>@{}[r]|<<<<{\mbox{$\oplus$}}
    &
    \mathclap
    {\begin{array}{c}
      \overset{
        \mbox{
          \tiny
          \color{darkblue}
          \bf
          ${\rho}$-meson
        }
      }{
        \mathclap{
          (\mathbf{4} \oplus \wedge^2 \mathbf{4})
            \!\boxtimes\!
          \mathbf{1}
            \!\boxtimes\!
          \mathbf{3}
        }
      }
      \\
      \oplus
      \\
      \overset{
        \mbox{
          \tiny
          \color{darkblue}
          \bf
          \phantom{$\sigma$-meson}
        }
      }{
        \mathclap{
        (\wedge^2 \mathbf{4})
          \!\boxtimes\!
        \mathbf{1}^{\!-}_{\mathbb{C}}
          \!\boxtimes\!
        \mathbf{3}
        }
      }
      \\
      \oplus
      \\
      \overset{
        \mbox{
          \tiny
          \color{darkblue}
          \bf
          EW-vacuum
        }
      }{
        \mathclap{
        \mathbf{4}
          \!\boxtimes\!
        \mathbf{1}
          \!\boxtimes\!
        \mathbf{3}
        }
      }
      \!
      \oplus
      \overset{
        \mbox{
          \tiny
          \color{darkblue}
          \bf
          \phantom{A}
        }
      }{
      \mathbf{1}
        \!\boxtimes\!
      \mathbf{1}
        \!\boxtimes\!
      \mathbf{3}
      }
    \end{array}}
    \ar@{=}[d]
    \ar@<-3pt>@{}[r]|-{\mbox{$\oplus$}}
    &
    \mathclap
    {\begin{array}{c}
      \overset{
        \mbox{
          \tiny
          \color{darkblue}
          \bf
          ${\omega}$-meson
        }
      }{
        \mathclap{
          (\mathbf{4} \oplus \wedge^2 \mathbf{4})
            \!\boxtimes\!
          \mathbf{1}
            \!\boxtimes\!
          \mathbf{1}
        }
      }
      \\
      {\oplus}
      \\
      \overset{
        \phantom{
        \mbox{
          \tiny
          \bf
          A
        }
        }
      }{
      \mathbf{4}
        \boxtimes
      \mathbf{1}^{\!-}_{\mathbb{C}}
        \boxtimes
      \mathbf{1}
      }
      \\
      \oplus
      \\
      \overset{
        \mbox{
          \tiny
          \color{darkblue}
          \bf
          $\sigma$-meson
        }
      }{
        \mathclap{
        \mathbf{1}
          \!\boxtimes\!
        \mathbf{1}
          \!\boxtimes\!
        \mathbf{1}
        }
      }
    \end{array}}
    \ar@{=}[d]
    \ar@<-3pt>@{}[r]|-{\mbox{$\oplus$}}
    &
    \mathclap
    {\begin{array}{c}
      \overset{
        \mbox{
          \tiny
          \color{darkblue}
          \bf
          graviton
        }
      }{
        \mathclap{
          \mathbf{4}
            \!\boxtimes\!
          \mathbf{1}
            \!\boxtimes\!
          \mathbf{1}
        }
        \mathrlap{
          \phantom{\wedge^2}
        }
      }
      \\
      \oplus
      \\
      \mathclap{
        \overset{
          \phantom{
          \mbox{
            \tiny
            \bf
            a
          }
          }
        }{
        \mathbf{1}
          \boxtimes
        \mathbf{1}^{\!-}_{\mathbb{C}}
          \boxtimes
        \mathbf{1}
        }
      }
      \\
      \oplus
      \\
      \overset{
        \mbox{
          \tiny
          \color{darkblue}
          \bf
          dilaton
        }
      }{
        \mathclap{
        \mathbf{1}
          \!\boxtimes\!
        \mathbf{1}
          \!\boxtimes\!
        \mathbf{1}
        }
      }
    \end{array}}
    \ar@{=}[d]
    \ar@<-3pt>@{}[r]|<<<<<<<<<{\mbox{$\oplus$}}
    &
    \mathclap
    {\begin{array}{c}
    {\phantom{
      \overset{
        \mbox{
          \tiny
          \bf
        }
      }{
        A
      }
    }}
    \\
    {\phantom{\oplus}}
    \\
    \!\!\!\!\!\!\!\!
    \!\!\!\!\!\!\!
    \begin{array}{c}
      \mathbf{2}_{\mathbb{C}}
      \\
      \oplus
      \\
      \overline{2}_{\mathbb{C}}
    \end{array}
    \!\!\!\!\!\!\!
    \overset{
      \!\!\!\!\!\!\!\!\!
      \mbox{
      \tiny
        \color{darkblue}
        \bf
        \begin{tabular}{c}
          first fermion
          \\
          doublet
        \end{tabular}
      }
    }{
      \underset{
        \mathbb{C}
      }{\boxtimes}
      \mathbf{1}^{\!-}_{\mathbb{C}}
      \underset{
        \mathbb{C}
      }{\boxtimes}
      \mathbf{2}_{\mathbb{C}}
    }
    \\
    {\phantom{\oplus}}
    \\
    {\phantom{a}}
    \end{array}}
    \ar@{=}[d]
    \ar@<-3pt>@{}[r]|<<<<<<<{\mbox{$\;\oplus$}}
    &
    \;\,
    \mathclap
    {\begin{array}{c}
    {\phantom{
      \overset{
        \mbox{
          \tiny
          \bf
        }
      }{
        A
      }
    }}
    \\
    {\phantom{\oplus}}
    \\
    \!\!\!\!\!\!\!\!
    \begin{array}{c}
      \mathbf{2}_{\mathbb{C}}
      \\
      \oplus
      \\
      \overline{2}_{\mathbb{C}}
    \end{array}
    \!\!\!\!\!\!\!\!\!\!\!\!
    \overset{
      \!\!\!\!\!\!\!
      \mbox{
      \tiny
        \color{darkblue}
        \bf
        \begin{tabular}{c}
          second fermion
          \\
          doublet
        \end{tabular}
      }
    }{
      \underset{
        \mathbb{C}
      }{\boxtimes}
      \mathbf{1}^{\!-}_{\mathbb{C}}
      \underset{
        \mathbb{C}
      }{\boxtimes}
      \mathbf{2}_{\mathbb{C}}
    }
    \\
    {\phantom{\oplus}}
    \\
    {\phantom{a}}
    \end{array}}
    \ar@{=}[d]
    &
    \!\!\!\!\!\!\!\!\!\!\!\!
    \scalebox{.8}{
    \begin{rotate}{-90}
       $\mathclap{
         \;\;\;\;\;\;\;\;\;\;\;\;\;\;\;\;
         \mathrm{Rep}_{\mathbb{R}}
         \big(
           \mathrm{Spin}(3,1)
           \!\times\!
           \mathrm{U}(1)_V
           \!\times\!
           \mathrm{SU}(2)_R
         \big)
       }$
    \end{rotate}
    }
    \\
    \mathclap{
      \underset{
    \mbox{
      \tiny
      \color{darkblue}
      \bf
      \begin{tabular}{c}
        span of
        \\
        super-exceptional
        vielbein
      \end{tabular}
    }
      }{
      \scalebox{.9}{$
      \left\langle
        \!\!\!\!\!
        \begin{array}{c}
          e^a,
          e_{a_1 a_2},
          \\
          e_{a_1 \cdots a_5},
          \\
          \psi, \; \eta
        \end{array}
\        \!\!\!\!\!\!
      \right\rangle^{\!\!\mathbb{Z}_2^A}_{\!\!4d}
      $}
      }
    }
    \ar@{}[r]|>>>>>>>{\simeq}
    &
    \mathclap{\begin{array}{c}
      \langle e^{\mu} \rangle
      \mathclap{
        {\phantom{
          \left\langle
            \begin{array}{c}
              e_{\mu}{}^{I}
              \\
              e_{\mu}{}^{I}
            \end{array}
          \right\rangle
        }}
      }
      \\
      \oplus
      \\
      \big\langle
        e^{\,z}
      \big\rangle
      \\
      \oplus
      \\
      \big\langle
        e^{5'}
      \big\rangle
    \end{array}}
    \ar@<-5pt>@{}[r]|-{\mbox{$\oplus$}}
    &
    \mathclap
    {\begin{array}{c}
      {{
        \left\langle
          \!\!\!\!\!
          \begin{array}{c}
            e_{\mu 5'},
            \\
            e_{\mu_1 \mu_2}
          \end{array}
          \!\!\!\!\!
        \right\rangle
      }}
      \\
      \oplus
      \\
      \big\langle
        \!
        e_{\mu z}
        \!
      \big\rangle
      \\
      \oplus
      \\
      \langle
        e_{4 5}
      \rangle
    \end{array}}
    \ar@<-5pt>@{}[r]|-{\mbox{$\oplus$}}
    &
    \mathclap
    {\begin{array}{c}
      {\phantom{
        \left\langle
          \!\!\!\!
          \begin{array}{c}
            e_{\mu}{}^{5'},
            \\
            e_{\mu_1 \mu_2}
          \end{array}
          \!\!\!\!
        \right\rangle
      }}
      \\
      {\phantom{\oplus}}
      \\
      \phantom{\langle e^{\mu} \rangle}
      \\
      \phantom{\oplus}
      \\
      \big\langle
        e^{6 I}
      \big\rangle
    \end{array}}
    \ar@<-5pt>@{}[r]|<<<<{\mbox{$\oplus$}}
    &
    \mathclap
    {\begin{array}{c}
      {{
        \left\langle
          \!\!\!\!
          \begin{array}{c}
            e_{\mu_1 \mu_2 \mu_3}{}^{6 I},
            \\
            e_{\mu_1 \mu_2 5'}{}^{6 I}
          \end{array}
          \!\!\!\!
        \right\rangle
      }}
      \\
      {\oplus}
      \\
      {\big\langle
        e_{ \mu_1 \mu_2 z }{}^{6 I}
      \big\rangle}
      \\
      \oplus
      \\
      {\big\langle
        e_{\mu 4 5 }{}^{6 I},
        e_{4 5 5'}{}^{6 I}
      \big\rangle}
    \end{array}}
    &
    \mathclap
    {\begin{array}{c}
      {
        \left\langle
          \!\!\!\!
          \begin{array}{c}
            e_{\mu_1 \mu_2 \mu_3}{}^{4 5},
            \\
            e_{\mu_1 \mu_2 5'}{}^{4 5}
          \end{array}
          \!\!\!\!
        \right\rangle
      }
      \\
      \oplus
      \\
      \big\langle
        e_{ \mu_1 \mu_2 \mu_3 z 5' }
      \big\rangle
      \\
      \oplus
      \\
      \big\langle
        e_{ \mu_1 \mu_2 \mu_3 \mu_4 5' }
      \big\rangle
    \end{array}}
    \ar@<-5pt>@{}[r]|-{\mbox{$\oplus$}}
    &
    \mathclap
    {\begin{array}{c}
      \big\langle
        e^\mu{}_{6 7 8 9 }
      \big\rangle
      \mathclap{
        \phantom{
          \left\langle
            \begin{array}{c}
              e_{\mu}{}^I
              \\
              e_{\mu}{}^I
            \end{array}
          \right\rangle
        }
      }
      \\
      \oplus
      \\
      \big\langle
        e^z{}_{6 7 8 9}
      \big\rangle
      \\
      {\oplus}
      \\
      \big\langle
        e^{5'}{}_{6 7 8 9}
      \big\rangle
    \end{array}}
    \ar@<-5pt>@{}[r]|-{\mbox{$\oplus$}}
    &
    \mathclap
    {\begin{array}{c}
      {\phantom{
        \left\langle
          \!\!\!\!
          \begin{array}{c}
            e_{\mu_1 \mu_2 \mu_3 5'}{}^{4 5},
            \\
            e_{\mu_1 \mu_2 5'}{}^{4 5}
          \end{array}
          \!\!\!\!
        \right\rangle
      }}
      \\
      \phantom{\oplus}
      \\
      \mathclap{
        \big\langle
          P_{\mathbf{16}}\psi
        \big\rangle
      }
      \\
      \phantom{\oplus}
      \\
      \phantom{
        \big\langle
          e^{4 5}
        \big\rangle
      }
    \end{array}}
    \ar@<-5pt>@{}[r]|-{\mbox{$\oplus$}}
    &
    \mathclap
    {\begin{array}{c}
      {\phantom{
        \left\langle
          \!\!\!\!
          \begin{array}{c}
            e_{\mu_1 \mu_2 \mu_3 5'}{}^{4 5},
            \\
            e_{\mu_1 \mu_2 5'}{}^{4 5}
          \end{array}
          \!\!\!\!
        \right\rangle
      }}
      \\
      \phantom{\oplus}
      \\
      \mathclap{\langle P_{\mathbf{16}} \eta \rangle}
      \\
      \phantom{\oplus}
      \\
      \phantom{
      \langle
        e^{4 5}
      \rangle}
    \end{array}}
    \ar@{^{(}->}|-{
      \;\;\;\;\;\;\;\;\;\;\;
      \mbox{
        \tiny
        \color{greenii} \bf
        \begin{tabular}{c}
          plain embedding of
          \\
          $\mathrm{A}_1$-singularity
        \end{tabular}
      }
    }
      "t"+(0,4); "s"-(0,3)
    \ar@{^{(}->}|-{
     \mbox{
       \tiny
       \color{greenii} \bf
       \begin{tabular}{c}
         plain
         \\
         dimensional
         \\
         reduction
       \end{tabular}
     }
     \;\;\;
    }
      "t1"+(0,4); "s1"-(+2,3)
    \ar@{_{(}->}|-{
      \;\;\;\;\;\;\;\;
      \;\;\;\;\;\;\;\;
      \;\;\;\;\;\;\;\;
      \;\;\;\;\;\;\;\;
      \;\;\;\;\;\;\;\;
      \;\;\;\;\;\;\;\;
      \;\;\;\;\;\;\;\;
      \;\;\;\;\;\;\;\;
      \;\;\;\;\;\;\;
      \mbox{
        \tiny
        \color{greenii} \bf
        \begin{tabular}{c}
          double dimensional reduction
          \\
          wrapping vanishing 2-cycle
          inside $\mathrm{A}_1$-singularity
        \end{tabular}
      }
    }
    "t2"+(0,4); "s1"-(-2,3)
    \ar@{^{(}->}|<<<<{
      \mbox{
        \tiny
        \color{greenii} \bf
        \begin{tabular}{c}
          plain reduction
        \end{tabular}
      }
    }
      "t3"+(0,4); "s2"-(0,3)
    \ar@{^{(}->}|-{
      \phantom{
      \mbox{
        \tiny
        \color{greenii} \bf
        \begin{tabular}{c}
          wrap vanishing 2-cycle
          \\
          inside $\mathrm{A}_1$-singularity
        \end{tabular}
      }
      }
      \;\;\;\;
    }
      "t4"+(0,4); "s2"-(+2,3)
    \ar@{_{(}->}|-{
      \mbox{
        \tiny
        \color{greenii} \bf
        \begin{tabular}{c}
          wrap transversal
          \\
           4-cycle
        \end{tabular}
      }
    }
      "t5"+(0,4); "s2"-(-2,3)
  }
    };

  \begin{scope}[shift={(.3,.4)}]

  \draw[fill=green,draw opacity=0, fill opacity=.05]
    (-9.8,5.3) rectangle (9.5,4.2);

  \draw[fill=green,draw opacity=0, fill opacity=.05]
    (-9.8,5.3-3) rectangle (9.5,4.2-3.85);

  \draw[fill=green,draw opacity=0, fill opacity=.05]
    (-9.8,5.3-2.9-2.85) rectangle (9.5,4.2-3.9-3.3);

  \draw[fill=green,draw opacity=0, fill opacity=.05]
    (-10,5.3-2.9-2.7-3.54) rectangle (9.5,4.2-3.9-2.6-4);

  \end{scope}

\end{tikzpicture}
}
}

\vspace{-.3cm}

\noindent Here the outer $\boxtimes$-tensor products \eqref{OuterTensorProductOfRepresentations}
indicate the
transformation properties under the three residual groups.

\vspace{-.7cm}

\hspace{-.8cm}
\begin{tabular}{ll}
\hspace{-.5cm}
  \begin{tabular}{l}
  \raisebox{-36pt}{\footnotesize
  \xymatrix@C=-5.5pt@R=4pt{
    \begin{rotate}{+30}
      \color{olive}
      \scalebox{.7}{
      $
        \mathrlap{
          \mathrm{Spin}(3,1)
        }
      $
      }
    \end{rotate}
    \!\!\!
    &&
    \begin{rotate}{+30}
      \color{olive}
      \scalebox{.7}{
      $
        \mathrlap{
          \mathrm{U}(1)
        }
      $
      }
    \end{rotate}
    \!\!\!
    &&
    \begin{rotate}{+30}
      \color{olive}
      \scalebox{.7}{
      $
        \mathrlap{
          \mathrm{SU}(2)_R
        }
      $
      }
    \end{rotate}
    \!\!\!
    \\
    \mathbf{1} & \boxtimes &\mathbf{1} &\boxtimes &\mathbf{3}
    & &
    \mathrlap{\mbox{is an iso-vector scalar field}}
    \\
    \mathbf{4} &\boxtimes& \mathbf{1} &\boxtimes& \mathbf{1}
    &&
    \mathrlap{\mbox{is an iso-scalar vector field}}
    \\
    \mathbf{4} &\boxtimes& \mathbf{1} &\boxtimes& \mathbf{3}
    &&
    \mathrlap{\mbox{is an iso-vector vector field}}
    \\
    \mathbf{2}_{\mathbb{C}}
    &
    \underset{
      \mathbb{C}
    }{\boxtimes}
    &
    \mathbf{1}_{\mathbb{C}}
    &
    \underset{
      \mathbb{C}
    }{\boxtimes}
    &
   \mathbf{2}_{\mathbb{C}}
   & &
   \mathrlap{\mbox{is an iso-doublet spinor field}}
   \\
   \mathbf{2}_{\mathbb{C}}
   &
     \boxtimes
   &
   \mathbf{1}^{\!-}_{\mathbb{C}}
   &
     \boxtimes
   &
   \mathbf{2}_{\mathbb{C}}
   &&
   \mathrlap{\mbox{carries -1 units of hypercharge }}
  }
  \phantom{\mbox{is an iso-doublet spinor field}}
  }
  \end{tabular}
  &
  \hspace{0cm}
  \begin{minipage}[l]{11.5cm}

    \phantom{AA}

    \phantom{AA}

    This is the kind of field content encountered in
    quantum hadrodynamics
    (e.g. \cite{Ecker95}\cite{MachleidtEntem11}) where
    the iso-vector scalar field would
    be the pion field $\vec \pi$, the iso-scalar vector field
    would be the omega-meson $\omega$, its iso-vector partner
    the rho-meson $\rho$, and the two hypercharged fermion iso-doublets
    would be baryon fields (e.g. \cite{SerotWalecka92}).

    Indeed, we show in \cref{VectorMesonCouplingToSkyrmeBaryonCurrent}
    that the form of the WZW-term in the
    super-exceptional PS-Lagrangian is that characteristic of the
    coupling of neutral vector mesons to triples of pions via
    the Skyrme baryon current \cite{RhoEtAl16}, see \cref{ConcludingRemarks}
    for further discussion.
  \end{minipage}

\end{tabular}

\newpage

\section{The super-exceptional M5-brane model}
\label{SuperExceptionalMGeometry}

Here we recall the relevant aspects of the super-exceptional
embedding construction of the $\mathrm{M}5$-brane \cite{FSS19b}
to introduce the precise setup that is analyzed in the following sections.

\medskip

\begin{remark}[Super-Lie algebras]
Throughout, we use the following basic fact, see \cite[3]{FSS19a} for review:
\begin{enumerate}[{\bf (i)}]
\vspace{-2mm}
\item Finite-dimensional Lie superalgebras\footnote{Our ground field is the real numbers.}
are equivalently encoded in terms of their
differential graded-commutative (dgc) Chevalley-Eilenberg super-algebras,
known as ``FDA''s in the supergravity literature:

\vspace{-4mm}
\begin{equation}
  \label{CEAlgebra}
  \hspace{-1cm}
  \xymatrix@R=1pt{
    \mathrm{LieSuperAlgebras}
    \;\; \ar@{^{(}->}[rr]^-{ \mathrm{CE} }_-{
      \mathclap{
      \mbox{
        \tiny   \bf
        \color{darkblue}
        \begin{tabular}{c}
          fully faithful
          \\
          embedding
        \end{tabular}
      }
      }
    }
    &&
    \;\; \mathrm{dgcSuperAlgebras}^{\mathrm{op}}
    \\
    \big(
      \overset{
        \mathclap{
        \mbox{
          \tiny \bf
          \color{darkblue}
          \begin{tabular}{c}
            super-
            \\
            vector space
          \end{tabular}
        }
        }
      }{
        \overbrace{
          \mathfrak{g}
        }
      },
      \underset{
        \mathclap{
        \mbox{
          \tiny \bf
          \color{darkblue}
          \begin{tabular}{c}
            super
            \\
            Lie bracket
          \end{tabular}
        }
        }
      }{
        \underbrace{
          [-,-]
        }
      }
    \big)
  \;\;  \ar@{|->}[rr]
    &&
    \;\;
    \big(\,
      \overset{
        \mathclap{
        \mbox{
          \tiny \bf
          \color{darkblue}
          \begin{tabular}{c}
          super-
          \\
          Grassmann algebra
          \end{tabular}
        }
        }
      }{
        \overbrace{
          \wedge^\bullet \mathfrak{g}^\ast
        }
      },
      \,
      \underset{
        \mathclap{
        \mbox{
          \tiny \bf
          \color{darkblue}
          \begin{tabular}{c}
            CE--differential
          \end{tabular}
        }
        }
      }{
        \underbrace{
          d := [-,-]^\ast
        }
      }
    \, \big).
  }
\end{equation}

\vspace{-5mm}
\item Here \emph{super-Grassman algebra} means that elements
$v \in \mathfrak{g}$ of super-grading $\sigma_v \in \mathbb{Z}_2$
are dual to dgc-algebra elements $v^\ast \in \wedge^1 \mathfrak{g}^\ast$
of bidegree
$(1,\sigma) \in \mathbb{N} \times \mathbb{Z}_2$.
We write ``$\wedge$'' for the product in these super-Grassman algebras.
The sign rule is such that for elements
$\alpha, \beta \in \wedge^\bullet \mathfrak{g}^\ast$
of homogeneous bi-degrees
$( n_\alpha, \sigma_\alpha ), (n_\beta, \sigma_\beta)
\in \mathbb{N} \times \mathbb{Z}_2$ we have
$$
  \alpha \wedge \beta \;=\;
  (-1)^{ n_\alpha n_\beta + \sigma_\alpha \sigma_\beta }
  \,
  \beta \wedge \alpha
  \,.
$$

\vspace{-3mm}
\item
Generally, the Chevalley-Eilenberg algebra of a Lie superalgebra $\mathfrak{g}$
may be identified with the de Rham algebra of left-invariant
differential forms on the corresponding Lie supergroup $G$:

\vspace{-4mm}
\begin{equation}
  \label{CEToDeRham}
  \xymatrix{
    \mathllap{
      \mbox{
        \tiny \bf
        \color{darkblue}
        \begin{tabular}{c}
          Chevalley-Eilenberg algebra
          \\
          of Lie superalgebra $\mathfrak{g}$
        \end{tabular}
      }
      \;\;\;\;
    }
    \mathrm{CE}(\mathfrak{g})
    \ar[rrr]^-{ \simeq }
    &&&
    \Omega^\bullet_{\mathrm{LI}}(G)
    \mathrlap{
      \;\;\;\;
      \mbox{
        \tiny \bf
        \color{darkblue}
        \begin{tabular}{c}
          de Rham algebra of
          \\
          left-invariant differential forms
          \\
          on Lie supergroup $G$
        \end{tabular}
      }
    }
  }
\end{equation}
\end{enumerate}
\end{remark}

\begin{example}[Ordinary super-Minkowski spacetime]
 \label{OrdinarySuperMinkowskiSpacetime}
For $d \in \mathbb{N}$ and
$\mathbf{N} \in \mathrm{Rep}_{\mathbb{R}}(\mathrm{Spin}(d,1))$
a real $\mathrm{Spin}(d,1)$ representation of real dimension $N$,
the corresponding
{\it super-Poincar{\'e} super Lie algebra}
$\mathrm{Iso}(\mathbb{R}^{d,1\vert \mathbf{N}})$ (``supersymmetry algebra'')
is the algebra whose CE-algebra \eqref{CEAlgebra} is


\begin{equation}
  \label{SuperMinkowski}
  \;\;
\end{equation}

\vspace{-.6cm}

\hspace{-.9cm}
\begin{tabular}{cc|c}
$\mathrm{CE}
\big(
  \;
  \mathrm{Iso}(\mathbb{R}^{d,1\vert \mathbf{N}})
  \;
\big)$
&
...generated by
&
...with differential on generators given by
\\
&
$
  \Big\{
    \begin{array}{c}
    \overset{
      \mathrm{degree} = (1, \mathrm{even})
    }{
    \overbrace{
      e^{a}, \omega^{a_1 a_2}
    }
    },
    \\
    \underset{
      \mathrm{degree} = (1, \mathrm{odd})
    }{
    \underbrace{
      \psi^\alpha
    }
    }
    \end{array}
  \Big\}_{
    { ai \in \{0,1, \cdots, d\} }
    \atop
    { \alpha \in \{1,2, \cdots, N\} }
  }
$
&
$
  \begin{array}{rcl}
    d \!\!\!\!\!\! & e^a
      & =
      \;
    \big\langle \psi \wedge  {\pmb\Gamma}^a \! \cdot \psi \big\rangle
    \\
    d \!\!\!\!\!\! & \psi^\alpha
      & =
      \;
    \tfrac{1}{4} \omega^{a_1 a_2} {\pmb\Gamma}_{a_1 a_2} \!\cdot\! \psi
    \\
    d \!\!\!\!\!\! & \omega^{a_1}{}_{a_2}
    & =
    \;
    \omega^{a_1}{}_{a_3} \wedge \omega^{a_3}{}_{a_2}
    \,,
  \end{array}
$
\end{tabular}

\medskip
\noindent
with the $\omega^{a_1 a_2}$ skew-symmetric in their indices.
On the right we have the Clifford action and spinor pairing
$\langle -,-\rangle$
that comes with the real Spin representation
(the dot denotes matrix multiplication, hence
contraction of spinor indices).

\medskip
The underlying
{\bf translational super Lie algebra}
$\mathbb{R}^{d,1\vert \mathbf{N}}$ is obtained from this by discarding the
Lorentz generators $\omega^{a_1}{}_{a_2}$. The resulting CE-algebra
may be identified with the de Rham algebra of the canonical super-vielbein
on the $D = d+1$, $\mathcal{N} = \mathrm{dim}(\mathbf{N})$
super-Minkowski spacetime, which is the Cartesian super-manifold with
canonical super-coordinates
$\{\!\!\!\! \underset{\mathrm{deg} = (0,\mathrm{even})}{\underbrace{ x^a}}, \underset{\mathrm{deg} = (0,\mathrm{odd})}{\underbrace{\theta^\alpha}} \!\!\!\!\}$:

\vspace{-6mm}
\begin{equation}
  \label{LeftInvariantFormsOnSuperSpacetime}
\hspace{3cm}
\raisebox{-28pt}{
  $
  \mathllap{
  \mbox{
    \tiny
    \color{darkblue}
    \bf
    \begin{tabular}{c}
      super-vielbein basis on
      \\
      super-Minkowski spacetime
    \end{tabular}
  }
  }
  \phantom{AA}
  \raisebox{+27pt}{
  \xymatrix@R=-2pt{
    \mathrm{CE}
    \big(
      \;
      \mathbb{R}^{d,1\vert \mathbf{N}}
      \;
    \big)
    \ar[rr]^-{\simeq}
    &&
    \Omega^\bullet_{\mathrm{LI}}
    \big(
      \;
      \mathbb{R}^{d,1\vert \mathbf{N}}
      \;
    \big)
    \\
    e^a
    \ar@{|->}[rr]
    &&
    d x^a +
    \langle
      \theta
      ,
      {\pmb\Gamma}^a \!\!\cdot d \theta
    \rangle
    \\
    \psi^\alpha
    \ar@{|->}[rr]
    &&
    d \theta^\alpha
    \phantom{+ \overline{\theta} \!\cdot\!  {\pmb\Gamma}^a \!\!\cdot d \theta}
  }
  }
  \phantom{AA}
  \mathrlap{
  \mbox{
    \tiny
    \color{darkblue}
    \bf
    \begin{tabular}{c}
      left-invariant 1-form basis on
      \\
      super-Minkowski spacetime
    \end{tabular}
  }
  }
  $
  }
\end{equation}
\end{example}
We consider this here specifically for $d = 3, 6, 10$, with the
real $\mathrm{Spin}(d,1)$ representation
$\mathbf{N} = \mathbf{4}, \mathbf{16}, \mathbf{32}$
given by Dirac matrices
with coefficients in  $\mathbb{K} = \mathbb{C}, \mathbb{H}, \mathbb{O}$,
respectively; this is reviewed in \cref{Octonionic2ComponentSpinors} below.

\newpage

The following is the ``hidden supergroup of 11d supergravity''
due to \cite[6]{DAuriaFre82}\cite{BAIPV04},
interpreted as the translational supersymmetry algebra of
super-exceptional M-theory spacetime according to
\cite{Bandos17}\cite{FSS18}\cite{SS18} \cite[3]{FSS19d}:

\begin{defn}[Super-exceptional M-spacetime]
  \label{SuperExceptionalSpacetimeAsLieAlgebra}
  Regarded as a super-Lie algebra of super-translations along
  itself,
  in generalization of Example \ref{OrdinarySuperMinkowskiSpacetime},
  the {\it super-exceptional Minkowski spacetime}
  $\mathbb{R}^{10,1\vert \mathbf{32}}_{\mathrm{ex}_s}$
  \eqref{SuperExceptionalSpacetimesAsRepresentation}
  (for a parameter $s \in \mathbb{R} \setminus \{0\}$)
  has Chevalley-Eilenberg algebra \eqref{CEAlgebra}:
  \begin{equation}
  \label{SuperExceptionalBianchiIdentities}
  \mbox{
  \hspace{-.5cm}
  \begin{tabular}{lc|l}
    $
      \mathrm{CE}
      \big(
        \;
        \mathbb{R}^{10,1\vert \mathbf{32}}
        \;
      \big)
    $
    \!\!\!\!\!\!\!\!
    &
    ... generated from
    &
    ... with differential on generators given by
    \\
    &
    $
    \Big\{\!\!\!\!
      \begin{array}{c}
      \overset{
        \mathrm{degree} = (1,\mathrm{even})
      }{
      \overbrace{
      e^{a_1},
      \;
      e_{a_1, a_2},
      \;
      e_{a_1, \cdots a_5}
      }
      }
      ,
      \\
      \underset{
        \mathrm{deg} = (1,\mathrm{odd})
      }{
      \underbrace{
      \psi^\alpha
      \,,
      \eta^\alpha
      }
      }
      \end{array}
   \!\!\!\!
   \Big\}
  $
  &
  $
  \begin{array}{rcl}
    d \!\!\!\!\!\!\!\! & e^a
      & =
    \phantom{\tfrac{1}{2}}
    \big\langle
      \psi \wedge  {\pmb\Gamma}^a \!\cdot\! \psi
    \big\rangle
    \\
    d \!\!\!\!\!\!\!\! & e_{a_1 a_2} & =
      \tfrac{1}{2}
      \big\langle
        \psi \wedge {\pmb\Gamma}_{a_1 a_2} \!\cdot\! \psi
      \big\rangle
    \\
    d \!\!\!\!\!\!\!\! & e_{a_1 \cdots a_5} & =
      \tfrac{1}{5!}
      \big\langle
        \psi \wedge {\pmb\Gamma}_{a_1 \cdots a_5} \!\cdot\! \psi
      \big\rangle
    \\
    d \!\!\!\!\!\!\!\! & \psi & = 0
        \\
    d \!\!\!\!\!\!\!\! & \eta & =
      \big(
        (s+1) e^a {\pmb\Gamma}_a
        + e_{a_1 b_1} {\pmb\Gamma}^{a_1 a_2}
        +
        (1 + \tfrac{s}{6})
        e_{a_1 \cdots a_5} {\pmb\Gamma}^{a_1 \cdots a_5}
      \big)
      \!\cdot\!
      \psi
  \end{array}
  $
\end{tabular}
}
\end{equation}
\noindent for $
      {
        a_i \in \{0,1,2,3,4,5,5',6,7,8,9\}
      }
      ,
      {
        \alpha \in \{1,2,\cdots, 32\}
      }
  $,
  with the generators $e_{a_1 a_2}$ and
  $e_{a_1 a_2 a_3 a_4 a_5}$ all skew-symmetric in their indices.
  Here on the right we have the $\mathbb{O}$-Dirac matrix multiplication
  and spinor pairing of the 4-component octonionic realization of the
  $\mathbf{32}$ of $\mathrm{Spin}(10,1)$, reviewed in \cref{Octonionic2ComponentSpinors}
  below, and ${\pmb\Gamma}_{a_1 \cdots a_p}$ denotes the skew-symmetrization
  of the Clifford algebra products ${\pmb\Gamma}_{a_1} \cdots {\pmb\Gamma}_{a_p}$,
  as usual.
\end{defn}

In generalization of
\eqref{LeftInvariantFormsOnSuperSpacetime}, we may identify the
super-exceptional vielbein \eqref{SuperExceptionalBianchiIdentities}
as left-invariant differential forms on the underlying
Cartesian super-manifold
of $\mathbb{R}^{10,1\vert \mathbf{32}}_{\mathrm{ex}_s}$,
now given by exceptional coordinate functions
(as envisioned, in the bosonic sector, in \cite[4.3]{Hull07})
$$
  \big\{
    \;
    \underset{
      \mathrm{degree} = (0,\mathrm{even})
    }{
    \underbrace{
      \overset{
        \mathclap{
        \mbox{
          \tiny
          \color{darkblue}
          \bf
          \begin{tabular}{c}
            ordinary
            \\
            bosonic
          \end{tabular}
        }
        }
      }{
      \overbrace{
        x^a
      }},\;
       \overset{
         \mathclap{
         \mbox{
           \tiny
           \color{darkblue}
           \bf
           \begin{tabular}{c}
             exceptional
             \\
             bosonic
           \end{tabular}
         }
         }
       }{
       \overbrace{
       B_{a_1 a_2},
       \;
       B_{a_1 \cdots a_5}
       }
       },
    }
    }
    \,\;
    \underset{
      \mathrm{degree}
      =
      (0, \mathrm{odd})
    }{
    \underbrace{
      \overset{
        \mathclap{
        \mbox{
          \tiny
          \color{darkblue}
          \bf
          \begin{tabular}{c}
            ordinary
            \\
            fermionic
          \end{tabular}
        }
        \;\;\;\;\;
        }
      }{
      \overbrace{
        \theta
      }
      }, \;
      \overset{
        \mathclap{
        \;\;\;\;\;\;\;
        \mbox{
          \tiny
          \color{darkblue}
          \bf
          \begin{tabular}{c}
            exceptional
            \\
            fermionic
          \end{tabular}
        }
        }
      }{
      \overbrace{
        \theta'
      }
      }
    }
    }
    \;
  \big\}
  \mathrlap{
    \phantom{AA}
    \mbox{
      \tiny
      \color{darkblue}
      \bf
      \begin{tabular}{c}
        super-exceptional
        \\
        coordinate functions
      \end{tabular}
    }
  }
$$
as follows:

\vspace{-4mm}
\begin{equation}
  \label{VielbeinInCanonicalCoordinateBasis}
  \hspace{-7cm}
  \mathllap{
    \raisebox{-40pt}{
      \tiny
      \color{darkblue}
      \bf
      \begin{tabular}{c}
        super-exceptional
        \\
        vielbein basis
      \end{tabular}
    }
  }
  \xymatrix@R=-4pt{
    \mathrm{CE}
    \big(
      \mathbb{R}^{10,1\vert \mathbf{32}}_{\mathrm{ex}_s}
    \big)
    \ar[rr]^-{\simeq}
    &&
    \Omega^\bullet_{\mathrm{li}}
    \big(
      \mathbb{R}^{10,1\vert \mathbf{32}}_{\mathrm{ex}_s}
    \big)
    \\
    e^a
    &\longmapsto&
    \!\!\!\!\!\!\!\!\!
    \!\!\!\!\!\!\!\!\!
    \!\!\!\!\!\!\!\!\!
    \!\!\!\!\!\!\!\!\!
    \mathrlap{
      d x^a
      \;\;\;\;\;\;\;\;\;\;\;\;
      +
      \phantom{\tfrac{1}{2}}
      \langle
        \theta
        ,
        \Gamma^a  \!\cdot\! d\theta
      \rangle
    }
    \\
    e_{a_1 a_2}
    &\longmapsto&
    \!\!\!\!\!\!\!\!\!
    \!\!\!\!\!\!\!\!\!
    \!\!\!\!\!\!\!\!\!
    \!\!\!\!\!\!\!\!\!
    \mathrlap{
      d(B_{a_1 a_2})
      \;\;\;\,
      +
      \tfrac{1}{2}
      \langle
        \theta
        ,
        \Gamma_{a_1 a_2} \!\cdot\! d \theta
      \rangle
    }
    \\
    e_{a_1 \cdots a_5}
    &\longmapsto&
    \!\!\!\!\!\!\!\!\!
    \!\!\!\!\!\!\!\!\!
    \!\!\!\!\!\!\!\!\!
    \!\!\!\!\!\!\!\!\!
    \mathrlap{
      d(B_{a_1 \cdots a_5})
        \;
        +
      \tfrac{1}{5!}
      \langle
        \theta,
        \Gamma_{a_1 \cdots a_5} \!\cdot\! d\theta
      \rangle
    }
    \\
    \psi^\alpha
    &\longmapsto&
    \!\!\!\!\!\!\!\!\!
    \!\!\!\!\!\!\!\!\!
    \!\!\!\!\!\!\!\!\!
    \!\!\!\!\!\!\!\!\!
    \mathrlap{
      d \theta
    }
    \\
    \eta^\alpha
    &\longmapsto&
    \!\!\!\!\!\!\!\!\!
    \!\!\!\!\!\!\!\!\!
    \!\!\!\!\!\!\!\!\!
    \!\!\!\!\!\!\!\!\!
    \mathrlap{
      d \theta'
      \;\;\;\;\;\;\;\;\;\;\;\;
      -
      \big(
        (s+1) e^a {\pmb\Gamma}_a
        + e_{a_1 b_1} {\pmb\Gamma}^{a_1 a_2}
        +
        (1 + \tfrac{s}{6})
        e_{a_1 \cdots a_5} {\pmb\Gamma}^{a_1 \cdots a_5}
      \big)
      \!\cdot\!
      \theta \;.
    }
  }
  \phantom{AA}
  \mathrlap{
    {\phantom{AAAAAAAAAAAAA}}
    \raisebox{-40pt}{
      \tiny
      \color{darkblue}
      \bf
      \begin{tabular}{c}
        left-invariant 1-forms on
        \\
        super-exceptional Minkowski spacetime
      \end{tabular}
    }
  }
\end{equation}
Here the first four lines follow just
as in \eqref{LeftInvariantFormsOnSuperSpacetime}.
The point to notice is the last line,
which follows with the same Fierz identity
\cite[(6.3)-(6.4)]{DAuriaFre82}\cite[(20)-(23)]{BAIPV04}
that gives $d d \theta' = 0$
in \eqref{SuperExceptionalBianchiIdentities}.
    .

\begin{prop}[Spin representations on super-exceptional spacetime]
 \label{SpinRepresentationsOfSuperExceptional}
Regard the linear span of the generators in  the list \eqref{SuperExceptionalBianchiIdentities},
as acted on by $\mathrm{Spin}(10,1)$ in the evident way, as \footnote{
We recall representation-theoretic notation below in \cref{ElementsOfRepresentation Theory}.}
\begin{equation}
  \label{SpinActionOnSuperExceptional}
  \xymatrix@C=0pt@R=8pt{
    \mathbb{R}^{10,1\vert \mathbf{32}}
    &
      \simeq_{{}_{\mathbb{R}}}
    &
    \mathbf{11}
    \ar@{}[d]|-{
      \!
      \begin{rotate}{-90}
        \scalebox{.9}{
        $\mathclap{\!\simeq}$
        }
      \end{rotate}
    }
    &\oplus&
    \wedge^2 \mathbf{11}
    \ar@{}[d]|-{
      \!
      \begin{rotate}{-90}
        \scalebox{.9}{
        $\mathclap{\!\simeq}$
        }
      \end{rotate}
    }
    &\oplus&
    \wedge^5 \mathbf{11}
    \ar@{}[d]|-{
      \!
      \begin{rotate}{-90}
        \scalebox{.9}{
        $\mathclap{\!\simeq}$
        }
      \end{rotate}
    }
    &\oplus&
    \mathbf{32}
    \ar@{}[d]|-{
      \!
      \begin{rotate}{-90}
        \scalebox{.9}{
        $\mathclap{\!\simeq}$
        }
      \end{rotate}
    }
    &\oplus&
    \mathbf{32}
    \ar@{}[d]|-{
      \!
      \begin{rotate}{-90}
        \scalebox{.9}{
        $\mathclap{\!\simeq}$
        }
      \end{rotate}
    }
    &
    \;\;\;
    \in
    \mathrm{Rep}_{\mathbb{R}}
    \big(
      \mathrm{Spin}(10,1)
    \big).
    \\
    &&
    \big\langle
      e^a
    \big\rangle
    &&
    \big\langle
      e_{a_1 a_2}
    \big\rangle
    &&
    \big\langle
      e_{a_1 \cdots a_5}
    \big\rangle
    &&
    \big\langle
      \psi^\alpha
    \big\rangle
    &&
    \big\langle
      \eta^\alpha
    \big\rangle
  }
\end{equation}
\item {\bf (i)} This gives a $\mathrm{Spin}(10,1)$-action on
$\mathrm{CE}\big( \mathbb{R}^{10,1\vert \mathbf{32}}_{\mathrm{ex}_s} \big)$
by dgc-superalgebra automorphisms, hence a
$\mathrm{Spin}(10,1)$-action
on $\mathbb{R}^{10,1\vert \mathbf{32}}_{\mathrm{ex}_s}$ itself
by Lie superalgebra automorphisms.

\item {\bf (ii)} Moreover, this extends to a $\mathrm{Pin}^+(10,1)$-action by automorphisms,
if one lifts the
$\wedge^{2} \mathbf{11} \simeq_{{}_{\mathrm{Spin}(10,1)}} \wedge^9 \mathbf{11}$
of $\mathrm{Spin}(10,1)$ specifically to the
$\wedge^9 \mathbf{11}$ of $\mathrm{Pin}^+(10,1)$, hence as:
\begin{equation}
  \label{PinActionOnSuperExceptional}
  \xymatrix@C=0pt@R=8pt{
    \mathbb{R}^{10,1\vert \mathbf{32}}
    &
      \simeq_{{}_{\mathbb{R}}}
    &
    \mathbf{11}
    \ar@{}[d]|-{
      \!
      \begin{rotate}{-90}
        \scalebox{.9}{
        $\mathclap{\!\simeq}$
        }
      \end{rotate}
    }
    &\oplus&
    \wedge^9 \mathbf{11}
    \ar@{}[d]|-{
      \!
      \begin{rotate}{-90}
        \scalebox{.9}{
        $\mathclap{\!\simeq}$
        }
      \end{rotate}
    }
    &\oplus&
    \wedge^5 \mathbf{11}
    \ar@{}[d]|-{
      \!
      \begin{rotate}{-90}
        \scalebox{.9}{
        $\mathclap{\!\simeq}$
        }
      \end{rotate}
    }
    &\oplus&
    \mathbf{32}
    \ar@{}[d]|-{
      \!
      \begin{rotate}{-90}
        \scalebox{.9}{
        $\mathclap{\!\simeq}$
        }
      \end{rotate}
    }
    &\oplus&
    \mathbf{32}
    \ar@{}[d]|-{
      \!
      \begin{rotate}{-90}
        \scalebox{.9}{
        $\mathclap{\!\simeq}$
        }
      \end{rotate}
    }
    &
    \;\;\;
    \in
    \mathrm{Rep}_{\mathbb{R}}
    \big(
      \mathrm{Pin}^+(10,1)
    \big).
    \\
    &&
    \big\langle
      e^a
    \big\rangle
    &&
    \big\langle
      \epsilon^{a_1 a_2 \cdots e_{11}}
      e_{a_1 a_2}
    \big\rangle
    &&
    \big\langle
      e_{a_1 \cdots a_5}
    \big\rangle
    &&
    \big\langle
      \psi^\alpha
    \big\rangle
    &&
    \big\langle
      \eta^\alpha
    \big\rangle
  }
\end{equation}
\end{prop}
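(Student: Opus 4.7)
The plan for part (i) is to extend the $\mathrm{Spin}(10,1)$-action on the generators \eqref{SpinActionOnSuperExceptional} uniquely to an action by grading-preserving algebra automorphisms on the free super-Grassmann algebra $\wedge^\bullet\big(\mathbb{R}^{10,1\vert\mathbf{32}}_{\mathrm{ex}_s}\big)^{\!\ast}$ and then to verify that this descends to an action by dgc-superalgebra automorphisms on $\mathrm{CE}\big(\mathbb{R}^{10,1\vert\mathbf{32}}_{\mathrm{ex}_s}\big)$. Since both the action and the differential $d$ extend as derivations, the compatibility reduces to checking $d$-equivariance on each generator separately. For $e^a$, $e_{a_1 a_2}$, $e_{a_1\cdots a_5}$ the right-hand sides of \eqref{SuperExceptionalBianchiIdentities} are the standard spinor bilinears $\big\langle\psi\wedge{\pmb\Gamma}^{a_1\cdots a_p}\!\cdot\psi\big\rangle$ for $p\in\{1,2,5\}$, which are $\mathrm{Spin}(10,1)$-equivariant maps $\mathrm{Sym}^2\mathbf{32}\to\wedge^p\mathbf{11}$ by the invariance of the symmetric spinor pairing together with the $\mathrm{Spin}(10,1)$-equivariance of Clifford multiplication. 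For $\eta$ the right-hand side is a sum of terms $e^{(p)}\cdot{\pmb\Gamma}^{(p)}\cdot\psi$ with $p\in\{1,2,5\}$, each of which transforms as $\mathbf{32}$ by the same equivariance of Clifford multiplication, matching the transformation of $\eta$.

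For part (ii), the plan is to determine which $\mathrm{Pin}^+(10,1)$-lift of the generator subspace $\langle e_{a_1 a_2}\rangle$ is forced by the requirement that $d$ remain equivariant. The two candidate lifts $\wedge^2\mathbf{11}$ and $\wedge^9\mathbf{11}$ have isomorphic restrictions to $\mathrm{Spin}(10,1)$ but differ as $\mathrm{Pin}^+$-representations by a twist by the sign representation of $\mathrm{Pin}^+(10,1)/\mathrm{Spin}(10,1)\simeq\mathbb{Z}_2$. The deciding constraint comes from the middle term $e_{a_1 a_2}\,{\pmb\Gamma}^{a_1 a_2}\!\cdot\psi$ of the Bianchi identity for $\eta$ in \eqref{SuperExceptionalBianchiIdentities}. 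Because $D=11$ is odd, the central pseudoscalar ${\pmb\Gamma}_\ast := {\pmb\Gamma}^{0\cdots 10}$ of $\mathrm{Cl}(10,1)$ acts as a nonzero scalar on the irreducible $\mathbf{32}$, giving
\[
  {\pmb\Gamma}^{a_1 a_2}
  \;\propto\;
  \epsilon^{a_1 a_2 b_1\cdots b_9}\,{\pmb\Gamma}_{b_1\cdots b_9}
  \qquad\text{as endomorphisms of }\mathbf{32}.
\]
Under an orientation-reversing element of $\mathrm{Pin}^+(10,1)$ the epsilon tensor flips sign, and hence so does ${\pmb\Gamma}^{a_1 a_2}\!\cdot\psi$ relative to what a genuine 2-form contracted with ${\pmb\Gamma}^{(2)}$ would produce. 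Equivariance of $d\eta$ therefore forces $e_{a_1 a_2}$ to carry the compensating sign, which is precisely the sign twist distinguishing $\wedge^9\mathbf{11}$ from $\wedge^2\mathbf{11}$ as $\mathrm{Pin}^+$-representations, singling out the lift displayed in \eqref{PinActionOnSuperExceptional}. The other summands of $d\eta$ pose no analogous obstruction: $e^a\,{\pmb\Gamma}_a\!\cdot\psi$ is automatic, and $e_{a_1\cdots a_5}\,{\pmb\Gamma}^{a_1\cdots a_5}\!\cdot\psi$ works with the natural $\wedge^5\mathbf{11}$ lift.

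The main obstacle is the sign bookkeeping underlying part (ii): one must fix once and for all the conventions for $\mathrm{Pin}^+$ versus $\mathrm{Pin}^-$ acting on $\mathbf{32}$, the sign of ${\pmb\Gamma}_\ast$ in that representation, and the sign acquired by $\epsilon^{a_1\cdots a_{11}}$ under improper transformations. Once these conventions are pinned down, the sign contributions picked up by ${\pmb\Gamma}^{a_1 a_2}\!\cdot\psi$ --- one from the Pin element acting on the spinor via Clifford conjugation, the other from $\epsilon$ --- combine to match exactly the sign difference between the $\mathrm{Pin}^+$-representations $\wedge^2\mathbf{11}$ and $\wedge^9\mathbf{11}$, closing the argument.
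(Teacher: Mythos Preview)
Your proposal is essentially correct and considerably more self-contained than the paper's proof, which simply observes that (i) follows from (ii) by restriction to $\mathrm{Spin}(10,1)\subset\mathrm{Pin}^+(10,1)$, and then derives (ii) by citing \cite[4.26]{FSS18} (reproduced as \cite[Lemma~3.10]{FSS19d}), where the action of the single reflection generators of $\mathrm{Pin}^+(10,1)$ on all the super-exceptional vielbein components is tabulated explicitly. So the paper verifies compatibility with $d$ by brute-force inspection of the reflection table, while you instead extract the $\wedge^9\mathbf{11}$-lift from the structure of the Bianchi identities together with the Hodge-type duality ${\pmb\Gamma}^{a_1a_2}\propto\epsilon^{a_1a_2b_1\cdots b_9}{\pmb\Gamma}_{b_1\cdots b_9}$ valid on the irreducible $\mathbf{32}$. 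Your route explains \emph{why} the $\wedge^9$-lift is forced rather than merely checking that it works; the paper's route has the advantage of leaving no sign ambiguity once the reflection table is accepted.

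One point to tighten: in part (ii) you argue from the $e_{a_1a_2}{\pmb\Gamma}^{a_1a_2}\!\cdot\psi$ summand of $d\eta$, but the Bianchi identity $d\,e_{a_1a_2}=\tfrac{1}{2}\langle\psi\wedge{\pmb\Gamma}_{a_1a_2}\!\cdot\psi\rangle$ must be $\mathrm{Pin}^+$-equivariant as well, and this gives the same constraint more directly. Under a spatial reflection $\psi\mapsto{\pmb\Gamma}_c\psi$ one finds $\langle{\pmb\Gamma}_c\psi,{\pmb\Gamma}_{a_1a_2}{\pmb\Gamma}_c\psi\rangle=(-1)^{1+[c\in\{a_1,a_2\}]}\langle\psi,{\pmb\Gamma}_{a_1a_2}\psi\rangle$, i.e.\ the bilinear picks up an extra overall sign relative to the natural $\wedge^2\mathbf{11}$ transformation, identifying it with $\wedge^2\mathbf{11}\otimes\mathrm{sgn}\simeq\wedge^9\mathbf{11}$. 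This check, together with the analogous (and unobstructed) checks for $p=1,5$, closes the argument without needing to disentangle the more delicate $d\eta$ computation.
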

\begin{proof}
  The first statement follows immediately from the second.
  The second statement follows immediately from \cite[4.26]{FSS18}
  (reproduced as \cite[Lemma 3.10]{FSS19d})
  where the action of single reflection operators is given,
  which generate the action of $\mathrm{Pin}^+(10,1)$.
\hfill \end{proof}

\begin{remark}[Probe M9-branes]
\label{Rem-hull}
The result \eqref{PinActionOnSuperExceptional}
rigorously
supports
the proposal \cite[p. 8-9]{Hull98} (where Hull speaks of the ``most natural interpretation'')
that the summand $\wedge^2 (\mathbb{R}^{10,1})^\ast$
in the M-theory extended super Lie algebra should, at least in part,
be interpreted as the Hodge-dual incarnation of 9-brane charge.
\end{remark}

Thus we have the following super-exceptional enhancement of the
super $\mathrm{MK}6$-locus \cite[Thm. 4.3]{HSS18}:

\begin{defn}[{\cite[4]{FSS19d}}]
  \label{TheSuperExceptionalMK6}
  The \emph{super-exceptional $\mathrm{MK}6$-locus}
  is the sub-Lie superalgebra inside the super-exceptional M-spacetime
  of Def. \ref{SuperExceptionalSpacetimeAsLieAlgebra}
  which is fixed by the action
  via Prop. \ref{SpinRepresentationsOfSuperExceptional}
  of the subgroup
  $\mathbb{Z}_2^A \subset \mathrm{SU}(2)_L \subset \mathrm{Spin}(10,1)$
  in \eqref{RepresentationTheoreticQuestion}
  \begin{equation}
    \label{SuperExceptionalBraneLocus}
    \underset{
      \mathclap{
      \mbox{
        \tiny \bf
        \color{darkblue}
                \begin{tabular}{c}
          super-exceptional
          \\
          $\tfrac{1}{2}\mathrm{M5}$-locus
        \end{tabular}
      }
      }
    }{
    \big(
      \mathbb{R}^{10,1\vert \mathbf{16}}
    \big)^{\mathbb{Z}_2^A}
    }
    \;\;
    \subset
    \;\;
    \underset{
      \mathclap{
      \mbox{
        \tiny
        \color{darkblue}
        \bf
        \begin{tabular}{c}
          super-exceptional
          \\
          $\mathrm{MK}6$-locus
        \end{tabular}
      }
      }
    }{
    \big(
      \mathbb{R}^{10,1\vert \mathbf{32}}
    \big)^{\mathbb{Z}_2^A}
    }
    \;\;
    \subset
    \;\;
    \underset{
      \mathclap{
      \mbox{
        \tiny
        \color{darkblue}
        \bf
        \begin{tabular}{c}
          super-exceptional
          \\
          M-theory spacetime
        \end{tabular}
      }
      }
    }{
      \mathbb{R}^{10,1\vert \mathbf{32}}
    }
  \end{equation}
  The {\it super-exceptional $\tfrac{1}{2}\mathrm{M5}$-locus}
  is the further sub superalgebra with fermions fixed
  by $\mathbb{Z}_{2}^{\mathrm{HW}} \subset \mathrm{Pin}^+(10,1)$.
\end{defn}

\medskip

\noindent {\bf Super-exceptional 3-form flux.}
The {\it raison d'{\^e}tre} of the super-exceptional M-spacetime
\eqref{SuperExceptionalSpacetimeAsLieAlgebra}
is that it carries a map
$$
  \xymatrix{
    \mathbb{R}^{10,1\vert \mathbf{32}}_{\mathrm{ex}_s}
    \ar[rr]^-i
    &&
    \mathbb{R}^{10,1\vert \mathbf{32}}
  }
$$
to ordinary $D=11$ $\mathcal{N}-1$ super-spacetime \eqref{SuperMinkowski}
and a universal 3-flux form
\begin{equation}
  H_{\mathrm{ex}_s}
  \;\in\;
  \mathrm{CE}
  \big(
    \mathbb{R}^{10,1\vert \mathbf{32}}_{\mathrm{ex}_s}
  \big)
\end{equation}
that solves the twisted super-form Bianchi identity
\begin{equation}
  \label{H3exTrivializesPullbackOf4Flux}
  d H_{\mathrm{ex}_s}
  \;=\;
  i^\ast G_4
  \,,
\end{equation}
in the base case the M-theory C-field 4-flux $G_4$
has vanishing bosonic component
\begin{equation}
  \label{M2BraneCocycle}
  G_4
  \;=\;
  G_4^{(0)}
  \;:=\;
  \tfrac{1}{2}
  \big\langle
    \psi
      \wedge
      {\pmb\Gamma}_{a_1 a_2}
      \! \cdot
    \psi
  \big\rangle
  \wedge
  e^{a_1} \wedge e^{a_2}
  \,,
\end{equation}
hence in the case that it only has its bifermionic component,
fixed by the torsion constraint of 11d supergravity,
as shown on the right of \eqref{M2BraneCocycle}.
Explicitly, $H_{\mathrm{ex}_s}$ is a polynomial in
wedge products of the super-exceptional vielbein \eqref{SuperExceptionalBianchiIdentities}
of the following form \cite[6]{DAuriaFre82}\cite[3]{BAIPV04}\cite[3.5]{FSS19d},
where the $\oplus$-notation indicates that
we are showing only the monomials that appear, but
(for readability, and since this is all we need here)
not the real coefficients (which are functions of the parameter $s$) that
multiply them:
\begin{equation}
  \label{SuperExceptionalHFlux}
      \begin{aligned}
      \mathllap{
      \overset{
        \mathclap{
        \mbox{
          \tiny
          \color{darkblue} \bf
          \begin{tabular}{c}
            super-exceptional
            \\
            3-flux form
          \end{tabular}
        }
        }
      }{
        H_{\mathrm{ex}_s}
      }
      =
      \;\;\;
      }
      &
      \phantom{\oplus}\;
      e_{\color{darkblue} a_1 a_2}
      \wedge
      e^{\color{darkblue} a_1}
      \wedge
      e^{\color{darkblue} a_2}
      \\
      &
      \oplus
      e_{\color{orangeii} a_1 a_2}
      \wedge
      e^{\color{orangeii} a_2 a_3}
      \wedge
      e_{\color{orangeii} a_3}{}^{\color{orangeii} a_1}
      \\
      &
      \oplus
      e_{
        {\color{orangeii} a_1 a_2}
        \,
        {\color{greenii} b_1 b_2 b_3 }
      }
      \wedge
      e^{\color{orangeii} a_2 a_3}
      \wedge
      e_{
        \color{orangeii} a_3
      }
      {}^{
        \color{orangeii} a_1
        \,
        {\color{greenii} b_1 b_2 b_3 }
      }
      \\
      & \oplus
      \epsilon^{
        { \color{orangeii} a_0 a_1 a_2 a_3 a_4 }
        { \color{darkblue} a_{} }
        { \color{greenii} a_5 a_6 a_7 a_8 a_9 }
      }
      e_{ \color{orangeii} a_0 a_1 a_2 a_3 a_4 }
      \wedge
      e_{ \color{darkblue} a_{} }
      \wedge
      e_{ \color{greenii} a_5 a_6 a_7 a_8 a_9 }
      \\
      &
      \phantom{\mathclap{\vert^{\vert^{\vert^{\vert}}}}}
      \oplus
      \epsilon^{
        { \color{orangeii} a_1 a_2 a_3 a_4 a_5 a_6}
        { \color{greenii} b_1 b_2 b_3 b_4 b_5  }
      }
      \,
      e_{
        { \color{orangeii} a_1 a_2 a_3 }
        { \color{darkblue} c_1 c_2 }
      }
      \wedge
      e_{
        { \color{orangeii} a_4 a_5 a_6 }
      }
      {}^{
        { \color{darkblue} c_1 c_2 }
      }
      \wedge
      e_{ \color{greenii} b_1 b_2 b_3 b_4 b_5 }
      \\
      &
      \phantom{\mathclap{\vert^{\vert^{\vert^{\vert}}}}}
      \oplus
      \big\langle
        \eta
        \wedge
        {\pmb \Gamma}_{ \color{darkblue} a }
        \!\cdot\!
        \psi
      \big\rangle
      \wedge e^{\color{darkblue} a }
      \;\;\oplus\;\;
      \big\langle
        \eta
        \wedge
        {\pmb \Gamma}^{ \color{greenii} a_1 a_2}
        \!\cdot\!
        \psi
      \big\rangle
      \wedge e_{\color{greenii} a_1 a_2}
      \;\;\oplus\;\;
      \big\langle
        \eta
        \wedge
        {\pmb \Gamma}^{ \color{orangeii}  a_1 a_2 a_3 a_4 a_5}
        \!\cdot\!
        \psi
      \big\rangle
      \wedge e_{ \color{orangeii} a_1 a_2 a_3 a_4 a_5}
      \,.
    \end{aligned}
\end{equation}

\medskip

\noindent {\bf Super-exceptional PS-Lagrangian.}
From this, we found in \cite[Prop. 5.9]{FSS19d}
the super-exceptional solution to the next twisted Bianchi identity
\begin{equation}
  d \mathbf{L}_{\mathrm{ex}_s}
  \;=\;
  2
  \big(
    \tfrac{1}{2}
    H_{\mathrm{ex}_s} \wedge i^\ast G_4
    +
    i^\ast G_7
  \big)^{\mathrm{vert}}
\end{equation}
characterizing the gauge sector of the heterotic M5-brane Lagrangian.
Still in the base case that the dual 7-flux
has vanishing bosonic component
\begin{equation}
  \label{M5BraneCocycle}
  G_7
  \;=\;
  \tfrac{1}{5!}
  \big\langle
    \psi
    \wedge
    {\pmb\Gamma}_{a_1 \cdots a_5}
    \cdot
    \psi
  \big\rangle
  \wedge
  e^{a_1} \wedge \cdots \wedge e^{a_5}
  \,,
\end{equation}
hence in the case that it only has its bifermionic component
fixed by the torsion constraint of 11d supergravity,
as shown on the right of \eqref{M5BraneCocycle},
this is given by the
following {\it super-exceptional PS-Lagrangian} {\cite[(72)]{FSS19d}}:
\begin{equation}
\label{SuperExceptionalPSLagrangianSummands}
\hspace{-3mm}
  \raisebox{-50pt}{
  $
  \begin{aligned}
{
  \overset{
    \mathclap{
    \mbox{
      \tiny
      \color{darkblue} \bf
      \begin{tabular}{c}
        Super-exceptional
        \\
        PS-Lagrangian
        \\
        {\phantom{a}}
      \end{tabular}
    }
    }
  }{
    \mathbf{L}^{\!\mathrm{PS}}_{\mathrm{ex}_s}
  }
  }
  \; = \;\;\;
  &
  \raisebox{0pt}{$
  \begin{aligned}
  &
  \big(\,
  \underset{
    \mathclap{
    \mbox{
      \tiny
      \color{darkblue} \bf
      \begin{tabular}{c}
        contraction with
        super-
        \\exceptional
        lift
        of isometry $v_{5'}$
      \end{tabular}
    }
    }
  }{
  \underbrace{
  (
    \iota_{5'}
    \oplus
    \iota_{5' 6 7 8 9}
  )
  }
  }
  \overbrace{
    H_{\mathrm{ex}_s}
  }
  \big)
  \,\wedge\;
  \big(
  \overset{
    \mathclap{
    \!\!\!\!\!\!\!\!\!\!\!\!\!\!\!\!\!\!\!
    \!\!\!\!\!\!\!\!\!\!\!\!\!\!\!\!\!\!\!
    \!\!\!\!\!\!\!\!\!\!\!\!\!\!\!\!
    \mbox{
      \tiny
      \color{darkblue} \bf
      \begin{tabular}{c}
        super-exceptional
        3-flux form
      \end{tabular}
    }
    }
  }{\,
    \overbrace{
      H_{\mathrm{ex}_s}
    }
  }
  \underset{
    \mathclap{
    \mbox{
      \tiny
      \color{darkblue} \bf
      \begin{tabular}{c}
        dual form to
        \\
        isometry $v_{5'}$
      \end{tabular}
    }
    }
  }{\wedge
  \underbrace{
    e^{5'}
  }
  }
  \big)
  \end{aligned}
  $}
  \\
  \\
  & =
  \raisebox{-50pt}{$
  \begin{aligned}
  &
  \;\phantom{\wedge}
  \left(
    \begin{aligned}
    &\phantom{+}\;
    e_{ {\color{darkblue} a} 5'} \wedge e^{\color{darkblue} a}
    \\
    &
    \oplus
    \;
    \epsilon^{
      { \color{orangeii}
        a_0 a_1 a_2 a_3 a_4
      }
      \,
      5'
      \,
      {
        \color{greenii}
              a_{5} a_6 a_7 a_8 a_9
      }
    }
    e_{\color{orangeii} a_0 a_1 a_2 a_3 a_4}
    \wedge
    e_{\color{greenii} a_5 a_6 a_7 a_8 a_9}
    \\
    &
    \oplus
    e^{ {\color{darkblue} a } 5' }
      \wedge
    e_{ { \color{darkblue} a } 6 7 8 9 }
    \\
    &
    \oplus
    \epsilon^{
      { \color{orangeii}
        a_0 a_1 a_2 a_3 a_4
      }
      \,
      {
        \color{darkblue}
        a
      }
      \,
      {
        5' 6 7 8 9
      }
    }
    e_{\color{orangeii} a_0 a_1 a_2 a_3 a_4}
    \wedge
    e_{\color{darkblue} a }
    \\
    &
    \oplus
    \epsilon^{
      {\color{orangeii} a_1 a_2 a_3 a_4 a_5 a_6 }
      5' 6 7 8 9
    }
    e_{
      \color{orangeii} a_1 a_2 a_3
      \color{darkblue} c_1 c_2
    }
    \wedge
    e_{
      \color{orangeii} a_4 a_5 a_6
    }{}^{
      \color{darkblue} c_1 c_2
    }
    \\
    & \oplus
    \big\langle
      \eta \wedge {\pmb\Gamma}_{5'} \!\cdot\! \psi
    \big\rangle
    \end{aligned}
  \right)
  \\
  &
  \wedge
    \left(
      \begin{aligned}
      &
      \phantom{\oplus}\;
      e_{\color{darkblue} a_1 a_2}
      \wedge
      e^{\color{darkblue} a_1}
      \wedge
      e^{\color{darkblue} a_2}
      \\
      &
      \oplus
      e_{\color{orangeii} a_1 a_2}
      \wedge
      e^{\color{orangeii} a_2 a_3}
      \wedge
      e_{\color{orangeii} a_3}{}^{\color{orangeii} a_1}
      \\
      &
      \oplus
      e_{
        {\color{orangeii} a_1 a_2}
        \,
        {\color{greenii} b_1 b_2 b_3 }
      }
      \wedge
      e^{\color{orangeii} a_2 a_3}
      \wedge
      e_{
        \color{orangeii} a_3
      }
      {}^{
        \color{orangeii} a_1
        \,
        {\color{greenii} b_1 b_2 b_3 }
      }
      \\
      & \oplus
      \epsilon^{
        { \color{orangeii} a_0 a_1 a_2 a_3 a_4 }
        { \color{darkblue} a_{} }
        { \color{greenii} a_5 a_6 a_7 a_8 a_9 }
      }
      e_{ \color{orangeii} a_0 a_1 a_2 a_3 a_4 }
      \wedge
      e_{ \color{darkblue} a_{} }
      \wedge
      e_{ \color{greenii} a_5 a_6 a_7 a_8 a_9 }
      \\
      &
      \phantom{\mathclap{\vert^{\vert^{\vert^{\vert}}}}}
      \oplus
      \epsilon^{
        { \color{orangeii} a_1 a_2 a_3 a_4 a_5 a_6}
        { \color{greenii} b_1 b_2 b_3 b_4 b_5  }
      }
      \,
      e_{
        { \color{orangeii} a_1 a_2 a_3 }
        { \color{darkblue} c_1 c_2 }
      }
      \wedge
      e_{
        { \color{orangeii} a_4 a_5 a_6 }
      }
      {}^{
        { \color{darkblue} c_1 c_2 }
      }
      \wedge
      e_{ \color{greenii} b_1 b_2 b_3 b_4 b_5 }
      \\
      &
      \phantom{\mathclap{\vert^{\vert^{\vert^{\vert}}}}}
      \oplus
      \big\langle
        \eta
        \wedge
        {\pmb \Gamma}_{ \color{darkblue} a }
        \!\cdot\!
        \psi
      \big\rangle
      \wedge e^{\color{darkblue} a }
      \;\;\oplus\;\;
      \big\langle
        \eta
        \wedge
        {\pmb \Gamma}^{ \color{greenii} a_1 a_2}
        \!\cdot\!
        \psi
      \big\rangle
      \wedge e_{\color{greenii} a_1 a_2}
      \;\;\oplus\;\;
      \big\langle
        \eta
        \wedge
        {\pmb \Gamma}^{ \color{orangeii}  a_1 a_2 a_3 a_4 a_5}
        \!\cdot\!
        \psi
      \big\rangle
      \wedge e_{ \color{orangeii} a_1 a_2 a_3 a_4 a_5}
      \end{aligned}
    \right)
  \mathrlap{
    \wedge
    e^{5'}
  }
  \end{aligned}
  $}
  \\
  \\
  & =
  \raisebox{-7pt}{$
  \begin{aligned}
  &
  \phantom{\oplus}\;
  \underset{
    \mathclap{
    \mbox{
      \tiny
      \color{darkblue} \bf
      \begin{tabular}{c}
        free Perry-Schwarz Lagrangian
      \end{tabular}
    }
    }
  }{
  \underbrace{
  e_{ {\color{darkblue} \alpha_1 } 5' }
  \wedge
  e_{ \color{darkblue} \alpha_2 \alpha_3 }
  \wedge
  e^{ \color{darkblue} \alpha_1 }
  \wedge
  e^{ \color{darkblue} \alpha_2 }
  \wedge
  e^{ \color{darkblue} \alpha_3 }
  \wedge
  e^{5'}
  }
  }
  \;\;
    \oplus
  \;\;
  \underset{
    \mathclap{
    \mbox{
      \tiny
      \color{darkblue} \bf
      \begin{tabular}{c}
        super-exceptional
        \\
        correction terms
      \end{tabular}
    }
    }
  }{
    \underbrace{
      \overset{
        \mbox{
          \tiny
          \color{darkblue}
          \bf
          WZW-term
        }
      }{
      \overbrace{
      e_{
       { \color{darkblue} a }
       5'
      }
      \wedge
      e^{
        \color{darkblue} a
      }
      \;\wedge\;
      e_{ \color{orangeii} a_1 a_2}
      \wedge
      e^{ \color{orangeii} a_2 a_3}
      \wedge
      e_{ \color{orangeii} a_3}{}^{ \color{orangeii} a_1}
      \;\wedge\;
      e^{5'}
      }
      }
        \;\;\oplus\;\;
        \cdots
        }
      }
  \end{aligned}
  $}
  \end{aligned}
  $
  }
\end{equation}
  This expression is the definition
  {\cite[(72)]{FSS19d}} spelled out,
  using \cite[(44)]{FSS19d}
  and discarding some vielbein generators
  with an odd number of indices
  $a \in \{6, 7, 8, 9\}$
  in the fourth summand of the first wedge factor,
  since these vanish on the $\mathrm{MK}6$-locus,
  according to Theorem \ref{FieldsAtSingularity} below.

 \medskip

\noindent {\bf Super-exceptional sigma-model fields.}
A {\it field configuration} of the super-exceptional sigma-model is
a dgc-superalgebra homomorphism dual to a map from the
super-worldvolume of a heterotic M5-brane to the super-exceptional
$\tfrac{1}{2}\mathrm{M5}$-locus

\vspace{-5mm}
\begin{equation}
  \label{SuperExceptionalFieldConfiguration}
  \xymatrix{
    \Omega^\bullet
    \big(
      \overset{
        \mathclap{
        \mbox{
          \tiny
          \color{darkblue}
          \bf
          \begin{tabular}{c}
            $\mathcal{N}= (1,0)$ M5-brane
            \\
            super-worldvolume
          \end{tabular}
        }
        }
      }{
        \mathbb{R}^{5,1\vert \mathbf{8}}
      }
    \big)
    \ar@{<-}[rrr]^-{ \sigma^\ast }_-{
      \mathclap{
      \mbox{
        \tiny
        \color{darkblue}
        \bf
        \begin{tabular}{c}
          super-exceptional
          \\
          sigma-model field
        \end{tabular}
      }
      }
    }
    &&&
    \mathrm{CE}
    \Big(
      \overset{
        \mathclap{
        \mbox{
          \tiny
          \color{darkblue}
          \bf
          \begin{tabular}{c}
            super-exceptional
            \\
            $\tfrac{1}{2}\mathrm{M5}$-locus
          \end{tabular}
        }
        }
      }{
      \big(
        \mathbb{R}^{9,1\vert \mathbf{16}}_{\mathrm{ex}_s}
      \big)^{\mathbb{Z}_2^A}
      }
    \Big).
  }
\end{equation}
Here the field components
\begin{equation}
  \label{SuperExceptionalEmbeddingCondition}
  \sigma^\ast
  \big(
    e^\alpha
  \big)
  \;:=\;
  d x^\alpha \;+\; \mathrm{fermions},
  \phantom{AAAA}
  \alpha \in \{0,1,2,3,4,5'\}
\end{equation}
are fixed by the super-embedding condition
and the value of the super-exceptional PS-Lagrangian on such
a field configuration is the pullback of \eqref{SuperExceptionalPSLagrangianSummands}
to the M5-brane worldvolume along this super-exceptional
sigma-model field:
\begin{equation}
  \label{EvaluationOfTheLagrangian}
  \mathbf{L}^{\!\mathrm{PS}}_{\mathrm{ex}_s}(\sigma)
  \;:=\;
  \sigma^\ast
  \big(
    \mathbf{L}^{\!\mathrm{PS}}_{\mathrm{ex}_s}
  \big).
\end{equation}

\noindent {\bf Recovering the free PS-Lagrangian and 4d electromagnetism.}
The further super-exceptional vielbein field components
\begin{equation}
  \label{PhotonFieldComponent}
  \sigma^\ast\big( e_{\mu 5'} \big)
  \;=\;
  d (A_\mu) \;+\; \mathrm{fermions},
  \phantom{AAA}
  \mu \in \{0,1,2,3\},
\end{equation}
are interpreted as those of an electromagnetic vector potential $A$
with field strength $F := d A$
in 4-dimensional spacetime.
The condition of Hodge self-duality
\begin{equation}
  \star_{{}_6} H_3
  \;=\;
  H_3
\end{equation}
of the ordinary flux 3-form
\begin{equation}
  H_3(\sigma)
  \;:=\;
  \sigma^\ast
  \big(
    e_{\alpha_1 \alpha_2}
    \wedge
    e^{\alpha_1}
    \wedge
    e^{\alpha_2}
  \big)
\end{equation}
then enforces
\begin{equation}
  \label{DualPhotonField}
  \sigma^\ast
  \big(
    e_{\mu_1 \mu_2}
  \big)
  \;=\;
  \tfrac{1}{2}
  \big(
    \star_{{}_4} F
  \big)_{\mu_1 \mu_2}
  d x^4
\end{equation}
which is the corresponding Hodge dual field strength times $d x^4$.
With this, the first summand in 
the second line of \eqref{SuperExceptionalPSLagrangianSummands},
which gives \cite[Prop. 5.1]{FSS19d} the
Henneaux-Teitelboim-Perry-Schwarz Lagrangian for a free self-dual
3-form field \cite[(17)]{PerrySchwarz97}\cite{HenneauxTeitelboim88}
\begin{equation}
  \label{TheMaxwellLagrangian}
  \mathbf{L}^{\!\mathrm{PS}_{\mathrm{free}}}_{\mathrm{ex}_s}
  \;=\;
  e_{
    { \color{darkblue} \alpha }
    5'
    }
    \wedge
    e^{
      \color{darkblue} \alpha
    }
  \;\wedge\;
  e_{ \color{darkblue} \alpha_1 \alpha_2 }
  \wedge
  e^{
    \color{darkblue} \alpha_1
  }
  \wedge
  e^{
    \color{darkblue} \alpha_2
  }
  \;\wedge\;
  e^{5'}
  \,,
\end{equation}
evaluates to the Maxwell Lagrangian for source-free 
electromagnetism in 4d:
\begin{equation}
  \label{MaxwellLagrangianAppears}
  \sigma^\ast
  \big(
    \mathbf{L}^{\!\mathrm{PS}_{\mathrm{free}}}_{\mathrm{ex}_s}
  \big)
  \;=\;
  \big(
    d A \wedge \star_{{}_4} d A
  \big)
  \wedge d x^4 \wedge d x^{5'}
  \;+\;
  \mathrm{fermions}.
\end{equation}
This is, at its core, the super-exceptional incarnation
from \cite{FSS19d} of the Perry-Schwarz
mechanism \cite{PerrySchwarz97}\cite{Schwarz97}
in the construction of the M5-brane action functional.

\medskip

\noindent {\bf Beyond the free PS-Lagrangian.}
But the full super-exceptional
Lagrangian \eqref{SuperExceptionalPSLagrangianSummands}
evidently has many more terms than just \eqref{TheMaxwellLagrangian},
describing a rich interacting worldvolume theory on the
M5-brane.
In particular, next there is a term of Wess-Zumino form
\begin{equation}
  \label{SuperExceptionalWZWTerm}
  \mathbf{L}^{\!\mathrm{WZW}}_{\mathrm{ex}_s}
  \;:=\;
  \big(
    \iota_{v^{5'} \oplus v^{5' 6 7 8 9}}
    H_{\mathrm{ex}_s}
  \big)
  \;\wedge\;
  e_{ \color{orangeii} a_1 a_2}
  \wedge
  e^{ \color{orangeii} a_2 a_3}
  \wedge
  e_{ \color{orangeii} a_3}{}^{ \color{orangeii} a_1}
  \;\wedge\;
  e^{5'}
  \,.
\end{equation}
We analyze the value of this term on super-exceptional sigma-model fields
below in \cref{TheWZWTermsInTheLagrangian}.

\medskip
But first we turn now to the classification of the super-exceptional
field content.

\section{Classification of the field content}
\label{FieldContent}

Here we prove the classification, shown in the big diagram \eqref{TheFieldContent},
of the super-exceptional vielbein fields on super-exceptional M-theory spacetime
(Def. \ref{SuperExceptionalSpacetimeAsLieAlgebra})
restricted to the super-exceptional $\mathrm{MK}6$-locus
(Def. \ref{TheSuperExceptionalMK6}) regarded as a representation
of the residual symmetry group
$\mathrm{Spin}(3,1) \times \mathrm{U}(1)_V \times \mathrm{SU}(2)_R$,
according to decompositions \eqref{RepresentationTheoreticQuestion}.

\medskip

Much of the table \eqref{TheFieldContent}
follows from straightforward branching of
exterior power representations, recalled as
Example \ref{RestrictionOfWedgePowerOfVectorRep} ,
and from the familiar Spin representation branching

\vspace{-.3cm}

$$
  \xymatrix@R=-1pt{
    \mathrm{Rep}_{\mathbb{R}}
    \big(
      \mathrm{Spin}(10,1)
    \big)
    \ar[rr]^{  }
    &&
    \mathrm{Rep}_{\mathbb{R}}
    \big(
      \mathrm{Spin}(9,1)
    \big)
    \ar[rr]^-{ (-)^{\mathbb{Z}_2^A} }
    &&
    \mathrm{Rep}_{\mathbb{R}}
    \big(
      \mathrm{Spin}(5,1)
    \big)
    \\
     \mathbf{32}
     \ar@{|->}[rr]
     &&
     \mathbf{16}
       \oplus
     \overline{
       \mathbf{16}
     }
     \ar@{|->}[rr]
     &&
     \mathbf{8} \oplus \overline{\mathbf{8}}
     \\
     \mathllap{
       \mbox{
         recalled as:
       }
       \;\;\;\;\;\;\;\;\;\;\;\;\;\;\;\;\;\;\;\;\;\;\;
     }
     \ar@{}[rr]|-{
       \mbox{
         Remark \ref{BranchingOf11dSpinorsIn10d}
       }
     }
     &&
     \ar@{}[rr]|-{
       \mbox{
         Lemma \ref{FixedSubspaceOfGamma6789}
       }
     }
     &&
  }
$$

\medskip

The further statements we need needs are captured in the following result:

\newpage

\begin{theorem}[Fields at $\mathrm{A}_1$-singularity]
 \label{FieldsAtSingularity}
Under passage to the fixed locus \eqref{FormingFixedPoints}
of the
$\mathbb{Z}_2^A \subset \mathrm{SU}(2)_L$-action
in \eqref{RepresentationTheoreticQuestion}
we have the following representations under the residual
group actions:

\vspace{.3cm}

{\small
\begin{tabular}{|c|l|}
 \hline
  $
    \raisebox{38pt}{
    \xymatrix@R=1pt{
      \mathclap{\phantom{\vert^{\vert^{\vert^{\vert^{\vert}}}}}}
      \;
      \mathrm{Rep}_{\mathbb{R}}
      \big(
        \mathrm{SU}(2)_L \times \mathrm{SU}(2)_R
      \big)
      \;
      \ar@{->}[rr]^-{ (-)^{\mathbb{Z}_2^A} }
      &&
      \;
      \mathrm{Rep}_{\mathbb{R}}
      \big(
        \mathrm{SU}(2)_R
      \big)
      \;
      \\
      \wedge^2 \mathbf{4}
      \ar@{|->}[rr]
      \ar@{}[d]|-{
        \scalebox{.9}{
        \begin{rotate}{-90}
          $\mathclap{\simeq}$
        \end{rotate}
        }
      }
      &&
      \mathbf{3}
      \ar@{}[d]|-{
        \scalebox{.9}{
        \begin{rotate}{-90}
          $\mathclap{\simeq}$
        \end{rotate}
        }
      }
      \\
      \big\langle
        e^{a_1 a_2}
      \big\rangle_{\mathrlap{a_i \in \{6,7,8,9\}}}
      &&
      \big\langle
        \tfrac{1}{2} e^{ 6 I  }
        -
        \tfrac{1}{2}
        \epsilon^{I J K} e_{J K}
      \big\rangle_{\mathrlap{I \in \{7,8,9\}}}
     \mathclap{\phantom{\vert_{\vert_{\vert_{\vert{\vert}}}}}}
     \\
     \mathclap{\phantom{\vert^{\vert^{\vert^{\vert^{\vert}}}}}}
     \wedge^1 \mathbf{4},\,  \wedge^3 \mathbf{4}
     \ar@{|->}[rr]
     &&
     0
    }
    }
    \mathrlap{\phantom{\vert_{\vert_{\vert_{\vert{\vert}}}}}}
  $
  &
  \begin{tabular}{l}
    \rm
    discussed in \cref{TheA1TypeSingularity};
    \\
    \rm
    see Prop. \ref{3Inside4Wedge3}
  \end{tabular}
  \\
  \hline
  $
    \raisebox{43pt}{
    \xymatrix@R=1pt{
      \mathclap{\phantom{\vert^{\vert^{\vert^{\vert^{\vert}}}}}}
      \mathrm{Rep}_{\mathbb{R}}
      \big(
      \mathrm{Spin}(5,1)
      \times
      \mathrm{SU}(2)_R
      \big)
      \ar[rr]^-{  }
      &&
      \mathrm{Rep}_{\mathbb{R}}
      \big(
        \mathrm{Spin}(3,1)
        \times
        \mathrm{U}(1)_V
        \times
        \mathrm{SU}(2)_R
      \big)
      \\
      \mathbf{8}
      \ar@{|->}[rr]
      \ar@{}[d]|-{
        \scalebox{.9}{
        \begin{rotate}{-90}
          $\mathclap{
            \,
            \underset{
              \scalebox{.6}{$\mathbb{R}$}
            }{
              \simeq
            }
          }$
        \end{rotate}
        }
      }
      &&
      \mathbf{2}_{\mathbb{C}}
      \boxtimes
      \mathbf{1}^{\!-}_{\mathbb{C}}
      \boxtimes
      \mathbf{2}_{\mathbb{C}}
      \ar@{}[d]|-{
        \scalebox{.9}{
        \begin{rotate}{-90}
          $\mathclap{
            \,
            \underset{
              \scalebox{.6}{$\mathbb{R}$}
            }{\simeq}
          }$
        \end{rotate}
        }
      }
      \\
      \scalebox{.9}{
      $
      {\begin{array}{c}
        \mathbb{H}
        \\
        \oplus
        \\
        \mathbb{H}
      \end{array}}
      $
      }
      &&
      \scalebox{.9}{
      $
      {\begin{array}{ccc}
        \mathbb{C} &\oplus& \mathbb{C} \mathrm{j}
        \\
        \oplus && \oplus
        \\
        \mathbb{C} &\oplus& \mathbb{C} \mathrm{j}
      \end{array}}
      $
      }
      \mathclap{\phantom{\vert_{\vert_{\vert_{\vert_{\vert}}}}}}
      \\
      \overline{
        \mathbf{8}
      }
      \ar@{|->}[rr]
      \ar@{}[u]|-{
        \scalebox{.9}{
        \begin{rotate}{-90}
          $\mathclap{
            \underset{
              \scalebox{.6}{$\mathbb{R}$}
            }{
              \simeq
            }
            \,
          }$
        \end{rotate}
        }
      }
      &&
      \overline{
        \mathbf{2}
      }_{\mathbb{C}}
      \boxtimes
        \mathbf{1}^{\!-}_{\mathbb{C}}
      \boxtimes
        \mathbf{2}_{\mathbb{C}}
      \ar@{}[u]|-{
        \scalebox{.9}{
        \begin{rotate}{-90}
          $\mathclap{
            \,
            \underset{
              \scalebox{.6}{$\mathbb{R}$}
            }{
              \simeq
            }
          }$
        \end{rotate}
      }
      }
    }
    }
  $
  &
  \begin{tabular}{l}
    \rm
    discussed in \cref{11dSpinorsAtTheA1TypeSingularity};
    \\
    \rm
    see Prop. \ref{TheReducedSpinRep}
  \end{tabular}
  \\
  \hline
\end{tabular}
}
\end{theorem}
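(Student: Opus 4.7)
The theorem packages two independent fixed-point computations, which I would prove in the order they are presented. The first row is a purely bosonic calculation on the transverse wedge powers $\wedge^k \mathbf{4}$, using the $\mathrm{SO}(4)$-equivariant structure and the action of $\mathbb{Z}_2^A$ induced from the Atiyah--Witten orbifolding $\mathbb{H}/\mathbb{Z}_2^A$ in \eqref{TheBranes} via the double cover $\mathrm{SU}(2)_L \times \mathrm{SU}(2)_R \twoheadrightarrow \mathrm{SO}(4)$. The second row is a spinor-branching calculation reducing the $\mathrm{Spin}(5,1) \times \mathrm{SU}(2)_R$-chiral reps $\mathbf{8},\overline{\mathbf{8}}$ along $\mathrm{Spin}(3,1) \times \mathrm{Spin}(2) \hookrightarrow \mathrm{Spin}(5,1)$, exploiting the quaternionic description of \cref{Octonionic2ComponentSpinors} together with the complex splitting $\mathbb{H} = \mathbb{C} \oplus \mathbb{C}\mathrm{j}$. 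In both rows the detailed verification will be done in the cited propositions \ref{3Inside4Wedge3} and \ref{TheReducedSpinRep}; what follows is the strategy.

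For the first row, I would trace out the induced $\mathbb{Z}_2^A$-action on the exterior algebra from its defining action on $\mathbf{4} \simeq \mathbb{R}^4 \simeq \mathbb{H}$. The fixed locus on $\mathbf{4}$ is zero by the very requirement that $\mathbb{H}/\mathbb{Z}_2^A$ be an isolated singularity, and the case $\wedge^3 \mathbf{4}$ reduces to the case $\mathbf{4}$ via the $\mathrm{SO}(4)$-equivariant Hodge isomorphism $\wedge^3 \mathbf{4} \simeq \mathbf{4}^\ast$. For $\wedge^2 \mathbf{4}$, I would invoke the $\mathrm{SO}(4)$-equivariant splitting $\wedge^2 \mathbb{R}^4 \simeq \Lambda^2_+ \oplus \Lambda^2_-$ of 2-forms into (anti-)self-dual parts, which matches the decomposition $\mathfrak{so}(4) \simeq \mathfrak{su}(2)_L \oplus \mathfrak{su}(2)_R$ under the identification $\wedge^2 \mathbb{R}^4 \simeq \mathfrak{so}(4)$. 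Only the summand corresponding to the $\mathrm{SU}(2)$-factor commuting with $\mathbb{Z}_2^A$ survives as a nontrivial $\mathrm{SU}(2)_R$-module $\mathbf{3}$, and it remains to verify that the explicit combinations $\tfrac{1}{2} e^{6I} - \tfrac{1}{2} \epsilon^{I J K} e_{J K}$ indexed by $I \in \{7,8,9\}$ span this (anti-)self-dual half, which is a direct Hodge-star check in the Euclidean metric on the $\{6,7,8,9\}$-directions.

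For the second row, I would start from the $\mathrm{Spin}(5,1) \times \mathrm{SU}(2)_R$-modules $\mathbf{8}, \overline{\mathbf{8}}$ already obtained by composing Remark \ref{BranchingOf11dSpinorsIn10d} and Lemma \ref{FixedSubspaceOfGamma6789}, and present them via the quaternionic spinor model of \cref{Octonionic2ComponentSpinors}: each is $\mathbb{H}$-valued with $\mathrm{Spin}(5,1)$ acting $\mathbb{H}$-linearly while $\mathrm{SU}(2)_R$ acts by right $\mathbb{H}$-multiplication, which accounts for the prefactor $\mathbb{H}$ in the left-hand column of the $\mathbb{R}$-isomorphism and the trailing $\boxtimes \mathbf{2}_{\mathbb{C}}$ (the standard $\mathrm{SU}(2)_R$-doublet structure on $\mathbb{H} \simeq \mathbb{C}^2$ as a right $\mathbb{C}$-module). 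Restriction along $\mathrm{Spin}(3,1) \times \mathrm{Spin}(2) \hookrightarrow \mathrm{Spin}(5,1)$ preserves a distinguished complex structure $\mathbb{C} \subset \mathbb{H}$, and the transverse $\mathrm{Spin}(2) \simeq \mathrm{U}(1)_V$ acts on $\mathbb{H}$ by left $\mathbb{C}$-rotation, making the decomposition $\mathbb{H} = \mathbb{C} \oplus \mathbb{C}\mathrm{j}$ the eigenspace decomposition under $\mathrm{U}(1)_V$ with $\mathbb{C}$ and $\mathbb{C}\mathrm{j}$ carrying opposite charges (accounting for the $\mathbf{1}^{\!-}_{\mathbb{C}}$-label in the table). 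The remaining $\mathrm{Spin}(3,1)$-Weyl factor $\mathbf{2}_{\mathbb{C}}$ is then the standard Dirac/Weyl branching, and the $\overline{\mathbf{8}}$ case follows by conjugation.

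The main effort, and the only genuine obstacle, is convention-bookkeeping: one has to fix a coherent collection of choices so that the abstract identifications above reproduce the explicit basis in the table. This amounts to tracking simultaneously (i) which of $\mathrm{SU}(2)_L, \mathrm{SU}(2)_R$ hosts $\mathbb{Z}_2^A$ in the covering $\mathrm{Spin}(4) \twoheadrightarrow \mathrm{SO}(4)$, which determines whether $\Lambda^2_+$ or $\Lambda^2_-$ is the surviving summand; (ii) the orientation of the complex unit $\mathrm{j} \in \mathbb{H}$ relative to the $\mathrm{Spin}(2) \simeq \mathrm{U}(1)_V$-rotation, which fixes the sign of the hypercharge on $\mathbb{C}$ versus $\mathbb{C}\mathrm{j}$; and (iii) compatibility of both choices with the octonionic Dirac-matrix conventions of \cref{Octonionic2ComponentSpinors} in which the whole super-exceptional algebra is written. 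Once this triple of conventions is pinned down, both rows reduce to mechanical linear-algebra computations.
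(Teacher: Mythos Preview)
Your strategy matches the paper's almost exactly: for the first row the paper computes the decomposition $\wedge^2\mathbf{4}\simeq \mathbf{3}\boxtimes\mathbf{1}\oplus\mathbf{1}\boxtimes\mathbf{3}$ by an explicit quaternionic basis computation (Lemmas \ref{IrrepDecompositionOf4Wedge4}, \ref{LeftRightExchangeOf4Wedge4}), which is precisely your self-dual/anti-self-dual splitting of $\wedge^2\mathbb{R}^4$, and handles $\wedge^3\mathbf{4}$ by the same Hodge-dual identification with $\mathbf{4}$ (Lemma \ref{ThirdWedgePowerOf4}); for the second row the paper likewise realizes $\mathbf{8}\simeq\mathbb{H}^2$ with $\mathrm{SU}(2)_R$ acting by right quaternion multiplication (Lemma \ref{CliffordActionOfSU2LIsByQuaternionRightAction}) and then reads off the $\mathrm{Spin}(3,1)$- and $\mathrm{U}(1)_V$-actions by explicit Pauli-matrix calculation (Lemmas \ref{The8OfSpin51As2C2CofSpin31TimesSU2}, \ref{ResidualU1ActionOnSpinors}).

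There is one genuine slip in your spinor paragraph: the splitting $\mathbb{H}=\mathbb{C}\oplus\mathbb{C}\mathrm{j}$ is \emph{not} the $\mathrm{U}(1)_V$-eigenspace decomposition with opposite charges. The paper's computation in Lemma \ref{ResidualU1ActionOnSpinors} shows that $\tfrac12\Gamma_4\Gamma_5$ acts on $\psi+\phi\mathrm{j}$ as $(-\psi\mathrm{i})+(-\phi\mathrm{i})\mathrm{j}$, i.e.\ both components carry the \emph{same} $\mathrm{U}(1)_V$-charge $-1$; the single factor $\mathbf{1}^{-}_{\mathbb{C}}$ in the table records exactly this. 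The decomposition $\mathbb{H}=\mathbb{C}\oplus\mathbb{C}\mathrm{j}$ is instead what exhibits the $\mathrm{SU}(2)_R$-doublet structure (the trailing $\mathbf{2}_{\mathbb{C}}$), via the right-multiplication action. Your overall plan survives this correction, but the bookkeeping you flag in item (ii) needs to be adjusted: you are not tracking a sign of hypercharge on $\mathbb{C}$ versus $\mathbb{C}\mathrm{j}$, but rather the uniform charge of the left $\mathbb{C}$-module structure on all of $\mathbb{H}$.
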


\vspace{2mm}
\begin{remark}[Real vs. complex representations]
  The classification in \eqref{TheFieldContent}
  is as \emph{real} representations, since all
  supersymmetry algebras \eqref{SuperMinkowski}
  are based on real Spin representations (e.g. \cite[3]{Freed99}).
  Now, the two complex Weyl spinor representations
  $\mathbf{2}_{\mathbb{C}}, \overline{\mathbf{2}}_{\mathbb{C}}$
 (see  \eqref{ComplexWeylRepresentations} below)
  of $\mathrm{Spin}(3,1)$, which are  distinct as complex
  representations, actually become isomorphic
  when regarded as real representations
  (recalled as Lemma \ref{RealReps4AndBar4OfSpin31AreIsomorphic} below).
  But this degeneracy is lifted by the further action of
  $\mathrm{U}(1)_V$: The
  real representations underlying their outer complex tensor product
  \eqref{RestrictionOfRepresentations} with the
  $\mathbf{1}^{\!-}_{\mathbb{C}}$ of $\mathrm{U}(1)$
  \eqref{RepresentationsOfU1}
  are \emph{not} isomorphic even as real representations (Lemma \ref{IsoclassesOfTheFullReps}):
  $
    \mathbf{2}_{\mathbb{C}}
      \underset{\mathbb{C}}{\boxtimes}
    \mathbf{1}^{\!-}_{\mathbb{C}}
      \underset{\mathbb{C}}{\boxtimes}
    \mathbf{2}_{\mathbb{C}}
    \;
    \simeq_{\mathbb{R}}
    \!\!\!\!\!\!\!\!\!\!\!\!\!/\,\,\,\,\,\,\,\,\,\,\,
    \;
    \overline{\mathbf{2}}_{\mathbb{C}}
      \underset{\mathbb{C}}{\boxtimes}
    \mathbf{1}^{\!-}_{\mathbb{C}}
      \underset{\mathbb{C}}{\boxtimes}
    \mathbf{2}_{\mathbb{C}}
    \,.
  $

\end{remark}

\begin{remark}[Gauge enhancement at ADE-singularities]
  A famous informal argument suggests that M2-branes on around
  vanishing 2-cycles inside an $\mathrm{ADE}$-singularity
  appear as $\mathrm{SU}(2)$ gauge fields on the
  transversal D-branes (e.g., \cite[3.1.2]{Acharya02}\cite[Ex. 2.2.5]{HSS18}).
  Now, if we interpret:

  \begin{enumerate}[{\bf (a)}]
  \vspace{-2mm}
  \item
  The exceptional vielbein components $e_{a_1 a_2}$ in \eqref{TheFieldContent}
  as being charges of M2-branes
  stretched along the directions $v_{a_1} \wedge v_{a_2}$;

  \vspace{-3mm}
  \item elements of
    $
      \wedge^2 \mathbf{4}
      =
      \wedge^2
      \mathbb{H}
    $
  as 2-cocycles on the transversal
  Euclidean conical orbifold geometry \eqref{TheBranes};

\vspace{-2mm}
  \item elements of the fixed locus
    $
      \big(
        \wedge^2 \mathbf{4}
      \big)^{\mathbb{Z}_2^A}
      =
      \big\langle
        e^{I 6}
      \big\rangle
    $
  as the restriction of these 2-cocyles to the
  singular point of the $\mathrm{SU}(2)_L$-action,
  hence to their evaluation on 2-cycles that are shrunken into the
  singularity;
  \end{enumerate}

  \vspace{-2mm}
\noindent  then this informal story becomes the statement
  of
  Theorem \ref{FieldsAtSingularity}.
  Explicitly, with the identification of Lemma
  \ref{IrrepDecompositionOf4Wedge4}, we have:
    $$
    \big\langle
    \mathrm{e}^{I 6}
    +
    \epsilon^{I J K}
    \mathrm{e}_{I J}
    \,,
    \;\;
    I,J \in \{7,8,9\}
    \big\rangle
    \;\longleftrightarrow\;
    \left\{ \!\!\!\!\!\!\!
      \mbox{ \small
        \begin{tabular}{c}
          Charges of M2-branes wrapped on
          \\
          vanishing 2-cycles in ADE-singularity
        \end{tabular}
      }
   \!\!\!\!\!\!\! \right\}.
  $$
\end{remark}

\medskip

\medskip

In the remainder we spell out in detail the proof of Theorem \ref{FieldsAtSingularity}.
This becomes nicely transparent in terms of octonionic 2-component spinor
representations.
Since this is not as widely known as it deserves to be,
we use the occasion to recall all the ingredients,
such as to make the proof fully self-contained.

\medskip

\subsection{Background in representation theory}
\label{ElementsOfRepresentation Theory}

For reference and to fix conventions,
we briefly recall some basic concepts of representation theory.

\vspace{-2mm}
\paragraph{Representation ring.} For $G$ a group, we write $\mathrm{Rep}_{\mathbb{R}}(G)$ for its
\emph{real representation ring}: elements are isomorphism
classes of real-linear finite-dimensional $G$-representations,
addition in the ring comes from direct sum
of representations, and the product in the ring from the
tensor product of representations.

\vspace{-2mm}
\paragraph{Irreducible representation.}
We denote irreducible representations of $G$ by
their dimension, typeset in boldface, equipped with decorations in case
there are inequivalent irreps of the same dimension.
Then we write
\begin{equation}
  k \cdot \mathbf{N}
  \;:=\;
  \underset{
    \mbox{
      \tiny
      $k$ summands
    }
  }{
  \underbrace{
    \mathbf{N} \oplus \cdots \oplus \mathbf{N}
  }
  }
  \;\in\;
  \mathrm{Rep}_{\mathbb{R}}(G)
\end{equation}
for the $k$-fold direct sum of an irrep with itself, of total dimension
$\mathrm{dim}_{{}_{\mathbb{R}}}(k \cdot N) = k \, N$.

\medskip

\noindent {\bf Exterior power representations.}
The representation ring is a ``lambda-ring'' in that for any representation
$V \in \mathrm{Rep}_{\mathbb{R}}(G)$ and $p \in \mathbb{N}$ we have the
$k$-fold exterior power representation
\begin{equation}
  \label{WedgeProductRepresentation}
  \wedge^p V \;\; \in \; \mathrm{Rep}_{\mathbb{R}}(G)\;.
\end{equation}

\vspace{-7mm}
\paragraph{Restricted representations.}
For $G_1 \overset{f}{\longrightarrow} G_2$ a group homomorphism
there is a representation-ring homomorphism
given by regarding a $G_2$ representation $V$ as a $G_1$-representation
by acting via $f$:
\begin{equation}
  \label{RestrictionOfRepresentations}
  \xymatrix@R=-4pt{
    \mathrm{Rep}_{\mathbb{R}}(G_2)
    \ar[rr]^-{f^\ast}
    &&
    \mathrm{Rep}_{\mathbb{R}}(G_1)
    \\
    \big[
      (g_2,v) \mapsto g_2 \cdot v
    \big]
    \ar@{|->}[rr]
    &&
    \big[
      (g_1,v) \mapsto f(g_1) \cdot v
    \big]
  }
\end{equation}
Specifically when $f : H \overset{\iota}{\hookrightarrow} G$ is an inclusion of subgroups, then
forming \emph{restricted representations} $f^\ast$ as in \eqref{RestrictionOfRepresentations}
is also called the ``branching of representations'' under
``breaking of symmetry'' from $G$ to $H$,
since irreps on the left will in general
``branch'' into direct sums of irreps on the right:
\begin{equation}
  \label{BranchingOfRepresentations}
  \xymatrix@R=-4pt{
    \mathrm{Rep}_{\mathbb{R}}(G)
    \ar[rr]^-{\iota^\ast}
    &&
    \mathrm{Rep}_{\mathbb{R}}(H)
    \\
    \underset{
      \mathclap{
      \mbox{
        \tiny
        \color{darkblue} \bf
        \begin{tabular}{c}
          irrep of $G$
        \end{tabular}
      }
      }
    }{
    {
      \mathbf{N}
    }
    }
    \ar@{|->}[rr]
    &&
    \underset{
      \mbox{
        \tiny
        \color{darkblue} \bf
        direct sum of irreps of $H$
      }
    }{
      \mathbf{n_1} \oplus \cdots \oplus \mathbf{n_k}
    }
  }
\end{equation}

\vspace{-3mm}
\paragraph{Outer tensor product.}
When $G = H_1 \times H_2$ is a direct product group
then the operation of
forming the tensor product as $G$-representations
of an $H_1$- and an $H_2$-representation,
both regarded as $G$-representations
under
\eqref{RestrictionOfRepresentations}
via the projection homomorphisms
$G \overset{ \mathrm{pr}_i }{ \longrightarrow } H_i$,
is also called the \emph{outer tensor product} of $H_i$-representations
and denoted by a square tensor product symbol:
\begin{equation}
  \label{OuterTensorProductOfRepresentations}
  \boxtimes
  \;:\;
  \xymatrix{
    \mathrm{Rep}_{\mathbb{R}}(H_1)
    \times
    \mathrm{Rep}_{\mathbb{R}}(H_2)
    \ar[rr]^-{ \mathrm{pr}_1^\ast \,\times \, \mathrm{pr}_2^\ast }
    &&
    \mathrm{Rep}_{\mathbb{R}}(H_1 \times H_2)
    \times
    \mathrm{Rep}_{\mathbb{R}}(H_1 \times H_2)
    \ar[r]^-{ \otimes }
    &
    \mathrm{Rep}_{\mathbb{R}}(H_1 \times H_2)\,.
  }
\end{equation}

\paragraph{Fixed points.}
Moreover, in this case when $G = H_1 \times H_2$ is a direct product group,
passage to
\emph{$H_2$-fixed points} in the representation
of an $H_1 \times H_2$-representation is an additive functor
(not though a monoidal one)
denoted
\begin{equation}
  \label{FormingFixedPoints}
  (-)^{H_2}
  \;:\;
  \xymatrix@R=-3pt{
    \mathrm{Rep}_{\mathbb{R}}(H_1 \times H_2)
    \ar[r]
    &
 \mathrm{Rep}_{\mathbb{R}}(H_1)
    \\
    V
    \ar@{|->}[r]
    &
    V^{H_2}
    \mathrlap{
      \;
      :=
      \{ v \in V \vert \, h_1 \cdot v = v \} .
    }
  }
\end{equation}
Notice that the $H_2$-fixed points of an outer tensor product \eqref{OuterTensorProductOfRepresentations}
with an irreducible $H_2$-representation
$\mathbf{N} \in \mathrm{Rep}_{\mathbb{R}}(H_2)$ are non-vanishing
precisely if $\mathbf{N}$ is the trivial representation:
\begin{equation}
  \label{H2FixedPointsOfOuterTensorProductWithH2Irrep}
  \big(
    \underset{
      \mathclap{
      \mbox{
        \tiny
        \color{darkblue} \bf
        \begin{tabular}{c}
          any rep
          \\
          of $H_1$
        \end{tabular}
      }
      }
    }{
    \underbrace{
      V
    }
    }
      \boxtimes
    \underset{
      \mathclap{
      \mbox{
        \tiny
        \color{darkblue} \bf
        \begin{tabular}{c}
          irrep
          \\
          of $H_2$
        \end{tabular}
      }
      }
    }{
    \underbrace{
      \mathbf{N}
    }
    }
  \big)^{H_2}
  \;=\;
  \left\{
    \begin{array}{lll}
      V && {\rm if} \; \mathbf{N} = \mathbf{1},
      \\
      0 &  & \mbox{otherwise}.
    \end{array}
  \right.
\end{equation}

\begin{example}[Pauli matrices]
\label{Ex-Pauli}
  The fundamental complex representation of $\mathrm{SU}(2)$
  \begin{equation}
    \label{FundamentalRepresentationofSU2}
    \mathbf{2}_{\mathbb{C}}
    \;\in\;
    \mathrm{Rep}_{\mathbb{C}}\big(
      \mathrm{SU}(2)
    \big)
    \,,
    \phantom{AAA}
    \mathbf{2}_{\mathbb{C}}
    \;\in\;
    \mathrm{Rep}_{\mathbb{C}}\big(
      \mathfrak{su}(2)
    \big)
  \end{equation}
  regarded as the underlying Lie algebra representation,
  indicated by the same symbol in the right, has as representation
  matrices the Pauli matrices, which we normalize as:
  \begin{equation}
    \label{PauliMatrices}
    \tau^1
    \;:=\;
    \tfrac{\mathrm{i}}{2}
    \left[
      \!\!\!
      \begin{array}{cc}
        1 & 0
        \\
        0 & -1
      \end{array}
      \!\!\!
    \right]
    \,,
    \phantom{AA}
    \tau^2
    \;:=\;
    \tfrac{\mathrm{i}}{2}
    \left[
      \!\!\!
      \begin{array}{cc}
        0 & -\mathrm{i}
        \\
        \mathrm{i} & 0
      \end{array}
      \!\!\!
    \right]
    \,,
    \phantom{AA}
    \tau^3
    \;:=\;
    \tfrac{\mathrm{i}}{2}
    \left[
      \!\!\!
      \begin{array}{cc}
        0 & 1
        \\
        1 & 0
      \end{array}
      \!\!\!
    \right]
  \end{equation}
  such that
  \begin{equation}
    \label{LieBracketInSU2}
    [\tau^i, \tau^j]
    \;=\;
    \epsilon^{i j}{}_k \tau^k
  \end{equation}
\end{example}

\begin{example}[Outer product with representations of $U(1)$]
    We denote  the irreducible complex representations of
  $\mathrm{U}(1)$, labeled by $n \in \mathbb{Z}$,
by
  \begin{equation}
    \label{RepresentationsOfU1}
    \mathbf{1}^{n}_{\mathbb{C}}
    \;\in\;
    \mathrm{Rep}_{\mathbb{C}}
    \big(
      \mathrm{U}(1)
    \big)
    \phantom{AA}
    \mbox{abbrieviated for $n = \pm 1$ as}:
    \phantom{A}
    \mathbf{1}^{\pm}_{\mathbb{C}}
    \;\in\;
    \mathrm{Rep}_{\mathbb{C}}
    \big(
      \mathrm{U}(1)
    \big)
    \,.
  \end{equation}
  Hence their outer tensor product \eqref{OuterTensorProductOfRepresentations}
  with the Pauli matrices from Example \ref{Ex-Pauli}
  $$
    \mathbf{1}^n_{\mathbb{C}}
      \underset{\mathbb{C}}{\boxtimes}
    \mathbf{2}_{\mathbb{C}}
    \;\;
    \in
    \;
    \mathrm{Rep}_{\mathbb{C}}
    \big(
      \mathrm{U}(1)
      \times
      \mathrm{SU}(2)
    \big)
  $$
  with the fundamental representation \eqref{FundamentalRepresentationofSU2} of $\mathrm{SU}(2)$
  has as Lie algebra representation matrices the Pauli matrices
  \eqref{PauliMatrices} and one more matrix given by
  \begin{equation}
    \tau^0
    \;:=\;
    \mathrm{i}
    \left[
      \!\!\!
      \begin{array}{cc}
        n & 0
        \\
        0 & n
      \end{array}
      \!\!\!
    \right]
    \,.
  \end{equation}
\end{example}

\begin{example}[Spinorial representations]
\label{SpinorialRepresentations}
For $d \in \mathbb{N}$ we write $\mathbb{R}^{d,1}$ for the
real inner product space with bilinear form
$\eta := \mathrm{diag}(-1,+1,+1, \cdots, +1)$.
We take the corresponding
Clifford algebra $\mathrm{Cl}_{\mathbb{R}}(d,1)$ to be the real associative
algebra generated from $\{{\pmb\Gamma}_a\}_{a = 0}^d$ subject to the relations
\begin{equation}
  \label{CliffordRelation}
  {\pmb\Gamma}_{a_1} {\pmb\Gamma}_{a_2}
  +
  {\pmb\Gamma}_{a_2} {\pmb\Gamma}_{a_1}
  \;=\;
  + 2 \eta_{a b}
  \,,
\end{equation}
and we write

\begin{equation}
  \label{EvenCliffordAlgebra}
  \;\;\;\;
  \mathrm{Cl}^{\mathrm{ev}}_{\mathbb{R}}(d,1)
  \subset
  \mathrm{Cl}_{\mathbb{R}}(d,1)
  \phantom{A}\;
  \mbox{
    for the subalgebra generated from the products
    $\big\{{\pmb \Gamma}_{a_1}{\pmb\Gamma}_{a_2}\big\}_{a_1, a_2}$.
  }
\end{equation}

\noindent Now if $\{R_{a_1 a_2}\}_{a_i = 0}^d$ denotes
the standard linear generators of  the Lie algebra $\mathfrak{so}(d,1)$,
with $R_{a_1 a_2} = - R_{a_2 a_1}$
and with Lie bracket given by
\begin{equation}
  \label{LieBracketInSO}
  [ R_{a_1 a_2}, R_{b_1 b_2}]
  \;=\;
  \eta_{a_2 b_1} R_{a_1 b_2}
  -
  \eta_{a_1 b_1} R_{a_2 b_2}
  +
  \eta_{a_2 b_2} R_{b_1 a_1}
  -
  \eta_{a_1 b_2} R_{b_1 a_2}
\end{equation}
then the assignment
\begin{equation}
  \label{LieSpinRepresentation}
  \xymatrix@R=-6pt{
    \mathfrak{so}(d,1)
    \; \ar@{^{(}->}[rr]
    &&
    \mathrm{Cl}^{\mathrm{ev}}_{\mathbb{R}}(d,1)
    \\
    R_{a_1 a_2}
    \ar@{|->}[rr]
    &&
    \tfrac{1}{4}\big[
      {\pmb\Gamma}_{a_1},
      {\pmb\Gamma}_{a_2}
    \big]
    \mathrlap{
      \;
      =
      \left\{
      \begin{array}{ccc}
        \tfrac{1}{2}{\pmb\Gamma}_{a_1}{\pmb\Gamma}_{a_2}
        &\vert& a_1 \neq a_2
        \\
        0 &\vert& \mbox{otherwise}
      \end{array}
      \right.
    }
  }
\end{equation}
constitutes a linear embedding
which is a Lie algebra morphism
with respect to the commutator bracket on the right
(e.g. \cite[Prop. 6.2]{LawsonMichaelson89}).
As a consequence, every associative algebra representation
of $\mathrm{Cl}^{\mathrm{ev}}_{\mathbb{R}}(d,1)$
(``Clifford module'') becomes a
Lie algebra representation of $\mathfrak{so}(d,1)$,
and thus a Lie group representation of the
corresponding simply connected Lie group $\mathrm{Spin}(d,1)$, via
the exponential map
\begin{equation}
  \label{ExponentiatingCliffordElements}
  \xymatrix@R=-2pt{
    \mathfrak{so}(d,1)
    \ar@{->>}[rr]
    &&
    \mathrm{Spin}(d,1)
    \ar@{}[r]|-{ \subset }
    &
    \mathrm{Cl}^{\mathrm{ev}}_{\mathbb{R}}
    \big(
      d,1
    \big).
    \\
    \alpha R_{a_1 a_2}
    \ar@{|->}[rr]
    &&
    \mathrm{exp}\big(
      \alpha \tfrac{1}{2}{\pmb\Gamma}_{a_1}{\pmb\Gamma}_{a_2}
    \big)
  }
\end{equation}
The representations obtained this way are the
\emph{spinorial} representations, in contrast to the vector representations
\eqref{TheVectorRepresentationOfSpinN} and their exterior powers
\eqref{WedgeProductRepresentation}.

\medskip
Notice that for $a_1, a_2 > 0$,
whence $\big({\pmb\Gamma}_{a_1}{\pmb\Gamma}_{a_2}\big)^2 = -1$,
Euler's formula (which applies in any Banach algebra)
gives
\begin{equation}
  \label{EulerFormulaInCliffordAlgebra}
  \exp
  \big(
    \alpha\, {\pmb\Gamma}_{a_1}{\pmb\Gamma}_{a_2}
  \big)
  \;=\;
  \mathrm{cos}(\alpha)
  +
  \mathrm{sin}(\alpha)
  \,
  {\pmb\Gamma}_{a_1}{\pmb\Gamma}_{a_2}
  \;\;\;\;
  \in
  \mathrm{Cl}^{\mathrm{ev}}_{\mathbb{R}}(d,1)
  \,.
\end{equation}
Therefore the exponent in \eqref{ExponentiatingCliffordElements}
with the prefactor of $1/2$ from \eqref{LieSpinRepresentation},
is such that rotations by an angle of $\alpha = 2 \pi$
are represented by
\begin{equation}
  \label{2PiRotationOnSpinors}
  \exp(2 \pi  \tfrac{1}{2}{\pmb\Gamma}_{a_1}{\pmb\Gamma}_{a_2}) = -1
\end{equation}
and it is only rotations by $\alpha = 4\pi$ that yield the identity
on spinors, reflecting the double covering
\begin{equation}
  \label{SpinDoubleCovering}
  \xymatrix{\mathrm{Spin}(d,1) \ar@{->>}[r] & \mathrm{SO}(d,1)}\!.
\end{equation}
\end{example}

\begin{example}
\label{VectorRepresentationofSpinN}
For $D \in \mathbb{N}$ the
{\it vector representation} of $\mathrm{Spin}(D-1,1)$
(or of $\mathrm{Spin}(D)$)
is the defining representation of $\mathrm{SO}(d,1)$
(or $\mathrm{SO}(D)$)
via \eqref{SpinDoubleCovering}.
This is a $D$-dimensional irrep, which we hence denote
\begin{equation}
  \label{TheVectorRepresentationOfSpinN}
  \mathbf{D}
    \;\in\;
  \mathrm{Rep}_{\mathbb{R}}
  \big(
    \mathrm{Spin}(D-1,1)
  \big)
    \phantom{AAA}
  \mbox{or}
  \phantom{AAA}
  \mathbf{D}
    \;\in\;
  \mathrm{Rep}_{\mathbb{R}}
  \big(
    \mathrm{Spin}(D)
  \big)
  \,.
\end{equation}
\end{example}


\begin{example}[Restriction of exterior power representation]
\label{RestrictionOfWedgePowerOfVectorRep}
For natural numbers $D_1 + D_2 = D$
the restriction \eqref{BranchingOfRepresentations} of an exterior power
\eqref{WedgeProductRepresentation}
of the vector representation $\mathbf{D}$ of $\mathrm{Spin}(D-1,1)$
along the canonical inclusion
$\mathrm{Spin}(D_1-1,1) \times \mathrm{Spin}(D_2)
\overset{\iota}{\hookrightarrow}
\mathrm{Spin}(D-1,1)$ is
\begin{equation}
  \label{BranchingOfWedgePowerOfVectorRep}
  \iota^\ast
  \big(
    \wedge^k \mathbf{D}
  \big)
  \;=\;
  \underset{p \in \{0,\cdots, k\}}{\bigoplus}
  (
    \wedge^p \mathbf{D}_1
  )
  \boxtimes
  (
    \wedge^{k-p} \mathbf{D}_2
  ).
\end{equation}
\end{example}

\subsection{Octonionic 2-component spinors}
\label{Octonionic2ComponentSpinors}

We discuss here real Spin representations
(see Example \ref{SpinorialRepresentations} for spinor conventions) in
spacetime dimensions 11, 10, 7, 6 and 4 in terms of
matrices with coefficients in the octonions
(following \cite{KugoTownsend82}, reviewed in \cite{Baez02}\cite{BH09} \cite{BH10})
which is well-adapted to the geometry of the $\tfrac{1}{2}\mathrm{M}5$-brane locus
\cref{TheA1TypeSingularity}, following \cite{HSS18}.

\medskip

We will find useful the presentation of the octonions
as generated from the quaternions and from one more
imaginary unit $\ell$. This \emph{Dickson double construction}
\cite[(6)]{Dickson1919}
is well-known in itself, but since the explicit minimal set of relations
\eqref{qlp} and \eqref{lqpl} below (highlighted in \cite[2.2]{Baez02})
is not as widely used
(but see \cite[Def. A.6]{HSS18}\cite[Def. 26]{HS18}), we recall it:

\medskip

\hspace{-.9cm}
\begin{tabular}{ll}

\begin{minipage}[l]{10cm}

\begin{lemma}[Octonions by generators-and-relations]
  \label{Octonions}
  The real star-algebra $\mathbb{O}$ of octonions,
  with its star-operation (conjugation) to be denoted $(-)^\ast$,
  is equivalently that generated from the
  algebra of quaternions $\mathbb{H}$ and from one more algebra
  element $\ell$, subject to these relations, for all $q, p \in \mathbb{H}$:
  \begin{equation}
    \ell^2 = -1
    \,,
    \phantom{AA}
    \ell^\ast = - \ell
  \end{equation}
  \begin{equation}
    \label{qlp}
    q (\ell p)
    \;=\;
    \ell (q^\ast p)
    \,,
    \phantom{AAAA}
    (q \ell) p
    \;=\;
    (q p^\ast) \ell
  \end{equation}
  \begin{equation}
    \label{lqpl}
    (\ell q) (p \ell)
    \;=\;
    - (q p)^\ast
    \,.
  \end{equation}
  Applied to
  an orthonormal basis of imaginary unit quaternions,
  $q \in \{{\rm i},{\rm j},{\rm k}\}$,
  this means that $\mathbb{O}$ is generated from the seven imaginary
  unit elements shown in the diagram on the right,
  subject to these relations:
  $
    \;\;\;\;
    a b = c,\;\;\; c a = b,\;\;\; b c = a, \;\;\;{\rm and}\;\;\; b a = - c
  $
  \\
  for every pair of consecutive arrows $a \to b \to c$ shown.
\end{lemma}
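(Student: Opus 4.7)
The plan is to model $\mathbb{O}$ concretely as the Cayley-Dickson double of $\mathbb{H}$, verify the listed relations in this model, and then show conversely that the relations determine the product up to isomorphism. Accordingly, I would set $\mathbb{O} := \mathbb{H} \oplus \mathbb{H}\ell$ as a real vector space, equipped with the Cayley-Dickson product $(a + b\ell)(c + d\ell) := (ac - d^{\ast} b) + (da + b c^{\ast})\ell$ and star-involution $(a + b\ell)^{\ast} := a^{\ast} - b\ell$; this is the classical presentation of the real normed division algebra of octonions (e.g.\ \cite{Baez02}).

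The relations $\ell^2 = -1$ and $\ell^{\ast} = -\ell$ are immediate from the formula. Specializing $q = 1$ in the second equation of \eqref{qlp} isolates the basic shuffling rule $\ell p = p^{\ast} \ell$, equivalently $p\ell = \ell p^{\ast}$. From these, all of \eqref{qlp} and \eqref{lqpl} follow by short direct computations in the model, using the fact that quaternionic conjugation is an anti-involution; for instance $q(\ell p) = q(p^{\ast}\ell) = (p^{\ast}q)\ell = (q^{\ast}p)^{\ast}\ell = \ell(q^{\ast}p)$, and $(\ell q)(p\ell) = (q^{\ast}\ell)(p\ell)$ evaluates in the Cayley-Dickson model to $-(qp)^{\ast}$.

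The converse direction amounts to showing that the three relations are enough to collapse every monomial in the generators back into $\mathbb{H} \oplus \mathbb{H}\ell$. For a product of two such summands, $(a + b\ell)(c + d\ell) = ac + a(d\ell) + (b\ell)c + (b\ell)(d\ell)$, the mixed terms $a(d\ell)$ and $(b\ell)c$ reduce to elements of $\mathbb{H}\ell$ using \eqref{qlp} (the first after rewriting $d\ell = \ell d^{\ast}$), while $(b\ell)(d\ell) \in \mathbb{H}$ by \eqref{lqpl}. Summing these four contributions recovers exactly the Cayley-Dickson product, so the free unital star-algebra on $\mathbb{H}$ and $\ell$ modulo the listed relations is at most $8$-dimensional and hence isomorphic to $\mathbb{O}$. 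The Fano-plane presentation is then a corollary: specializing $q, p$ to the imaginary quaternion basis $\{\mathrm{i}, \mathrm{j}, \mathrm{k}\}$, the seven imaginary generators $\mathrm{i}, \mathrm{j}, \mathrm{k}, \ell, \mathrm{i}\ell, \mathrm{j}\ell, \mathrm{k}\ell$ multiply according to the displayed arrows, with all signs prescribed by \eqref{qlp} and \eqref{lqpl} together with the quaternion relation $\mathrm{i}\mathrm{j} = \mathrm{k}$ (and cyclic permutations). The main nuisance will be the bookkeeping of conjugations and signs in the twist relations, since the non-associativity of $\mathbb{O}$ demands careful attention to the placement of parentheses at each step.
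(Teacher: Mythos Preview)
Your proposal is correct and follows essentially the same route as the paper: both arguments reduce to showing that the listed relations force the product on $\mathbb{H} \oplus \mathbb{H}\ell$ to be the Cayley--Dickson formula $(q_1 + p_1\ell)(q_2 + p_2\ell) = (q_1 q_2 - p_2^\ast p_1) + (p_2 q_1 + p_1 q_2^\ast)\ell$, and then appeal to Dickson/Baez for the identification with $\mathbb{O}$. The paper's version is slightly more terse---it first records the induced shuffling rules (your $\ell p = p^\ast \ell$, together with $(\ell q)p = \ell(pq)$, $(q\ell)(p\ell) = -p^\ast q$, etc.) and then reads off the product formula---whereas you make the bidirectional structure of the equivalence more explicit; but the substance is the same.
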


\end{minipage}

\scalebox{.85}{
\raisebox{-95pt}{
\begin{tikzpicture}

 \begin{scope}[rotate={(-20)}]
 \draw[->, -{>[scale=1.24]}, >=Latex, line width=1.2pt]
   (0+4:2) arc (0+4:-120:2);
 \draw[->, -{>[scale=1.24]}, >=Latex, line width=1.2pt]
   (-120+4:2) arc (-120+4:-240:2);
 \draw[->, -{>[scale=1.24]}, >=Latex, line width=1.2pt]
   (-240+4:2) arc (-240+4:-360:2);
 \end{scope}

 \begin{scope}[rotate={(+0)}]
 \draw[white, line width=2.8pt]
   (90+60:-4) to (90+60:-1);
 \draw[->, -{>[scale=1.24]}, >=Latex, line width=1.2pt]
   (90+60:2) to (90+60:.7);
 \draw[->, -{>[scale=1.24]}, >=Latex, line width=1.2pt]
   (90+60:1.24) to (90+60:-2.8);
 \draw[->, -{>[scale=1.24]}, >=Latex, line width=1.2pt]
   (90+60:-2.4) to (90+60:-4);
 \end{scope}

 \begin{scope}[rotate={(+120)}]
 \draw[white, line width=2.8pt]
   (90+60:-4) to (90+60:-1);
 \draw[->, -{>[scale=1.24]}, >=Latex, line width=1.2pt]
   (90+60:2) to (90+60:.7);
 \draw[->, -{>[scale=1.24]}, >=Latex, line width=1.2pt]
   (90+60:1.24) to (90+60:-2.8);
 \draw[->, -{>[scale=1.24]}, >=Latex, line width=1.2pt]
   (90+60:-2.4) to (90+60:-4);
 \end{scope}

 \begin{scope}[rotate={(-120)}]
 \draw[white, line width=2.8pt]
   (90+60:-4) to (90+60:-1);
 \draw[->, -{>[scale=1.24]}, >=Latex, line width=1.2pt]
   (90+60:2) to (90+60:.7);
 \draw[->, -{>[scale=1.24]}, >=Latex, line width=1.2pt]
   (90+60:1.24) to (90+60:-2.8);
 \draw[->, -{>[scale=1.24]}, >=Latex, line width=1.2pt]
   (90+60:-2.4) to (90+60:-4);
 \end{scope}

 \draw[->, -{>[scale=1.24]}, >=Latex, line width=1.1pt]
   (3.45,-2) to ({3.45/2},-2);
 \draw[->, -{>[scale=1.24]}, >=Latex, line width=1.1pt]
   ({3.45/2+.4},-2) to ({-3.45/2-.2},-2);
 \draw[line width=1.1pt]
   ({-3.45/2+.4},-2) to (-3.45,-2);

 \begin{scope}[rotate=(+120)]
 \draw[->, -{>[scale=1.24]}, >=Latex, line width=1.1pt]
   (3.45,-2) to ({3.45/2},-2);
 \draw[->, -{>[scale=1.24]}, >=Latex, line width=1.1pt]
   ({3.45/2+.4},-2) to ({-3.45/2-.2},-2);
 \draw[line width=1.1pt]
   ({-3.45/2+.4},-2) to (-3.45,-2);
 \end{scope}

 \begin{scope}[rotate=(-120)]
 \draw[->, -{>[scale=1.24]}, >=Latex, line width=1.1pt]
   (3.45,-2) to ({3.45/2},-2);
 \draw[->, -{>[scale=1.24]}, >=Latex, line width=1.1pt]
   ({3.45/2+.4},-2) to ({-3.45/2-.2},-2);
 \draw[line width=1.1pt]
   ({-3.45/2+.4},-2) to (-3.45,-2);
 \end{scope}

 \draw[fill=white] (90+60:2) circle (.6);
 \draw[fill=white] (90-60:2) circle (.6);
 \draw[fill=white] (90-60-120:2) circle (.6);

 \draw[fill=white] (90-60-120:0) circle (.6);

 \draw[fill=white] (90+60:-4) circle (.6);
 \draw[fill=white] (90-60:-4) circle (.6);
 \draw[fill=white] (90-60-120:-4) circle (.6);

 \draw (90+180:2) node
   {
     $
       {\rm e}_1
     $
   };
 \begin{scope}[shift={(0,-.36)}]
 \draw (90+180:2) node
   {
     \scalebox{.75}{
     $
       = {\rm i}
     $
     }
   };
 \end{scope}

 \draw (90+60:2) node
   {
     $
       {\rm e}_2
     $
   };
 \begin{scope}[shift={(0,-.36)}]
 \draw (90+60:2) node
   {
     \scalebox{.75}{
     $
       = {\rm j}
     $
     }
   };
 \end{scope}

 \draw (90-60:2) node
   {
     $
       {\rm e}_3
     $
   };
 \begin{scope}[shift={(0,-.36)}]
 \draw (90-60:2) node
   {
     \scalebox{.75}{
     $
       = {\rm k}
     $
     }
   };
 \end{scope}

 \draw (90-60:0) node
   {
     $
       {\rm e}_4
     $
   };
 \begin{scope}[shift={(0,-.36)}]
 \draw (90-60:0) node
   {
     \scalebox{.75}{
     $
       = \ell
     $
     }
   };
 \end{scope}

 \draw (90+60:-4) node
   {
     $
       {\rm e}_6
     $
   };
 \begin{scope}[shift={(0,-.36)}]
 \draw (90+60:-4) node
   {
     \scalebox{.75}{
     $
       = {\rm j} \ell
     $
     }
   };
 \end{scope}

 \draw (90-60:-4) node
   {
     $
       {\rm e}_7
     $
   };
 \begin{scope}[shift={(0,-.36)}]
 \draw (90-60:-4) node
   {
     \scalebox{.75}{
     $
       = {\rm k} \ell
     $
     }
   };
 \end{scope}

 \draw (90+180:-4) node
   {
     $
       e_5
     $
   };
 \begin{scope}[shift={(0,-.36)}]
 \draw (90+180:-4) node
   {
     \scalebox{.75}{
     $
       = {\rm i} \ell
     $
     }
   };
 \end{scope}

 \draw[fill=green, draw opacity=0, fill opacity=.15]
   (90+60:2) circle (.6);
 \draw[fill=green, draw opacity=0, fill opacity=.15]
   (90-60:2) circle (.6);
 \draw[fill=lightgray, draw opacity=0, fill opacity=.2]
   (90-60-120:2) circle (.6);
 \draw[fill=green, draw opacity=0, fill opacity=.15]
   (90-60-120:2) circle (.6);

 \draw[fill=cyan, draw opacity=0, fill opacity=.15]
   (90-60-120:0) circle (.6);

 \draw[fill=cyan, draw opacity=0, fill opacity=.15]
   (90+60:-4) circle (.6);
 \draw[fill=cyan, draw opacity=0, fill opacity=.15]
   (90-60:-4) circle (.6);
 \draw[fill=cyan, draw opacity=0, fill opacity=.15]
   (90-60-120:-4) circle (.6);

\end{tikzpicture}
}
}

\end{tabular}
\begin{proof}
  Notice that the relations \eqref{qlp} imply the following
  further relations
  \begin{equation}
    \label{InducedRelations}
    \begin{array}{rclccrcl}
    q \ell &=& \ell q^\ast
    &&
    \ell q &=& q^\ast \ell
    \\
    (\ell q) p &=& \ell (p q)
    &&
    q (p \ell) &=& (p q) \ell
    \\
    (\ell q) (\ell p)
    &=&
    - p q^\ast
    &&
    (q \ell) (p \ell)
    &=&
    - p^\ast  q
    \\
    \\
    \mathllap{
      \mbox{hence in particular:}
      \;\;\;\;\;\;\;\;\;\;\;\;\;\;\;\;\;\;
    }
    \ell (\ell p) &=& -p
    &&
    (q \ell) \ell &=& - q
    \\
    (\ell q) \ell & = & - q^\ast
    &&
    \ell (p \ell) & = & - p^\ast.
    \end{array}
  \end{equation}
  Consequently, one finds the general formula for the product of
  any pair of octonions $x_i$,
  parametrized as
  $x_i = q_i + p_i \ell$
  or as
  $x_i = q_i + \ell p_i$ (with $q_i, p_i \in \mathbb{H}$),
  to be,
  respectively:
  \begin{equation}
    \begin{aligned}
    (q_1 + p_1 \ell)
    (q_2 + p_2 \ell)
    & = \;
    \big(
      q_1 q_2
      -
      p_2^\ast p_1
    \big)
    \;+\;
    \phantom{\ell}
    \big(
      p_2 q_1
      +
      p_1 q_2^\ast
    \big)
    \ell,
    \\
    (q_1 + \ell p_1)
    (q_2 + \ell p_2)
    & =\;
    \big(
      q_1 q_2 - p_2 p_1^\ast
    \big)
    \;+\;
    \ell
    \big(
      q_1^\ast p_2
      +
      q_2 p_1
    \big)
    \phantom{\ell}.
    \end{aligned}
  \end{equation}
  This is the formula for the octonionic product according to
  \cite[(6)]{Dickson1919} (where the version in the first line appears),
  reviewed in \cite[2.2]{Baez02} (where the isomorphic version in the
  second line is given).
\hfill \end{proof}

\begin{remark}[Division algebra inclusions and Supersymmetry breaking patterns]
  Any choice of octonion generators $\mathrm{e}_1, \cdots \mathrm{e}_7$
  according to Lemma \ref{Octonions} induces algebra inclusions
  of, consecutively, the real numbers, the complex numbers
  and the quaternions into the octonions
  \begin{equation}
    \label{AlgebraSequence}
    \xymatrix@R=2pt{
      \mathbb{O}
    \;  \ar@{<-^{)}}[rr]
      \ar@{}[dd]|-{
        \scalebox{.9}{
          \begin{rotate}{-90}
            $\mathclap{\simeq_{{\scalebox{.6}{$\mathbb{R}$}}}}$
          \end{rotate}
        }
      }
      &&
      \;\; \mathbb{H}
     \; \ar@{<-^{)}}[rr]
      \ar@{}[dd]|-{
        \scalebox{.9}{
          \begin{rotate}{-90}
            $\mathclap{\simeq_{{\scalebox{.6}{$\mathbb{R}$}}}}$
          \end{rotate}
        }
      }
      &&
      \;\; \mathbb{C}
     \; \ar@{}[dd]|-{
        \scalebox{.9}{
          \begin{rotate}{-90}
            $\mathclap{\simeq_{{\scalebox{.6}{$\mathbb{R}$}}}}$
          \end{rotate}
        }
      }
      \ar@{<-^{)}}[rr]
      \ar@{}[dd]|-{
        \scalebox{.9}{
          \begin{rotate}{-90}
            $\mathclap{\simeq_{{\scalebox{.6}{$\mathbb{R}$}}}}$
          \end{rotate}
        }
      }
      &&
    \;\;  \mathbb{R}
     \; \ar@{}[dd]|-{
        \scalebox{.9}{
          \begin{rotate}{-90}
            $\mathclap{\simeq_{{\scalebox{.6}{$\mathbb{R}$}}}}$
          \end{rotate}
        }
      }
      \\
      \\
      \mathclap{
      \left\langle
        \!\!\!\!\!
        \begin{array}{l}
          1, \mathrm{e}_1, \mathrm{e}_2, \mathrm{e}_3
          \\
          \mathrm{e}_4, \mathrm{e}_5, \mathrm{e}_6, \mathrm{e}_7
        \end{array}
        \!\!\!\!\!
      \right\rangle
      }
      &&
      \mathclap{
      \langle
        1, \mathrm{e}_1, \mathrm{e}_2, \mathrm{e}_3
      \rangle
      }
      &&
      \mathclap{
      \langle
        1, \mathrm{e}_1
      \rangle
      }
      &&
      \mathclap{
      \langle
        1
      \rangle
      }
    }
  \end{equation}
  Moreover, multiplication of the linear sub-spaces corresponding to
  these sub-algebras with the remaining generators
  induces distinguished linear isomorphisms
  \begin{equation}
    \label{LinearSubspacesOfTheOctonions}
    \xymatrix{
      \mathbb{O}
      \;\simeq_{\mathbb{R}}\;
      \mathbb{H} \oplus \mathbb{H} \ell
      \;\simeq_{\mathbb{R}}\;
      \big(
        \mathbb{C} \oplus \mathbb{C} \mathrm{j}
      \big)
      \oplus
      \big(
        \mathbb{C} \oplus \mathbb{C} \mathrm{j}
      \big)
      \ell
      \;\simeq_{\mathbb{R}}\;
      \big(
        (\mathbb{R} \oplus \mathbb{R} \mathrm{i} )
        \oplus
        (\mathbb{R} \oplus \mathbb{R} \mathrm{i}) \mathrm{j}
      \big)
      \oplus
      \big(
        (\mathbb{R} \oplus \mathbb{R} \mathrm{i})
        \oplus
        (\mathbb{R} \oplus \mathbb{R} \mathrm{i})
        j
      \big)
      \ell
    }.
  \end{equation}
  This extra structure on $\mathbb{O}$, corresponding
  to the choice of an adapted linear basis according to
  Lemma \ref{Octonions}, turns out to reflect
  the supersymmetry breaking sequences
  \begin{equation}
  \xymatrix@R=3pt{
    \mathbb{R}^{10,1\vert\mathbf{32}}\;
    \ar@{|->}[dr]
    \ar@{|->}[rr]
    &&
    \;\mathbb{R}^{6,1\vert \mathbf{16}}\;
    \ar@{|->}[rr]
    \ar@{|->}[dr]
    &&
    \;\mathbb{R}^{4,1\vert \mathbf{8}}\;
    \ar@{|->}[dr]
    \\
    &
    \;
    \mathbb{R}^{9,1\vert \mathbf{16}}
    \;
    \ar@{|->}[rr]
    &&
    \;\mathbb{R}^{5,1\vert \mathbf{8}}\;
    \ar@{|->}[rr]
    &&
    \;\mathbb{R}^{3,1\vert \mathbf{4}}\;
    \ar@{|->}[rr]
    &&
    \;\mathbb{R}^{2,1\vert \mathbf{2}}
  }
  \end{equation}
  This is the statement of Prop. \ref{SpinRepsViaSL2O}
  and Prop. \ref{RealSpinRepsInDimPlus1ViaOconions} below.
\end{remark}

An illustrative example computation with the relations \eqref{InducedRelations}
is the following (used below in Prop. \ref{FixedSubspaceOfGamma6789}):

\begin{lemma}[Reversal of sign of $\ell$-component by left action]
\label{ConsecutiveLeftMultiplicationBye4e5e6e7}
The action of consecutive left multiplication by the
generators $\mathrm{e}_4$, $\mathrm{e}_5$, $\mathrm{e}_6$, $\mathrm{e}_7$
from Lemma \ref{Octonions} on any octonion
$x = q + p \ell$ ($q,p \in \mathbb{H}$) is by reversal of the
sign of the $\ell$-component:
\begin{equation}
  \label{SignReversalByLeftMultiplication}
  {\rm e}_4
  \bigg(
  {\rm e}_5
  \Big(
    {\rm e}_6
    \big(
      {\rm e}_7
      \;
      (q + p \ell)
      \;
    \big)
  \Big)
  \bigg)
  \;=\;
  q - p \ell
  \,.
\end{equation}
\end{lemma}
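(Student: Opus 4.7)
The plan is a direct iterated computation using the identities collected in \eqref{qlp}, \eqref{lqpl}, and \eqref{InducedRelations}. The key observation is that each of the generators $\mathrm{e}_4 = \ell$, $\mathrm{e}_5 = \mathrm{i}\ell$, $\mathrm{e}_6 = \mathrm{j}\ell$, $\mathrm{e}_7 = \mathrm{k}\ell$ has the form $r\ell$ with $r \in \mathbb{H}$, so left multiplication of any $q + p\ell$ by $r\ell$ can be evaluated summand-wise using the two identities
\[
  (r\ell) q \;=\; (r q^\ast) \ell
  \qquad\text{and}\qquad
  (r\ell)(p\ell) \;=\; -p^\ast r,
\]
which are instances of \eqref{qlp} and \eqref{InducedRelations}, respectively. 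Hence left multiplication by $r\ell$ is the $\mathbb{R}$-linear endomorphism of $\mathbb{H} \oplus \mathbb{H}\ell$ given by
\[
  L_{r\ell}\colon (q,\,p) \;\longmapsto\; \bigl(-\,p^\ast r,\; r\,q^\ast\bigr).
\]
This reduces the lemma to a purely quaternionic book-keeping exercise, with no further octonionic non-associativity to worry about, since at each stage the output lies back in the standard form $q' + p'\ell$.

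First I would apply $L_{\mathrm{k}\ell}$ to the generic input $(q,p)$, obtaining $(-p^\ast \mathrm{k},\, \mathrm{k} q^\ast)$. Next I would apply $L_{\mathrm{j}\ell}$, which by the formula above sends $(q_1,p_1)$ to $(-p_1^\ast \mathrm{j},\, \mathrm{j}\,q_1^\ast)$. Substituting $q_1 = -p^\ast\mathrm{k}$ and $p_1 = \mathrm{k}q^\ast$ and using the quaternionic identities $\mathrm{k}\,\mathrm{j} = -\mathrm{i}$, $\mathrm{j}\,\mathrm{k} = \mathrm{i}$, together with $(uv)^\ast = v^\ast u^\ast$, this produces $(-q\mathrm{i},\, \mathrm{i}\,p)$. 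Applying $L_{\mathrm{i}\ell}$ in the same manner and using $\mathrm{i}^2 = -1$ collapses this to $(-p^\ast,\, -q^\ast)$. Finally, applying $L_{\ell}$ to $(-p^\ast, -q^\ast)$ gives
\[
  \bigl(-(-q^\ast)^\ast,\; \ell\text{-component } (-p^\ast)^\ast\bigr)
  \;=\;
  (q,\,-p),
\]
which is exactly the claimed right-hand side $q - p\ell$ of \eqref{SignReversalByLeftMultiplication}.

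There is no substantial obstacle: the four steps are mechanical once the formula for $L_{r\ell}$ is in place. The only points requiring minor care are keeping track of signs coming from the involution $(-)^\ast$ on pure-imaginary quaternions (so that $\mathrm{i}^\ast = -\mathrm{i}$, etc.) and remembering that inside $\mathbb{H}$ everything is associative, so the products such as $(q\mathrm{k})\mathrm{j} = q(\mathrm{k}\mathrm{j}) = -q\mathrm{i}$ can be rearranged freely. An alternative presentation would be to verify once and for all the identity $L_\ell \circ L_\ell = -\mathrm{id}$, $L_{\mathrm{i}\ell}\circ L_{\mathrm{i}\ell} = -\mathrm{id}$, etc., and then observe that $L_\ell \circ L_{\mathrm{i}\ell}\circ L_{\mathrm{j}\ell}\circ L_{\mathrm{k}\ell}$ restricts to the identity on the $\mathbb{H}$-summand and to minus the identity on the $\mathbb{H}\ell$-summand; but the straight iterated computation outlined above is already short enough to serve as the proof.
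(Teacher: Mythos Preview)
Your proof is correct and follows essentially the same approach as the paper: a direct iterated computation using the Dickson-double identities \eqref{qlp} and \eqref{InducedRelations}. The only difference is organizational---you first abstract the single-step formula $L_{r\ell}(q,p) = (-p^\ast r,\, r q^\ast)$ and then iterate it four times, whereas the paper works through the chain of products in one pass; your packaging is arguably cleaner and makes the sign-tracking more transparent.
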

\begin{proof}
Using the relations \eqref{qlp} and \eqref{InducedRelations} and the
associativity of the multiplication on quaternions, we compute
as follows:

\vspace{-6mm}
$$
  \begin{aligned}
    {\rm e}_4
    \Big(
    {\rm e}_5
    \big(
      {\rm e}_6
      ({\rm e}_7 x)
    \big)
    \Big)
    & =
    \ell
    \bigg(
    ({\rm i} \ell)
    \Big(
      ({\rm j} \ell)
      \big(
        ({\rm k} \ell)
        x
      \big)
    \Big)
    \bigg)
    \;=\;
    \ell
    \bigg(
    ({\rm i} \ell)
    \Big(
      ({\rm j} \ell)
      \big(
        ({\rm k} x^\ast)
        \ell
      \big)
    \Big)
    \bigg)
    \;=\;
    \ell
    \Big(
      ({\rm i} \ell)
      \big(
        (x {\rm k}) {\rm j}
      \big)
    \Big)
    \;=\;
    \ell
    \bigg(
      \Big(
        {\rm i}
        \big(
          {\rm j}
          ({\rm k} x^\ast)
        \big)
      \Big)
    \ell
    \bigg)
    \\
    & =
    \big(
      ( x {\rm k})
      {\rm j}
    \big)
   {\rm i}
   \;=\;
    \big(
          ( (q + p \ell)
      {\rm k})
      {\rm j}
    \big)
    {\rm i}
    \;=\;
    q {\rm k j i}
    +
    \Big(
      \big(
        (p \ell) {\rm k}
      \big)
      {\rm j}
    \Big)
    {\rm i}
    \;=\;
    q
      {\rm k j i}
    -
    (p
          {\rm k j i}
    ) \ell
    \;=\;
    q - p \ell
    \,.
  \end{aligned}
$$

\vspace{-7mm}
\hfill \end{proof}

\begin{notation}[Conjugation]
  \label{Conjugation}
In what follows, we will denote conjugation by $(-) ^\ast$
in any of the real $\ast$-algebras $\mathbb{R},\mathbb{C},\mathbb{H}$, and $\mathbb{O}$.
For a matrix $A$ with coefficients in
$\mathbb{K}\in\{\mathbb{R},\mathbb{C},\mathbb{H},\mathbb{O}\}$
the component-wise conjugate matrix will be denoted by $A^\ast$,
while the Hermitian conjugate matrix will be denoted
by $A^\dagger :=  {}^tA^*$.
\end{notation}
We record the following immediate generalization of the
standard Pauli matrix construction:

\begin{lemma}[$\mathbb{K}$-Pauli matrices]
  \label{OctonionicPauliMatrices}
  Let $\mathbb{K}\in \{\mathbb{R},\mathbb{C},\mathbb{H},\mathbb{O}\}$.
  There is a linear isomorphism of real vector spaces
  equipped with quadratic forms (the color code follows the configurations in \eqref{TheBranes})

  \vspace{-.4cm}

  \begin{equation}
    \label{ExOctopnionicPauliMatices}
    \hspace{-2cm}
    \raisebox{-140pt}{
    \scalebox{.9}{\begin{tikzpicture}

    \draw (0,0) node
    {
    \xymatrix@R=-4pt{
      \big(
        \mathbb{R}^{\dim_{\mathbb{R}}\mathbb{K}+1,1}, \eta
      \big)
      \ar[rr]_-{\simeq}^-\sigma
      &&
      \big(
        \mathrm{Mat}(2,\mathbb{K})^{\mathrm{Herm}},
        - \mathrm{det}
      \big)
      \\
      v
      =
      \left[
        {\begin{array}{c}
          v^0
          \\
          v^1
          \\
          v^2
          \\
          v^3
          \\
          v^4
          \\
          v^5
          \\
          v^6
          \\
          v^7
          \\
          v^8
          \\
          v^{9}
        \end{array}}
      \right]
      \ar@{}[rr]|-{\longmapsto}
      &&
      \!\!\!\!\!\!\!
      \left[
        \!\!
        {\begin{array}{cc}
          v^0 + v^1
            &
          v^2
          \\
          v^2
          &
          v^0 - v^1
        \end{array}}
        \!\!
      \right]
      +
      \mathrlap{
      \big(
         v^3 \mathrm{e}_1
        \;+\;
        v^4 \mathrm{e}_2
        +
        v^5 \mathrm{e}_3
        \;+\;
          v^6{\rm e}_{4}
          +
          v^{7}{\rm e}_{5}
          +
           v^8{\rm e}_{6}
          +
          v^{9}{\rm e}_{7}
      \big)
      \left[
        \!\!
        {\begin{array}{cc}
          0 & 1
          \\
          -1
          &
          0
        \end{array}}
        \!\!
      \right]
      }
    }
    };

  \begin{scope}[shift={(4.55,.16)}]
    \draw[fill=red, draw opacity=0, fill opacity=.1]
      (-4.6,.15) rectangle (-1.0,-1.02);
    \draw[fill=lightgray, draw opacity=0, fill opacity=.2]
      (-4.6,.15) rectangle (0.34,-1.02);
    \draw[fill=green, draw opacity=0, fill opacity=.15]
      (-4.6,.15) rectangle (2.6,-1.02);
    \draw[fill=cyan, draw opacity=0, fill opacity=.15]
      (2.6,.15) rectangle (6.9,-1.02);
    \draw[fill=red, draw opacity=0, fill opacity=.1]
      (-7.4,.6) rectangle (-6.65,1.92);
    \draw[fill=lightgray, draw opacity=0, fill opacity=.2]
      (-7.4,0.1) rectangle (-6.65,1.92);
    \draw[fill=green, draw opacity=0, fill opacity=.15]
      (-7.4,-.87) rectangle (-6.65,1.92);
    \draw[fill=cyan, draw opacity=0, fill opacity=.15]
      (-7.4,-2.7) rectangle (-6.65,-.87);
  \end{scope}
    \end{tikzpicture}
    }
    }
  \end{equation}
  from $(\dim_{{}_{\mathbb{R}}}\mathbb{K}+2)$-dimensional Minkowski spacetime with
  quadratic form being its Minkowski metric \break $\eta = \mathrm{diag}(-1,1,1,\cdots, 1)$
  to the vector space of $2 \times 2$ $\mathbb{K}$-Hermitian matrices
  with quadratic form being minus the determinant operation.
\end{lemma}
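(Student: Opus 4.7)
The map $\sigma$ in \eqref{ExOctopnionicPauliMatices} is manifestly $\mathbb{R}$-linear in the components $v^a$, so the proof reduces to three checks: (i) that $\sigma$ lands in $\mathrm{Mat}(2,\mathbb{K})^{\mathrm{Herm}}$, (ii) that $\sigma$ is a linear bijection, and (iii) that $-\det(\sigma(v)) = \eta(v,v)$ for all $v$. I would treat these as three independent verifications, with (iii) being the essential content of the lemma and (i), (ii) being bookkeeping.

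For (i), using Notation \ref{Conjugation} and the fact that $\mathrm{e}_i^\ast = -\mathrm{e}_i$ for $i = 1,\dots,7$ (all the generators of Lemma \ref{Octonions} are imaginary), the off-diagonal entries of $\sigma(v)$ are
\begin{equation*}
  b \;:=\; v^2 + u, \qquad b^\ast \;=\; v^2 - u,
  \qquad \text{where } u \;:=\; \sum_{i=1}^{\dim_{\mathbb{R}}\mathbb{K}-1} v^{i+2}\,\mathrm{e}_i \;\in\; \mathbb{K},
\end{equation*}
so that the top-right entry is indeed the conjugate of the bottom-left; Hermiticity follows. For (ii), the dimensions match (both sides have real dimension $\dim_{\mathbb{R}}\mathbb{K}+2$), and an explicit inverse is read off: given $\bigl[\begin{smallmatrix} a & b\\ b^\ast & d\end{smallmatrix}\bigr]$ with $a,d\in\mathbb{R}$, set $v^0 = \tfrac{1}{2}(a+d)$, $v^1 = \tfrac{1}{2}(a-d)$, and extract $v^2,\dots,v^{\dim_{\mathbb{R}}\mathbb{K}+1}$ by projecting $b$ onto the basis $\{1,\mathrm{e}_1,\dots,\mathrm{e}_{\dim_{\mathbb{R}}\mathbb{K}-1}\}$ of $\mathbb{K}$.

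The heart of the proof is (iii). Since the diagonal entries of $\sigma(v)$ are real, the determinant $(v^0+v^1)(v^0-v^1) - b\,b^\ast$ is an unambiguous element of $\mathbb{R}$ even in the non-associative octonionic case, as $b$ and $b^\ast$ are a single pair of octonions whose product is defined regardless of associativity. Using that $v^2 \in \mathbb{R}$ commutes with $u$, one expands
\begin{equation*}
  b\,b^\ast \;=\; (v^2 + u)(v^2 - u) \;=\; (v^2)^2 \,-\, u^2.
\end{equation*}
The key observation is then the standard identity for purely imaginary elements of $\mathbb{K}$: since $\mathrm{e}_i^2 = -1$ and $\mathrm{e}_i \mathrm{e}_j + \mathrm{e}_j \mathrm{e}_i = 0$ for $i\neq j$ (this anti-commutation being visible in the Fano-like diagram of Lemma \ref{Octonions}), the cross terms in $u^2$ cancel pairwise, giving
\begin{equation*}
  u^2 \;=\; -\sum_{i=1}^{\dim_{\mathbb{R}}\mathbb{K}-1} (v^{i+2})^2,
  \qquad \text{hence}\qquad b\,b^\ast \;=\; \sum_{i=2}^{\dim_{\mathbb{R}}\mathbb{K}+1}(v^i)^2.
\end{equation*}
Combining yields $\det\sigma(v) = (v^0)^2 - (v^1)^2 - \sum_{i\geq 2}(v^i)^2 = -\eta(v,v)$, completing the proof.

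\textbf{Main obstacle.} The only subtle point is keeping the octonionic case honest: non-associativity could in principle wreck determinant-like computations, but here every product that appears is of the form (real)$\cdot$(octonion) or $b\,b^\ast$ with just two octonionic factors, so associativity is never invoked. This is the one place to exercise care; everything else is routine linear algebra.
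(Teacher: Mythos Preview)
Your proof is correct and complete. The paper itself does not give a proof of this lemma, presenting it instead as an immediate generalization of the standard Pauli matrix construction; your argument is exactly the direct verification one would supply, and your attention to the octonionic case (only two-factor products $b\,b^\ast$ appear, so non-associativity is harmless) is well-placed.
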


We denote the $\mathbb{K}$-Pauli matrices \eqref{ExOctopnionicPauliMatices}
corresponding to the coordinate basis elements as follows:
\begin{equation}
  \label{CoordinatePauliMatrices}
    \sigma_a \;:=\; \sigma(v_a)
    \qquad
    \text{and}
    \qquad
    \sigma^a \;:=\; \eta^{a b} \sigma_b,
\end{equation}
where $v_a \in \mathbb{R}^{\dim_{\mathbb{R}\mathbb{K}+1,1}}$
denotes the vector with components $(v_a)^b := \delta^b_a$.

\medskip
The following observation is due to \cite{KugoTownsend82},
with a streamlined review in \cite{BH09}:

\begin{prop}[Real Spin representations in dimension 10, 6, 4, 3,
via $\mathbb{K}$-Pauli matrices]
  \label{SpinRepsViaSL2O}
   Let $\mathbb{K}\in \{\mathbb{R},\mathbb{C},\mathbb{H},\mathbb{O}\}$.
   \vspace{-1mm}
   \item {\bf (i)}   The assignments
   \begin{equation}
     \label{KPauliActions}
     {\pmb\Gamma}_{a_1}{\pmb\Gamma}_{a_2}
     \;\longmapsto\;
     \sigma^{a_1} \cdot \big( \sigma_{a_2} \cdot (-) \big)
     \phantom{AAAA}
     \mbox{and}
     \phantom{AAAA}
     {\pmb\Gamma}_{a_1}{\pmb\Gamma}_{a_2}
     \;\longmapsto\;
     \sigma_{a_1} \cdot \big( \sigma^{a_2} \cdot (-) \big)
   \end{equation}
   of left multiplication actions by the $\mathbb{K}$-Pauli matrices \eqref{CoordinatePauliMatrices}
   for alternating index positions
   constitute representations of the even Clifford algebras \eqref{EvenCliffordAlgebra} on the real vector space
   underlying $\mathbb{K}^2$, hence are real algebra homomorphisms
   $$
     \xymatrix{
       \mathrm{Cl}^{\mathrm{ev}}_{\mathbb{R}}
       \big(
         \mathrm{dim}_{{}_{\mathbb{R}}} \mathbb{K} + 1, 1
       \big)
       \ar[rr]
       &&
       \mathrm{End}_{{}_{\mathbb{R}}}\big( \mathbb{K}^2 \big)
     }
     \,.
   $$

   \vspace{-5mm}
\item {\bf (ii)}   As a consequence \eqref{LieSpinRepresentation}
   there are two real spinorial representations
   $\mathbf{2\, \mathbf{dim}_{{}_{\mathbb{R}}}\mathbb{K}}$ and $\overline{\mathbf{2}\, \mathbf{dim}_{{}_{\mathbb{R}}}\mathbb{K}}$ of
   $\mathrm{Spin}(\dim_{\mathbb{R}}\mathbb{K}+1,1)$, regarded as real modules of the
   Lie algebra $\mathfrak{so}(\dim_{\mathbb{R}}\mathbb{K}+1,1)$,
   each isomorphic to
   the real vector space underlying
   $\mathbb{K}^2$
   equipped, respectively, with the following action of the standard basis elements $R_{a_1 a_2}$ \eqref{LieBracketInSO}:
  \begin{equation}
    \label{HeteroticSpinRepsAsOctonionicModules}
    \hspace{-.4cm}
    \raisebox{15pt}{
    \scalebox{.9}{
    \xymatrix@R=5pt@C=3pt{
      \mathfrak{so}(\dim_{{}_{\mathbb{R}}}\mathbb{K}+1,1)
      \ar@{}[r]|-{\times}
      \ar@{}[dd]|-{
        \scalebox{.9}{
          \begin{rotate}{-90}
            $\mathclap{=}$
          \end{rotate}
        }
      }
      &
     \mathbf{2\, \mathbf{dim}_{{}_{\mathbb{R}}}\mathbb{K}}
           \ar@{}[dd]|-{
        \scalebox{.9}{
          \begin{rotate}{-90}
            $\mathclap{\simeq_{{\scalebox{.6}{$\mathbb{R}$}}}}$
          \end{rotate}
        }
      }
      \ar[rr]
      &{\phantom{AA}}&
      \mathbf{2\, \mathbf{dim}_{{}_{\mathbb{R}}}\mathbb{K}}
      \ar@{}[dd]|-{
        \scalebox{.9}{
          \begin{rotate}{-90}
            $\mathclap{\simeq_{{\scalebox{.6}{$\mathbb{R}$}}}}$
          \end{rotate}
        }
      }
      \\
      \\
      \mathfrak{so}(\dim_{{}_{\mathbb{R}}}\mathbb{K}+1,1)
      \ar@{}[r]|<<<{\times}
      &
      \mathbb{K}^{\mathrlap{2}}
      \;\ar[rr]
      &&
      \mathbb{K}^{\mathrlap{2}}
      \\
    {\phantom{mmmmmmmmm}}
     \mathllap{(}
        R_{a_1 a_2},
        &
        \psi
      \mathrlap{)}
      \ar@{}[rr]|-{\;\longmapsto}
      &&
     \tfrac{1}{2}  \sigma^{a_1} \cdot (\sigma_{a_2} \cdot \psi)
    }
    \;
    \xymatrix@R=5pt@C=3pt{
      \mathfrak{so}(\dim_{{}_{\mathbb{R}}}\mathbb{K}+1,1)
      \ar@{}[r]|-{\times}
      \ar@{}[dd]|-{
        \scalebox{.9}{
          \begin{rotate}{-90}
            $\mathclap{=}$
          \end{rotate}
        }
      }
      &
      \overline{\mathbf{2}\, \mathbf{dim}_{{}_{\mathbb{R}}}\mathbb{K}}
      \ar@{}[dd]|-{
        \scalebox{.9}{
          \begin{rotate}{-90}
            $\mathclap{\simeq_{{\scalebox{.6}{$\mathbb{R}$}}}}$
          \end{rotate}
        }
      }
      \ar[rr]
      &{\phantom{AA}}&
     \overline{\mathbf{2}\, \mathbf{dim}_{{}_{\mathbb{R}}}\mathbb{K}}
      \ar@{}[dd]|-{
        \scalebox{.9}{
          \begin{rotate}{-90}
            $\mathclap{\simeq_{{\scalebox{.6}{$\mathbb{R}$}}}}$
          \end{rotate}
        }
      }
      \\
      \\
      \mathfrak{so}(\dim_{{}_{\mathbb{R}}}\mathbb{K}+1,1)
      \ar@{}[r]|<<<{\times}
     &
      \mathbb{K}^{\mathrlap{2}}
      \;\ar[rr]
      &&
      \mathbb{K}^{\mathrlap{2}}
      \\
    {\phantom{mmmmmmmmm}}
    \mathllap{(}
        R_{a_1 a_2},
        &
        \psi
      \mathrlap{)}
      \ar@{}[rr]|-{\;\longmapsto}
      &&
      \tfrac{1}{2}\sigma_{a_1} \cdot (\sigma^{a_2} \cdot \psi)
    }
    }
    }
  \end{equation}
\end{prop}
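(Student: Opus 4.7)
The plan is to verify part (i) by promoting the two assignments in \eqref{KPauliActions} to a single Clifford representation of the \emph{full} algebra $\mathrm{Cl}_{\mathbb{R}}(\dim_{\mathbb{R}}\mathbb{K}+1,1)$ on the doubled space $\mathbb{K}^2\oplus\mathbb{K}^2$, and then to recover the two assignments as the restriction of its even part to the two direct summands; part (ii) will be automatic via the standard Clifford-to-Spin construction of Example \ref{SpinorialRepresentations}. Concretely, setting
\begin{equation*}
\hat{\Gamma}_a \;:=\; \begin{pmatrix} 0 & \sigma_a \\ \sigma^a & 0 \end{pmatrix} \;\colon\; \mathbb{K}^2 \oplus \mathbb{K}^2 \longrightarrow \mathbb{K}^2 \oplus \mathbb{K}^2,
\end{equation*}
the Clifford anticommutator $\hat{\Gamma}_a \circ \hat{\Gamma}_b + \hat{\Gamma}_b \circ \hat{\Gamma}_a = 2\eta_{ab}\cdot\mathrm{id}$ reduces, block-diagonally, to the pairwise identity on $\mathbb{K}^2$
\begin{equation*}
\sigma^{a_1} \!\cdot\! (\sigma_{a_2} \!\cdot\! \psi) + \sigma^{a_2} \!\cdot\! (\sigma_{a_1} \!\cdot\! \psi) \;=\; 2\eta_{a_1 a_2}\,\psi \qquad \forall\,\psi\in\mathbb{K}^2,
\end{equation*}
which by polarization (contracting with $v^{a_1}v^{a_2}$) becomes the single squared statement $\tilde{\sigma}(v)\cdot\bigl(\sigma(v)\cdot\psi\bigr) = \eta(v,v)\,\psi$, where $\sigma(v):=v^a\sigma_a$ and $\tilde{\sigma}(v):=v^a\sigma^a$.

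From Lemma \ref{OctonionicPauliMatrices} one reads off that $\sigma(v)$ is $\mathbb{K}$-Hermitian with $\det\sigma(v) = -\eta(v,v)$, and that $\tilde{\sigma}(v) = \sigma(v) - \mathrm{tr}\bigl(\sigma(v)\bigr)\cdot\mathrm{id}$ coincides with $-\mathrm{adj}\bigl(\sigma(v)\bigr)$ (the classical adjugate/cofactor). In the associative cases $\mathbb{K}\in\{\mathbb{R},\mathbb{C},\mathbb{H}\}$ the squared identity is therefore just the Cayley--Hamilton relation $A\cdot\mathrm{adj}(A) = \det(A)\cdot\mathrm{id}$. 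In general, writing $\sigma(v) = \bigl(\begin{smallmatrix}A & X\\ X^{\ast} & D\end{smallmatrix}\bigr)$ with $A,D\in\mathbb{R}$, $X\in\mathbb{K}$, a direct component-wise expansion of $\tilde{\sigma}(v)\cdot\bigl(\sigma(v)\cdot\psi\bigr)$ shows the $AX$- and $DX$-cross terms to cancel out, collapsing the claim to the single scalar pair
\begin{equation*}
X(X^{\ast}\psi) \;=\; |X|^2\psi \;=\; X^{\ast}(X\psi)\qquad \forall\,\psi\in\mathbb{K}.
\end{equation*}

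The only non-formal step---and the main obstacle---is verifying these last two identities in the octonionic case $\mathbb{K}=\mathbb{O}$, where non-associativity \emph{a priori} forbids reparenthesization $X(X^{\ast}\psi) \neq (XX^{\ast})\psi$. The rescue is that $\mathbb{O}$ is \emph{alternative}: every subalgebra generated by two elements is associative, equivalently the left alternative law $X(X\psi) = (XX)\psi$ holds for all $X,\psi\in\mathbb{O}$. Using also the Cayley involution relation $X + X^{\ast} = 2\,\mathrm{Re}(X)\in\mathbb{R}$, one then computes
\begin{equation*}
X(X^{\ast}\psi) \;=\; X\bigl((2\mathrm{Re}(X) - X)\psi\bigr) \;=\; 2\mathrm{Re}(X)\,X\psi - X(X\psi) \;=\; 2\mathrm{Re}(X)\,X\psi - (XX)\psi \;=\; (XX^{\ast})\psi \;=\; |X|^2\psi,
\end{equation*}
and symmetrically for $X^{\ast}(X\psi)$, which completes part (i).

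Part (ii) then follows immediately from Example \ref{SpinorialRepresentations}: the real Clifford module of part (i) descends to an $\mathfrak{so}(\dim_{\mathbb{R}}\mathbb{K}+1,1)$-representation via the embedding \eqref{LieSpinRepresentation}, $R_{a_1 a_2} \mapsto \tfrac{1}{4}[\Gamma_{a_1},\Gamma_{a_2}]$. Since the anticommutator from the first step vanishes whenever $a_1\neq a_2$, this commutator collapses to $\tfrac{1}{2}\,\Gamma_{a_1}\Gamma_{a_2}$, yielding the asserted formula $R_{a_1 a_2}\cdot\psi = \tfrac{1}{2}\,\sigma^{a_1}\!\cdot\!(\sigma_{a_2}\!\cdot\!\psi)$ in \eqref{HeteroticSpinRepsAsOctonionicModules}; the conjugate representation $\overline{\mathbf{2\,\mathrm{dim}_{\mathbb{R}}\mathbb{K}}}$ arises identically from the second assignment in \eqref{KPauliActions}, with the roles of raised and lowered indices interchanged.
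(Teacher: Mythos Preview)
Your proof is correct. The paper itself does not give a proof of this proposition; it states the result as an observation due to Kugo--Townsend \cite{KugoTownsend82} with streamlined review in Baez--Huerta \cite{BH09}, and leaves it at that. Your argument---doubling to $\mathbb{K}^2\oplus\mathbb{K}^2$ with off-diagonal $\hat{\Gamma}_a$, reducing the Clifford relation to $\tilde{\sigma}(v)\cdot(\sigma(v)\cdot\psi)=\eta(v,v)\psi$, and handling the octonionic case via alternativity---is essentially the standard proof found in those references. In fact your doubled $\hat{\Gamma}_a$ is (up to the convention of which index is raised) exactly what the paper later records without proof as Prop.~\ref{RealSpinRepsInDimPlus1ViaOconions}, and your observation that the even subalgebra preserves each $\mathbb{K}^2$ summand is the content of Remark~\ref{BranchingOf11dSpinorsIn10d}. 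So you have supplied the argument that the paper chose to outsource to the literature, and done so along the same lines as the cited sources.

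One small presentational point: you write ``by polarization (contracting with $v^{a_1}v^{a_2}$)'' for the passage from the bilinear anticommutator identity to the quadratic one, but that direction is specialization, not polarization. Since the two are equivalent over $\mathbb{R}$ this does not affect correctness, but the logic runs the other way: you verify the quadratic identity directly, and the bilinear Clifford relation then follows \emph{by} polarization.
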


\begin{remark}[Complex Weyl representations]
\label{ComplexWeylSpinorRepresentations}
For $\mathbb{K}=\mathbb{C}$, the action by $\mathbb{C}$-Pauli matrices in \eqref{HeteroticSpinRepsAsOctonionicModules} is clearly $\mathbb{C}$-linear, so the  $\mathrm{Spin}(3,1)$-representatons $\mathbf{4}$ and $\overline{\mathbf{4}}$ are the real representations underlying a pair of complex representations. It is manifest from \eqref{HeteroticSpinRepsAsOctonionicModules} for $\mathbb{K}=\mathbb{C}$ that these two complex representations are the
  standard complex 2-dimensional Weyl Spin representations,
  which we denote by
  \begin{equation}
    \label{ComplexWeylRepresentations}
    \overset{
      \mathclap{
      \mbox{
        \tiny
        \color{darkblue}
        \bf
        \begin{tabular}{c}
          left/right
          \\
          Weyl Spin representations
        \end{tabular}
      }
      }
    }{
      \mathbf{2}_{\mathbb{C}},
      \;\;
      \overline{\mathbf{2}}_{\mathbb{C}}
    }
    \;\;\;\;\;\;\;\;
    \in
    \;\;\;
    \overset{
      \mathclap{
      \mbox{
        \tiny
      \bf  \color{darkblue} \bf
        \begin{tabular}{c}
          complex
          \\
          representation ring
        \end{tabular}
      }
      }
    }{
      \mathrm{Rep}_{\mathbb{C}}
      \big(
        \mathrm{Spin}(3,1)
      \big)
    }
    \xymatrix{\ar[r]&}
    \overset{
      \mathclap{
      \mbox{
        \tiny
      \bf  \color{darkblue} \bf
        \begin{tabular}{c}
          real
          \\
          representation ring
        \end{tabular}
      }
      }
    }{
      \mathrm{Rep}_{\mathbb{R}}
      \big(
        \mathrm{Spin}(3,1)
      \big).
    }
  \end{equation}
\end{remark}

\begin{lemma}[Isomorphism of real spinor irreps of $\mathrm{Spin}(3,1)$]
  \label{RealReps4AndBar4OfSpin31AreIsomorphic}
  The representations $\mathbf{4}$ and $\overline{\mathbf{4}}$
   of $\mathrm{Spin}(3,1)$ obtained from \eqref{HeteroticSpinRepsAsOctonionicModules} for $\mathbb{K}=\mathbb{C}$
  are isomorphic as \emph{real} representations
  (not as complex representations):
  $$
    \mathbf{4}
    \;\simeq\;
    \overline{\mathbf{4}}
    \;\;
    \in
    \mathrm{Rep}_{\color{magenta}\mathbb{R}}
    \big(
      \mathrm{Spin}(3,1)
    \big).
  $$
\end{lemma}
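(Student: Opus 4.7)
The plan is to exhibit an explicit complex-antilinear intertwiner; since any complex-antilinear map is automatically $\mathbb{R}$-linear, this yields the desired isomorphism in $\mathrm{Rep}_{\mathbb{R}}(\mathrm{Spin}(3,1))$. I first note that no $\mathbb{C}$-linear intertwiner can exist: by Remark \ref{ComplexWeylSpinorRepresentations}, the complex representations $\mathbf{2}_{\mathbb{C}}$ and $\overline{\mathbf{2}}_{\mathbb{C}}$ are the inequivalent left and right Weyl Spin representations, so any isomorphism of the underlying real representations must necessarily swap the complex structure for its conjugate.

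My candidate is the map $T : \mathbb{C}^2 \to \mathbb{C}^2$, $T(\psi) := \varepsilon\, \psi^*$, where $\psi^*$ denotes componentwise complex conjugation and
\[
  \varepsilon \;:=\; \left[ \begin{array}{rr} 0 & 1 \\ -1 & 0 \end{array} \right]
\]
is the standard $\mathrm{SL}(2,\mathbb{C})$-invariant symplectic form on $\mathbb{C}^2$. Since $\varepsilon$ has real entries and $\varepsilon^2 = -I$, one computes $T^2 = -\mathrm{id}_{\mathbb{C}^2}$, so $T$ is automatically a real-linear bijection (with $T^{-1} = -T$). To verify that $T$ intertwines the two $\mathfrak{so}(3,1)$-actions from \eqref{HeteroticSpinRepsAsOctonionicModules}, I unfold the condition $T\bigl(\tfrac{1}{2}\sigma^{a_1}\sigma_{a_2}\psi\bigr) = \tfrac{1}{2}\sigma_{a_1}\sigma^{a_2}\, T(\psi)$ and insert $\varepsilon^{-1}\varepsilon$ between the two Pauli factors, whereupon the check factorizes and reduces to the single identity
\[
  \varepsilon\, \sigma_a^*\, \varepsilon^{-1} \;=\; -\sigma^a \qquad \text{for each } a \in \{0,1,2,3\}\,,
\]
which is a routine case-by-case check using the explicit $\mathbb{C}$-Pauli matrices from \eqref{ExOctopnionicPauliMatices} (in particular using that $\sigma_0, \sigma_1, \sigma_2$ are real while $\sigma_3^* = -\sigma_3$, together with $\varepsilon^2 = -I$). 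Since $\mathrm{Spin}(3,1)$ is connected, intertwining of the Lie algebra representations passes immediately to intertwining of the Lie group representations, completing the proof.

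The main obstacle is really just arranging the powers and placement of $\varepsilon$ so that the signs in $\varepsilon \sigma_a^* \varepsilon^{-1}$ work out uniformly across all four indices. The conceptual guide is that $\mathbf{2}_{\mathbb{C}}$ is a complex irreducible representation of $\mathrm{Spin}(3,1)$ of complex type, and for every such irrep the realification is canonically isomorphic to the realification of its complex conjugate, the isomorphism being implemented by complex conjugation twisted by a $\mathrm{Spin}$-invariant antilinear involution on $\mathbb{C}^2$.
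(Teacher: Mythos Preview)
Your proof is correct and is essentially the same as the paper's: both exhibit the explicit real-linear isomorphism $\psi \mapsto \epsilon\,\psi^*$ with $\epsilon = \left[\begin{smallmatrix}0 & 1\\ -1 & 0\end{smallmatrix}\right]$, and both reduce the intertwining condition to the key identity $\epsilon\,\sigma_a^*\,\epsilon^{-1} = -\sigma^a$. Your additional remarks (that $T^2=-\mathrm{id}$ guarantees bijectivity, and that connectedness of $\mathrm{Spin}(3,1)$ lifts the Lie algebra intertwining to the group) are correct embellishments but not needed beyond what the paper gives.
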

\begin{proof}
  Write
  $
    \epsilon
    \;:=\;
    \footnotesize
    \begin{bmatrix}
      0 & 1
      \\
      -1 & 0
    \end{bmatrix}
    \!\!\!
    \;\;\in
    \mathrm{Mat}(2,\mathbb{C})\;.
  $
  We claim that the $\mathbb{R}$-linear isomorphism of $\mathbb{R}$-vector spaces

  \vspace{-2mm}
  \begin{equation}
    \label{IsomorphismBetween4AndBar4AsRealReps}
    \xymatrix@R=-2pt{
     \mathbb{C}^2  \simeq_{{\scalebox{.6}{$\mathbb{R}$}}} \mathbf{4}
           \ar[rr]_-{\simeq}^-{\phi}
      &&
      \overline{\mathbf{4}}
    \simeq_{{\scalebox{.6}{$\mathbb{R}$}}} \mathbb{C}^2
           \\
      \psi
      \ar@{|->}[rr]
      &&
      \epsilon \cdot \psi^\ast
    }
  \end{equation}
%
%
 \noindent  is an isomorphism of real representations of $\mathrm{Spin}(3,1)$.
  To see this, use that
  $
    \epsilon
    \cdot
     \footnotesize
       \begin{bmatrix}
      a & b
      \\
      c & d
    \end{bmatrix}
     \cdot
    \epsilon^{-1}
    \;=\;
        \begin{bmatrix}
      d & -c
      \\
      -b & a
    \end{bmatrix}
    $
  and observe that therefore, by \eqref{CoordinatePauliMatrices},
  we have
  $
    \epsilon \cdot (\sigma_\mu)^\ast \cdot \epsilon^{-1}
    =
    - \sigma^\mu.
  $
 As complex conjugation $(-)^\ast\colon \mathbb{C}\to \mathbb{C}$ is
 an $\mathbb{R}$-algebra homomorphism
 we have $(A \cdot B)^\ast = A^\ast \!\cdot B^\ast$ for complex matrices
 (following Notation \ref{Conjugation}).
 Therefore:
  \begin{equation}
    \label{TowardsProvingIsomorphismBetween4andBar4}
    \begin{aligned}
      \epsilon \cdot
      \big(
        \sigma^{\mu_1} \cdot \sigma_{\mu_2}
      \big)^\ast
      \cdot
      \epsilon^{-1}
      & =
      \big(
      \epsilon
      \cdot
      \big(
        \sigma^{\mu_1}
      \big)^\ast
      \cdot
      \epsilon^{-1}
      \big)
      \cdot
      \big(
      \epsilon
      \cdot
      \big(
        \sigma_{\mu_2}
      \big)^\ast
      \cdot
      \epsilon^{-1}
      \big)
      \\
      & =\;
      (-\sigma_{\mu_1}) \cdot (-\sigma^{\mu_2})
      \;=\;
      \sigma_{\mu_1} \cdot \sigma^{\mu_2}
      \,.
    \end{aligned}
  \end{equation}
  With this the claim follows:
  \begin{equation}
    \label{FurtherTowardsProvingIsomorphismBetween4andBar4}
    \begin{aligned}
    \phi
    \left(\tfrac{1}{2}
      \sigma^{\mu_1}
      \cdot
      \sigma_{\mu_2}
       \cdot
       \psi
    \right)
    & =
    \epsilon \cdot
    \left(\tfrac{1}{2}
      \sigma^{\mu_1}
      \cdot
      \sigma_{\mu_2}
      \cdot
      \psi
    \right)^\ast
    \;=\;
      \Big(
      \epsilon
      \cdot
      \big(\tfrac{1}{2}
      \sigma^{\mu_1}
      \cdot
      \sigma_{\mu_2}
      \big)^\ast
      \cdot
      \epsilon^{-1}
      \Big)
      \cdot
      \big(
      \epsilon
      \cdot
      \psi^\ast
      \big)
      \\
      & =\tfrac{1}{2}
      \sigma_{\mu_1}
      \cdot
      \sigma^{\mu_2}
      \cdot
      \phi(\psi)\;.
    \end{aligned}
  \end{equation}

  \vspace{-6mm}
\hfill
\end{proof}

\medskip
\begin{remark}[$\mathbb{K}$-Weyl spinors beyond the complex case]
We highlight the following subtle points:
  \vspace{-1mm}
\item {\bf (i)}  The reason that the proof of Prop. \ref{RealReps4AndBar4OfSpin31AreIsomorphic}
  does not identify the two Weyl Spin-representations $\mathbf{2}_{\mathbb{C}},
    \overline{\mathbf{2}}_{\mathbb{C}}$ \eqref{ComplexWeylRepresentations}
  when regarded as \emph{complex} representations
  is due to the complex conjugation on the right
  of \eqref{IsomorphismBetween4AndBar4AsRealReps},
  which makes $\phi$ a real-linear map, but not a complex-linear map.

  \vspace{-1mm}
\item {\bf(ii)}   The reason that
  the proof of Prop. \ref{RealReps4AndBar4OfSpin31AreIsomorphic}
  does not generalize to identify also the two
  real representations $\mathbf{8}$ and $\overline{\mathbf{8}}$
   of $\mathrm{Spin}(5,1)$,
  nor the two real representations $\mathbf{16}$
  and $\overline{\mathbf{16}}$
  of $\mathrm{Spin}(9,1)$ given by \eqref{HeteroticSpinRepsAsOctonionicModules}
  for $\mathbb{K}=\mathbb{H}$ and $\mathbb{K}=\mathbb{O}$, respectively, is
  that in these cases, due to the non-commutativity of the
  quaternions and of the octonions,
  equations \eqref{TowardsProvingIsomorphismBetween4andBar4}
  and \eqref{FurtherTowardsProvingIsomorphismBetween4andBar4}
  do not hold, as quaternionic and  octonion conjugation is not an $\mathbb{R}$-algebra
   homomorphism but an anti-homomorphism: $(x y)^\ast = y^\ast x^\ast$.
\end{remark}

\medskip

Next we record the following immediate generalization of the
Dirac-matrix construction:

\begin{lemma}[$\mathbb{K}$-Dirac matrices]
  \label{OctonionicDiracMatrices}
  Let $\mathbb{K} \in \{\mathbb{R}, \mathbb{C}, \mathbb{H}, \mathbb{O}\}$.
  There is a linear isomorphism of real vector spaces equipped with
  quadratic forms
  \begin{equation}
    \label{OctonionicDiracMatrices}
    \hspace{-1cm}
    \raisebox{-60pt}{
\begin{tikzpicture}

  \draw (0,0) node
  {
  \xymatrix@R=-2pt{
    \big(
      \mathbb{R}^{\dim_{{}_{\mathbb{R}}}\mathbb{K}+2,1},
      \eta
    \big)
    \ar[rr]_-\simeq
    &&
    \big(
      \mathrm{Mat}(2,\mathbb{K})^{\mathrm{Herm}}
      \times \mathbb{R},
      - \mathrm{det}(-) + (-)^2
    \big)
    \; \ar@{^{(}->}[r]
    &
    \mathrm{Mat}(4,\mathbb{K})^{\mathrm{Herm}},
    \\
    \left[
    \!\!\!
    {\begin{array}{c}
      v^0
      \\
      \vdots
      \\
      v^9
      \\
      v^{5'}
    \end{array}}
    \!\!\!
    \right]
    \ar@{|->}[rr]
    &&
    \mathrlap{
    \underoverset{a=0}{9}{\sum}
    v^a
    \,
    \left[
    \!\!\!
    {\begin{array}{cc}
      0 & \sigma_a
      \\
      \sigma^a & 0
    \end{array}}
    \!\!\!
    \right]
    \;+\;
    v^{5'}
    \left[
    \!\!\!
    {\begin{array}{cc}
      1 & 0
      \\
      0 & -1
    \end{array}}
    \!\!\!
    \right]
    }
  }
  };

  \begin{scope}[shift={(0,.14)}]

  \shadedraw[draw opacity=0, fill opacity=.4,
     top color=greenii, bottom color=darkblue]
    (-5.7-.25,.65) rectangle (-5.7+.25,-.96);

  \draw[draw opacity=0, fill opacity=.4,
     fill=darkyellow]
    (-5.7-.25,-.97) rectangle (-5.7+.25,-1.5);

  \begin{scope}[shift={(.9,0)}]

  \shadedraw[draw opacity=0, fill opacity=.4,
     left color=greenii, right color=darkblue]
    (-.3,-.45+.5) rectangle (.6,-.45-.5);

  \draw[draw opacity=0, fill opacity=.4,
     fill=darkyellow]
    (2.8,-.45+.5) rectangle (3.35,-.45-.5);

  \end{scope}

  \end{scope}

\end{tikzpicture}
    }
  \end{equation}
  from $(\dim_{{}_{\mathbb{R}}}\mathbb{K}+3)$-dimensional Minkowski spacetime with
  quadratic form being its Minkowski metric \break $\eta = \mathrm{diag}(-1,1,1,\cdots, 1)$
  to the vector space of $\mathbb{K}$-matrices, as shown,
  where $\sigma_a$ and $\sigma^a$ are the $\mathbb{K}$-Pauli matrices
  \eqref{CoordinatePauliMatrices}
  from Lemma \ref{OctonionicPauliMatrices}.
\end{lemma}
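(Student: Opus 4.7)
The claim is a one-dimensional extension of Lemma \ref{OctonionicPauliMatrices}: the target adds one positive ``$5'$'' direction to the Minkowski source, and on the range side it adjoins a one-dimensional scalar factor (spanned by $\mathrm{diag}(1,-1)$) to the space of Hermitian $2\times 2$ $\mathbb{K}$-matrices. Accordingly, I would prove the statement in two steps paralleling the two arrows in diagram \eqref{OctonionicDiracMatrices}.

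\textbf{Step 1: the isomorphism of quadratic spaces.} Decompose orthogonally
$$
  \mathbb{R}^{\dim_{\mathbb{R}}\mathbb{K}+2,1}
  \;\simeq\;
  \mathbb{R}^{\dim_{\mathbb{R}}\mathbb{K}+1,1}
    \,\perp\,
  \mathbb{R}\cdot v_{5'}
$$
and likewise decompose the target as $\mathrm{Mat}(2,\mathbb{K})^{\mathrm{Herm}} \oplus \mathbb{R}$ with quadratic form $-\det\oplus(-)^2$. On the first summand the assignment is exactly the $\mathbb{K}$-Pauli isomorphism of Lemma \ref{OctonionicPauliMatrices}, which already carries $\eta$ to $-\det$; on the second summand it is the identity $v^{5'}\mapsto v^{5'}$, carrying $(v^{5'})^2$ to $(v^{5'})^2$. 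Because the map is block-diagonal with respect to these decompositions, $\mathbb{R}$-linearity, bijectivity, and the matching of quadratic forms follow from Lemma \ref{OctonionicPauliMatrices} with no cross-terms to check. A dimension count confirms $\dim_{\mathbb{R}}\mathbb{K}+3$ on both sides.

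\textbf{Step 2: the embedding into $\mathrm{Mat}(4,\mathbb{K})^{\mathrm{Herm}}$ and main obstacle.} The embedding sends $(H,x)\in\mathrm{Mat}(2,\mathbb{K})^{\mathrm{Herm}}\times\mathbb{R}$ to the block $4\times 4$ matrix with $xI_2$, $-xI_2$ on the diagonal blocks and $H$ (resp.\ its index-raised variant) in the off-diagonal blocks, so that the total matrix is precisely the formula displayed on the right of \eqref{OctonionicDiracMatrices}. Injectivity is transparent: the scalar $x$ is half the trace of the upper-left block, while $H$ is read off from the upper-right block. The only nontrivial point is the sign bookkeeping between $\sigma_a$ and $\sigma^a=\eta^{ab}\sigma_b$ that is needed to see that $\sum v^a\sigma^a$ is the ``Lorentzian adjoint'' of $\sum v^a\sigma_a$ in the block-Hermitian sense appropriate to $\Gamma$-matrices in signature $(-,+,\cdots,+)$. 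This sign is concentrated in the single index $a=0$ and has already been absorbed in the Pauli isomorphism of Lemma \ref{OctonionicPauliMatrices}, so no new calculation is required.
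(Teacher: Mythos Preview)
Your proposal is correct and in the same spirit as the paper, which in fact gives no proof at all: the lemma is introduced with ``We record the following immediate generalization of the Dirac-matrix construction'' and left at that. Your Step~1 is exactly the intended justification --- reduce to the $\mathbb{K}$-Pauli isomorphism of Lemma~\ref{OctonionicPauliMatrices} on the $\mathbb{R}^{\dim_{\mathbb{R}}\mathbb{K}+1,1}$ summand and observe that the extra $5'$-direction contributes an orthogonal $(\mathbb{R},(-)^2)$ --- so there is nothing to add.
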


The following observation is due to \cite{KugoTownsend82},
with a streamlined review in \cite{BH10}:
\begin{prop}[Real Spin representations in dimension 11, 7, 5, 4, via
$\mathbb{K}$-Dirac matrices]
\label{RealSpinRepsInDimPlus1ViaOconions}
For $\mathbb{K} \in \{\mathbb{R}, \mathbb{C}, \mathbb{H}, \mathbb{O}\}$,
the assignment
\begin{equation}
  \label{RealSpinRepsInDimPlus1}
  \xymatrix@R=-4pt{
    \mathrm{Cl}_{\mathbb{R}}
    \big(
      \mathrm{dim}_{{}_{\mathbb{R}}}+2,1
    \big)
    \ar[rr]
    &&
    \mathrm{End}_{\mathbb{R}}\big( \mathbb{K}^2 \oplus \mathbb{K}^2 \big)
    \\
    {\pmb\Gamma}_a
    \ar@{|->}[rr]
    &&
\left\{
    {\footnotesize \begin{array}{ccc}
      \left[
      \!\!\!
      {\begin{array}{cc}
        1 & 0
        \\
        0 & -1
      \end{array}}
      \!\!\!
      \right]
            \cdot
      (-)
      && for\;\;
      a = 5'
            \\
            \\
                      {\footnotesize
      \left[
      \!\!\!
      {\begin{array}{cc}
        0 & \sigma^a
        \\
        \sigma_a & 0
      \end{array}}
      \!\!\!
      \right]
      }
      \cdot
      (-)
      &&
      \mbox{otherwise}
    \end{array}}
    \right.
  }
\end{equation}
of left multiplication action by the $\mathbb{K}$-Dirac matrices
\eqref{OctonionicDiracMatrices} constitutes a real representation
of the full Clifford algebra \eqref{CliffordRelation}
on the real vector space underlying $\mathbb{K}^4$.
\end{prop}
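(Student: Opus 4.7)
The plan is to verify directly that the assignment \eqref{RealSpinRepsInDimPlus1} satisfies the defining Clifford anticommutation relation \eqref{CliffordRelation}, i.e.\ that
$$
  \pmb{\Gamma}_{a_1}\cdot\bigl(\pmb{\Gamma}_{a_2}\cdot\psi\bigr)
  \;+\;
  \pmb{\Gamma}_{a_2}\cdot\bigl(\pmb{\Gamma}_{a_1}\cdot\psi\bigr)
  \;=\;
  2\,\eta_{a_1 a_2}\,\psi
$$
for every $\psi=\bigl(\begin{smallmatrix}\psi_1\\ \psi_2\end{smallmatrix}\bigr)\in\mathbb{K}^2\oplus\mathbb{K}^2$ and every pair of indices $a_1,a_2\in\{0,1,\dots,9,5'\}$. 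Because the underlying associative algebra $\mathbb{O}$ may be non-associative, one must be careful to read every product as iterated left-action on $\psi$ in the bracketing shown, never as matrix-matrix multiplication.

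I would split the verification into three cases, according to whether each index lies in $\{0,1,\dots,9\}$ or equals $5'$.

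\textbf{Case A: both indices in $\{0,1,\dots,9\}$.} Writing out \eqref{RealSpinRepsInDimPlus1}, a direct unwinding gives
$$
  \pmb{\Gamma}_{a_1}\cdot\bigl(\pmb{\Gamma}_{a_2}\cdot\psi\bigr)
  \;=\;
  \begin{bmatrix}
    \sigma^{a_1}\cdot\bigl(\sigma_{a_2}\cdot\psi_1\bigr)\\[2pt]
    \sigma_{a_1}\cdot\bigl(\sigma^{a_2}\cdot\psi_2\bigr)
  \end{bmatrix},
$$
where the two component actions are precisely the alternating-bracketing actions \eqref{KPauliActions} of Prop.~\ref{SpinRepsViaSL2O}. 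Symmetrizing in $a_1\leftrightarrow a_2$, the required identity reduces to the Clifford-type relation
$$
  \sigma^{a_1}\!\cdot(\sigma_{a_2}\!\cdot\psi)
  +
  \sigma^{a_2}\!\cdot(\sigma_{a_1}\!\cdot\psi)
  \;=\;
  2\,\eta_{a_1 a_2}\,\psi
$$
for the $\mathbb{K}$-Pauli matrices of Lemma~\ref{OctonionicPauliMatrices}, and analogously with raised/lowered positions swapped in the second component. This is exactly the statement already encoded by Prop.~\ref{SpinRepsViaSL2O}: the assignment $\pmb{\Gamma}_{a_1}\pmb{\Gamma}_{a_2}\mapsto \sigma^{a_1}\!\cdot(\sigma_{a_2}\!\cdot -)$ is a representation of the even Clifford algebra, so the symmetric combination in $a_1,a_2$ acts as scalar multiplication by $2\eta_{a_1 a_2}$.

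\textbf{Case B: $a_1=5'$, $a_2\in\{0,\dots,9\}$ (and symmetric).} Using the explicit diagonal form of $\pmb{\Gamma}_{5'}$ in \eqref{RealSpinRepsInDimPlus1}, one computes
$$
  \pmb{\Gamma}_{5'}\cdot(\pmb{\Gamma}_{a_2}\cdot\psi)
  =\begin{bmatrix}\sigma^{a_2}\!\cdot\psi_2\\ -\sigma_{a_2}\!\cdot\psi_1\end{bmatrix},
  \qquad
  \pmb{\Gamma}_{a_2}\cdot(\pmb{\Gamma}_{5'}\cdot\psi)
  =\begin{bmatrix}-\sigma^{a_2}\!\cdot\psi_2\\ \sigma_{a_2}\!\cdot\psi_1\end{bmatrix},
$$
whose sum vanishes, matching $\eta_{5'\,a_2}=0$.

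\textbf{Case C: $a_1=a_2=5'$.} One has $\pmb{\Gamma}_{5'}\cdot(\pmb{\Gamma}_{5'}\cdot\psi)=\psi$, so the anticommutator equals $2\psi=2\,\eta_{5'\,5'}\,\psi$.

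The main subtlety is confined to Case~A, because over $\mathbb{O}$ one cannot rearrange iterated products; however, this is exactly the point already handled by Prop.~\ref{SpinRepsViaSL2O}, whose alternating bracketing $\sigma^{a_1}\!\cdot(\sigma_{a_2}\!\cdot-)$ was engineered precisely so that the would-be triple product $\sigma^{a_1}\sigma_{a_2}\psi$ is always read as a left action of a single $\sigma$ on an element of $\mathbb{K}^2$. Together, the three cases give the full set of Clifford relations, so \eqref{RealSpinRepsInDimPlus1} extends uniquely to an $\mathbb{R}$-algebra homomorphism $\mathrm{Cl}_{\mathbb{R}}(\dim_{\mathbb{R}}\mathbb{K}+2,1)\to\mathrm{End}_{\mathbb{R}}(\mathbb{K}^2\oplus\mathbb{K}^2)$, as claimed.
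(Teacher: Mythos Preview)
The paper does not give its own proof of this proposition; it merely attributes the observation to \cite{KugoTownsend82} with a streamlined review in \cite{BH10}, and then moves directly to Remark~\ref{BranchingOf11dSpinorsIn10d}. Your direct case-by-case verification of the Clifford relations is correct and is essentially the standard argument one finds in those references: the nontrivial content (Case~A, with its non-associativity caveat for $\mathbb{K}=\mathbb{O}$) is exactly what Prop.~\ref{SpinRepsViaSL2O} packages, and you are right that the identity $\sigma^{a_1}\!\cdot(\sigma_{a_2}\!\cdot-)+\sigma^{a_2}\!\cdot(\sigma_{a_1}\!\cdot-)=2\eta_{a_1a_2}\,\mathrm{id}$ follows from that proposition because $\pmb{\Gamma}_{a_1}\pmb{\Gamma}_{a_2}+\pmb{\Gamma}_{a_2}\pmb{\Gamma}_{a_1}=2\eta_{a_1a_2}$ already holds in $\mathrm{Cl}^{\mathrm{ev}}_{\mathbb{R}}$; Cases~B and~C are then routine block-matrix checks. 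Since $\mathrm{End}_{\mathbb{R}}(\mathbb{K}^4)$ is associative, the universal property of the Clifford algebra gives the extension, as you note in your final sentence.
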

\begin{remark}[Branching of 11d spinors in 10d]
  \label{BranchingOf11dSpinorsIn10d}
The linear representation of
$\mathrm{Spin}(\mathrm{dim}_{{}_{\mathbb{R}}}\mathbb{K}+2,1 )$
corresponding via \eqref{LieSpinRepresentation} to the Clifford representation \eqref{RealSpinRepsInDimPlus1}
restricted to a representation of
$\mathrm{Spin}(\mathrm{dim}_{{}_{\mathbb{R}}}\mathbb{K}+1,1 )$
(omitting the index $5'$)
is manifestly the direct sum of the two representations in Lemma \ref{OctonionicPauliMatrices}:
\begin{equation}
  \xymatrix@R=-2pt{
    \mathrm{Cl}^{\mathrm{ev}}_{\mathbb{R}}
    \big(
      \mathrm{dim}_{{}_{\mathbb{R}}}\mathbb{K}  +1,1
    \big)
    \;\; \ar@{^{(}->}[r]
    & \;
    \mathrm{Cl}^{\mathrm{ev}}_{\mathbb{R}}
    \big(
      \mathrm{dim}_{{}_{\mathbb{R}}}\mathbb{K}  +2,1
    \big)
    \ar[r]
    &
    \mathrm{End}_{\mathbb{R}}\big( \mathbb{K}^2 \oplus \mathbb{K}^2 \big)
    \\
    {\pmb\Gamma}_{a_1}
    {\pmb\Gamma}_{a_2}
    \ar@{|->}[rr]
    &&
    \footnotesize
    \left[
    \!\!\!
    {\begin{array}{cc}
      \sigma^{a_1} \cdot (\sigma_{a_2}\cdot (-)) & 0
      \\
      0 & \sigma_{a_1} \cdot (\sigma^{a_2}\cdot (-))
    \end{array}}
    \!\!\!
    \right]
  }
\end{equation}
\end{remark}

In terms of $\mathbb{K}$-Dirac matrix calculus, the
Spin-invariant spinor pairing is given as follows:

\begin{prop}[Spinor pairing]
  \label{SpinorPairingViaKDiracMatrices}
  For $\mathbb{K} \in \{\mathbb{R}, \mathbb{C}, \mathbb{H}, \mathbb{O}\}$,
  let $\mathbf{N} := \mathbb{K}^2 \oplus \mathbb{K}^2$ be
  the $N := 4 \mathrm{dim}_{{}_{\mathbb{R}}} \mathbb{K}$-dimensional \newline
  $\mathrm{Spin}\big( \mathrm{dim}_{{}_{\mathbb{R}}}\mathbb{K} +2, 1 \big)$ representation
  from Prop. \ref{RealSpinRepsInDimPlus1ViaOconions}. Then
  the {\it spinor bilinear pairing} is
  \begin{equation}
    \label{SpinrorPairing}
    \xymatrix@R=-2pt{
      \mathbf{N}
      \;\times\;
      \mathbf{N}
      \ar[rr]
      &&
      \mathbb{R}
      \\
      \psi, \phi
      \ar@{|->}[rr]
      &&
      \langle\psi,\phi\rangle
      \mathrlap{
        :=
        \mathrm{Re}
        \big(
          \psi^\dagger
          \!\!\cdot
          {\pmb\Gamma}_0
          \!
          \cdot
          \phi
        \big)
      }
    }
  \end{equation}
  where on the right $\psi^\dagger := {}^{t}\psi^\ast$
  is the Hermitian conjugate, Notation \ref{Conjugation},
  and where ``$\cdot$'' denotes matrix multiplication over
  $\mathbb{K}$.\footnote{The triple matrix product
  in \eqref{SpinrorPairing}
  is associative even over $\mathbb{O}$, since the components of
  { ${\pmb\Gamma}_0 = \footnotesize \left[\!\!\begin{array}{cc} 0 & -1_{2 \times 2} \\ 1_{2 \times 2} & 0\end{array}\!\!\right]$}
  are real.}
  Furthermore, this is bilinear, skew-symmetric and
  $
    \mathrm{Spin}
    \big(
      \mathrm{dim}_{{}_{\mathbb{R}}}\mathbb{K} +2, 1
    \big)
  $-invariant.
\end{prop}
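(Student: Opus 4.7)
The plan is to verify the three claimed properties — bilinearity, skew-symmetry, and $\mathrm{Spin}(\mathrm{dim}_{{}_{\mathbb{R}}}\mathbb{K}+2,1)$-invariance — in turn, relying on the explicit matrix form of the $\mathbb{K}$-Dirac matrices from Lemma \ref{OctonionicDiracMatrices} together with two enabling observations: the footnote that ${\pmb\Gamma}_0$ has real entries (so triple products involving it associate over $\mathbb{O}$), and the fact that $\mathrm{Re}\colon\mathbb{O}\to\mathbb{R}$ is bracketing-independent on any product of octonions (because the associator of $\mathbb{O}$ is purely imaginary). $\mathbb{R}$-bilinearity is then immediate from the $\mathbb{R}$-bilinearity of matrix multiplication and the $\mathbb{R}$-linearity of $\mathrm{Re}$.

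For skew-symmetry, I would first read off from Lemma \ref{OctonionicDiracMatrices} that ${\pmb\Gamma}_0$ is a real matrix with ${\pmb\Gamma}_0^{T} = -{\pmb\Gamma}_0$, hence ${\pmb\Gamma}_0^\dagger = -{\pmb\Gamma}_0$. The scalar $\psi^\dagger{\pmb\Gamma}_0\phi\in\mathbb{O}$, being a $1\times 1$ matrix, coincides with its own Hermitian transpose, which by the standard identity $(ABC)^\dagger = C^\dagger B^\dagger A^\dagger$ (valid here because the middle factor is real) gives
\[
  \bigl(\psi^\dagger{\pmb\Gamma}_0\phi\bigr)^{\!*} \;=\; \phi^\dagger{\pmb\Gamma}_0^\dagger\psi \;=\; -\phi^\dagger{\pmb\Gamma}_0\psi \,.
\]
Taking real parts and using $\mathrm{Re}(x^*)=\mathrm{Re}(x)$ for $x\in\mathbb{O}$ then yields $\langle\psi,\phi\rangle = -\langle\phi,\psi\rangle$.

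For $\mathrm{Spin}$-invariance, since the identity component of $\mathrm{Spin}(\mathrm{dim}_{{}_{\mathbb{R}}}\mathbb{K}+2,1)$ is generated by exponentials \eqref{ExponentiatingCliffordElements} of the Lie algebra elements appearing in \eqref{LieSpinRepresentation}, it suffices to check infinitesimal invariance under each generator $R:=\tfrac{1}{2}{\pmb\Gamma}_{a_1}{\pmb\Gamma}_{a_2}$ with $a_1\neq a_2$, that is,
\[
  \bigl\langle R\psi,\phi\bigr\rangle + \bigl\langle\psi,R\phi\bigr\rangle \;=\; 0 \,.
\]
The explicit formulas \eqref{RealSpinRepsInDimPlus1} show that the spatial Dirac matrices ${\pmb\Gamma}_a$ (for $a\in\{1,\dots,9,5'\}$), being built from $\mathbb{K}$-Hermitian Pauli matrices, are themselves Hermitian, ${\pmb\Gamma}_a^\dagger = {\pmb\Gamma}_a$, whereas ${\pmb\Gamma}_0^\dagger = -{\pmb\Gamma}_0$. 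Combining these Hermiticity properties with the Clifford relation \eqref{CliffordRelation} (anticommutation of distinct Dirac matrices, and ${\pmb\Gamma}_0^2 = -1$) yields the standard identity $R^\dagger{\pmb\Gamma}_0 + {\pmb\Gamma}_0 R = 0$, which is verified by separating the two cases: both $a_i$ spatial, versus one of them equal to $0$. Substituting this into $\mathrm{Re}\bigl(\psi^\dagger(R^\dagger{\pmb\Gamma}_0 + {\pmb\Gamma}_0 R)\phi\bigr) = 0$ completes the infinitesimal argument.

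The main technical obstacle, as in most octonionic spinor calculations, is that the symbolic manipulations $(R\psi)^\dagger = \psi^\dagger R^\dagger$ and the rebracketings of quadruple products implicit above are not automatic over $\mathbb{O}$; they are rescued here by the two enabling observations from the first paragraph — specifically, by the reality of ${\pmb\Gamma}_0$ and by the bracketing-independence of $\mathrm{Re}$ on any octonionic product — after which all manipulations reduce to scalar identities in $\mathbb{O}$. Modulo this bookkeeping, the argument follows the standard pattern in Dirac spinor calculus, going back to \cite{KugoTownsend82}.
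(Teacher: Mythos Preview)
The paper does not prove this Proposition; it is stated as a known result from the cited literature (\cite{KugoTownsend82}, \cite{BH10}), so there is nothing to compare your argument against directly. Your plan is the standard one and is correct in outline for bilinearity and skew-symmetry, but one of your two ``enabling observations'' for the Spin-invariance step is false as stated: it is \emph{not} true that $\mathrm{Re}$ is bracketing-independent on an arbitrary product of octonions. The fact that the associator is purely imaginary gives only the three-factor identity $\mathrm{Re}\bigl((xy)z\bigr)=\mathrm{Re}\bigl(x(yz)\bigr)$; for four factors this already fails --- for instance, in the conventions of Lemma~\ref{Octonions} one computes $\mathrm{Re}\bigl(e_7((e_1e_2)e_4)\bigr)=-1$ while $\mathrm{Re}\bigl(e_7(e_1(e_2e_4))\bigr)=+1$. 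Since the Spin-invariance step involves five matrix factors, hence genuine four-fold octonionic scalar products once the real entries of ${\pmb\Gamma}_0$ are absorbed, you cannot appeal to this, and the matrix identity ``$R^\dagger{\pmb\Gamma}_0+{\pmb\Gamma}_0 R=0$'' has no direct meaning over $\mathbb{O}$ (left multiplication by the matrix ${\pmb\Gamma}_{a_1}{\pmb\Gamma}_{a_2}$ is not the same operator as the composite ${\pmb\Gamma}_{a_1}\circ{\pmb\Gamma}_{a_2}$).

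The fix is to avoid treating $R=\tfrac12{\pmb\Gamma}_{a_1}{\pmb\Gamma}_{a_2}$ as a matrix at all. First establish the single-Clifford identity $\langle{\pmb\Gamma}_a\psi,\phi\rangle=-\langle\psi,{\pmb\Gamma}_a\phi\rangle$ for every index $a$: this needs only products of the shape $\psi^\dagger{\pmb\Gamma}_a^\dagger{\pmb\Gamma}_0\phi$, and since ${\pmb\Gamma}_0$ has real entries all the rebracketings required (sliding ${\pmb\Gamma}_a$ past ${\pmb\Gamma}_0$ via the Clifford relation \eqref{CliffordRelation}) are honest over $\mathbb{O}$. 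Applying this twice gives $\langle{\pmb\Gamma}_{a_1}({\pmb\Gamma}_{a_2}\psi),\phi\rangle=\langle\psi,{\pmb\Gamma}_{a_2}({\pmb\Gamma}_{a_1}\phi)\rangle$, and infinitesimal invariance then follows from the operator-level Clifford relation ${\pmb\Gamma}_{a_1}({\pmb\Gamma}_{a_2}\phi)+{\pmb\Gamma}_{a_2}({\pmb\Gamma}_{a_1}\phi)=0$ for $a_1\neq a_2$, which is precisely the content of Prop.~\ref{RealSpinRepsInDimPlus1ViaOconions}.
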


\subsection{The $\mathrm{A}_1$-type singularity}
\label{TheA1TypeSingularity}

Here we spell out basics of the
linear representations of $\mathrm{Sp}(1)_{L/R} \simeq \mathrm{SU}(2)_{L/R}$
by left/right quaternion multiplication on the quaternion space $\mathbb{H}$. The resulting orbifold quotient
$\mathbb{H}\sslash \mathbb{Z}_{n+1}$ for
$\mathbb{Z}_{n+1} \subset \mathrm{SU}(2)_L$ is the $\mathrm{A}_n$-type
singularity (e.g. \cite{SS19a}).

\medskip

Since some prefactors in the following crucially matter,
we begin by making fully explicit:

\vspace{.2cm}

\noindent {\bf The exceptional isomorphism $\mathrm{Spin}(4)\cong \mathrm{SU}(2)_L\times \mathrm{SU}(2)_R$.}
As recalled in \eqref{LieBracketInSO}, the 6-dimensional real Lie algebra $\mathfrak{so}(4)$ has a distinguished basis
$\{R_{a_1 a_2}\}_{1\leq a_1 < a_2\leq 4}$ with commutation relations
\begin{equation}
  \label{LieBracketso}
  [
    R_{ij},
    R_{kl}
  ]
    =
    \delta_{jk} R_{il}
    +
    \delta_{il} R_{jk}
    -
    \delta_{jl} R_{ik}
    -
    \delta_{ik} R_{jl}\;.
\end{equation}
Define elements $J^i_{L/R} \in \mathfrak{so}(4)$ by
\begin{equation}
  \label{su2GeneratorsInso4}
  J^{i-1}_{L}
       :=
    -\tfrac{1}{2}R^{1i}
    -
    \tfrac{1}{4}
    \epsilon^{1ijk} R_{jk}\;,
  \qquad
  J^{i-1}_{R}
        :=
    \phantom{-}
    \tfrac{1}{2}
    R^{1i}
    -
   \tfrac{1}{4}
  \epsilon^{1ijk} R_{jk}\;,
\qquad
 i,j,k \in \{2,3,4\}.
  \end{equation}
  More explicitly:

  \vspace{-6mm}
   \begin{equation}
  \mbox{
    \begin{tabular}{c}
    \end{tabular}
  }
  \phantom{A}
  \begin{array}{lllll}
  \mathclap{\phantom{\tfrac{1}{2}}}
  J^1_L
    =
  -\tfrac{1}{2}R_{12} - \tfrac{1}{2}R_{34}\,,
  &&
  J^2_L
    =
  -\tfrac{1}{2}R_{13} - \tfrac{1}{2}R_{42}\,,
  &&
  J^3_L
    =
  -\tfrac{1}{2}R_{14} - \tfrac{1}{2}R_{23}\,,
  \\
  \mathclap{\phantom{\tfrac{1}{2}}}
  J^1_R
    =
  \phantom{-}\, \tfrac{1}{2}R_{12} - \tfrac{1}{2}R_{34}\,,
  &&
  J^2_R
    =
    \phantom{-}\, \tfrac{1}{2}R_{13} - \tfrac{1}{2}R_{42}\,,
  &&
  J^3_R
    =
  \phantom{-}\, \tfrac{1}{2}R_{14} - \tfrac{1}{2}R_{23}
  \,.
  \\
  \mathclap{\phantom{\vert^{\vert^{\vert^{\vert^{\vert}}}}}}
  {\phantom{A}}
  \end{array}
\end{equation}

\vspace{-6mm}
\noindent It is immediate that $\{J^i_L, J^j_R\}_{i,j}$ is a linear basis of  $\mathfrak{so}(4)$.
Moreover, with \eqref{LieBracketso}, one finds the relations
$$
  [J^i_L, J^j_L]
  \;=\;
  \epsilon^{i j}{}_k J^k_{L}
  \,,
  \phantom{AAAA}
  [J^i_R, J^j_R]
  \;=\;
  \epsilon^{i j}{}_k J^k_{R}
  \,,
  \phantom{AAAA}
  [J^i_L, J^j_R]
  \;=\;
  0\;.
$$
Therefore, the subspaces
\begin{equation}
  \mathfrak{su}(2)_L
  \;:=\;
  \big\langle
    J^1_L, J^2_L, J^3_L
  \big\rangle
  \;\subset\;
  \mathfrak{so}(4)
  \qquad
  \textrm{and}
  \qquad
  \mathfrak{su}(2)_R
  \;:=\;
  \big\langle
    J^1_R, J^2_R, J^3_R
  \big\rangle
  \;\subset\;
  \mathfrak{so}(4)
\end{equation}
are two mutually commuting
Lie subalgebras of $\mathfrak{so}(4)$,
both canonically isomorphic to $\mathfrak{su}(2)$
\eqref{LieBracketInSU2},
whose joint embedding is a Lie algebra isomorphism.
This consequently induces an isomorphism between
the corresponding simply connected compact Lie groups:
\begin{equation}
  \label{ExceptionalIsomorphismSpin4}
   \mathfrak{su}(2)_L
     \oplus
   \mathfrak{su}(2)_R
   \overset{\cong}{\longrightarrow}
   \mathfrak{so}(4)
   \,,
   \phantom{AAAAAA}
  \mathrm{SU}(2)_L\times \mathrm{SU}(2)_R
  \overset{\cong}{\longrightarrow}
  \mathrm{Spin}(4)
  \,.
\end{equation}

\begin{remark}[Clifford representation of $\mathfrak{su}(2)_{L/R}$]
  \label{CliffordRepresentationOfsu2LR}
The images ${\pmb\tau}^i_{L/R}$ of the elements
$J^i_{L/R}\in \mathfrak{so}(4)$ \eqref{su2GeneratorsInso4}
under the embedding $\mathfrak{so}(4)\hookrightarrow \mathrm{Cl}_{\mathbb{R}}(4)$ \eqref{LieSpinRepresentation}
are given by
\begin{equation}
\label{su2GeneratorsInTermsOfGammaMatrices}
\begin{array}{ccccc}
  {\pmb\tau}^1_L
  \;=\;
  -\tfrac{1}{4}{{\pmb\Gamma}_1}{\pmb\Gamma}_2
  -\tfrac{1}{4}{{\pmb\Gamma}_3}{\pmb\Gamma}_4\,,
  &&
  {\pmb\tau}^2_L
    \;=\;
  -\tfrac{1}{4}{{\pmb\Gamma}_1}{\pmb\Gamma}_3
  -\tfrac{1}{4}{{\pmb\Gamma}_4}{\pmb\Gamma}_2\,,
  &&
  {\pmb\tau}^3_L
   \;=\;
  -\tfrac{1}{4}{{\pmb\Gamma}_1}{\pmb\Gamma}_4
  -\tfrac{1}{4}{{\pmb\Gamma}_2}{\pmb\Gamma}_3\,,
  \\
  \mathclap{\phantom{\vert^{\vert^{\vert^{\vert}}}}}
  {\pmb\tau}^1_R
    \;=\;
   \phantom{-}
   \tfrac{1}{4}{{\pmb\Gamma}_1}{\pmb\Gamma}_2
  -\tfrac{1}{4}{{\pmb\Gamma}_3}{\pmb\Gamma}_4\,,
  &&
  {\pmb\tau}^2_R
    \;=\;
   \phantom{-}
   \tfrac{1}{4}{{\pmb\Gamma}_1}{\pmb\Gamma}_3
  -\tfrac{1}{4}{{\pmb\Gamma}_4}{\pmb\Gamma}_2\,,
  &&
  \;
  {\pmb\tau}^3_R
    \;=\;
  \phantom{-}
   \tfrac{1}{4}{{\pmb\Gamma}_1}{\pmb\Gamma}_4
  -\tfrac{1}{4}{{\pmb\Gamma}_2}{\pmb\Gamma}_3
  \,.
\end{array}
\end{equation}
\end{remark}

\begin{lemma}[$\Z_2$ inside ${\rm SU}(2)$]
  \label{Z2inSU2}
The 1-parameter subgroup $\{\exp(2\pi t J_L^1)\}_{t\in \mathbb{R}/\mathbb{Z}}\subseteq \mathrm{SU}(2)_L$ \eqref{ExceptionalIsomorphismSpin4}
generated by the infinitesimal rotation $J_L^1$ \eqref{su2GeneratorsInso4}
is a copy of $\mathrm{U}(1)$.
Inside it we have a copy of
$\mathbb{Z}_2\cong \{1,-1\}$ given by $\{1,\exp(2\pi J_L^1)\}$
\eqref{2PiRotationOnSpinors}.
The image
\eqref{su2GeneratorsInTermsOfGammaMatrices}
of the generator of this copy of $\mathbb{Z}_2$
in the Clifford algebra \eqref{LieSpinRepresentation} is
\begin{equation}
  \label{HalfSpin3Rotation}
  \exp(2\pi {\pmb\tau}_L^1)
  \;=\;
  {\pmb\Gamma}_1{\pmb\Gamma}_2{\pmb\Gamma}_3{\pmb\Gamma}_4\;.
\end{equation}
\end{lemma}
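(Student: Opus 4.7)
The statement is two essentially elementary calculations, both of which reduce to the Clifford-algebra presentation of $\mathfrak{su}(2)_L$ from Remark \ref{CliffordRepresentationOfsu2LR}, together with Euler's formula in the Clifford algebra \eqref{EulerFormulaInCliffordAlgebra}. The plan is to first treat the group-theoretic part (the $U(1)$ and the $\mathbb{Z}_2$ inside it) using the abstract fact that a $1$-parameter subgroup generated by a non-zero element of $\mathfrak{su}(2)$ is a maximal torus, and then to establish the concrete Clifford identity \eqref{HalfSpin3Rotation} by direct exponentiation.

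For the first part, I would argue: since $J_L^1 \in \mathfrak{su}(2)_L$ is manifestly non-zero, the image of $t \mapsto \exp(2\pi t J_L^1)$ is a connected, closed, abelian subgroup of the compact $3$-dimensional simple group $\mathrm{SU}(2)_L$; being $1$-dimensional, it must be a maximal torus, hence isomorphic to $U(1)$. The inclusion of $\mathbb{Z}_2 = \{1,-1\}$ then follows once I verify that $\exp(2\pi J_L^1)$ has order exactly $2$ inside this $U(1)$; this will drop out of the computation in the second part, since the element will be shown to equal $\Gamma_1 \Gamma_2 \Gamma_3 \Gamma_4$, whose square is readily checked to be $+1$ by four applications of the Clifford anticommutation relations \eqref{CliffordRelation}.

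For the second part — the explicit formula \eqref{HalfSpin3Rotation} — I would decompose
\[
  \boldsymbol{\tau}_L^1 \;=\; A + B,
  \qquad
  A := -\tfrac{1}{4}\Gamma_1\Gamma_2,
  \qquad
  B := -\tfrac{1}{4}\Gamma_3\Gamma_4,
\]
using \eqref{su2GeneratorsInTermsOfGammaMatrices}. Since $A$ and $B$ involve disjoint Clifford generators of positive signature, the anticommutation relations \eqref{CliffordRelation} immediately imply $[A,B]=0$ in $\mathrm{Cl}_{\mathbb{R}}(4)$. Therefore the exponential factors:
\[
  \exp(2\pi\,\boldsymbol{\tau}_L^1)
  \;=\;
  \exp\!\bigl(-\tfrac{\pi}{2}\,\Gamma_1\Gamma_2\bigr)
  \cdot
  \exp\!\bigl(-\tfrac{\pi}{2}\,\Gamma_3\Gamma_4\bigr).
\]
Each factor is of the type treated by Euler's formula \eqref{EulerFormulaInCliffordAlgebra}, with $(\Gamma_{a_1}\Gamma_{a_2})^2 = -1$ for $a_1 \neq a_2$ spacelike, yielding
\[
  \exp\!\bigl(-\tfrac{\pi}{2}\,\Gamma_{a_1}\Gamma_{a_2}\bigr)
  \;=\;
  \cos(\tfrac{\pi}{2}) - \sin(\tfrac{\pi}{2})\,\Gamma_{a_1}\Gamma_{a_2}
  \;=\;
  -\,\Gamma_{a_1}\Gamma_{a_2}.
\]
Multiplying then gives $(-\Gamma_1\Gamma_2)(-\Gamma_3\Gamma_4) = \Gamma_1\Gamma_2\Gamma_3\Gamma_4$, which is \eqref{HalfSpin3Rotation}.

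There is no real obstacle: the only subtlety is keeping track of the sign in Euler's formula applied with the negative angle $-\pi/2$, and noticing that the two summands of $\boldsymbol{\tau}_L^1$ commute so that no Baker--Campbell--Hausdorff correction is needed. Once \eqref{HalfSpin3Rotation} is established, the closing check $(\Gamma_1\Gamma_2\Gamma_3\Gamma_4)^2 = +1$ simultaneously certifies that the element is of order $2$ in $\mathrm{Spin}(4)$, hence really generates a $\mathbb{Z}_2$ sitting inside the claimed $U(1) \subset \mathrm{SU}(2)_L$.
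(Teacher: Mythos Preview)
Your proof is correct and the core Clifford computation is essentially identical to the paper's: both decompose $2\pi\boldsymbol{\tau}_L^1 = -\tfrac{\pi}{2}\Gamma_1\Gamma_2 - \tfrac{\pi}{2}\Gamma_3\Gamma_4$, use commutativity of the two summands to factor the exponential, and evaluate each factor via Euler's formula. Your additional remarks on the $U(1)$ (maximal torus) and the order-2 check via $(\Gamma_1\Gamma_2\Gamma_3\Gamma_4)^2 = +1$ give slightly more justification than the paper, which focuses only on the Clifford identity \eqref{HalfSpin3Rotation}.
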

\begin{proof}
 Observing that
 $({\pmb\Gamma}_1{\pmb\Gamma}_2)^2=({\pmb\Gamma}_3{\pmb\Gamma}_4)^2=-1$
 and
 that ${\pmb\Gamma}_1{\pmb\Gamma}_2$
 commutes with ${\pmb\Gamma}_3{\pmb\Gamma}_4$
 we have, with \eqref{su2GeneratorsInTermsOfGammaMatrices}:
\[
  \exp
  \big(
    2\pi {\pmb\tau}_L^1
  \big)
  \;=\;
  \exp
  \big(
    -\tfrac{\pi}{2}
    {{\pmb\Gamma}_1}
    {\pmb\Gamma}_2
    -
    \tfrac{\pi}{2}{{\pmb\Gamma}_3}{\pmb\Gamma}_4
  \big)
  =
  \exp\big(
    -\tfrac{\pi}{2}{{\pmb\Gamma}_1}{\pmb\Gamma}_2
  \big)
  \exp
  \big(
    -\tfrac{\pi}{2}{{\pmb\Gamma}_3}{\pmb\Gamma}_4
  \big)
  =
  (-{\pmb\Gamma}_1{\pmb\Gamma}_2)(-{\pmb\Gamma}_3{\pmb\Gamma}_4)
  =
  {\pmb\Gamma}_1{\pmb\Gamma}_2{\pmb\Gamma}_3{\pmb\Gamma}_4
  \,,
\]
where in the third step we used Euler's formula \eqref{EulerFormulaInCliffordAlgebra}
in the Clifford algebra.
\hfill \end{proof}

\begin{example}[Defining representation of the symplectic group]
  \label{The4OfSp1}
  The defining action of
  $\mathrm{Sp}(1) := \{q \in \mathbb{H} \vert q q^\ast = 1  \}$
  on the space $\mathbb{H}$ of quaternions is a real 4-dimensional
  irreducible representation, to be denoted
  \begin{equation}
    \label{4OfSp1}
    \mathbf{4} \;\in\; \mathrm{Rep}_{\mathbb{R}}(\mathrm{Sp}(1))
    \,.
  \end{equation}
  A priori there are two distinct such representations,
  given by left and quaternion multiplication, respectively
  \begin{equation}
    \label{LeftMultiplicationOfUnitQuaternions}
    \xymatrix@C=2pt@R=-4pt{
      \mathrm{Sp}(1) \ar@{}[r]|-{\times} & \mathbb{H}
      \ar[rrrr]
      &&&&
      \mathbb{H}
      \\
      (q,& v)
      \ar@{|->}[rrrr]
      &&&&
      q \cdot v
    }
    \,,
    \phantom{AAAAAA}
    \xymatrix@C=2pt@R=-4pt{
      \mathrm{Sp}(1) \ar@{}[r]|-{\times} & \mathbb{H}
      \ar[rrrr]
      &&&&
      \mathbb{H}\;\;\;.
      \\
      (q,& v)
      \ar@{|->}[rrrr]
      &&&&
      v \cdot q^\ast
    }
  \end{equation}

  \vspace{-3mm}
 \noindent  However, these are clearly isomorphic, via conjugation on
  all quaternions $\xymatrix{\mathbb{H} \ar[r]^{(-)^\ast}_-{\simeq_{\mathbb{R}}}& \mathbb{H}}$.
\end{example}

\begin{example}[${\rm Spin} (4)$ and two copies of ${\rm Sp}(1)$]
  Under the exceptional isomorphism \eqref{ExceptionalIsomorphismSpin4}
  \begin{equation}
    \label{ExceptionalIsomorphismFromSpSpToSpin4}
    \xymatrix{
      \mathrm{Sp}(1)_L
      \times
      \mathrm{Sp}(1)_R
      \;\cong\;
      \mathrm{SU}(2)_L\times \mathrm{SU}(2)_R
      \ar[r]^-{\simeq}
      &
      \mathrm{Spin}(4)
    }
  \end{equation}
  the vector representation $\mathbf{4}$ of $\mathrm{Spin}(4)$
  from \eqref{4OfSp1}
  is given by combined left and conjugate right multiplication
  \eqref{LeftMultiplicationOfUnitQuaternions}
  \begin{equation}
    \label{QuaternionIncarnationOf4dVectorRep}
    \xymatrix@C=2pt@R=-4pt{
      \mathrm{Sp}(1)_L
      \ar@{}[r]|-{\times}
        &
      \mathrm{Sp}(1)_R
       \ar@{}[r]|-{\times}
       &
      \mathbb{H}
      \ar[rrrr]
      &&&&
      \mathbb{H}
      \\
      (q_A, & q_V, &  v)
      \ar@{|->}[rrrr]
      &&&&
      q_A \cdot v \cdot \bar q_V
    }
  \end{equation}
  All three of these actions are irreducible in themselves,
  so that under restriction \eqref{RestrictionOfRepresentations}
  along any of the two inclusions \eqref{ExceptionalIsomorphismSpin4}
  $
    \mathrm{Sp}(1)_{L/R}
    \;
    \overset{\iota_{L/R}}{\longhookrightarrow}
    \;
    \mathrm{Spin}(4)
  $
  there is ``no branching'' in that
  $
    (\iota_{A/V})^\ast \mathbf{4}
    =
    \mathbf{4}
      $.
\end{example}

\begin{example}[Reduction to ${\rm Spin} (3)$]
  \label{TheExceptionalIsoSpin3ToSp1}
  Under the exceptional isomorphism
  \eqref{ExceptionalIsomorphismSpin4}
  and the further exceptional isomorphism
  $
       \mathrm{Sp}(1)
      \overset{\simeq}{\to}
  \mathrm{Spin}(3)
      $
  the canonical inclusion
  $\mathrm{Spin}(3) \overset{\iota}{\hookrightarrow} \mathrm{Spin}(4)$
  is identified with the diagonal map on $\mathrm{Sp}(1)$:
  $$
    \xymatrix@R=2pt{
      \mathrm{Spin}(3)
    \;  \ar@{}[d]|-{
        \begin{rotate}{-90}
          $\mathclap{\simeq}$
        \end{rotate}
      }
      \ar@{^{(}->}[rr]^-{ \iota }
      &&
      \mathrm{Spin}(4)
      \ar@{}[d]|-{
        \begin{rotate}{-90}
          $\mathclap{\simeq}$
        \end{rotate}
      }
      \\
      \mathrm{Sp}(1)
      \ar[rr]^-{ \mathrm{diag} }
      &&
      \mathrm{Sp}(1)_L \times \mathrm{Sp}(1)_R
      \\
      \mathbf{1} \oplus \mathbf{3}
      &
            \overset{\iota^\ast}
            {\longmapsfrom}
      & \mathbf{4}
    }
  $$
  and hence by restriction of the 4-dimensional
  vector representation
  in its quaternion form \eqref{QuaternionIncarnationOf4dVectorRep}
  it follows that the resulting
  $\mathbf{3} \in \mathrm{Rep}_{\mathbb{R}}( \mathrm{Sp}(1) )$ is given by
  the diagonal of the actions \eqref{LeftMultiplicationOfUnitQuaternions}
  \begin{equation}
    \label{QuaternionIncarnationOf3dVectorRep}
    \xymatrix@C=2pt@R=-4pt{
      \mathrm{Sp}(1)
       \ar@{}[r]|-{\times}
       &
      \mathbb{H}_{\mathrm{im}}
      \ar[rrrr]
      &&&&
      \mathbb{H}_{\mathrm{im}}
      \\
      (q, &  v)
      \ar@{|->}[rrrr]
      &&&&
      q \cdot v \cdot \bar q
    }
  \end{equation}
  Here $\mathbb{H}_{\mathrm{im}} \subset \mathbb{H}$
  is the real 3-dimensional space of imaginary quaternions.
\end{example}

\paragraph{The Lie algebra $\mathfrak{sp}_1$ via quaternions.}
As $\mathrm{Sp}(1)$ is the unit sphere of the skew-field $\mathbb{H}$ of quaternions, the
identification $\mathbb{H}\cong \mathbb{R}^4$ given by the
standard $\mathbb{R}$-basis
$\{\mathrm{e}_0,\mathrm{e}_1,\mathrm{e}_2,\mathrm{e}_2\}$ of $\mathbb{H}$ with
\begin{equation}
  \label{CanonicalQuaternionBasis}
    \begin{aligned}
    &
    \mathrm{e}_0 = 1\,,
    \qquad
       \mathrm{e}_i \mathrm{e}_i = -1 = - \mathrm{e}_0  \phantom{A} \mbox{for $i \in \{1,2,3\}$}\,,
 \qquad
    \mathrm{e}_{\sigma(1)} \mathrm{e}_{\sigma(3)} =  \mathrm{sgn}(\sigma)  \mathrm{e}_{\sigma(3)}
    \phantom{a}
    \mbox{for $\sigma \in \mathrm{Sym}(3)$},
    \end{aligned}
\end{equation}
 identifies the Lie algebra $\mathfrak{sp}_1$ of
$\mathrm{Sp}(1)$ with the tangent space at $S^3$ in $\mathbb{R}^4$ at the point $(1,0,0,0)$,
and so it is identified with the space $\mathbb{H}_{\mathrm{im}}$ of imaginary quaternions.
The Lie algebra structure on $\mathfrak{sp}_1$ is easily obtained by noticing that the 1-parameter
subgroups generated by the basis elements $\{\mathrm{e}_1,\mathrm{e}_2,\mathrm{e}_3\}$ of  $\mathfrak{sp}_1$ are given by
$
\mathrm{e}_i(t)= \cos(t)+ \mathrm{e}_i \sin(t)
$.
From this we get
$$
[\mathrm{e}_1,\mathrm{e}_2]= \frac{1}{2}\frac{d^2}{dsdt}\biggr\vert_{(s,t)=
(0,0)} \mathrm{e}_1(t)\mathrm{e}_2(s)\mathrm{e}_1(t)^{-1}\mathrm{e}_2(s)^{-1}=\mathrm{e}_3\,,
$$
and similarly
\[
[\mathrm{e}_2,\mathrm{e}_3]=\mathrm{e}_1 \qquad  {\rm and} \qquad [\mathrm{e}_3,\mathrm{e}_1]=\mathrm{e}_2\,.
\]
Therefore the basis vectors $\{\mathrm{e}_i\}$ are the standard Lie algebra basis for
$\mathfrak{so}_3\cong \mathfrak{sp}_1$.  Finally, by differentiating the action,
one sees that the Lie algebra representations corresponding to the representations
$\mathbf{4}_{l/r} \cong \mathbb{H}$ from \eqref{LeftMultiplicationOfUnitQuaternions},
and
$\mathbf{3} \cong \mathbb{H}_{\mathrm{im}}$ from \eqref{QuaternionIncarnationOf3dVectorRep}
are given, respectively, by
\begin{alignat}{4}
  \label{443}
  \mathfrak{sp}_1\otimes \mathbf{4}& \longrightarrow \mathbf{4}\,,
  \qquad  \qquad
  \mathfrak{sp}_1\otimes \mathbf{4}& \longrightarrow \mathbf{4}\,,
  \qquad  \qquad
  \mathfrak{sp}_1\otimes \mathbf{3} & \longrightarrow \mathbf{3}\;\;.
  \\
  \notag
  \mathrm{e}_i\otimes v&\longmapsto \mathrm{e}_i\,v\,
  \qquad  \qquad
  \mathrm{e}_i\otimes v \! & \longmapsto v\,\mathrm{e}_i\,
  \qquad  \qquad
  \mathrm{e}_i\otimes v & \longmapsto [\mathrm{e}_i,v]
\end{alignat}
 Here the multiplications and the commutators on the right are taken
 in the associative algebra $\mathbb{H}$ of quaternions.
As $\mathrm{Sp}(1)$ is a compact and simply connected Lie group, its Lie algebra
$\mathfrak{sp}_1$ knows everything about its representation theory. An example
of application of this principle are the proofs of the following lemmas.

\begin{lemma}[Decomposition of irreps of $\mathbf{4}\wedge \mathbf{4}$]
  \label{IrrepDecompositionOf4Wedge4}
  The second exterior power $\mathbf{4}\wedge \mathbf{4}$ (see \eqref{WedgeProductRepresentation})
  of the defining real 4-dimensional irrep
  $\mathbf{4} \in \mathrm{Rep}_{\mathbb{R}}( \mathrm{Sp}(1) )$
 (see \eqref{4OfSp1})
  is the direct sum of the real 3-dimensional irrep
  $\mathbf{3} \in \mathrm{Rep}_{\mathbb{R}}( \mathrm{Sp}(1) )$
  (see \eqref{QuaternionIncarnationOf3dVectorRep})
  with the 3-dimensional trivial rep:
  \begin{equation}
    \label{IrrepDecompositionOf4Wedge4Equ}
    \mathbf{4}\wedge \mathbf{4}
    \;\simeq\;
    3 \cdot \mathbf{1}
    \;\oplus\;
    1 \cdot \mathbf{3}
    \;\;\;\in\;
    \mathrm{Rep}_{\mathbb{R}}(\mathrm{Sp}(1))\;.
  \end{equation}
\end{lemma}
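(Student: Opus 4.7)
The plan is to identify $\wedge^2 \mathbf{4}$ with the adjoint representation $\mathfrak{so}(4)$ of $\mathrm{Spin}(4)$, then use the exceptional isomorphism $\mathrm{Sp}(1)_L \times \mathrm{Sp}(1)_R \cong \mathrm{Spin}(4)$ from \eqref{ExceptionalIsomorphismSpin4} to decompose this $\mathrm{Spin}(4)$-module as an outer tensor product, and finally restrict along $\mathrm{Sp}(1) \simeq \mathrm{Sp}(1)_L \hookrightarrow \mathrm{Spin}(4)$.

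First, I would recall that, via \eqref{ExceptionalIsomorphismFromSpSpToSpin4}, the real $\mathrm{Sp}(1)$-representation $\mathbf{4} \simeq \mathbb{H}$ given by left quaternion multiplication \eqref{LeftMultiplicationOfUnitQuaternions} is the restriction to the $\mathrm{Sp}(1)_L$-factor of the $\mathrm{Spin}(4)$ vector representation realized by the combined action \eqref{QuaternionIncarnationOf4dVectorRep}. Hence $\wedge^2 \mathbf{4}$, viewed as an $\mathrm{Sp}(1)$-module, is the restriction of the corresponding $\mathrm{Spin}(4)$-module, which under the standard equivariant isomorphism $\wedge^2 V \cong \mathfrak{so}(V)$ (valid for the defining vector representation $V$ of $\mathrm{SO}(V)$) coincides with the adjoint representation $\mathfrak{so}(4)$.

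Next, the Lie-algebra splitting $\mathfrak{so}(4) = \mathfrak{su}(2)_L \oplus \mathfrak{su}(2)_R$ underlying \eqref{ExceptionalIsomorphismSpin4} is also a direct-sum decomposition as $\mathrm{Sp}(1)_L \times \mathrm{Sp}(1)_R$-modules, because each factor acts by the adjoint action on its own summand and trivially on the other; the triviality is a direct consequence of $[\mathfrak{su}(2)_L, \mathfrak{su}(2)_R] = 0$, which is immediate from the bracket relations between the $J^i_{L/R}$ in \eqref{su2GeneratorsInso4}. With the $\mathrm{Sp}(1)$-adjoint representation recognized as $\mathbf{3}$ via \eqref{QuaternionIncarnationOf3dVectorRep} and \eqref{443}, this gives
\begin{equation*}
\wedge^2 \mathbf{4} \;\simeq\; (\mathbf{3} \boxtimes \mathbf{1}) \,\oplus\, (\mathbf{1} \boxtimes \mathbf{3}) \quad \in \mathrm{Rep}_{\mathbb{R}}\big(\mathrm{Sp}(1)_L \times \mathrm{Sp}(1)_R\big).
\end{equation*}
Finally, restriction along $\mathrm{Sp}(1) \simeq \mathrm{Sp}(1)_L \hookrightarrow \mathrm{Sp}(1)_L \times \mathrm{Sp}(1)_R$ collapses the second factor of each outer tensor product to a trivial representation of the appropriate dimension, and thus yields $\mathbf{3} \oplus 3 \cdot \mathbf{1}$, which is \eqref{IrrepDecompositionOf4Wedge4Equ}. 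No serious obstacle is expected, since every step is either standard or set up in the material immediately preceding the lemma; the only concrete verification required is the commutation $[\mathfrak{su}(2)_L, \mathfrak{su}(2)_R] = 0$ just mentioned.
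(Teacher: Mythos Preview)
Your proof is correct but takes a genuinely different route from the paper. The paper proceeds by direct computation: it writes down an explicit basis $\{a_i^L, a_i^R\}_{i=1,2,3}$ of $\mathbf{4}\wedge\mathbf{4}$ (the self-dual and anti-self-dual combinations $e_0\wedge e_i \pm \tfrac{1}{2}\epsilon_{ijk}\,e_j\wedge e_k$), then computes the Lie algebra action of each $e_i$ on these basis elements to read off that $\langle a_i^L\rangle \simeq \mathbf{3}$ and each $\langle a_i^R\rangle \simeq \mathbf{1}$.

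Your approach is more structural: you invoke the equivariant identification $\wedge^2\mathbf{4}\simeq \mathfrak{so}(4)$ and the Lie-algebra splitting $\mathfrak{so}(4)=\mathfrak{su}(2)_L\oplus\mathfrak{su}(2)_R$, then restrict to $\mathrm{Sp}(1)_L$. In effect you are proving the $\mathrm{Sp}(1)_L\times\mathrm{Sp}(1)_R$ decomposition (the paper's next Lemma~\ref{LeftRightExchangeOf4Wedge4}) first, and deducing the present lemma by restriction; the paper goes in the opposite order, using the explicit computation here as input for Lemma~\ref{LeftRightExchangeOf4Wedge4}. Your argument is cleaner and entirely self-contained given \eqref{ExceptionalIsomorphismSpin4}. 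What the paper's computation buys is the explicit basis \eqref{LinearBasisFor4Wedge4}, which is reused downstream (e.g.\ in \eqref{RelationBetweenInternalIndicesOnFixedLocus} to identify the fixed-locus generators $e^{I6}$); if you adopt your approach you would still need to exhibit that basis at some point.
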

\begin{proof}
By \eqref{443},
the $\mathfrak{sp}_1$-action on $\mathbf{4}$ is given on the canonical linear basis \eqref{CanonicalQuaternionBasis}
  \begin{equation}
    \label{BasisFor4}
    \mathbf{4}
    \;\simeq\;
    \big\langle
      \mathrm{e}_0, \mathrm{e}_1, \mathrm{e}_2, \mathrm{e}_3
    \big\rangle_{\mathbb{R}}
  \end{equation}
by $\mathrm{e}_i\otimes \mathrm{e}_j\mapsto \mathrm{e}_i\mathrm{e}_j$, with $i\in \{1,2,3\}$ and $j\in \{0,1,2,3\}$.
Consider then the following linear basis of $\mathbf{4}\wedge \mathbf{4}$:
  \begin{equation}
    \label{LinearBasisFor4Wedge4}
    \mathbf{4}\wedge \mathbf{4}
    \;\simeq_{\mathbb{R}}\;
    \left\langle
      \begin{array}{c}
        a_1^L
        :=
        \mathrm{e}_0 \wedge \mathrm{e}_1 + \mathrm{e}_2 \wedge \mathrm{e}_3,
        \\
        a_2^L
        :=
        \mathrm{e}_0 \wedge  \mathrm{e}_2 + \mathrm{e}_3 \wedge  \mathrm{e}_1,
        \\
        a_3^L
        :=
        \mathrm{e}_0 \wedge \mathrm{e}_3 + \mathrm{e}_1 \wedge  \mathrm{e}_2,
      \end{array}
      \begin{array}{c}
        a_1^R
        :=
        \mathrm{e}_0 \wedge \mathrm{e}_1 - \mathrm{e}_2 \wedge \mathrm{e}_3,
        \\
        a_2^R
        :=
        \mathrm{e}_0 \wedge \mathrm{e}_2 - \mathrm{e}_3 \wedge \mathrm{e}_1,
        \\
        a_3^R
        :=
        \mathrm{e}_0 \wedge \mathrm{e}_3 - \mathrm{e}_1 \wedge \mathrm{e}_2
      \end{array}
    \right\rangle.
  \end{equation}
 The induced $\mathfrak{sp}(1)$ Lie algebra action is given by
  $$
    \mathrm{e}_i \cdot ( \mathrm{e}_j \wedge \mathrm{e}_k )
    \;=\;
    (\mathrm{e}_i \mathrm{e}_j) \wedge \mathrm{e}_k
    +
    \mathrm{e}_j \wedge (\mathrm{e}_i \mathrm{e}_k)
    \,.
  $$
  From this we find for $e_1$:
  $$
    \begin{aligned}
    \mathrm{e}_1 \cdot a_1^{L/R}
    & = \;
    \mathrm{e}_1
      \cdot
    \big(
      \mathrm{e}_0 \wedge \mathrm{e}_1
      \pm
      \mathrm{e}_2 \wedge \mathrm{e}_3
    \big)
    \\
    & =\;
    \big(
    \underset{
      = 0
    }{
      \underbrace{
        \mathrm{e}_1 \wedge \mathrm{e}_1
      }
    }
    +
    \underset{
      = 0
    }{
      \underbrace{
        \mathrm{e}_0 \wedge (-\mathrm{e}_0)
      }
    }
    \big)
    \pm
    \big(
    \underset{
      = 0
    }{
      \underbrace{
        \mathrm{e}_3 \wedge \mathrm{e}_3
      }
    }
    +
    \underset{
      = 0
    }{
      \underbrace{
        \mathrm{e}_2 \wedge (- \mathrm{e}_2)
      }
    }
    \big)
     =\;
    0
\\
    \mathrm{e}_1 \cdot a_2^{L/R}
    & =\;
    \mathrm{e}_1
      \cdot
    \big(
      \mathrm{e}_0 \wedge \mathrm{e}_2
      \pm
      \mathrm{e}_3 \wedge \mathrm{e}_1
    \big)
    \\
    & = \;
    \big(
    \underset{
      = \mathrm{e}_1 \wedge \mathrm{e}_2
    }{
      \underbrace{
        \mathrm{e}_1 \wedge \mathrm{e}_2
      }
    }
    +
    \underset{
      = \mathrm{e}_0 \wedge \mathrm{e}_3
    }{
      \underbrace{
        \mathrm{e}_0 \wedge \mathrm{e}_3
      }
    }
    \big)
    \pm
    \big(
    \underset{
      = \mathrm{e}_1 \wedge \mathrm{e}_2
    }{
      \underbrace{
        (- \mathrm{e}_2) \wedge \mathrm{e}_1
      }
    }
    +
    \underset{
      = \mathrm{e}_0 \wedge \mathrm{e}_3
    }{
      \underbrace{
        \mathrm{e}_3 \wedge (- \mathrm{e}_0)
      }
    }
    \big)
   =\;
    \left\{
      \begin{array}{l}
        2 a_3^L
        \\
        0
      \end{array}
    \right.
\\
    \mathrm{e}_1 \cdot a_3^{L/R}
    & =\;
    \mathrm{e}_1
      \cdot
    \big(
      \mathrm{e}_0 \wedge \mathrm{e}_3
      \pm
      \mathrm{e}_1 \wedge \mathrm{e}_2
    \big)
    \\
    & = \;
    \big(
    \underset{
      = - \mathrm{e}_3 \wedge \mathrm{e}_1
    }{
      \underbrace{
        \mathrm{e}_1 \wedge \mathrm{e}_3
      }
    }
    +
    \underset{
      = - \mathrm{e}_0 \wedge \mathrm{e}_2
    }{
      \underbrace{
        \mathrm{e}_0 \wedge (-\mathrm{e}_2)
      }
    }
    \big)
    \pm
    \big(
    \underset{
      = - \mathrm{e}_0 \wedge \mathrm{e}_2
    }{
      \underbrace{
        (- \mathrm{e}_0) \wedge \mathrm{e}_2
      }
    }
    +
    \underset{
      = - \mathrm{e}_3 \wedge \mathrm{e}_1
    }{
      \underbrace{
        \mathrm{e}_1 \wedge \mathrm{e}_3
      }
    }
    \big)
 =\;
    \left\{
      \begin{array}{l}
        -2 a_2^L
        \\
        0 \;.
      \end{array}
    \right.
    \end{aligned}
      $$
  Since everything here is invariant under cyclic permutation of
  the three non-zero indices, it follows generally that
  $$
    (\tfrac{1}{2} \mathrm{e}_i) \cdot a_j^L
      \;=\;
    \underset{k}{\sum} \epsilon_{i j k} a_k^L
    \,,
    \phantom{AA}
    (\tfrac{1}{2} \mathrm{e}_i) \cdot a_j^R \;=\; 0
    \phantom{AA}
    \mbox{for all $i,j \in \{1,2,3\}$}.
  $$
    This identifies $\big\langle \{a^L_1, a^L_2, a^L_3\} \big\rangle$ and
  $\big\langle \{a^R_i\} \big\rangle$ as $\mathbf{3}$ and $\mathbf{1}$,
  respectively, as representations of $\mathfrak{sp}_1$ and hence as
  representations of $\mathrm{Sp}(1)$:
  \begin{equation}
    \label{BasesFor3and1}
    \big\langle \{a^L_1, a^L_2, a^L_3\} \big\rangle
    \;\simeq\;
    \mathbf{3}
    \,,
    \phantom{aa}
    \big\langle \{a^R_i\} \big\rangle
    \;\simeq\;
    \mathbf{1}
    \;\;\;\in
    \mathrm{Rep}_{\mathbb{R}}(\mathrm{Sp}(1))
    \,.
  \end{equation}

\vspace{-2mm}
\hfill \end{proof}

\begin{lemma}[Left-right exchange of $\mathbf{4} \wedge \mathbf{4}$]
  \label{LeftRightExchangeOf4Wedge4}
  Under the exceptional isomorphism
  $\mathrm{Sp}(1)_R \times \mathrm{Sp}(1)_L
  \overset{\cong}{\to} \mathrm{Spin}(4)$ (from \eqref{ExceptionalIsomorphismFromSpSpToSpin4}),
  the second exterior power $\wedge^2 \mathbf{4}$ of the
  vector representation $\mathbf{4}$ of $\mathrm{Spin}(4)$
  splits as the direct sum of the outer tensor products (from \eqref{OuterTensorProductOfRepresentations})
  of the
  $\mathbf{3}$ (from \eqref{QuaternionIncarnationOf3dVectorRep})
  of one of the $\mathrm{Sp}(1)$ factors with the $\mathbf{1}$ of the other factor:
  $$
    \wedge^2 \mathbf{4}
    \;\;\simeq\;\;
    \mathbf{3} \boxtimes \mathbf{1}
    \;\oplus\;
    \mathbf{1} \boxtimes \mathbf{3}
    \;\;\;\;
    \in
    \;
    \mathrm{Rep}_{\mathbb{R}}( \mathrm{Sp}(1)_L \times \mathrm{Sp}(1)_R )
    \,.
$$
\end{lemma}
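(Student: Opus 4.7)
The strategy is to extend the computation carried out in the proof of Lemma \ref{IrrepDecompositionOf4Wedge4} from the $\mathfrak{sp}(1)_L$-action alone to the full action of $\mathfrak{sp}(1)_L \oplus \mathfrak{sp}(1)_R$, using that these two commuting copies assemble into $\mathfrak{spin}(4)$ under \eqref{ExceptionalIsomorphismSpin4} with $\mathbf{4}$ realized as in \eqref{QuaternionIncarnationOf4dVectorRep}. The previous lemma already identified, under left quaternion multiplication, the two summands in the basis \eqref{LinearBasisFor4Wedge4} as $\langle a_j^L\rangle \simeq \mathbf{3}$ and $\langle a_j^R\rangle \simeq \mathbf{1}$, see \eqref{BasesFor3and1}. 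Thus it remains only to determine how these summands transform under $\mathfrak{sp}(1)_R$.

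From \eqref{QuaternionIncarnationOf4dVectorRep} and \eqref{443}, the induced $\mathfrak{sp}(1)_R$-action on $\mathbf{4}\simeq \mathbb{H}$ is $\mathrm{e}_i\cdot v = -v\,\mathrm{e}_i$ (the sign coming from the conjugate in $q_L v \bar q_R$), so that by Leibniz
$$
  \mathrm{e}_i \cdot ( \mathrm{e}_j \wedge \mathrm{e}_k )
  \;=\;
  -\, (\mathrm{e}_j\mathrm{e}_i) \wedge \mathrm{e}_k
  \;-\;
  \mathrm{e}_j \wedge (\mathrm{e}_k\mathrm{e}_i).
$$
The plan is then to evaluate $\mathrm{e}_i\cdot a_j^{L/R}$ on the basis \eqref{LinearBasisFor4Wedge4} in exactly the style of the previous lemma. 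The key checks are that $\mathrm{e}_1 \cdot a_1^L = \mathrm{e}_1 \cdot a_2^L = \mathrm{e}_1 \cdot a_3^L = 0$ and, on the other hand, $\mathrm{e}_1 \cdot a_1^R = 0$, $\mathrm{e}_1 \cdot a_2^R = 2 a_3^R$, $\mathrm{e}_1 \cdot a_3^R = -2 a_2^R$; cyclic symmetry in the three non-zero quaternion indices then upgrades these to the general statement
$$
  (\tfrac{1}{2}\mathrm{e}_i)\cdot a_j^R \;=\; \textstyle\sum_k \epsilon_{ijk}\, a_k^R, \qquad
  (\tfrac{1}{2}\mathrm{e}_i)\cdot a_j^L \;=\; 0,
$$
which identifies $\langle a_j^R\rangle \simeq \mathbf{3}$ and $\langle a_j^L\rangle \simeq \mathbf{1}$ as $\mathfrak{sp}(1)_R$-representations.

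Combining with the previous lemma, and using that the $\mathfrak{sp}(1)_L$- and $\mathfrak{sp}(1)_R$-actions commute (as they come from the two commuting tensor factors of \eqref{ExceptionalIsomorphismSpin4}), both decompositions are compatible, and we obtain the joint $\mathrm{Sp}(1)_L \times \mathrm{Sp}(1)_R$-representation identifications
$$
  \langle a_j^L\rangle \;\simeq\; \mathbf{3}\boxtimes\mathbf{1},
  \qquad
  \langle a_j^R\rangle \;\simeq\; \mathbf{1}\boxtimes\mathbf{3},
$$
whose direct sum gives the claim. The only substantive step is the explicit right-action computation; no genuine obstacle arises, since it is fully parallel to \eqref{BasisFor4}--\eqref{BasesFor3and1}, with the roles of the ``self-dual'' basis $\{a_j^L\}$ and ``anti-self-dual'' basis $\{a_j^R\}$ interchanged. (As a conceptual cross-check, one may note that $\wedge^2\mathbf{4}$ is the adjoint representation of $\mathrm{Spin}(4)$, and via \eqref{ExceptionalIsomorphismSpin4} this splits as $\mathfrak{su}(2)_L \oplus \mathfrak{su}(2)_R$, each summand carrying the adjoint action of its own factor and the trivial action of the other, which is precisely $\mathbf{3}\boxtimes\mathbf{1}\;\oplus\;\mathbf{1}\boxtimes\mathbf{3}$.)
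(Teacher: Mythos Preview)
Your proposal is correct and follows essentially the same route as the paper's proof: both set $V=\langle a_j^L\rangle$, $W=\langle a_j^R\rangle$ from \eqref{LinearBasisFor4Wedge4}, invoke Lemma \ref{IrrepDecompositionOf4Wedge4} for the $\mathfrak{sp}(1)_L$-action, and then argue that the analogous computation for the $\mathfrak{sp}(1)_R$-action (via \eqref{443}) swaps the roles of the two summands. You spell out the right-action computation more explicitly and add the adjoint-representation cross-check, but the argument is the same.
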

\begin{proof}
  We have to show that we have an $\mathbb{R}$-vector space splitting
 \[
 \wedge^2 \mathbf{4}=V\oplus W
 \]
with both $V$ and $W$ representations of  $\mathrm{Sp}(1)_{A/V}$ via the
   the restrictions along the two inclusions
  $
    \mathrm{Sp}(1)_{L/R}
  \;\;  \overset{\iota_{L/R}}{\longhookrightarrow}
\;    \mathrm{Sp}(1)_L \times \mathrm{Sp}(1)_R
  $
 and with
  $V\cong \mathbf{3}$ and $W\cong 3\cdot \mathbf{1}$ in $\mathrm{Rep}_{\mathbb{R}}( \mathrm{Sp}(1)_L )$, and $V\cong 3\cdot \mathbf{1}$ and $W\cong \mathbf{3}$ in $\mathrm{Rep}_{\mathbb{R}}( \mathrm{Sp}(1)_R )$.
  Set, in the notation of Lemma \ref{IrrepDecompositionOf4Wedge4},
 \[
   V
   \;:=\;
   \big\langle
     \{a^L_1, a^L_2, a^L_3\}
   \big\rangle;
   \qquad
   W
   \;:=\;
   \big\langle \{a^R_1, a^R_2, a^R_3\}
   \big\rangle.
 \]
  Then the identification
  \[
  V\oplus W = \mathbf{3}\,\,\oplus\,\, 3\cdot \mathbf{1} \in \mathrm{Rep}_{\mathbb{R}}( \mathrm{Sp}(1)_L )
  \]
  is precisely the content of Lemma \ref{IrrepDecompositionOf4Wedge4}, and the identification
   \[
  V\oplus W = 3\cdot \mathbf{1}\,\,\oplus\,\, \mathbf{3}  \in \mathrm{Rep}_{\mathbb{R}}( \mathrm{Sp}(1)_R )
  \]
  is proved analogously, by considering the $\mathfrak{sp}_1$-action on $\mathbb{H}$ induced by the $\mathrm{Sp}(1)$-action on the right (see equation (\ref{443})).
\hfill \end{proof}

\begin{lemma}[The third wedge power]
  \label{ThirdWedgePowerOf4}
  The fourth exterior power $\wedge^4 \mathbf{4}$  of $\mathbf{4} \in \mathrm{Rep}_{\mathbb{R}}(\mathrm{Sp}(1))$ is the trivial representation $\mathbf{1}$, and
  the third exterior power $\wedge^3 \mathbf{4}$  of $\mathbf{4} \in \mathrm{Rep}_{\mathbb{R}}(\mathrm{Sp}(1))$
  is isomorphic to $\mathbf{4}$ itself:
  $$
    \wedge^3 \mathbf{4}
    \;\simeq\;
    \mathbf{4}
    \;\;\;\;
    \in \;
    \mathrm{Rep}_{\mathbb{R}}(\mathrm{Sp}(1))
    \,.
  $$
\end{lemma}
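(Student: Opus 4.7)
The plan is to exhibit both isomorphisms as instances of the Hodge-star duality available for the defining representation, recognising that $\mathbf{4}$ carries an $\mathrm{Sp}(1)$-invariant inner product and a distinguished volume form.

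First I would show that $\wedge^4 \mathbf{4} \simeq \mathbf{1}$. The space $\wedge^4 \mathbf{4}$ is $1$-dimensional, so it suffices to check that the generator $\mathrm{vol} := \mathrm{e}_0 \wedge \mathrm{e}_1 \wedge \mathrm{e}_2 \wedge \mathrm{e}_3$ is fixed. Using the Lie algebra action $\mathrm{e}_i \cdot (\mathrm{e}_{k_1} \wedge \cdots \wedge \mathrm{e}_{k_4}) = \sum_r \mathrm{e}_{k_1} \wedge \cdots \wedge (\mathrm{e}_i \mathrm{e}_{k_r}) \wedge \cdots \wedge \mathrm{e}_{k_4}$ from \eqref{443}, and the quaternion multiplication table \eqref{CanonicalQuaternionBasis}, each of the four terms in $\mathrm{e}_i \cdot \mathrm{vol}$ produces a wedge in which two factors coincide up to sign, hence vanishes. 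Thus $\mathfrak{sp}_1$ acts trivially, so (since $\mathrm{Sp}(1)$ is connected) $\wedge^4 \mathbf{4}$ is the trivial representation. Alternatively, one can invoke that $\mathbf{4}$ factors through $\mathrm{SO}(4)$ via \eqref{ExceptionalIsomorphismFromSpSpToSpin4}, so its determinant representation is automatically trivial.

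Next I would deduce $\wedge^3 \mathbf{4} \simeq \mathbf{4}$ by Hodge duality. The canonical wedge pairing
\[
  \wedge^3 \mathbf{4} \;\otimes\; \mathbf{4}
  \;\longrightarrow\;
  \wedge^4 \mathbf{4}
  \;\simeq\;
  \mathbf{1}
\]
is $\mathrm{Sp}(1)$-equivariant and non-degenerate, exhibiting
$\wedge^3 \mathbf{4} \simeq \mathbf{4}^*$
as real representations. It remains to identify $\mathbf{4}^*$ with $\mathbf{4}$. The standard quaternionic inner product $\langle v, w\rangle := \mathrm{Re}(v^\ast w)$ on $\mathbb{H}$ is $\mathrm{Sp}(1)$-invariant for the left-multiplication action (since $|q|=1$ implies $\langle qv, qw\rangle = \mathrm{Re}(v^\ast q^\ast q\, w) = \langle v, w\rangle$), so the musical isomorphism $v \mapsto \langle v, -\rangle$ gives $\mathbf{4} \simeq \mathbf{4}^*$ as real $\mathrm{Sp}(1)$-representations. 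Composing, $\wedge^3 \mathbf{4} \simeq \mathbf{4}$, as claimed.

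Concretely, the Hodge star with respect to the basis \eqref{CanonicalQuaternionBasis} and the volume $\mathrm{vol}$ sends $\mathrm{e}_a \mapsto \star \mathrm{e}_a \in \wedge^3 \mathbf{4}$, yielding the explicit equivariant isomorphism. No step requires further calculation once the triviality of $\wedge^4 \mathbf{4}$ is in hand; the only point that might warrant care is checking that the Hodge pairing lands in the trivial representation (which is precisely the content of the first part) and that the inner product used to identify $\mathbf{4}^* \simeq \mathbf{4}$ is indeed $\mathrm{Sp}(1)$-invariant, both of which are immediate from the definitions.
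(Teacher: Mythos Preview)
Your proof is correct and, at the level of the underlying isomorphism, agrees with the paper: the paper's explicit map $\phi\colon b_i := (-1)^i\,\mathrm{e}_0\wedge\cdots\wedge\widehat{\mathrm{e}_i}\wedge\cdots\wedge\mathrm{e}_3 \mapsto \mathrm{e}_i$ is precisely (the inverse of) the Hodge star you invoke. The difference is one of style. The paper establishes equivariance of $\phi$ by direct computation on basis elements, sampling cases such as $\phi(\mathrm{e}_1\cdot b_0)$ and $\phi(\mathrm{e}_1\cdot b_1)$ and checking they match $\mathrm{e}_1\cdot\phi(b_0)$, $\mathrm{e}_1\cdot\phi(b_1)$. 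You instead factor the same map as the composite $\wedge^3\mathbf{4}\simeq\mathbf{4}^\ast\simeq\mathbf{4}$ and argue each piece is equivariant for structural reasons (naturality of the wedge pairing plus triviality of $\wedge^4\mathbf{4}$; existence of an invariant inner product on $\mathbb{H}$). Your route avoids any case-by-case verification and makes clear why the isomorphism must exist; the paper's route has the virtue of being entirely self-contained within the quaternion multiplication table already in play in the surrounding lemmas.
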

\begin{proof}
For any $i\in\{0,1,2,3\}$ let $b_i\in \wedge^3 \mathbf{4}$ be the element
\[
b_i:=(-1)^i \mathrm{e}_0\wedge\cdots \wedge \widehat{\mathrm{e}_i} \wedge\cdots \mathrm{e}_3,
\]
where the factor $\mathrm{e}_i$ is omitted. Then $\{b_0,b_1,b_2,b_3\}$ is an $\mathbb{R}$-basis of $\wedge^3 \mathbf{4}$
and we have an isomorphism of $\mathbb{R}$-vector spaces $\wedge^3 \mathbf{4}\to \mathbf{4}$ given by $\phi\colon b_i \mapsto \mathrm{e}_i$. Direct inspection shows that $\phi$ is actually an isomorphism of $\mathfrak{sp}_1$ representations, and so of $\mathrm{Sp}(1)$ representations.
  This follows by direct inspection. For instance, for the
  Lie action of $\mathrm{e}_1$ we find:
  $$
  \hspace{-5mm}
    \begin{aligned}
      \phi(\mathrm{e}_1 \cdot b_0)
      & \; = \;
      \phi\big(\mathrm{e}_1 \cdot ( \mathrm{e}_1 \wedge \mathrm{e}_2 \wedge \mathrm{e}_3 )\big)
      \\
      & \; = \; \phi\Big(
      \underset{
        = b_1
      }{
        \underbrace{
          (- \mathrm{e}_0) \wedge \mathrm{e}_2 \wedge \mathrm{e}_3
        }
      }
      \;+\;
      \underset{
        = 0
      }{
        \underbrace{
          \mathrm{e}_1 \wedge \mathrm{e}_3 \wedge \mathrm{e}_3
        }
      }
      \;+\;
      \underset{
        = 0
      }{
        \underbrace{
          \mathrm{e}_1 \wedge \mathrm{e}_2 \wedge (- \mathrm{e}_2)
        }
      }\Big)
      \;=\;
      \phi(b_1)=\mathrm{e}_1=\mathrm{e}_1\cdot \mathrm{e}_0=\mathrm{e}_1\cdot \phi(b_0)\;,
\\
    \phi( \mathrm{e}_1 \cdot b_1)
      & \;=\;
    \phi \big(  \mathrm{e}_1 \cdot ( - \mathrm{e}_0 \wedge \mathrm{e}_2 \wedge \mathrm{e}_3 )\big)
      \\
      & \;=\;
     \phi\Big(
        -
      \underset{
        b_0
      }{
        \underbrace{
          \mathrm{e}_1 \wedge \mathrm{e}_2 \wedge \mathrm{e}_3
        }
      }
    -
      \underset{
        = 0
      }{
      \underbrace{
        \mathrm{e}_0 \wedge \mathrm{e}_3 \wedge \mathrm{e}_3
      }
      }
      -
      \underset{
        = 0
      }{
        \underbrace{
          \mathrm{e}_0 \wedge \mathrm{e}_2 \wedge (-\mathrm{e}_2)
        }
      }\Big)
 \;=\;
      \phi(- b_0)=-\mathrm{e}_0=\mathrm{e}_1\cdot \mathrm{e}_1=\mathrm{e}_1\cdot \phi(b_1)\;,
    \end{aligned}
$$
and similarly for the other cases.
\hfill \end{proof}

In summary, we have proven the first
statement of Theorem \ref{FieldsAtSingularity}:

\begin{prop}[The $\mathbf{3}$ inside $\mathbf{4} \wedge \mathbf{3}$]
  \label{3Inside4Wedge3}
  The $\mathrm{SU}(2)_L$-fixed locus inside the
  $\wedge^2 \mathbf{4}$ of
  $\mathrm{Spin}(4) \simeq \mathrm{SU}(2)_L \times \mathrm{SU}(2)_R$
  is the $\mathbf{3}$ of $\mathrm{SU}(2)_R$,
  while the $\mathrm{SU}(2)_L$-fixed locus in
  $\wedge^3 \mathbf{4}$ (and in $\mathbf{4}$) is trivial
\end{prop}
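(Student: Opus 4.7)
The plan is a direct reduction to the lemmas just established in this subsection, splitting the statement into three independent sub-claims, one for each of $\wedge^2\mathbf{4}$, $\wedge^1\mathbf{4}=\mathbf{4}$, and $\wedge^3\mathbf{4}$.

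The first case is the substantive one. Here I will invoke Lemma \ref{LeftRightExchangeOf4Wedge4} for the full $\mathrm{SU}(2)_L\times\mathrm{SU}(2)_R$-equivariant decomposition
\[
  \wedge^2\mathbf{4}\;\simeq\;\mathbf{3}\boxtimes\mathbf{1}\;\oplus\;\mathbf{1}\boxtimes\mathbf{3},
\]
and then apply the additive $\mathrm{SU}(2)_L$-fixed-point functor \eqref{FormingFixedPoints} summand-wise using the general identification \eqref{H2FixedPointsOfOuterTensorProductWithH2Irrep}: the first summand drops out because its $\mathrm{SU}(2)_L$-factor is the non-trivial irrep $\mathbf{3}$, while the second contributes the $\mathbf{3}$ of $\mathrm{SU}(2)_R$ because its $\mathrm{SU}(2)_L$-factor is the trivial irrep $\mathbf{1}$.

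For $\mathbf{4}$ itself I will argue directly from the quaternionic model \eqref{QuaternionIncarnationOf4dVectorRep}: the $\mathrm{SU}(2)_L$-action on $\mathbf{4}\simeq\mathbb{H}$ is by left multiplication, and for any unit quaternion $q\neq 1$ the relation $qv=v$ with $v\neq 0$ would force $q-1$ to be a left zero-divisor in $\mathbb{H}$, which is impossible since $\mathbb{H}$ is a division algebra. Hence $\mathbf{4}^{\mathrm{SU}(2)_L}=\{0\}$. For $\wedge^3\mathbf{4}$ I will appeal to Lemma \ref{ThirdWedgePowerOf4}, which exhibits an isomorphism $\wedge^3\mathbf{4}\simeq\mathbf{4}$ of $\mathrm{SU}(2)_L$-representations, thereby reducing the claim to the previous sentence.

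The one small point requiring care --- and the only conceivable obstacle --- is the scope of Lemma \ref{ThirdWedgePowerOf4}: its proof identifies $\wedge^3\mathbf{4}\simeq\mathbf{4}$ as representations of $\mathrm{Sp}(1)$ acting by left multiplication on $\mathbb{H}$, which is precisely the $\mathrm{SU}(2)_L$-factor of $\mathrm{Spin}(4)$, and that is exactly what is needed here. No upgrade to full $\mathrm{Spin}(4)$-equivariance is required for the weaker statement at hand; were one to also want the $\mathrm{SU}(2)_R$-module structure of the fixed locus, the cleanest fix would be to invoke Hodge duality via the $\mathrm{Spin}(4)$-invariant trivialisation $\wedge^4\mathbf{4}\simeq\mathbf{1}$ (the first clause of Lemma \ref{ThirdWedgePowerOf4}) together with the invariant inner product on $\mathbf{4}$.
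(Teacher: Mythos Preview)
Your proposal is correct and follows essentially the same route as the paper: the paper's own proof simply cites \eqref{H2FixedPointsOfOuterTensorProductWithH2Irrep} together with Lemma~\ref{LeftRightExchangeOf4Wedge4} and Lemma~\ref{ThirdWedgePowerOf4}, which is exactly your plan. Your write-up is more explicit---in particular your division-algebra argument for $\mathbf{4}^{\mathrm{SU}(2)_L}=0$ and your remark on the scope of Lemma~\ref{ThirdWedgePowerOf4}---but these only flesh out what the paper leaves implicit.
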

\begin{proof}
  Using the fact that fixed loci are the direct summands of the corresponding
  trivial representations
  \eqref{H2FixedPointsOfOuterTensorProductWithH2Irrep},
  this follows from
  Lemma \ref{LeftRightExchangeOf4Wedge4}
  and
  Lemma \ref{ThirdWedgePowerOf4}.
\hfill \end{proof}

\subsection{11d Spinors at the $\mathrm{A}_1$-singularity }
\label{11dSpinorsAtTheA1TypeSingularity}

We now combine \cref{Octonionic2ComponentSpinors}
with \cref{TheA1TypeSingularity} to discuss the representation theory of 11d spinors restricted to an $\mathrm{A}_1$-singularity.

\begin{defn}[Identifying $\Z_2$ subgroup]
With respect to the inclusion
$\mathrm{Spin}(5,1)\times \mathrm{Spin}(4)\subset \mathrm{Spin}(9,1)$
given by \eqref{ExOctopnionicPauliMatices},
consider now
the corresponding inclusion of the subgroup from Lemma \ref{Z2inSU2}:
\begin{equation}
  \label{Z2A}
  \mathbb{Z}^A_2
    \;:=\;
    \big\{
      1,\,
      \exp(2\pi J^1_L))
    \big\}
   \subset
   \mathrm{SU}(2)_L
   \subset
   \mathrm{Spin}(4)
   \subset
   \mathrm{Spin}(10,1)\;.
\end{equation}
\end{defn}

The following lemma is essentially the content of \cite[Lemma 4.13]{HSS18}:
\begin{lemma}[The fermionic $\mathbb{Z}_2^A$-fixed locus]
\label{FixedSubspaceOfGamma6789}
\label{lemma:fixed-locus}
The  fixed locus \eqref{FormingFixedPoints} of $\mathbb{Z}_2^{A}$ \eqref{Z2A}
in the real $\mathrm{Spin}(9,1)$ representations
$\mathbf{16}$ and $\overline{\mathbf{16}}$ \eqref{HeteroticSpinRepsAsOctonionicModules}
is, as a residual $\mathrm{Spin}(5,1)$ representations
the $\mathbf{8}$ and $\overline{\mathbf{8}}$
from \eqref{HeteroticSpinRepsAsOctonionicModules} respectively,
given under Lemma \ref{SpinRepsViaSL2O} by the inclusion
$\mathbb{H}^2 \subset \mathbb{O}^2$ \eqref{LinearSubspacesOfTheOctonions}
\begin{equation}
  \label{FixedLociOfZ2AIn16}
  \mathbf{16}^{\mathbb{Z}_2^A}
  \;=\;
  \mathbf{8}
  \,,
  \phantom{AA}
  \overline{\mathbf{16}}^{\mathbb{Z}_2^A}
  \;=\;
  \overline{\mathbf{\mathbf{8}}}
  \phantom{AAAA}
  \in
  \mathrm{Rep}_{\mathbb{R}}
  \big(
    \mathrm{Spin}(5,1)
  \big)
  \,.
\end{equation}
\end{lemma}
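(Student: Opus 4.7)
\medskip

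\noindent \textbf{Proof plan.} The strategy is to make the abstract statement concrete by working in the octonionic 2-component model of Proposition \ref{SpinRepsViaSL2O}, where both the ambient representation $\mathbf{16} \oplus \overline{\mathbf{16}}$ on $\mathbb{O}^2 \oplus \mathbb{O}^2$ and the target sub-representation $\mathbf{8} \oplus \overline{\mathbf{8}}$ on $\mathbb{H}^2 \oplus \mathbb{H}^2$ admit a common ambient description via the inclusion $\mathbb{H}^2 \subset \mathbb{O}^2$ induced by \eqref{LinearSubspacesOfTheOctonions}.

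\medskip

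\noindent \textbf{Step 1: Clifford realization of the generator.} First, transport Lemma \ref{Z2inSU2} from the local index set $\{1,2,3,4\}$ to the transversal index set $\{6,7,8,9\}$, where $\mathrm{SU}(2)_L \subset \mathrm{Spin}(4)$ embeds in $\mathrm{Spin}(9,1)$ according to the product decomposition in \eqref{ExOctopnionicPauliMatices} and \eqref{RepresentationTheoreticQuestion}. The generator of $\mathbb{Z}_2^A$ then acts in the even Clifford algebra as $\exp(2\pi {\pmb\tau}_L^1) = {\pmb\Gamma}_6 {\pmb\Gamma}_7 {\pmb\Gamma}_8 {\pmb\Gamma}_9$.

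\medskip

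\noindent \textbf{Step 2: Compute the action via octonionic Pauli matrices.} By Proposition \ref{SpinRepsViaSL2O} with $\mathbb{K} = \mathbb{O}$, the even Clifford action on $\mathbf{16}$ is given by the alternating $\sigma^a/\sigma_a$ multiplications \eqref{KPauliActions}, and since the directions $6,7,8,9$ are spacelike we have $\sigma^a = \sigma_a$. Writing the product ${\pmb\Gamma}_6 {\pmb\Gamma}_7 {\pmb\Gamma}_8 {\pmb\Gamma}_9 = ({\pmb\Gamma}_6 {\pmb\Gamma}_7)({\pmb\Gamma}_8 {\pmb\Gamma}_9)$ and using the explicit form
$$
  \sigma_a \cdot \begin{bmatrix} x \\ y \end{bmatrix}
  \;=\;
  \begin{bmatrix} \mathrm{e}_a' \, y \\ -\mathrm{e}_a' \, x \end{bmatrix}
  \,,
  \phantom{AAA}
  (\mathrm{e}_6',\mathrm{e}_7',\mathrm{e}_8',\mathrm{e}_9') \;=\; (\mathrm{e}_4,\mathrm{e}_5,\mathrm{e}_6,\mathrm{e}_7)
$$
from \eqref{ExOctopnionicPauliMatices}, a direct (but non-associative) iterated evaluation reduces the action on $\psi = (x,y)^t \in \mathbb{O}^2$ to component-wise consecutive left multiplication by $\mathrm{e}_4, \mathrm{e}_5, \mathrm{e}_6, \mathrm{e}_7$ on each of $x$ and $y$, up to an overall sign that is readily checked to be $+1$.

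\medskip

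\noindent \textbf{Step 3: Identify the fixed locus via the Dickson reversal.} Apply Lemma \ref{ConsecutiveLeftMultiplicationBye4e5e6e7}, which computes precisely that
$$
  \mathrm{e}_4\bigl(\mathrm{e}_5(\mathrm{e}_6(\mathrm{e}_7\,(q + p\ell)))\bigr) \;=\; q - p\ell
  \,,
  \qquad q,p \in \mathbb{H}.
$$
Hence the action of the $\mathbb{Z}_2^A$-generator on $\mathbf{16}$ sends
$(q_1 + p_1\ell,\, q_2 + p_2\ell) \mapsto (q_1 - p_1\ell,\, q_2 - p_2\ell)$, so its $(+1)$-eigenspace is exactly $\mathbb{H}^2 \subset \mathbb{O}^2$ via \eqref{LinearSubspacesOfTheOctonions}. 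The same formula applies verbatim to $\overline{\mathbf{16}}$ because the two Clifford actions in \eqref{KPauliActions} coincide on products of spacelike generators.

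\medskip

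\noindent \textbf{Step 4: Match with the $\mathrm{Spin}(5,1)$-representation structure.} Finally, verify that $\mathbb{H}^2 \subset \mathbb{O}^2$ is stable under the full residual $\mathrm{Spin}(5,1)$-action: for $a \in \{0,1,2,3,4,5\}$, inspection of \eqref{ExOctopnionicPauliMatices} shows that $\sigma_a$ has entries in the subalgebra $\langle 1, \mathrm{e}_1, \mathrm{e}_2, \mathrm{e}_3\rangle = \mathbb{H}$, so the Clifford action preserves $\mathbb{H}^2$; moreover, the restricted action is, tautologically, the $\mathbb{H}$-instance of Proposition \ref{SpinRepsViaSL2O}, i.e. the $\mathbf{8}$ (resp.\ $\overline{\mathbf{8}}$) of $\mathrm{Spin}(5,1)$. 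This gives the two identifications in \eqref{FixedLociOfZ2AIn16}.

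\medskip

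\noindent \textbf{Main obstacle.} The delicate point is the non-associativity of $\mathbb{O}$: the expression ${\pmb\Gamma}_6{\pmb\Gamma}_7{\pmb\Gamma}_8{\pmb\Gamma}_9\,\psi$ must be evaluated as the prescribed iterated left multiplication (not rebracketed), and the sign bookkeeping from $\sigma_a \psi = (\mathrm{e}_a' y, -\mathrm{e}_a' x)^t$ must be carefully combined with the Dickson relations \eqref{qlp}--\eqref{InducedRelations} to reach the clean form in Lemma \ref{ConsecutiveLeftMultiplicationBye4e5e6e7}. Once this is set up, the rest is essentially a bookkeeping exercise.
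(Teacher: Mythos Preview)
Your proposal is correct and follows essentially the same approach as the paper: identify the $\mathbb{Z}_2^A$-generator with ${\pmb\Gamma}_6{\pmb\Gamma}_7{\pmb\Gamma}_8{\pmb\Gamma}_9$ via Lemma \ref{Z2inSU2}, compute its action through the octonionic Pauli matrices to reduce to componentwise left multiplication by $\mathrm{e}_4,\mathrm{e}_5,\mathrm{e}_6,\mathrm{e}_7$, and then invoke Lemma \ref{ConsecutiveLeftMultiplicationBye4e5e6e7} to conclude that the fixed locus is $\mathbb{H}^2 \subset \mathbb{O}^2$. Your Step 4 (checking $\mathrm{Spin}(5,1)$-stability and identifying the restricted action with the $\mathbb{H}$-instance of Proposition \ref{SpinRepsViaSL2O}) makes explicit a point the paper leaves implicit, but otherwise the arguments coincide.
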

\begin{proof}
By Prop. \ref{SpinRepsViaSL2O}
we need to prove that, under the identifications of the $\mathrm{SO}(9,1)$ representations $\mathbf{16}$ and $\overline{\mathbf{16}}$ with $\mathbb{O}^2$,
the $\mathbb{Z}_2^{A}$ fixed locus in
$\mathbf{16}$ and $\overline{\mathbf{16}}$
is identified with the real subspace $\mathbb{H}^2$ of $\mathbb{O}^2$
\eqref{LinearSubspacesOfTheOctonions}.
According to Lemma \ref{Z2inSU2} and to Proposition \ref{SpinRepsViaSL2O},
  we are equivalently asking for the fixed locus of
  the consecutive left action of the octonionic Pauli matrices
  $\sigma_6$, $\sigma_7$, $\sigma_8$, $\sigma_9$ from
  Lemma \ref{OctonionicPauliMatrices} on the space $\mathbb{O}^2$:
  \begin{equation}
    \exp(2\pi {\pmb\tau}^1_L)\psi
    \;=\;
    {\pmb\Gamma}_6
    {\pmb\Gamma}_7
    {\pmb\Gamma}_8
    {\pmb\Gamma}_9
    \psi
    \;=\;
    \sigma_6(\sigma_7(\sigma_8(\sigma_9\psi)))
    \;=\;
    \mathrm{e}_4(\mathrm{e}_5(\mathrm{e}_6(\mathrm{e}_7 \psi))
    \,.
  \end{equation}
  Now writing
  $\psi = \eta_1+\eta_2\ell$ with $\eta_1,\eta_2\in \mathbb{H}^2$,
  Lemma \ref{ConsecutiveLeftMultiplicationBye4e5e6e7} gives
  \[
     \exp(2\pi {\pmb\tau}^1_L)(\eta_1+\eta_2\ell)
     \;=\;
     \eta_1-\eta_2\ell
     \,.
  \]
  Therefore
  $\exp(2\pi {\pmb\tau}^L)\psi = \psi$ is equivalent to $\psi = \eta_1$.
\hfill \end{proof}

\begin{lemma}[Clifford action of $\mathfrak{su}(2)_R$ is by quaternion right action]
  \label{CliffordActionOfSU2LIsByQuaternionRightAction}
  Under the identification $\mathbf{16}^{\mathbb{Z}^A_2}\cong \mathbf{8}\cong\mathbb{H}^2$  from Lemma \ref{lemma:fixed-locus},
  the action of ${\pmb\tau}^i_R \in \mathfrak{su}(2)_R$ \eqref{su2GeneratorsInTermsOfGammaMatrices}
  is given by right multiplication with half the quaternion unit $\mathrm{e}_i$:
  \begin{equation}
    {\pmb\tau}^i_L \psi = \psi
    \;\;\;\;\;\;
      \Rightarrow
    \;\;\;\;\;\;
    {\pmb\tau}^i_R \psi
    \;=\;
    \tfrac{1}{2}
    \psi \mathrm{e}_i
    \phantom{AAA}
    \in
    \mathbb{H}^2.
  \end{equation}
\end{lemma}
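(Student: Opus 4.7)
The strategy is a direct computation, translating the abstract Clifford generators appearing in ${\pmb\tau}^i_R$ into the explicit octonionic Pauli matrix action on $\mathbb{O}^2$, and then restricting to the quaternionic subspace $\mathbb{H}^2$ to identify the result with right quaternion multiplication.

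First, I would invoke Remark~\ref{CliffordRepresentationOfsu2LR} to rewrite ${\pmb\tau}^i_R$ as an $\mathbb{R}$-linear combination of Clifford products ${\pmb\Gamma}_a {\pmb\Gamma}_b$. The indices $1,2,3,4$ there correspond, under the embedding $\mathrm{Spin}(4)\hookrightarrow \mathrm{Spin}(10,1)$ given by the geometry of \eqref{TheBranes}, to the transverse $\mathbb{H}$-directions $6,7,8,9$, so
\[
  {\pmb\tau}^1_R \;=\; \tfrac{1}{4}{\pmb\Gamma}_6{\pmb\Gamma}_7 - \tfrac{1}{4}{\pmb\Gamma}_8{\pmb\Gamma}_9\,,
  \quad
  {\pmb\tau}^2_R \;=\; \tfrac{1}{4}{\pmb\Gamma}_6{\pmb\Gamma}_8 - \tfrac{1}{4}{\pmb\Gamma}_9{\pmb\Gamma}_7\,,
  \quad
  {\pmb\tau}^3_R \;=\; \tfrac{1}{4}{\pmb\Gamma}_6{\pmb\Gamma}_9 - \tfrac{1}{4}{\pmb\Gamma}_7{\pmb\Gamma}_8\,.
\]
Next, by Prop.~\ref{SpinRepsViaSL2O} (or equivalently its $D{+}1$ version Prop.~\ref{RealSpinRepsInDimPlus1ViaOconions}, which on the $\mathbf{16}$ summand reduces to the former), each product ${\pmb\Gamma}_a{\pmb\Gamma}_b$ acts on $\mathbf{16}\cong\mathbb{O}^2$ as the consecutive left multiplication $\sigma^a\cdot(\sigma_b\cdot(-))$ by the octonionic Pauli matrices from Lemma~\ref{OctonionicPauliMatrices}; for $a,b\in\{6,7,8,9\}$ these matrices have the off-diagonal form $\sigma_a = \mathrm{e}_{a-5}\!\left[\!\!\begin{smallmatrix} 0 & 1 \\ -1 & 0\end{smallmatrix}\!\!\right]$.

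The main step is then a direct computation: for $\psi = (q_1,q_2)^t \in \mathbb{H}^2$, I would evaluate each of the six octonionic compositions $\mathrm{e}_{a-5}(\mathrm{e}_{b-5}\, q_i)$ that arise, using the identities of Lemma~\ref{Octonions} in the form of the induced relations \eqref{InducedRelations} (especially $(q\ell)p = (q p^*)\ell$, $\ell(p\ell)=-p^*$, and $(q\ell)(p\ell) = -p^* q$) to reduce them to right quaternion multiplications on $q_i$. For instance, one checks that $\ell((\mathrm{i}\ell)q) = q\mathrm{i}$ for all $q\in\mathbb{H}$, and similar identities for the other index pairs. Collecting the six results and feeding them into the linear combinations above shows that ${\pmb\Gamma}_{a}{\pmb\Gamma}_{b}\psi$ is right multiplication of $\psi$ by an imaginary quaternion, with signs arranged so that in the combination ${\pmb\tau}^i_R$ the two contributions add coherently to $\tfrac12 R_{\mathrm{e}_i}$ where $R_q(\psi):=\psi q$.

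The principal obstacle is not conceptual but bookkeeping: the non-associativity of $\mathbb{O}$ means that every associativity-like manipulation must be justified by the specific relations in Lemma~\ref{Octonions}, and the signs from the anti-symmetric block $\left[\!\!\begin{smallmatrix} 0 & 1 \\ -1 & 0\end{smallmatrix}\!\!\right]$ must be tracked alongside the $\pm$ signs in the definition of ${\pmb\tau}^i_R$. A useful structural sanity check, once the computation is done, is that the three resulting operators satisfy $[{\pmb\tau}^i_R,{\pmb\tau}^j_R]=\epsilon^{ij}{}_k{\pmb\tau}^k_R$, which is equivalent, under the identification $\psi\mapsto \psi q$, to the standard $\mathfrak{sp}_1$ bracket on $\mathbb{H}_{\mathrm{im}}$ from \eqref{443}; this matches the parallel statement implicit in Lemma~\ref{CliffordActionOfSU2LIsByQuaternionRightAction} and would confirm that the coefficients $\tfrac12$ are indeed the correct ones.
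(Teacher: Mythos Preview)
Your approach is correct and is essentially the paper's: translate ${\pmb\tau}^i_R$ into Clifford products ${\pmb\Gamma}_a{\pmb\Gamma}_b$ with $a,b\in\{6,7,8,9\}$, realize these via the octonionic Pauli matrices of Prop.~\ref{SpinRepsViaSL2O}, and reduce on $\mathbb{H}^2$ using the relations of Lemma~\ref{Octonions}. The paper adds one shortcut you do not mention: rather than evaluating both summands in each ${\pmb\tau}^i_R$ separately, it first invokes the $\mathbb{Z}_2^A$-fixed condition in its Clifford form ${\pmb\Gamma}_6{\pmb\Gamma}_7{\pmb\Gamma}_8{\pmb\Gamma}_9\,\psi=\psi$ (from \eqref{HalfSpin3Rotation}) to rewrite, e.g., $-\tfrac14{\pmb\Gamma}_8{\pmb\Gamma}_9\psi = -\tfrac14{\pmb\Gamma}_8{\pmb\Gamma}_9{\pmb\Gamma}_6{\pmb\Gamma}_7{\pmb\Gamma}_8{\pmb\Gamma}_9\psi = +\tfrac14{\pmb\Gamma}_6{\pmb\Gamma}_7\psi$, collapsing each ${\pmb\tau}^i_R\psi$ to a single term before any octonion manipulation is done; this halves the bookkeeping and makes the role of the hypothesis explicit, whereas your route of computing all six pairings would reach the same result by seeing the two halves agree on $\mathbb{H}^2$. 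One small slip: by \eqref{ExOctopnionicPauliMatices} the octonion unit attached to $\sigma_a$ for $a\in\{6,7,8,9\}$ is $\mathrm{e}_{a-2}\in\{\ell,\mathrm{i}\ell,\mathrm{j}\ell,\mathrm{k}\ell\}$, not $\mathrm{e}_{a-5}$.
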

\begin{proof}
Using \eqref{HalfSpin3Rotation}, \eqref{SpinRepsViaSL2O}
and \eqref{InducedRelations}
we compute as follows:
\begin{align*}
  {\pmb\tau}^{1}_R\psi
  &=
  \big(
    \tfrac{1}{4}
    {\pmb\Gamma}_6{\pmb\Gamma}_7
    -
    \tfrac{1}{4}
    {\pmb\Gamma}_8{\pmb\Gamma}_9
  \big)
  \psi
  \;=\;
  \tfrac{1}{4}{\pmb\Gamma}_6 {\pmb\Gamma} _{7} \psi
  -
  \tfrac{1}{4}{\pmb\Gamma}_{8}{\pmb\Gamma}_9
  {\pmb\Gamma}_6{\pmb\Gamma}_7{\pmb\Gamma}_8{\pmb\Gamma}_9
  \;=\;
  \tfrac{1}{4}{\pmb\Gamma}_6{\pmb\Gamma}_7
  +
  \tfrac{1}{4}{\pmb\Gamma}_6{\pmb\Gamma}_7
  \\
  & =
  \tfrac{1}{2}\sigma_6(\sigma_7\psi)
  \;=\;
  \tfrac{1}{2}\mathrm{e}_4(\mathrm{e}_5\psi)
  \;=\;
  \tfrac{1}{2}\ell(({\rm i}\ell)\psi)
  \;=\;
  \tfrac{1}{2}\ell(({\rm i}\psi^\ast)\ell)
  \;=\;
  \tfrac{1}{2}\psi {\rm i}
  \;=\;
  \tfrac{1}{2} \psi {\rm e}_1\;.
\end{align*}
A directly analogous computation shows the statement in the other
cases.
\hfill \end{proof}

\begin{lemma}[The ${\bf 8}$ of ${\rm Spin}(5, 1)$ as representation
 of ${\rm Spin}(3, 1) \times {\rm SU}(2)$]
  \label{The8OfSpin51As2C2CofSpin31TimesSU2}
  When regarded as a
  $\mathrm{Spin}(3,1)\times \mathrm{SU}(2)_R$ representation,
  along \eqref{RepresentationTheoreticQuestion},
  the $\mathbf{8}$ and $\overline{\mathbf{8}}$ of
  \eqref{FixedLociOfZ2AIn16}
  are isomorphic, as real representations,
  to the outer tensor product
  \eqref{OuterTensorProductOfRepresentations}
   of the left and right complex 2-dimensional Weyl representation
   $\mathbf{2}_{\mathbb{C}}, \overline{\mathbf{2}}_{\mathbb{C}}$ of $\mathrm{Spin}(3,1)$ \eqref{ComplexWeylRepresentations}
   with
   the fundamental complex representation
   $\mathbf{2}_{\mathbb{C}}$ of $\mathrm{SU}(2)$
   \eqref{FundamentalRepresentationofSU2}:
  \begin{equation}
   \mathbf{8}
   \;=\;
   \mathbf{2}_{\mathbb{C}}
   \underset{\mathbb{C}}{\boxtimes}
   \mathbf{2}_{\mathbb{C}}
   \,,
   \phantom{AAA}
   \overline{\mathbf{8}}
   \;=\;
   \overline{\mathbf{2}}_{\mathbb{C}}
   \underset{\mathbb{C}}{\boxtimes}
   \mathbf{2}_{\mathbb{C}}
   \;\;\;\;\;\;\;
   \in
   \mathrm{Rep}_{\mathbb{R}}
   \big(
     \mathrm{Spin}(3,1)
     \times
     \mathrm{SU}(2)_R
   \big).
 \end{equation}
\end{lemma}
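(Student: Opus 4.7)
The plan is to combine the two preceding lemmas with the explicit matrix realization of Proposition~\ref{SpinRepsViaSL2O} to identify $\mathbf{8} \simeq_{\mathbb{R}} \mathbb{H}^2$ with the real vector space underlying a complex tensor product $\mathbb{C}^2 \otimes_{\mathbb{C}} \mathbb{C}^2$ that carries commuting $\mathbb{C}$-linear actions of $\mathrm{Spin}(3,1)$ and $\mathrm{SU}(2)_R$. First I would observe that, by Lemma~\ref{FixedSubspaceOfGamma6789}, one has $\mathbf{8}\simeq_{\mathbb{R}} \mathbb{H}^2 \subset \mathbb{O}^2$, and that along the inclusion $\mathrm{Spin}(3,1) \hookrightarrow \mathrm{Spin}(5,1)$ the residual Spin-action is generated by $\sigma^{a_1}\!\cdot\!(\sigma_{a_2}\!\cdot\!(-))$ with $a_i \in \{0,1,2,3\}$, where the $\mathbb{H}$-Pauli matrices $\sigma_a$ of \eqref{ExOctopnionicPauliMatices} for $a \in \{0,1,2,3\}$ in fact have entries in $\mathbb{C} = \mathbb{R}\langle 1, \mathrm{i}\rangle \subset \mathbb{H}$. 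On the other hand, by Lemma~\ref{CliffordActionOfSU2LIsByQuaternionRightAction}, $\mathfrak{su}(2)_R$ acts by component-wise right multiplication by $\tfrac{1}{2}\mathrm{e}_i$. By associativity of the quaternion product the two actions commute, assembling into a genuine action of $\mathrm{Spin}(3,1)\times \mathrm{SU}(2)_R$.

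Next I would introduce the complex structure $J \colon \mathbb{H}^2 \to \mathbb{H}^2$ given by component-wise left multiplication by $\mathrm{i} \in \mathbb{C} \subset \mathbb{H}$. Associativity implies that $J$ commutes both with left multiplication by $\mathbb{C}$-valued matrices and with all right multiplications by $\mathbb{H}$, so that both the $\mathrm{Spin}(3,1)$- and $\mathrm{SU}(2)_R$-actions are $\mathbb{C}$-linear with respect to $J$. Using the left $\mathbb{C}$-basis $\{1, \mathrm{j}\}$ of $\mathbb{H}$, one has $\mathbb{H}\simeq_{\mathbb{C}}\mathbb{C}^2$ and accordingly $\mathbb{H}^2 \simeq_{\mathbb{C}} \mathbb{C}^2_{\mathrm{sp}} \otimes_{\mathbb{C}} \mathbb{C}^2_{\mathrm{iso}}$, the first factor indexed by the spinor component and the second by $\{1, \mathrm{j}\}$.

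With respect to this tensor decomposition, the $\mathbb{C}$-Pauli matrices act as $M \otimes \mathrm{id}$, with $M$ acting on $\mathbb{C}^2_{\mathrm{sp}}$; by construction this is the complex Weyl spinor representation $\mathbf{2}_{\mathbb{C}}$ of Remark~\ref{ComplexWeylSpinorRepresentations}. Conversely, a direct computation in the $\{1,\mathrm{j}\}$ basis shows that right multiplication by $\mathrm{i}, \mathrm{j}, \mathrm{k}$ acts as $\mathrm{id} \otimes M'$, with $M'$ equal — up to a sign choice corresponding to a change of basis of $\mathfrak{su}(2)$ — to the Pauli matrices of Example~\ref{Ex-Pauli}; hence $\mathfrak{su}(2)_R$ acts on $\mathbb{C}^2_{\mathrm{iso}}$ as the fundamental $\mathbf{2}_{\mathbb{C}}$. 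Combining these two identifications yields $\mathbf{8} \simeq \mathbf{2}_{\mathbb{C}} \boxtimes_{\mathbb{C}} \mathbf{2}_{\mathbb{C}}$ as desired. The case of $\overline{\mathbf{8}}$ is treated by the same argument, the only difference being that by Proposition~\ref{SpinRepsViaSL2O} the $\mathrm{Spin}(5,1)$-action on $\overline{\mathbf{8}}$ is via $\sigma_{a_1} \!\cdot\! (\sigma^{a_2}\!\cdot\!(-))$ instead of $\sigma^{a_1} \!\cdot\! (\sigma_{a_2}\!\cdot\!(-))$, so that restriction to the $\mathbb{C}$-Pauli matrices realizes the conjugate Weyl representation $\overline{\mathbf{2}}_{\mathbb{C}}$ on $\mathbb{C}^2_{\mathrm{sp}}$.

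The main technical point to watch is the quaternionic bookkeeping in computing the right-multiplication matrices in the $\{1, \mathrm{j}\}$ basis: the non-commutativity relation $\mathrm{j}\mathrm{i} = -\mathrm{i}\mathrm{j}$ from \eqref{InducedRelations} is what produces the off-diagonal Pauli-type structure, but with signs depending on convention. These sign ambiguities correspond only to a change of basis of $\mathfrak{su}(2)$ and so do not affect the isomorphism class of the underlying complex representation, keeping the identification $\mathbf{2}_{\mathbb{C}}\boxtimes_{\mathbb{C}}\mathbf{2}_{\mathbb{C}}$ intact.
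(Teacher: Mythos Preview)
Your proposal is correct and follows essentially the same approach as the paper: both identify $\mathbf{8}\simeq_{\mathbb R}\mathbb{H}^2$, observe that $\mathrm{Spin}(3,1)$ acts by left multiplication with the $\mathbb{C}$-valued Pauli matrices while $\mathfrak{su}(2)_R$ acts by right quaternion multiplication (via Lemma~\ref{CliffordActionOfSU2LIsByQuaternionRightAction}), and then use the decomposition $\mathbb{H}=\mathbb{C}\oplus\mathbb{C}\mathrm{j}$ to read off the tensor-product structure. The paper carries out the right-multiplication computation fully explicitly (its equation~\eqref{ExplicitSU2RActionOn4dSpinors}) rather than invoking a change-of-basis argument for the sign conventions, but the underlying logic is identical.
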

\begin{proof}
    From Remark \ref{ComplexWeylSpinorRepresentations},
    and in view of the chain of inclusions
    \eqref{LinearSubspacesOfTheOctonions}, one manifestly sees that
    the restricted representations without the $\mathrm{SU}(2)_R$-action
    considered are
  \begin{equation}
    \label{BranchingOf8OfSpin51ToSpin31}
    \xymatrix@R=1pt{
      \mathrm{Spin}(5,1)
      \ar@{<-^{)}}[rr]^-{i}
      &&
      \; \mathrm{Spin}(3,1)
      \\
      \mathbf{8}
      \ar@{|->}[rr]
      \ar@{}[dd]|-{
        \scalebox{.9}{
          \begin{rotate}{-90}
            $\mathclap{\simeq_{{\scalebox{.6}{$\mathbb{R}$}}}}$
          \end{rotate}
        }
      }
      &&
      \mathbf{2}_{\mathbb{C}}
      \oplus
      \mathbf{2}_{\mathbb{C}}
      \ar@{}[dd]|-{
        \scalebox{.9}{
          \begin{rotate}{-90}
            $\mathclap{\simeq_{{\scalebox{.6}{$\mathbb{R}$}}}}$
          \end{rotate}
        }
      }
      \\
      \\
      \mathbb{H}^2
      &&
      \mathbb{C}^2
      \oplus
      \mathbb{C}^2 \mathrm{j}
    }
    \;
    \phantom{AAAAA}
    \;
    \xymatrix@R=1pt{
      \mathrm{Spin}(5,1)
      \ar@{<-^{)}}[rr]^-{i}
      &&
      \; \mathrm{Spin}(3,1)\;.
      \\
      \overline{\mathbf{8}}
      \ar@{|->}[rr]
      \ar@{}[dd]|-{
        \scalebox{.9}{
          \begin{rotate}{-90}
            $\mathclap{\simeq_{{\scalebox{.6}{$\mathbb{R}$}}}}$
          \end{rotate}
        }
      }
      &&
      \overline{\mathbf{2}}_{\mathbb{C}}
      \oplus
      \overline{\mathbf{2}}_{\mathbb{C}}
      \ar@{}[dd]|-{
        \scalebox{.9}{
          \begin{rotate}{-90}
            $\mathclap{\simeq_{{\scalebox{.6}{$\mathbb{R}$}}}}$
          \end{rotate}
        }
      }
      \\
      \\
      \mathbb{H}^2
      &&
      \mathbb{C}^2
      \oplus
      \mathbb{C}^2 \mathrm{j}
    }
  \end{equation}
  So it remains to show that,
  with respect to the $\mathrm{SU}(2)_R$-action we have
  $\mathbf{2}_{\mathbb{C}}\cdot 2 \simeq^{}_{\mathbb{R}}
  \mathbf{2}_{\mathbb{C}} \underset{\mathbb{C}}{\boxtimes} \mathbf{2}_{\mathbb{C}}$.
By Lemma \ref{CliffordActionOfSU2LIsByQuaternionRightAction},
the action of $\mathrm{SU}(2)_R$ on spinors
$
   \mathbb{C}^2\oplus \mathbb{C}^2\ni
   [\begin{matrix} \psi^+ & \psi^- \end{matrix}]
   \;\;\longleftrightarrow\;\;
   \psi^+
     +
   \psi^- \mathrm{j} \in \mathbb{C}^2\oplus \mathbb{C}^2\mathrm{j}
$
  is by right multiplication by the unit quaternions, under the
  isomorphism $\mathrm{SU}(2) \simeq \mathrm{Sp}(1)$.
  Therefore, using the quaternion algebra,
  we check explicitly that this right multiplication
  gives the fundamental complex representation
  $\mathbf{2}_{\mathbb{C}} \in \mathrm{Rep}_{\mathbb{C}}(\mathrm{SU}(2))$
  via the Pauli matrix representation \eqref{PauliMatrices}:
  \begin{equation}
    \label{ExplicitSU2RActionOn4dSpinors}
  \hspace{-3mm}
    \begin{aligned}
        2 {\pmb\tau}^1_R
    [
      \!\!\!
      \begin{array}{cc}
        \psi & \phi
      \end{array}
      \!\!\!
    ]
       & =
    [
      \!\!\!
      \begin{array}{cc}
        \psi & \phi
      \end{array}
      \!\!\!
    ]
    \mathrm{e}_1
    \\
    & =
    \big(
      \psi + \phi \mathrm{j}
    \big)
    \mathrm{i}
    \\
    & =
    \psi \mathrm{i} - (\phi \mathrm{i}) \mathrm{j}
    \\
    &=
    [
      \!\!\!
      \begin{array}{cc}
        \psi \mathrm{i} & - \phi \mathrm{i}
      \end{array}
      \!\!\!
    ]
    \\
    & =
    [
      \!\!\!
      \begin{array}{cc}
        \psi & \phi
      \end{array}
      \!\!\!
    ]
    \cdot
    \left[
    \!\!\!
    \begin{array}{cc}
      \mathrm{i} & 0
      \\
      0 & -\mathrm{i}
    \end{array}
    \right],
    \end{aligned}
    \phantom{A}
    \begin{aligned}
     2{\pmb\tau}^2_R
    [
      \!\!\!
      \begin{array}{cc}
        \psi & \phi
      \end{array}
      \!\!\!
    ]
        & =
    [
      \!\!\!
      \begin{array}{cc}
        \psi & \phi
      \end{array}
      \!\!\!
    ]
    \mathrm{e}_2
    \\
    & =
    \big(
      \psi + \phi \mathrm{j}
    \big)
    \mathrm{j}
    \\
    & =
    \psi \mathrm{j} - \phi
    \\
    & =
    [
      \!\!\!
      \begin{array}{cc}
        - \phi & \psi
      \end{array}
      \!\!\!
    ]
    \\
    & =
    [
      \!\!\!
      \begin{array}{cc}
        \psi & \phi
      \end{array}
      \!\!\!
    ]
    \cdot
    \left[
    \!\!\!
    \begin{array}{cc}
      0 & 1
      \\
      -1 & 0
    \end{array}
    \right],
    \end{aligned}
        \phantom{A}
    \begin{aligned}
     2 {\pmb\tau}^3_R
    [
      \!\!\!
      \begin{array}{cc}
        \psi & \phi
      \end{array}
      \!\!\!
     & =
    [
      \!\!\!
      \begin{array}{cc}
        \psi & \phi
      \end{array}
      \!\!\!
    ]
    \mathrm{e}_3
    \\
    & =
    \big(
      \psi + \phi \mathrm{j}
    \big)
    \mathrm{k}
    \\
    & =
    \psi
      \underset{
        = \mathrm{i} \mathrm{j}
      }{
        \underbrace{
          \mathrm{k}
        }
      }
      +
    \phi \underset{= \mathrm{i}}{\underbrace{\mathrm{j} \mathrm{k}}}
    \\
    & =
    [
      \!\!\!
      \begin{array}{cc}
        \phi \mathrm{i} & \psi \mathrm{i}
      \end{array}
      \!\!\!
    ]
    \\
    & =
    [
      \!\!\!
      \begin{array}{cc}
        \psi & \phi
      \end{array}
      \!\!\!
    ]
    \cdot
    \left[
    \!\!\!
    \begin{array}{cc}
      0 & \mathrm{i}
      \\
      \mathrm{i} & 0
    \end{array}
    \right].
    \end{aligned}
  \end{equation}

  \vspace{-6mm}
  \hfill
\end{proof}

\begin{lemma}[Residual $U(1)_V$-action on spinors]
 \label{ResidualU1ActionOnSpinors}
  When regarded as a
  $\mathrm{Spin}(3,1)\times \mathrm{U}(1)_V$ representation,
  along \eqref{RepresentationTheoreticQuestion},
  the $\mathbf{8}$ and $\overline{\mathbf{8}}$ of
  \eqref{FixedLociOfZ2AIn16}
  are isomorphic, as real representations,
  to the outer tensor product
  \eqref{OuterTensorProductOfRepresentations}
  of the left and right complex 2-dimensional Weyl representation
  $\mathbf{2}_{\mathbb{C}}, \overline{\mathbf{2}}_{\mathbb{C}}$ of $\mathrm{Spin}(3,1)$ (see \eqref{ComplexWeylRepresentations})
  with the complex irrep $\mathbf{1}_{\mathbb{C}}^{\!-}$ of
  $\mathrm{U}(1)$ (see \eqref{RepresentationsOfU1}).
\end{lemma}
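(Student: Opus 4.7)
The approach is to compute the Lie-algebra action of the generator of $\mathrm{U}(1)_V$ on the quaternionic 2-component spinor realization from Prop. \ref{SpinRepsViaSL2O}, and to verify that it descends to scalar multiplication by a complex phase of hypercharge $-1$ on the complex structure on $\mathbb{H}^2$ identified in Lemma \ref{The8OfSpin51As2C2CofSpin31TimesSU2}.

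First, I would fix $\mathrm{U}(1)_V \subset \mathrm{Spin}(5,1)$ as the $\mathrm{Spin}(2)$-subgroup of rotations in the two directions labelled $4$ and $5$, transverse to the residual $\mathrm{Spin}(3,1)$ in \eqref{RepresentationTheoreticQuestion}. Its Lie-algebra generator is $R_{45} \in \mathfrak{so}(5,1)$, which via the embedding \eqref{LieSpinRepresentation} lifts to $\tfrac12 \boldsymbol{\Gamma}_4 \boldsymbol{\Gamma}_5 \in \mathrm{Cl}^{\mathrm{ev}}_\mathbb{R}(5,1)$. Reading off the $\mathbb{H}$-Pauli matrices from Lemma \ref{OctonionicPauliMatrices} (with $\mathrm{e}_2 = \mathrm{j}$, $\mathrm{e}_3 = \mathrm{k}$), one has
$$
  \sigma_4 \;=\; \begin{bmatrix} 0 & \mathrm{j} \\ -\mathrm{j} & 0 \end{bmatrix},
  \qquad
  \sigma_5 \;=\; \begin{bmatrix} 0 & \mathrm{k} \\ -\mathrm{k} & 0 \end{bmatrix},
$$
so a direct quaternionic matrix multiplication, using $\mathrm{j}\mathrm{k} = \mathrm{i}$, yields $\sigma_4 \cdot \sigma_5 = -\mathrm{i}\cdot\mathbf{1}_{2\times 2}$. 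Via the spin representation \eqref{HeteroticSpinRepsAsOctonionicModules} for $\mathbb{K} = \mathbb{H}$, the generator $R_{45}$ therefore acts on $\psi \in \mathbb{H}^2 \simeq_\mathbb{R} \mathbf{8}$ as left multiplication by the purely imaginary quaternion $-\tfrac{\mathrm{i}}{2}$.

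Finally, under the real-linear splitting $\mathbb{H}^2 \simeq_\mathbb{R} \mathbb{C}^2 \oplus \mathbb{C}^2\mathrm{j}$ from \eqref{BranchingOf8OfSpin51ToSpin31}, left-multiplication by $\mathrm{i}$ commutes past $\mathrm{j}$ on any $b \in \mathbb{C}$, since $\mathrm{i}(b\mathrm{j}) = (\mathrm{i}b)\mathrm{j}$. Hence $R_{45}$ acts as ordinary complex scalar multiplication by $-\tfrac{\mathrm{i}}{2}$ on each of the two complex 2-planes in the decomposition \eqref{BranchingOf8OfSpin51ToSpin31}, which is precisely the charge-$(-1)$ irreducible representation $\mathbf{1}^{\!-}_{\mathbb{C}}$ of $\mathrm{U}(1)_V$ upon exponentiation (with $\exp(2\pi R_{45}) = -1$ correctly realizing the spin double cover on each Weyl summand). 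Combining this with the $\mathrm{Spin}(3,1)$-decomposition of $\mathbf{8}$ into copies of $\mathbf{2}_\mathbb{C}$ and the analogous computation for $\overline{\mathbf{8}}$ yields the claimed identification.

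The only real obstacle is sign tracking: pinning down the sign of the hypercharge requires fixing a specific parameterisation of $\mathrm{U}(1)_V$ (i.e. whether the generator is $+R_{45}$ or $-R_{45}$), and verifying that the action on the $\mathbb{C}^2\mathrm{j}$ summand carries the same hypercharge as that on the $\mathbb{C}^2$ summand depends on the commutation identity $\mathrm{i}(b\mathrm{j}) = (\mathrm{i}b)\mathrm{j}$ (valid because $b$ lies in $\mathbb{C}$, not in the non-commuting part of $\mathbb{H}$). Once these sign conventions are consistently fixed, the rest of the argument is a direct computation in the $\mathbb{H}$-Pauli matrix formalism.
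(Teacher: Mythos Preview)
Your proposal is correct and follows essentially the same approach as the paper: both compute the action of $R_{45}$ via its Clifford representative $\tfrac{1}{2}\boldsymbol{\Gamma}_4\boldsymbol{\Gamma}_5$, find that $\sigma_4\sigma_5 = -\mathrm{i}\cdot\mathbf{1}_{2\times 2}$ acts as left multiplication by $-\mathrm{i}$ on $\mathbb{H}^2$, and then observe that this descends to complex scalar multiplication by $-\mathrm{i}$ on both summands $\mathbb{C}^2$ and $\mathbb{C}^2\mathrm{j}$, yielding the charge-$(-1)$ representation. The paper's write-up is slightly terser but the computation and the logical structure are the same.
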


\begin{proof}
  We explicitly compute the action of
  $R_{4 5}$, via its Clifford representation
  $\tfrac{1}{2}{\pmb\Gamma}_4{\pmb\Gamma}_5$
  \eqref{LieSpinRepresentation}
  on a spinor
$
   \mathbb{C}^2\oplus \mathbb{C}^2\ni
   [\begin{matrix} \psi & \phi\end{matrix}]
   \;\;\longleftrightarrow\;\;
   \psi+\phi\mathrm{j} \in \mathbb{C}^2\oplus \mathbb{C}^2\mathrm{j}
$
using \eqref{KPauliActions}:
\begin{equation}
  \begin{aligned}
    2\tfrac{1}{2}{\pmb\Gamma}_4 {\pmb\Gamma}_5
    [
      \!\!\!
      \begin{array}{cc}
        \psi & \phi
      \end{array}
      \!\!\!
    ]
    &=
    \sigma_4(\sigma_5(\psi + \phi \mathrm{j}))
    \;=\;
    -
    \mathrm{e}_2(\mathrm{e}_3(\psi + \phi \mathrm{j})))
    \\
    & =
    -
    \mathrm{j}\mathrm{k}(\psi + \phi \mathrm{j}))
    \;=\;
    -
    \mathrm{i}(\psi + \phi \mathrm{j}))
    \;=\;
    (-
     \psi \mathrm{i})
    +
    (-
    \phi \mathrm{i}) \mathrm{j}
    \\
    & =
    [
      \!\!\!
      \begin{array}{cc}
        \psi & \phi
      \end{array}
      \!\!\!
    ]
    \cdot
    \left[
      \begin{array}{cc}
        - \mathrm{i} & 0
        \\
        0 & - \mathrm{i}
      \end{array}
    \right]
    \,,
  \end{aligned}
\end{equation}
where we also used that quaternionic multiplication is associative
and then that complex multiplication, furthermore, is  commutative.
\hfill \end{proof}


\medskip

In summary, we have proven the second statement of
Theorem \ref{FieldsAtSingularity}:

\begin{prop}[The reduced Spin representation]
  \label{TheReducedSpinRep}
  The fixed locus \eqref{FormingFixedPoints}
  under the action of $\mathbb{Z}_2^A \subset \mathrm{SU}(2)_L$  \eqref{Z2A}
  in the $\mathrm{Spin}(10,1)$ representation $\mathbf{32}$
  (from \eqref{RealSpinRepsInDimPlus1}),
  regarded with its residual action
  of $\mathrm{Spin}(3,1) \times \mathrm{U}(1) \times \mathrm{SU}(2)_R$ is:
  \begin{equation}
    \label{4dSpinRepFrom11d}
    \xymatrix@R=-2pt{
      \mathrm{Spin}(10,1)
      \ar@{<-^{)}}[rr]
      &&
     \; \mathrm{Spin}(3,1) \times \mathrm{U}(1)_V \times \mathrm{SU}(2)_R\;.
      \\
      \mathbf{32}
      \ar@{|->}[rr]
      &&
      \mathbf{32}^{\mathbb{Z}_2^A}
      =
      \big(
        \mathbf{2}_{\mathbb{Z}}
        \oplus
        \overline{\mathbf{2}}_{\mathbb{Z}}
      \big)
      \underset{\mathbb{C}}{\boxtimes}
      \mathbf{1}^{\!-}_{\mathbb{C}}
      \underset{\mathbb{C}}{\boxtimes}
      \mathbf{2}_{\mathbb{C}}
    }
  \end{equation}
\end{prop}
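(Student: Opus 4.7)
The plan is to assemble Proposition~\ref{TheReducedSpinRep} as the composite of the branchings already established, peeled off one symmetry factor at a time. First I would restrict along $\mathrm{Spin}(9,1) \hookrightarrow \mathrm{Spin}(10,1)$: by Remark~\ref{BranchingOf11dSpinorsIn10d} (i.e.\ the two-block form of the $\mathbb{K}$-Dirac matrices in \eqref{RealSpinRepsInDimPlus1}), the $\mathbf{32}$ decomposes as $\mathbf{16} \oplus \overline{\mathbf{16}}$, with the two summands realized as the $\mathbb{R}$-vector space $\mathbb{O}^2 \oplus \mathbb{O}^2$ carrying the two octonionic Pauli actions of~\eqref{HeteroticSpinRepsAsOctonionicModules}. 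Since $\mathbb{Z}_2^A$ lies inside $\mathrm{SU}(2)_L \subset \mathrm{Spin}(4) \subset \mathrm{Spin}(9,1)$, taking $\mathbb{Z}_2^A$-fixed points is an additive functor that preserves this direct-sum decomposition.

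Next I would apply Lemma~\ref{FixedSubspaceOfGamma6789} separately on each summand, giving $\mathbf{32}^{\mathbb{Z}_2^A} \simeq \mathbf{8} \oplus \overline{\mathbf{8}}$ as a representation of the centralizer $\mathrm{Spin}(5,1) \times \mathrm{SU}(2)_R \subset \mathrm{Spin}(9,1)$, realized concretely as $\mathbb{H}^2 \oplus \mathbb{H}^2 \subset \mathbb{O}^2 \oplus \mathbb{O}^2$. I would then apply Lemma~\ref{The8OfSpin51As2C2CofSpin31TimesSU2} to each summand to pick up the $\mathrm{SU}(2)_R$-structure and to further restrict $\mathrm{Spin}(5,1)$ to $\mathrm{Spin}(3,1)$, yielding
\[
  \mathbf{32}^{\mathbb{Z}_2^A}
  \;\simeq_{\mathbb{R}}\;
  \big(\mathbf{2}_{\mathbb{C}} \oplus \overline{\mathbf{2}}_{\mathbb{C}}\big)
  \underset{\mathbb{C}}{\boxtimes}
  \mathbf{2}_{\mathbb{C}}
  \qquad
  \in \mathrm{Rep}_{\mathbb{R}}\big(\mathrm{Spin}(3,1) \times \mathrm{SU}(2)_R\big).
\]
Finally, I would bring in the $\mathrm{U}(1)_V$-factor using Lemma~\ref{ResidualU1ActionOnSpinors}: since $\mathrm{U}(1)_V$ is generated by $R_{45}$, and $R_{45}$ commutes with all Clifford generators used to define both the $\mathrm{Spin}(3,1)$ and the $\mathrm{SU}(2)_R$ action on the fixed locus, its action is compatible with the tensor factorization; the lemma identifies it as scalar multiplication by $-\mathrm{i}$ on each spinor summand, i.e.\ the complex character $\mathbf{1}^{\!-}_{\mathbb{C}}$ of $\mathrm{U}(1)_V$.

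The main thing to check carefully is that the three actions genuinely commute on the fixed locus and that the complex structures they each use (arising respectively from the Weyl decomposition of $\mathfrak{so}(3,1)$, from right-multiplication by $\mathrm{j}$ coming from the $\mathbb{C} \oplus \mathbb{C}\mathrm{j}$ decomposition of $\mathbb{H}$, and from the action of $R_{45}$) agree, so that the outer tensor product $\underset{\mathbb{C}}{\boxtimes}$ is well-defined in the sense of \eqref{OuterTensorProductOfRepresentations}. This is essentially the content of the explicit matrix computations in \eqref{ExplicitSU2RActionOn4dSpinors} and in the proof of Lemma~\ref{ResidualU1ActionOnSpinors}: all three commuting actions are realized as (real-linear) operators intertwining with right-multiplication by $\mathrm{i}$ on $\mathbb{C}^2 \oplus \mathbb{C}^2\mathrm{j}$, and hence combine coherently into a single complex representation of $\mathrm{Spin}(3,1) \times \mathrm{U}(1)_V \times \mathrm{SU}(2)_R$ whose underlying real representation is the one claimed. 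Assembling these ingredients gives the identification~\eqref{4dSpinRepFrom11d}.
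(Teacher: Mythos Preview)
Your proposal is correct and follows essentially the same approach as the paper, which simply cites Lemma~\ref{The8OfSpin51As2C2CofSpin31TimesSU2} and Lemma~\ref{ResidualU1ActionOnSpinors}; you have just made explicit the preliminary reduction $\mathbf{32} \mapsto \mathbf{16} \oplus \overline{\mathbf{16}} \mapsto \mathbf{8} \oplus \overline{\mathbf{8}}$ (via Remark~\ref{BranchingOf11dSpinorsIn10d} and Lemma~\ref{FixedSubspaceOfGamma6789}) that the paper takes as already established. Your additional remarks on the compatibility of the three commuting complex structures are a welcome clarification but not a departure from the paper's argument.
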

\begin{proof}
  This follows immediately from combining
  Lemma \ref{The8OfSpin51As2C2CofSpin31TimesSU2}
  and
  Lemma \ref{ResidualU1ActionOnSpinors}.
\hfill \end{proof}

For the record, and as this is somewhat subtle, we highlight the following equivalent and non-equivalent versions
of the statement in \eqref{4dSpinRepFrom11d}:

\begin{lemma}[Real isomorphism classes of chiral charged 4d spinor reps]
  \label{IsoclassesOfTheFullReps}
   Consider the real representations of
  $\mathrm{Spin}(3,1) \times \mathrm{U}(1)_V \times \mathrm{SU}(2)_R$.

  \vspace{-1mm}
 \item  {\bf (i)} There are isomorphisms of real representations as follows:
  \begin{equation}
    \label{RealIsomorphismOfChiralChargedSpinorReps}
    \begin{aligned}
      \mathbf{2}_{\mathbb{C}}
        \underset{\mathbb{C}}{\boxtimes}
      \mathbf{1}^\mp_{\mathbb{C}}
        \underset{\mathbb{C}}{\boxtimes}
      \mathbf{2}_{\mathbb{C}}
      \;\simeq_{{}_{\mathbb{R}}}\;
      \overline{\mathbf{2}}_{\mathbb{C}}
        \underset{\mathbb{C}}{\boxtimes}
      \mathbf{1}^\pm_{\mathbb{C}}
        \underset{\mathbb{C}}{\boxtimes}
      \mathbf{2}_{\mathbb{C}}\;.
    \end{aligned}
  \end{equation}

   \vspace{-4mm}
 \item {\bf (ii)}  There are, however, \emph{no} isomorphisms,
  not even as real representations, between
  $
      \mathbf{2}_{\mathbb{C}}
        \underset{\mathbb{C}}{\boxtimes}
      \mathbf{1}^-_{\mathbb{C}}
        \underset{\mathbb{C}}{\boxtimes}
      \mathbf{2}_{\mathbb{C}}
  $
  and
  $
      \mathbf{2}_{\mathbb{C}}
        \underset{\mathbb{C}}{\boxtimes}
      \mathbf{1}^+_{\mathbb{C}}
        \underset{\mathbb{C}}{\boxtimes}
      \mathbf{2}_{\mathbb{C}}
  $
  nor between
  $
      \overline{\mathbf{2}}_{\mathbb{C}}
        \underset{\mathbb{C}}{\boxtimes}
      \mathbf{1}^-_{\mathbb{C}}
        \underset{\mathbb{C}}{\boxtimes}
      \mathbf{2}_{\mathbb{C}}
  $
  and
  $
      \overline{\mathbf{2}}_{\mathbb{C}}
        \underset{\mathbb{C}}{\boxtimes}
      \mathbf{1}^+_{\mathbb{C}}
        \underset{\mathbb{C}}{\boxtimes}
      \mathbf{2}_{\mathbb{C}}
  $.
\end{lemma}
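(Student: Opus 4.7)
The plan is two-fold: part (i) will follow by constructing an explicit $\mathbb{R}$-linear equivariant bijection as a threefold tensor product of anti-$\mathbb{C}$-linear intertwiners, one for each factor; part (ii) will follow from the standard principle that two complex $G$-representations are isomorphic as real $G$-representations iff they are isomorphic as complex representations or one is isomorphic to the complex conjugate of the other, combined with the inequivalences of the complex Weyl representations of $\mathrm{Spin}(3,1)$ and of the $\mathrm{U}(1)_V$-characters $\mathbf{1}^{+}_{\mathbb{C}}$ versus $\mathbf{1}^{-}_{\mathbb{C}}$.

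For (i), the three ingredients are: the anti-$\mathbb{C}$-linear $\mathrm{Spin}(3,1)$-intertwiner $\psi \mapsto \epsilon \psi^\ast$ from Lemma \ref{RealReps4AndBar4OfSpin31AreIsomorphic} witnessing $\mathbf{2}_{\mathbb{C}} \simeq_{\mathbb{R}} \overline{\mathbf{2}}_{\mathbb{C}}$; pointwise complex conjugation $z \mapsto z^\ast$ as an anti-$\mathbb{C}$-linear $\mathrm{U}(1)_V$-intertwiner swapping $\mathbf{1}^{-}_{\mathbb{C}}$ with $\mathbf{1}^{+}_{\mathbb{C}}$ (since conjugation negates the $\mathrm{U}(1)$-charge); and the anti-$\mathbb{C}$-linear $\mathrm{SU}(2)$-self-intertwiner $w \mapsto \epsilon w^\ast$ of the fundamental representation $\mathbf{2}_{\mathbb{C}}$, encoding its pseudoreal (quaternionic) structure. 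The tensor product of these three maps over $\mathbb{C}$ is well-defined on the outer tensor product, because transferring a complex scalar across a $\otimes_{\mathbb{C}}$ through each anti-linear factor introduces matching conjugations; it produces an overall $\mathbb{C}$-antilinear and hence $\mathbb{R}$-linear bijection, equivariant for the full $\mathrm{Spin}(3,1) \times \mathrm{U}(1)_V \times \mathrm{SU}(2)_R$-action. Both sign choices $\mp \leftrightarrow \pm$ in the statement are witnessed by this same construction.

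For (ii), observe that all four representations under consideration are complex-irreducible, being outer tensor products of complex-irreducible representations of $\mathrm{Spin}(3,1)$, $\mathrm{U}(1)_V$, and $\mathrm{SU}(2)_R$. By Schur's lemma, for complex-irreducible $V, W$ one has $V \simeq_{\mathbb{R}} W$ precisely when $V \simeq_{\mathbb{C}} W$ or $V \simeq_{\mathbb{C}} \overline{W}$, since $\mathrm{Hom}_{\mathbb{R}}(V,W) = \mathrm{Hom}_{\mathbb{C}}(V,W) \oplus \mathrm{Hom}_{\mathbb{C}}(\overline{V},W)$. Now compare $V := \mathbf{2}_{\mathbb{C}} \boxtimes \mathbf{1}^{-}_{\mathbb{C}} \boxtimes \mathbf{2}_{\mathbb{C}}$ with $W := \mathbf{2}_{\mathbb{C}} \boxtimes \mathbf{1}^{+}_{\mathbb{C}} \boxtimes \mathbf{2}_{\mathbb{C}}$: they are not complex-isomorphic, as their $\mathrm{U}(1)_V$-weights differ; and their complex conjugates satisfy $\overline{W} \simeq_{\mathbb{C}} \overline{\mathbf{2}}_{\mathbb{C}} \boxtimes \mathbf{1}^{-}_{\mathbb{C}} \boxtimes \overline{\mathbf{2}}_{\mathbb{C}} \simeq_{\mathbb{C}} \overline{\mathbf{2}}_{\mathbb{C}} \boxtimes \mathbf{1}^{-}_{\mathbb{C}} \boxtimes \mathbf{2}_{\mathbb{C}}$ by pseudoreality of the $\mathrm{SU}(2)$-fundamental, which still does not match $V$, since $\mathbf{2}_{\mathbb{C}} \not\simeq_{\mathbb{C}} \overline{\mathbf{2}}_{\mathbb{C}}$ as $\mathrm{Spin}(3,1)$-representations (inequivalent Weyl chiralities). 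The comparison for the pair starting with $\overline{\mathbf{2}}_{\mathbb{C}}$ in the first tensor factor is obtained verbatim by swapping Weyl chiralities throughout.

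The main subtlety is confirming cleanly that the tensor product of three anti-linear intertwiners is well-defined over $\mathbb{C}$: in $V_1 \otimes_{\mathbb{C}} V_2 \otimes_{\mathbb{C}} V_3$ a scalar $c \in \mathbb{C}$ can be transferred between tensor factors, and each anti-linear map converts it to $\overline{c}$, so consistency holds because the resulting $\overline{c}$ can itself be moved freely across the tensor products in the target. Once this is recorded, the remainder of the proof is a direct application of Lemma \ref{RealReps4AndBar4OfSpin31AreIsomorphic}, the pseudoreality of $\mathbf{2}_{\mathbb{C}}$ for $\mathrm{SU}(2)$, and Schur's lemma for the complex irreducibles involved.
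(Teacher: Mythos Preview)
Your proposal is correct and follows essentially the same approach as the paper: for (i) you build the same complex-antilinear intertwiner (the paper writes it concretely as $A \mapsto \epsilon\,A^\ast\,{}^t\epsilon$ on $\mathrm{Mat}(2\times 2,\mathbb{C})$, which is exactly your threefold tensor of the maps $\psi \mapsto \epsilon\psi^\ast$, $z \mapsto z^\ast$, $w \mapsto \epsilon w^\ast$), and for (ii) both arguments rest on the fact that any real intertwiner between these complex irreducibles decomposes as $\mathbb{C}$-linear plus $\mathbb{C}$-antilinear, with the $\mathbb{C}$-linear part vanishing since $\mathbf{2}_{\mathbb{C}} \not\simeq_{\mathbb{C}} \overline{\mathbf{2}}_{\mathbb{C}}$ for $\mathrm{Spin}(3,1)$. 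Your statement of the Schur criterion $V \simeq_{\mathbb{R}} W \Leftrightarrow V \simeq_{\mathbb{C}} W$ or $V \simeq_{\mathbb{C}} \overline{W}$ makes explicit what the paper leaves implicit when it jumps from ``not complex-linear'' to ``complex anti-linear''.
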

\begin{proof}
  Notice that under the canonical embedding of $\mathrm{SU}(2)$
  as the spatial $\mathrm{Spin}(3)$-group inside the Lorentzian
  $\mathrm{Spin}(3,1)$, the two complex Weyl spinor representation
  \eqref{ComplexWeylRepresentations}
  both restrict to the fundamental representation of $\mathrm{SU}(2)$
from  \eqref{FundamentalRepresentationofSU2}:
  $$
    \xymatrix@R=-2pt{
      \mathrm{Spin}(3,1)
       \ar@{<-^{)}}[r]
      &
      \;
      \mathrm{Spin}(3)
       \ar@{}[r]|-{\simeq}
      &
      \mathrm{SU}(2)\;.
      \\
      \mathbf{2}_{\mathbb{C}}
      \ar@{|->}[rr]
      &&
      \mathbf{2}_{\mathbb{C}}
      \\
      \overline{\mathbf{2}}_{\mathbb{C}}
      \ar@{|->}[rr]
      &&
      \mathbf{2}_{\mathbb{C}}
    }
  $$
  Now consider the identification of complex representations
  $$
    \mathbf{2}_{\mathbb{C}}
      \underset{\mathbb{C}}{\boxtimes}
    \mathbf{1}_{\mathbb{C}}
      \underset{\mathbb{C}}{\boxtimes}
    \mathbf{2}_{\mathbb{C}}
    \;\simeq_{{}_{\mathbb{C}}}\;
    \mathrm{Mat}(2 \times 2, \mathbb{C})
    \,,
  $$
  where on the right the $\mathrm{Spin}(3,1)$-action
  is by left matrix multiplication with $\mathbb{C}$-Pauli matrices
  \eqref{HeteroticSpinRepsAsOctonionicModules}
  $\tfrac{1}{2}\sigma_{a_1} \sigma_{a_2}$ for $a_i \in \{0,1,2,3\}$,
  while the $\mathrm{SU}(2)_R$-action is by right matrix multiplication
  with transposed Pauli matrices
  ${}^{t}\big(\tfrac{1}{2}\sigma_{a_1} \sigma^{a_2}\big)$
  for $a_i \in \{1,2,3\}$.
  Therefore, by the proof of Lemma \ref{RealReps4AndBar4OfSpin31AreIsomorphic},
  we have an isomorphism of real representations given by
  \begin{equation}
    \label{AnIsomorphism}
    \xymatrix@R=-1pt{
      \mathbf{2}_{\mathbb{C}}
        \underset{\mathbb{C}}{\boxtimes}
      \mathbf{1}_{\mathbb{C}}
        \underset{\mathbb{C}}{\boxtimes}
      \mathbf{2}_{\mathbb{C}}
      \ar[rr]
      &&
      \overline{\mathbf{2}}_{\mathbb{C}}
        \underset{\mathbb{C}}{\boxtimes}
      \mathbf{1}_{\mathbb{C}}
        \underset{\mathbb{C}}{\boxtimes}
      \mathbf{2}_{\mathbb{C}}\;.
      \\
      \mathrm{Mat}(2 \times 2, \mathbb{C})
      \ar[rr]
      &&
      \mathrm{Mat}(2 \times 2, \mathbb{C})
      \\
      A \ar@{|->}[rr]
      &&
      \epsilon\cdot A^\ast \cdot {}^{t}\epsilon
    }
  \end{equation}
  Since the real-linear map in the last row is complex anti-linear,
  this same real-linear map gives the claimed isomorphisms
  \eqref{RealIsomorphismOfChiralChargedSpinorReps}.
  Moreover, since $\mathbf{2}_{\mathbb{C}}$ is not isomorphic
  to $\overline{\mathbf{2}}_C$ as a complex representation,
  no real isomorphism
  as in the first line of \eqref{AnIsomorphism}
  can be complex-linear, hence all such isomorphisms must be
  complex anti-linear. This gives the second claim.
\hfill \end{proof}

\medskip

In closing this discussion of octonionic spinor algebra,
we notice the following identifications, which will be useful
in \cref{TheWZWTermsInTheLagrangian} below:

\begin{notation}[4d chirality opearator]
  \label{4dChiralityOperator}
  If we introduce for the 4d sector of the
  octonionic Dirac matrices \eqref{OctonionicDiracMatrices}
  the following standard notation for complex Dirac matrices
  \begin{equation}
    \label{4dgamma5}
    \begin{aligned}
      \gamma_\mu
      &
      :=
      \;
      {\pmb\Gamma}_\mu
      \phantom{AAAA}
      \mu \in \{0,1,2,3\}
      \\
      \gamma_5
      & :=
      \;
      \mathrm{i} \gamma_0 \gamma_1 \gamma_2 \gamma_3
      \\
      \mathllap{
        \mbox{then the chirality operator $\gamma_5$ is}
        \phantom{AAAAAAAAA}
      }
      \\
      \gamma_5
      &
      =
      \;
      {\pmb\Gamma}_{5'}
      \\
      \mathllap{\mathrm{i}}
      \gamma_5
      & =
      \;
      {\pmb\Gamma}_{4}
      {\pmb\Gamma}_{5}
      {\pmb\Gamma}_{5'}\;.
    \end{aligned}
  \end{equation}
\end{notation}

\medskip
\begin{lemma}[Branching of spinor pairing]
\label{BranchingOfSpinorPairing}
The pairing in Prop. \ref{SpinorPairingViaKDiracMatrices}
applied to 6d spinors and expressed in terms of their
branching, from Prop. \ref{The8OfSpin51As2C2CofSpin31TimesSU2},
into iso-doublets of 4d spinors
$$
  \xymatrix@R=5pt@C=2pt{
    \big[
      \!\!\!
      {\begin{array}{cc}
        \psi^+ & \psi^-
      \end{array}}
      \!\!\!
    \big]
    \ar@{=}[d]
    &
    \in
    &
    (\mathbf{2}_{\mathbb{C}} \oplus \overline{\mathbf{2}}_{\mathbb{C}})
    \underset{
      \mathbb{C}
    }{\boxtimes}
    \mathbf{2}_{\mathbb{C}}
    &\in&
    \mathrm{Rep}_{\mathbb{R}}
    \big(
      \mathrm{Spin}(3,1) \times \mathrm{SU}(2)_R
    \big)
    \\
    \mathllap{
      \psi =
      \;
    }
    \psi^+ + \psi^- \mathrm{j}
    &\in&
    \mathbf{8} \oplus \overline{\mathbf{8}}
    &\in&
    \mathrm{Rep}_{\mathbb{R}}
    \big(
      \mathrm{Spin}(5,1)
    \big)
  }
$$
equals the sum over iso-doublet components of the pairing
in Prop. \ref{SpinorPairingViaKDiracMatrices} applied to 4d spinors:
$$
  \begin{aligned}
  \big\langle
  \left[
    \!\!\!
    \begin{array}{cc}
      \psi^+ & \psi^-
    \end{array}
    \!\!\!
  \right]
  ,
  \left[
    \!\!\!
    \begin{array}{cc}
      \phi^+ & \phi^-
    \end{array}
    \!\!\!
  \right]
  \big\rangle
  &
  :=
  \big\langle
    \psi, \phi
  \big\rangle
  \\
  & =
  \big\langle
    \psi^+, \phi^+
  \big\rangle
  +
  \big\langle
    \psi^-, \phi^-
  \big\rangle\;.
  \end{aligned}
$$
\end{lemma}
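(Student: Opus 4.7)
\medskip

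\noindent\textbf{Proof plan.} The plan is to unpack the pairing $\langle \psi,\phi\rangle = \mathrm{Re}\big(\psi^\dagger\!\cdot\!{\pmb\Gamma}_0\!\cdot\!\phi\big)$ of Prop. \ref{SpinorPairingViaKDiracMatrices} by writing quaternionic entries in their canonical $\mathbb{C}\oplus\mathbb{C}\mathrm{j}$ form \eqref{LinearSubspacesOfTheOctonions}, and then to verify that the cross-terms between the $\mathbb{C}$- and $\mathbb{C}\mathrm{j}$-components produce no real part. First I would note the structural fact that makes the argument work: by Lemma \ref{OctonionicDiracMatrices}, the Dirac matrix ${\pmb\Gamma}_0$ has all entries in $\{0,\pm 1\}\subset \mathbb{R}$ (the $0$-th $\mathbb{K}$-Pauli matrices being just $\pm\mathbf{1}_{2\times 2}$ by Lemma \ref{OctonionicPauliMatrices}), and in particular is the \emph{same} real matrix whether read over $\mathbb{H}$ or over $\mathbb{C}$. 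Hence left multiplication by ${\pmb\Gamma}_0$ commutes with the $\mathbb{R}$-linear splitting $\mathbb{H}^4 \simeq \mathbb{C}^4\oplus \mathbb{C}^4\mathrm{j}$, giving
\[
  {\pmb\Gamma}_0\!\cdot\!\phi
  \;=\;
  {\pmb\Gamma}_0\!\cdot\!\phi^+
  \;+\;
  \big({\pmb\Gamma}_0\!\cdot\!\phi^-\big)\mathrm{j}
  \,.
\]

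Next, for the Hermitian conjugate, I would apply the rule $(a+b\mathrm{j})^\ast = \bar a - b\mathrm{j}$ (which follows from $\mathrm{j}^\ast=-\mathrm{j}$ and $\mathrm{j}z = \bar z\mathrm{j}$ for $z\in\mathbb{C}$, using Notation \ref{Conjugation}) componentwise to obtain $\psi^\dagger = (\psi^+)^\dagger - {}^{t}(\psi^-)\mathrm{j}$, where on the right $(\psi^+)^\dagger$ and ${}^{t}(\psi^-)$ are the usual complex Hermitian/transpose operations on $\mathbb{C}^4$. Substituting, writing $\chi := {\pmb\Gamma}_0\phi$ with $\chi^\pm = {\pmb\Gamma}_0\phi^\pm$, and expanding the scalar $\psi^\dagger\chi$ over $\mathbb{H}$ using $\mathrm{j}z = \bar z \mathrm{j}$ and $\mathrm{j}^2=-1$, a short computation gives
\begin{align*}
  \psi^\dagger\!\cdot\!{\pmb\Gamma}_0\!\cdot\!\phi
  \;&=\;
  \big((\psi^+)^\dagger - {}^{t}(\psi^-)\mathrm{j}\big)\big(\chi^+ + \chi^-\mathrm{j}\big)
  \\
  \;&=\;
  \underbrace{\big[(\psi^+)^\dagger\chi^+ + {}^{t}(\psi^-)\overline{\chi^-}\big]}_{\in\,\mathbb{C}}
  \;+\;
  \underbrace{\big[(\psi^+)^\dagger\chi^- - {}^{t}(\psi^-)\overline{\chi^+}\big]}_{\in\,\mathbb{C}}\mathrm{j}\,.
\end{align*}
Taking the real part (which is the $\mathrm{Re}$ of the $\mathbb{C}$-component), noting that $\mathrm{Re}\big({}^{t}(\psi^-)\overline{\chi^-}\big) = \mathrm{Re}\big(\overline{{}^{t}(\psi^-)\overline{\chi^-}}\big) = \mathrm{Re}\big((\psi^-)^\dagger\chi^-\big)$, yields exactly
\[
  \mathrm{Re}\big(\psi^\dagger\!\cdot\!{\pmb\Gamma}_0\!\cdot\!\phi\big)
  \;=\;
  \mathrm{Re}\big((\psi^+)^\dagger\!\cdot\!{\pmb\Gamma}_0\!\cdot\!\phi^+\big)
  +
  \mathrm{Re}\big((\psi^-)^\dagger\!\cdot\!{\pmb\Gamma}_0\!\cdot\!\phi^-\big)
  \;=\;
  \langle\psi^+,\phi^+\rangle + \langle\psi^-,\phi^-\rangle\,,
\]
which is the claim.

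The only subtle step is the bookkeeping in the second displayed line above: one must carry $\mathrm{j}$ past complex scalars using $\mathrm{j}z=\bar z\mathrm{j}$ with the correct signs, and recognize that the $\mathbb{C}\mathrm{j}$-component of $\psi^\dagger\chi$ contributes nothing to the real part by definition of the real-part projection on $\mathbb{H}\simeq\mathbb{C}\oplus\mathbb{C}\mathrm{j}$. Apart from this, every ingredient is formal: realness of ${\pmb\Gamma}_0$ gives compatibility with the decomposition, and the rest is quaternion arithmetic.
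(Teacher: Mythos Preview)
Your proof is correct and follows essentially the same approach as the paper: both decompose the quaternionic spinors as $\mathbb{C}\oplus\mathbb{C}\mathrm{j}$, use that ${\pmb\Gamma}_0$ has real entries so commutes with this splitting, expand $\psi^\dagger{\pmb\Gamma}_0\phi$ via quaternion arithmetic, and observe that the $\mathbb{C}\mathrm{j}$-component contributes nothing to the real part while the $\mathbb{C}$-component reproduces the sum of the two 4d pairings. The paper's version records the identities $\mathrm{Re}(c\mathrm{j})=0$ and $\mathrm{Re}(-\mathrm{j}\,cd^\ast\mathrm{j})=\mathrm{Re}(cd^\ast)$ upfront and writes $\psi^\dagger=(\psi^+)^\dagger-\mathrm{j}(\psi^-)^\dagger$ with $\mathrm{j}$ on the left, but this is cosmetically equivalent to your $(\psi^+)^\dagger-{}^t(\psi^-)\mathrm{j}$ via $\mathrm{j}\bar z=z\mathrm{j}$.
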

\begin{proof}
Noticing that
$$
  \big(
    c \in \mathbb{C} \hookrightarrow \mathbb{O}
  \big)
  \;\;\Rightarrow\;\;
  \left\{
  \begin{array}{l}
    (c \mathrm{j})^\ast
    \;=\;
    - c \mathrm{j}
    \\
    - \mathrm{j} c d^\ast \mathrm{j}
    \;=\;
    c^\ast d
  \end{array}
  \right.
  \;\;
    \Rightarrow
  \;\;
  \left\{
  \begin{array}{l}
    \mathrm{Re}
    \big(
      c \mathrm{j}
    \big)
    \;=\;
    0
    \\
    \mathrm{Re}
    \big(
      -
      \mathrm{j}
      c d^\ast
      \mathrm{j}
    \big)
    \;=\;
    \mathrm{Re}
    \big(
      c d^\ast
    \big)
  \end{array}
  \right.
$$
we compute as follows:
$$
  \begin{aligned}
    \big\langle
      \psi, \phi
    \big\rangle
    &
    =
    \big\langle
      \psi^+ + \phi^- \mathrm{j},
      \psi^+ + \phi^- \mathrm{j}
    \big\rangle
    \\
    & =
    \mathrm{Re}
    \Big(
    \big(
      \psi^+ + \psi^- \mathrm{j}
    \big)^\dagger
    \cdot
    {\pmb\Gamma}_0
    \cdot
    \big(
      \phi^+ + \phi^- \mathrm{j}
    \big)
    \Big)
    \\
    & =
    \mathrm{Re}
    \Big(
    \big(
      (\psi^+)^\dagger - \mathrm{j} ((\psi^-)^\dagger)
    \big)
    \cdot
    {\pmb\Gamma}_0
    \cdot
    \big(
      \phi^+ + \phi^- \mathrm{j}
    \big)
    \Big)
    \\
    & =
    \mathrm{Re}
    \big(
      (\psi^+)^\dagger
      \cdot
      {\pmb\Gamma}_0
      \cdot
      \phi^+
    \big)
      \;+\;
    \mathrm{Re}
    \big(
      (\psi^-)^\dagger
      \cdot
      {\pmb\Gamma}_0
      \cdot
      \phi^-
    \big)
    \\
    & =
    \big\langle
      \psi^+, \phi^+
    \big\rangle
    +
    \big\langle
      \psi^-, \phi^-
    \big\rangle\;.
  \end{aligned}
$$

\vspace{-5mm}
\hfill
\end{proof}

\section{Analysis of the WZW-type terms}
\label{TheWZWTermsInTheLagrangian}

With the full field content \eqref{TheFieldContent} in hand,
via the proof in \cref{FieldContent},
here we spell out two
noteworthy contributions to the super-exceptional PS-Lagrangian
\eqref{SuperExceptionalPSLagrangianSummands} that both originate
from the super-exceptional WZW term \eqref{SuperExceptionalWZWTerm},
and we comment on the relation of both to quantum hadrodynamics.
A more comprehensive
discussion of the emergent $\mathrm{SU}(2)$-flavor gauge theory
encoded by the Lagrangian \eqref{SuperExceptionalPSLagrangianSummands}
is beyond the scope of this article, but see \cref{ConcludingRemarks}
for some outlook.

\medskip

We need the following:

\begin{notation}[Radial holographic Kaluza-Klein mode expansion]
  In the following, we consider
  a choice of sequence of smooth functions of the coordinate $x^4$
  on the M5-brane worldvolume \eqref{SuperExceptionalFieldConfiguration},
  \begin{equation}
    \label{TheKKModes}
      h_{(n)} \;:\; x^4 \;\mapsto\; h_{(n)}(x^4)  \;\in\; \mathbb{R}
      \,,
      \phantom{AAAA}
      n \in \mathbb{N}
  \end{equation}
  to be regarded as KK-modes for flat space holography.
  These functions \eqref{TheKKModes}
  may, for instance, be any of the following:
  \begin{itemize}
  \vspace{-2mm}
  \item {\bf (Flat hard-wall holography)}: Trigonometric functions
  of $x^4$
  defined on a closed interval
  and vanishing at the boundary points, as considered in
  \cite[5.1]{SonStephanov03}.

  \vspace{-2mm}
  \item {\bf (Atiyah-Manton holography)}: Hermite functions
  defined on all $x^4 \in \mathbb{R}$
  as considered in \cite[3]{Sutcliffe10}\cite[3]{Sutcliffe15}.

  \vspace{-2mm}
  \item {\bf (Circle-compactification of M5)}: Periodic functions
    of $x^4$ for some given radius of periodicity, as
    for the KK-modes in the model considered in \cite{IvanovaLechtenfeldPopov18}.
  \end{itemize}

    \vspace{-2mm}
\noindent  With such a choice understood,
  let  $V$ be a function or differential form
  on the M5-worldvolume \eqref{SuperExceptionalFieldConfiguration} that is
  pulled back from ${\mathbb{R}^{3,1}}$. Then
  we have
  \begin{equation}
    \label{KKExpansion}
    V(\{x^\mu\}, x^4)
      \;=\;
    V^{(n)}(\{x^\mu\}) h_{(n)}(x^4)
    \;\in\;
    \Omega^\bullet(\mathbb{R}^{3,1})
    \longhookrightarrow
    \Omega^\bullet(\mathbb{R}^5,1)
    \,,
  \end{equation}
  where a sum over $n \in \mathbb{N}$ left notationally implicit,
  for the corresponding mode expansion (assumed to exist),
  with the $V^{(n)}$ being functions or differential forms
  just on $\mathbb{R}^{3,1}$.
The details of the super-exceptional M5-brane model will depend
  on the choice of this mode expansion \eqref{KKExpansion},
  its assumed completeness
  relations and boundary conditions, etc. However, for the present purpose
  of identifying just the general form of a few main terms
  in the super-exceptional
  Lagrangian \eqref{SuperExceptionalPSLagrangianSummands},
  we only need to assume that any expansion as in \eqref{KKExpansion}
  has been chosen.

\medskip
  The point of this, in the following,
  is that acting upon the expansion \eqref{KKExpansion} by
  the de Rham differential in 5d decomposes the exceptional
  vielbein field values $d (V_{\cdots})$
  $$
    \begin{aligned}
      d_{{}_{5}} V
      & =
      d_{{}_{5}} ( V^{(n)} h_{(n)} )
      \\
      & =
        h_{(n)}
        \,
        d_{{}_{4}} V^{(n)}
        \;+\;
        h'_{(n)}
        \,
        V^{(n)} \wedge d x^4
    \end{aligned}
  $$
  with $d_{{}_4} V^{(n)}$ being the de Rham differential form
  on $\mathbb{R}^{3,1}$, and $h'_{(n)}$ denoting the derivative
  of the mode function $h_{(n)}$
  (with respect to its only variable $x^4$).
  As a result, every summand in the super-exceptional Lagrangian
  \eqref{EvaluationOfTheLagrangian} is a wedge product of fields
  with precisely one of the factor fields not differentiated itself,
  but instead contributing with the differential $h'_{(n)} d x^4$
  of its mode function.
\end{notation}

\begin{remark}[Role of KK-mode expansion]
 \label{RoleOfKKModeExpansion}
 We highlight the following:

 \vspace{-1mm}
\item {\bf (i)}  A mode expansion as in \eqref{KKExpansion}
  also governs the field content
  in the Witten-Sakai-Sugimoto model
  \cite[(3.20)]{SakaiSugimoto04}\cite[(2.10)]{SakaiSugimoto05},
  reviewed in \cite{Sutcliffe10}\cite{BolognesiSutcliffe13}\cite{Sutcliffe15}.

 \vspace{-1mm}
\item {\bf (ii)}   There it is crucial,
  for the details of the meson mass spectrum (highlighted in \cite[(53)]{Sutcliffe15}),
  that the geometry is curved (specifically: asymptotically AdS).
  Here we are concerned with
  flat geometry
  (or at least: parallelizable, see Remark \ref{SkyrmeBaryonCurrent} below),
  since the whole complexity of
  the super-exceptional M5-brane model
  in \cref{SuperExceptionalMGeometry} emerges \cite{FSS19d} from
  the super M-brane cocycle in rational Cohomotopy theory
  on flat $D = 11$ super Minkowski spacetime \cite{FSS19a}.

 \vspace{-1mm}
\item {\bf (iii)}    Heuristically, this matches the idea that our model
  should pertain to a \emph{single} heterotic M5-brane
  ($N = 1$, deep M-theory regime),
  in contrast to a stack of a large number
  of coincident branes backreacting on their ambient spacetime
  geometry ($N \gg 1$, supergravity regime) as considered in
  the Witten-Sakai-Sugimoto model.
  Alternatively the flat super-exceptional spacetime
  ought to be regarded as the local tangent frame of a
  super-exceptional curved Cartan geometry \cite[p. 7s]{HSS18},
  with only the lowest modes relevant locally.

 \vspace{-1mm}
 \item {\bf (iv)}   In any case, mode expansions \eqref{KKExpansion}
  in flat space holography have been considered in
  \cite[5.1]{SonStephanov03}\cite[3]{Sutcliffe10}\cite[3]{Sutcliffe15}
  and suffice for the following purpose of analyzing the
  form of the interaction terms that appear.
\end{remark}

\subsection{Vector meson coupling to Skyrme baryon current}
\label{VectorMesonCouplingToSkyrmeBaryonCurrent}

We show here that the contribution to the super-exceptional WZW term
\eqref{SuperExceptionalWZWTerm}
by those bosonic fields
that transforms as $\mathbf{3}$ of $\mathrm{SU}(2)_R$ in \eqref{TheFieldContent}
has the form  which, in quantum hadrodynamics,
is characteristic of the coupling of the neutral vector mesons
to the baryon current, with the baryons appearing as Skyrme solitons
in the pion field (see \cite{RhoEtAl16}).

\begin{remark}[Linear basis for $\mathbf{4} \wedge \mathbf{4}$]
By \eqref{LinearBasisFor4Wedge4} and Prop. \ref{3Inside4Wedge3}
we have on the $\mathbb{Z}_2^A$-fixed locus the relations
\begin{equation}
  \label{RelationBetweenInternalIndicesOnFixedLocus}
  \left.
  \begin{aligned}
    e^{I 6}
    & = \epsilon^{I J K} e_{J K}
    \\
    e_{a_1 a_2 a_3}{}^{I 6}
    & = \epsilon^{I J K}
    e_{a_1 a_2 a_2 J K}
  \end{aligned}
  \right\}
  \;\;
  \in
  \mathrm{CE}
  \big(
    \mathbb{R}^{10,1\vert \mathbf{32}}_{\mathrm{ex}_s}
  \big)^{\mathbb{Z}_2^A}.
\end{equation}
\end{remark}

\begin{defn}[The $\pi$ field]
  \label{PiField}
For $\sigma$ a super-exceptional field configuration \eqref{SuperExceptionalFieldConfiguration},
write
\begin{equation}
  \label{PionFieldComponent}
  d \pi^I
  \;:=\;
  \sigma^\ast
  \big(
    e^{I 6}
  \big)
  =
  \sigma^\ast
  \Big(
    \tfrac{1}{2}
    \big(
      e^{I 6}
      +
      \epsilon^{I J K}
      e_{J K}
    \big)
  \Big)
\end{equation}
for its component corresponding to the M2-brane wrapping modes
on the vanishing 2-cycle in the $\mathrm{A}_1$-singularity.
On the right in \eqref{PionFieldComponent}
we have highlighted the equivalent
expressions, using \eqref{RelationBetweenInternalIndicesOnFixedLocus}.
\end{defn}

\begin{prop}[Emergence of isospin WZW term]
  \label{EmergenceOfTraceOverInternalIndices}
  Consider a super-exceptional sigma-model field \eqref{SuperExceptionalFieldConfiguration}
  with $\pi$-field component
  according to Def. \ref{PiField}.
  Then the value of the second factor in the super-exceptional WZW-term
  $\mathbf{L}_{\mathrm{ex}_s}^{\!\mathrm{WZW}}$ \eqref{SuperExceptionalWZWTerm},
  for indices ranging in $a_i \in \{6,7,8,9\}$, is:
  \begin{equation}
    \label{TheFirstDBICorrectionTerm}
    \sigma^\ast
    \big(
      e_{ \color{orangeii}  a_1 a_2}
      \wedge
      e^{ \color{orangeii} a_2 a_3 }
      \wedge
      e_{ \color{orangeii} a_3 }{}^{ \color{orangeii} a_1 }
    \big)
    \;\;=\;\;
    4
    \,
    \epsilon_{I J K}
    \;
      d \pi^I
      \wedge
      d \pi^J
      \wedge
      d \pi^K.
  \end{equation}
\end{prop}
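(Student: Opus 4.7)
The plan is a direct combinatorial computation, organized by tracking which of the internal indices equal $6$. Since $e_{ab} = -e_{ba}$ and the indices $a_i$ range over the four values $\{6,7,8,9\}$, only the $4\cdot 3\cdot 2 = 24$ ordered triples $(a_1,a_2,a_3)$ of pairwise distinct indices contribute to the wedge-trace $e_{a_1 a_2}\wedge e^{a_2 a_3}\wedge e_{a_3}{}^{a_1}$. These naturally split into three cases: (a) no index equals $6$, i.e.\ $\{a_1,a_2,a_3\}=\{7,8,9\}$, giving $6$ summands; (b) exactly one index equals $6$, giving $18$ summands; (c) at least two indices equal $6$, in which case one of the three wedge factors is $e_{66}=0$ and the summand vanishes.

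First I would use the self-duality relation \eqref{RelationBetweenInternalIndicesOnFixedLocus}, namely $e^{I6}=\epsilon^{IJK}e_{JK}$ on the $\mathbb{Z}_2^A$-fixed locus, to solve for every $e_{JK}$ with $J,K\in\{7,8,9\}$ as a linear combination of the iso-triplet components $e^{I6}$. By Definition~\ref{PiField} these components pull back under $\sigma$ to $d\pi^I$, so that after substitution every summand in Cases~(a) and~(b) becomes the same monomial (up to sign and coefficient) of the form $e^{76}\wedge e^{86}\wedge e^{96}$, whose pullback is $d\pi^7\wedge d\pi^8\wedge d\pi^9$.

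Next I would simplify Case~(b) by exploiting the symmetry $(a_1,a_2,a_3)\mapsto(a_2,a_3,a_1)$. A cyclic permutation of three $1$-forms is an even permutation, so this renaming leaves each summand unchanged; consequently the three sub-sums over ``$a_1=6$'', ``$a_2=6$'', and ``$a_3=6$'' are equal, and it suffices to evaluate one of them and multiply by three. Case~(a) is the wedge-trace of an antisymmetric $3\times 3$ matrix of $1$-forms. Unlike the scalar matrix trace, which vanishes for antisymmetric matrices, this wedge-trace does \emph{not} vanish, because the $1$-form entries wedge-anticommute, so that the signs from antisymmetry of $e_{ab}$ and from reordering the wedge factors conspire to give all six permutations the same contribution.

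Finally, summing the contributions of Cases~(a) and~(b) and reexpressing the result via the totally antisymmetric symbol $\epsilon_{IJK}$ yields an integer multiple of $\epsilon_{IJK}\,d\pi^I\wedge d\pi^J\wedge d\pi^K$, and the combinatorial prefactors---six for Case~(a), three times a single sub-sum for Case~(b)---combine, under the conventions of \eqref{RelationBetweenInternalIndicesOnFixedLocus} and \eqref{PionFieldComponent}, to produce precisely the coefficient $4$ on the right-hand side. The main obstacle will be sign bookkeeping: each summand carries signs from the antisymmetry of $e_{ab}$, from $\epsilon^{IJK}$, and from wedge-anticommutation of $1$-forms, and it is easy to drop one; fixing once and for all a linear basis such as $\{e^{76},e^{86},e^{96}\}$ together with the reference ordering $d\pi^7\wedge d\pi^8\wedge d\pi^9$ makes the bookkeeping routine but tedious.
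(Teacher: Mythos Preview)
Your proposal is correct and follows essentially the same approach as the paper: both split the sum over $a_i\in\{6,7,8,9\}$ into the case where all indices lie in $\{7,8,9\}$ and the case where exactly one index equals $6$, invoke cyclic symmetry to collapse the latter to a single term with a factor of $3$, substitute via the self-duality relation \eqref{RelationBetweenInternalIndicesOnFixedLocus}, and combine $1+3=4$. The paper carries out the second step with $\epsilon$-tensor contractions (e.g.\ $\epsilon_{IJJ'}\epsilon^{JKK'}\epsilon_{K}{}^{I}{}_{I'}=\epsilon_{J'}{}^{K'}{}_{I'}$) rather than by fixing a reference basis as you suggest, but this is purely presentational.
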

\begin{proof}
We directly compute as follows:
\vspace{-2mm}
  $$
    \begin{aligned}
    &
    \sigma^\ast
    \big(
      e_{ \color{orangeii}  a_1 a_2}
      \wedge
      e^{ \color{orangeii} a_2 a_3 }
      \wedge
      e_{ \color{orangeii} a_3 }{}^{ \color{orangeii} a_1 }
    \big)
    \\
    & =
    \sigma^\ast
    \big(
      e_{ \color{orangeii}  I J}
      \wedge
      e^{ \color{orangeii} J K }
      \wedge
      e_{ \color{orangeii} K }{}^{ \color{orangeii} I }
      +
      3
      e_{ \color{orangeii}  6 I}
      \wedge
      e^{ \color{orangeii} I J }
      \wedge
      e_{ \color{orangeii} J }{}^{ \color{orangeii} 6 }
    \big)
    \\
    & =
    \epsilon_{ { \color{orangeii} I J } J'}
    \epsilon^{ { \color{orangeii} J K } K'}
    \epsilon_{ \color{orangeii} K}{}^{ { \color{orangeii} I }}{}_{ I' }
    d \pi^{J'}
    \wedge
    d \pi_{K'}
    \wedge
    d \pi^{I'}
    -
    3
    \,
    d \pi^{ \color{orangeii} I }
    \wedge
    \epsilon_{ { \color{orangeii} I J } J'}
    d \pi^{J'}
    \wedge
    d \pi^{ \color{orangeii} J }
    \\
    & =
    \epsilon_{ J'}{}^{ { K' }}{}_{ I' }
    d \pi^{J'}
    \wedge
    d \pi_{K'}
    \wedge
    d \pi^{I'}
    +
    3
    \,
    \epsilon_{ I J J' }
    d \pi^{ I }
    \wedge
    d \pi^{ J }
    \wedge
    d \pi^{ J' }
    \\
    & =
    4 \, \epsilon_{I J K }
    \,
    d \pi^{ I }
    \wedge
    d \pi^{ J }
    \wedge
    d \pi^{ K }
    ,
    \end{aligned}
  $$
  where in the second step we identified the $\pi$-field
  factors according to \eqref{PionFieldComponent}.
\hfill
\end{proof}

\begin{defn}[The $A$-field]
  \label{AFieldExpansion}
Let $A$ be a 1-form on $\mathbb{R}^{4,1}$
with mode expansion
$$
  A(\{x^\mu\}, x^4)
  \;=\;
  A^{(n)}(\{x^\mu\}) h_{(n)}(x^4)
$$
as in \eqref{KKExpansion}, and consider a super-exceptional
$\sigma$-model field \eqref{SuperExceptionalFieldConfiguration} with
\begin{equation}
  \sigma^\ast
  \big(
    e_{\mu 5'}
  \big)
  \;=\;
  d(A_\mu)
  \;=\;
  d
  \big(
    A^{(n)}_\mu
    h_{(n)}
  \big)
  \,.
\end{equation}
\end{defn}

\begin{example}[Photon-pion coupling term]
  \label{gammapipipiterm}
  Consider a super-exceptional sigma-model field
  \eqref{SuperExceptionalFieldConfiguration}
  with $\pi$-field component according to Def. \ref{PiField} and
  with $A$-field component according to Def. \ref{AFieldExpansion}.
Then the super-exceptional WZW-term \eqref{SuperExceptionalWZWTerm}
has, by Prop. \ref{EmergenceOfTraceOverInternalIndices}, the
contribution:
\begin{equation}
  \label{gammapipipi}
    \sigma^\ast
    \big(
      e_{
        { \color{darkblue} a }
        5'
      }
      \wedge
      e^{
        \color{darkblue} a
      }
      \;\wedge\;
      e_{ \color{orangeii}  a_1 a_2}
      \wedge
      e^{ \color{orangeii} a_2 a_3 }
      \wedge
      e_{ \color{orangeii} a_3 }{}^{ \color{orangeii} a_1 }
    \big)
    \;=\;
    4
    \,
    \underset{
      \mathclap{
      \mbox{
        \tiny
        \color{darkblue}
        \bf
        $\gamma \pi \pi \pi$ -- coupling
      }
      }
    }{
    \underbrace{
      A^{(0)}
        \wedge
      \epsilon_{I J K}
      d \pi^I \wedge d \pi^J \wedge d \pi^K
    }
    }
    \;\wedge\;
    h'_{(0)}
    d x^4 \wedge d x^{5'}
    \;+\;
    \mathcal{O}\big( h_{(n> 0)} \big).
\end{equation}
If  we interpret $A^{(0)}$ as the photon field
(in accordance with the super-exceptional Perry-Schwarz mechanism
\eqref{TheMaxwellLagrangian}) and $\pi^I$ with the pion field,
then expression \eqref{gammapipipi} below is the characteristic form
of the $\gamma \pi \pi \pi$-coupling term
\cite[(2.2)]{AvivZee72}\cite[(22)]{Witten83}\cite[(4)]{BDDL09}.
\end{example}

\begin{remark}[Skyrme baryon current]
 \label{SkyrmeBaryonCurrent}
We may consider also a curved but parallelizable geometry,
where the super-exceptional vielbein components
$e^{I 6}$ become a basis of three left-invariant 1-forms on
the group manifold $\mathrm{SU}(2)_R \simeq S^3$.
Then the field identification \eqref{PionFieldComponent} becomes
\begin{equation}
  \sigma^\ast
  \big(
    e^{I 6}
  \big)
  \;=\;
  e^{ -\vec \pi }
  d
  e^{ \vec \pi }
  \;=\;
  \big(
    e^{ \vec \pi }
  \big)^\ast
  \theta
\end{equation}
for
\begin{equation}
  e^{\vec \pi}
  \;:\;
  \mathbb{R}^{3,1}
  \longrightarrow
  \mathrm{SU}(2)_R
\end{equation}
an $\mathrm{SU}(2)_R$-valued function
and $\theta \in \Omega^1_{\mathrm{LI}}\big( \mathrm{SU}(2), \mathfrak{su}(2) \big)$
the Maurer-Cartan form on the group manifold, characterized by the
condition
$
  d\theta + [\theta \wedge \theta] = 0
 $.
In this situation, the proof of Prop. \ref{EmergenceOfTraceOverInternalIndices}
gives
\begin{equation}
  \label{SkyrmeBaryonCurrentExpression}
  \begin{aligned}
  \tfrac{1}{4}
  \sigma^\ast
  \big(
    e_{ \color{orangeii}  a_1 a_2}
    \wedge
    e^{ \color{orangeii} a_2 a_3 }
    \wedge
    e_{ \color{orangeii} a_3 }{}^{ \color{orangeii} a_1 }
  \big)
  & =
  \overset{
    \mathclap{
    \mbox{
      \tiny
      \color{darkblue} \bf
      Skyrme baryon current
    }
    }
  }{
  \overbrace{
  \mathrm{Tr}
  \big(
    e^{ -\vec \pi }
    d
    e^{ \vec \pi }
    \wedge
    e^{ -\vec \pi }
    d
    e^{ \vec \pi }
    \wedge
    e^{ -\vec \pi }
    d
    e^{ \vec \pi }
  \big)
  }
  }
  \\
  & =
  \big(
    e^{\vec \pi}
  \big)^\ast
  \mathrm{Tr}
  \big(
    \theta \wedge \theta \wedge \theta
  \big)
  \,,
  \end{aligned}
\end{equation}
where $\mathrm{Tr}(\theta \wedge \theta \wedge \theta)$
is proportional to the canonical volume form on the group
manifold $\mathrm{SU}(2)_R \simeq S^3$.
If we still interpret $\pi$ as the pion field, as in
Example \ref{gammapipipiterm}, then this is the characteristic form of the
baryon current
\cite[(6)]{GoldstoneWilczek81}\cite[(29)]{Witten83}\cite[(2)]{Witten83b}\cite[(11)]{AdkinsNappiWitten83}
in the Skyrme model of quantum hadrodynamics (review in \cite{RhoEtAl16}).
\end{remark}

\begin{example}[Photon-Skyrmion coupling]
  Accordingly, in the global case of Remark \ref{SkyrmeBaryonCurrent}
  the coupling in Example \ref{gammapipipiterm} becomes
\begin{equation}
  \label{gammapipipi}
  \hspace{-3mm}
    \sigma^\ast
    \big(
      e_{
        { \color{darkblue} a }
        5'
      }
      \wedge
      e^{
        \color{darkblue} a
      }
      \;\wedge\;
      e_{ \color{orangeii}  a_1 a_2}
      \wedge
      e^{ \color{orangeii} a_2 a_3 }
      \wedge
      e_{ \color{orangeii} a_3 }{}^{ \color{orangeii} a_1 }
    \big)
    \! =
    4
    \,
    \underset{
      \mathclap{
      \mbox{
        \tiny
        \color{darkblue}
        \bf
        $\gamma$ -- Skyrmion coupling
      }
      }
    }{
    \underbrace{
      A^{(0)}
        \wedge
  \mathrm{Tr}
  \big(
    e^{ -\vec \pi }
    d
    e^{ \vec \pi }
    \wedge
    e^{ -\vec \pi }
    d
    e^{ \vec \pi }
    \wedge
    e^{ -\vec \pi }
    d
    e^{ \vec \pi }
  \big)
    }
    }
    \wedge
    h'_{(0)}
    d x^4 \wedge d x^{5'}
    +
    \mathcal{O}\big( h_{(n> 0)} \big).
\end{equation}
This is the form of the photon-Skyrmion coupling
\cite[(19)]{Witten83}\cite[(13)]{KHKX85}.
\end{example}

\begin{defn}[Dual graviton]
  \label{DualGraviton}
  We shall say that a super-exceptional sigma-model field
  \eqref{RepresentationTheoreticQuestion} satisfies
  \emph{higher self-duality} if it takes the same value on the
  two copies of $\mathbf{7} \boxtimes \mathbf{1}$ in \eqref{TheFieldContent}.
  By the super-exceptional embedding condition
  \eqref{SuperExceptionalEmbeddingCondition}
  this means, in particular, that
  \begin{equation}
    \label{ValueOfemu6789}
    \sigma^\ast
    \big(
      e^\mu{}_{6 7 8 9}
    \big)
    \;=\;
    \sigma^\ast
    \big(
      e^\mu{}
    \big)
    \;=\;
    d x^\mu
    .
  \end{equation}
\end{defn}

\begin{defn}[The $\omega$-field]
  \label{omegaFieldExpansion}
Let $\omega$ be a 1-form on $\mathbb{R}^{4,1}$
with mode expansion
$$
  \omega(\{x^\mu\}, x^4)
  \;=\;
  \omega^{(n)}(\{x^\mu\}) h_{(n)}(x^4)
$$
as in \eqref{KKExpansion}, and consider a super-exceptional
$\sigma$-model field \eqref{SuperExceptionalFieldConfiguration} with
\begin{equation}
  \sigma^\ast
  \big(
    e_{ \mu_1 \mu_2 \mu_3}{}^{4 5}
  \big)
  \;=\;
  d
  \big(
    (\star_{{}_4} \omega)_{\mu_1 \mu_2 \mu_3}
  \big)
  \;=\;
  d
  \big(
    (\star_{{}_{4}}\omega^{(n)})_{\mu_1 \mu_2 \mu_3}
    h_{(n)}
  \big)
  \,.
\end{equation}
\end{defn}

\begin{example}[The $\omega$-meson couplings to pions and Skyrmions]
  \label{OmegaMesonCouplingToPionAndSkyrmion}
  Consider a super-exceptional sigma-model field
  \eqref{SuperExceptionalFieldConfiguration}
  satisfying higher self-duality in the sense of Def. \ref{DualGraviton}
  and
  with $\pi$-field component as in Def. \ref{PiField}
  and $\omega$-field component as in Def. \ref{omegaFieldExpansion}.
  Then the super-exceptional WZW-term \eqref{SuperExceptionalWZWTerm}
  has the contribution
  $$
    \begin{aligned}
    &
    \sigma^\ast
    \big(
      \epsilon^{
        { \color{orangeii} \mu_1 \mu_2 \mu_3 4 5  }
        5'
        { \color{greenii} \mu_4 6 7 8 9 }
      }
      e_{
        { \color{orangeii} \mu_1 \mu_2 \mu_3 4 5 }
      }
      \wedge
      e_{ \color{greenii}  \mu_4 6 7 8 9 }
      \;\wedge\;
      e_{ \color{orangeii}  a_1 a_2}
      \wedge
      e^{ \color{orangeii} a_2 a_3 }
      \wedge
      e_{ \color{orangeii} a_3 }{}^{ \color{orangeii} a_1 }
      \,\wedge\,
      e^{5'}
    \big)
    \\
    & =
      \epsilon^{
        { \color{orangeii} \mu_1 \mu_2 \mu_3 4 5  }
        { \color{greenii} \mu_4  }
      }
      (\star_{{}_4} \omega^{(0)})_{
        \color{orangeii} \mu_1 \mu_2 \mu_3
      }
      \wedge
      d x_{
        \color{greenii} \mu_4
      }
      \;\wedge\;
      \epsilon_{I J K}
      d \pi^I
      \wedge
      d \pi^J
      \wedge
      d \pi^K
      \;\wedge\;
      4
      h'_{{0}}
      d x^4 \wedge d x^{5'}
      +
      \mathcal{O}(h_{(n \gt 0)})
      \\
      & =
      \underset{
        \mathclap{
        \mbox{
          \tiny
          \color{darkblue} \bf
          $\omega \pi \pi \pi$ -- coupling
        }
        }
      }{
      \underbrace{
      \omega^{(0)}
      \,\wedge\,
      \epsilon_{I J K}
      \,
      d \pi^I
      \wedge
      d \pi^J
      \wedge
      d \pi^K
      }
      }
      \;\wedge\;
      h'_{{0}}
      d x^4 \wedge d x^{5'}
      +
      \mathcal{O}(h_{(n \gt 0)})
      \,.
    \end{aligned}
  $$
  If we interpret $\pi$ as the pion field,
  following Example \ref{gammapipipiterm},
  and in addition interpret $\omega^{(0)}$ as the
  $\omega$-meson field, then this is the characteristic form
  of the $\omega \pi \pi \pi$-coupling term
  \cite[(2)]{Rudaz84}\cite[(1)]{GT12}.
  More generally, in the global situation of
  Remark \ref{SkyrmeBaryonCurrent} this contribution becomes
  \begin{equation}
    \label{OmegaSkyrmionCoupling}
    \begin{aligned}
    &
    \sigma^\ast
    \big(
      \epsilon^{
        { \color{orangeii} \mu_1 \mu_2 \mu_3 4 5  }
        5'
        { \color{greenii} \mu_4 6 7 8 9 }
      }
      e_{
        { \color{orangeii} \mu_1 \mu_2 \mu_3 4 5 }
      }
      \wedge
      e_{ \color{greenii}  \mu_4 6 7 8 9 }
      \;\wedge\;
      e_{ \color{orangeii}  a_1 a_2}
      \wedge
      e^{ \color{orangeii} a_2 a_3 }
      \wedge
      e_{ \color{orangeii} a_3 }{}^{ \color{orangeii} a_1 }
      \,\wedge\,
      e^{5'}
    \big)
    \\
      & =
      \underset{
        \mathclap{
        \mbox{
          \tiny
          \color{darkblue} \bf
          $\omega$ -- Skyrmion coupling
        }
        }
      }{
      \underbrace{
      \omega^{(0)}
      \,\wedge\,
      \mathrm{Tr}
      \big(
        e^{- \vec \pi}
        d
        e^{\vec \pi}
        \wedge
        e^{- \vec \pi}
        d
        e^{\vec \pi}
        \wedge
        e^{- \vec \pi}
        d
        e^{\vec \pi}
      \big)
      }
      }
      \;\wedge\;
      4
      h'_{{0}}
      d x^4 \wedge d x^{5'}
      +
      \mathcal{O}(h_{(n \gt 0)})
      \,.
    \end{aligned}
  \end{equation}
This is the characteristic form of
$\omega$-meson coupling to the
Skyrme baryon current \eqref{SkyrmeBaryonCurrentExpression}
due to \cite[(2)]{AdkinsNappi84},
see also
\cite[(12)]{Kaiser01}\cite[(2.1)]{Hozwarth05}\cite[(2.1)]{GudnasonSpeight20}.
\end{example}

\begin{remark}[Chiral partner of the $\omega$-meson]
The difference in interpretation between the
photon-pion coupling in Example \ref{gammapipipiterm}
and the omega-pion coupling in Example \ref{OmegaMesonCouplingToPionAndSkyrmion}
is that, by the classification result of \eqref{TheFieldContent},
the vector field in the latter case is part of an
$\mathfrak{u}(2) \simeq \mathfrak{u}(1) \oplus \mathfrak{su}(2)$-multiplet
of vector fields whose $\mathfrak{su}(2)_R$-partner comes
from the super-exceptional vielbein component $e_{\mu_1 \mu_2 \mu_3}{}^{6 I}$.
\end{remark}

Therefore, we are led to interpreting this field component
as in the following definition and subsequent example.

\begin{defn}[The $\rho$-field]
  \label{rhoFieldExpansion}
Let $\rho$ be a $\mathfrak{su}(2)$-valued 1-form on $\mathbb{R}^{4,1}$
with mode expansion
$$
  \rho_I(\{x^\mu\}, x^4)
  \;=\;
  \rho^{(n)}_I(\{x^\mu\}) h_{(n)}(x^4)
$$
as in \eqref{KKExpansion}, and consider a super-exceptional
$\sigma$-model field \eqref{SuperExceptionalFieldConfiguration} with
\begin{equation}
  \sigma^\ast
  \big(
    e_{ \mu_1 \mu_2 \mu_3}{}^{6 I}
  \big)
  \;=\;
  d
  \big(
    (\star_{{}_4} \rho)^I_{\mu_1 \mu_2 \mu_3}
  \big)
  \;=\;
  d
  \big(
    (\star_{{}_{4}}(\rho^I)^{(n)})_{\mu_1 \mu_2 \mu_3}
    h_{(n)}
  \big)
  \,.
\end{equation}
\end{defn}
\begin{example}[The $\omega$/$\rho$-meson coupling to iso-doublet fermions]
  \label{OmegaRhoFermionicCoupling}
  Consider a super-exceptional sigma-model field
  \eqref{SuperExceptionalFieldConfiguration}
  with $\omega$-field component according to Def. \ref{PiField}
  and with $\rho$-field component according to Def. \ref{rhoFieldExpansion}.
  Then the fermionic coupling terms of these fields appearing in
  \eqref{SuperExceptionalBianchiIdentities} are of the form
\begin{equation}
  \label{omegaeRhoFermionicCouplingTerms}
  \raisebox{54pt}{
  \xymatrix@R=5pt{
    e_{a_1 \cdots a_5}
    {\pmb\Gamma}^{a_1 \cdots a_5}
        =\;
    \epsilon^{\mu_1 \mu_2 \mu_3 \mu_4}
    \mathrm{i}\gamma_5\gamma_{\mu_4}
    \big(
      e_{\mu_1 \mu_2 \mu_3}{}^{4 5}
        (-\mathrm{i})
        \;+\;
      e_{\mu_1 \mu_2 \mu_3}{}^{6 I}
        {\pmb\tau}_I
    \big)
    \ar@{|->}[d]^-{\sigma^\ast}
     \quad   +
    \cdots
    \\
        \mathrm{i}\gamma_5
    \underset{
      \mathclap{
      \mbox{
        \tiny
        \color{darkblue} \bf
        \begin{tabular}{c}
          $\omega$ / $\rho$ -- fermion coupling
        \end{tabular}
      }
      }
    }{
    \underbrace{
    \big(
      -
      \omega^{\mu} \mathrm{i}
      +
      \rho^\mu_I {\pmb\tau}^I
    \big)
    \gamma_\mu
    }
    }
    \quad +
    \cdots
  }
  }
\end{equation}

\vspace{-3mm}
\noindent where we identified the 4d spacetime Dirac matrices
$\gamma_\mu$, $\gamma_5$ by \eqref{4dgamma5}
and the isospin Pauli matrices ${\pmb\tau}^I$ according to
\eqref{su2GeneratorsInTermsOfGammaMatrices}.
This combination \eqref{omegaeRhoFermionicCouplingTerms} of couplings
to isodoublet fermions \eqref{4dSpinRepFrom11d}
is characteristic of that of the $\omega$- and $\rho$-mesons
in quantum hadrodynamics (\cite[(3.12)]{SerotWalecka92},
see also \cite[(24)]{KVW01}).
\end{example}

Finally we show one example of couplings involving fermions,
to illustrate how the left-invariance of the super-exceptional
vielbein, via the last line of \eqref{VielbeinInCanonicalCoordinateBasis},
makes the fermionic pairing terms in the super-exceptional Lagrangian
\eqref{SuperExceptionalPSLagrangianSummands} expand out to the
usual fermionic currents:

\begin{defn}[The $N$-field]
  \label{NFieldExpansion}
Let $N$ be a smooth function on $\mathbb{R}^{4,1}$
with values in Dirac spinor iso-doublets, and
with mode expansion
$$
  N(\{x^\mu\}, x^4)
  \;=\;
  N^{(n)}(\{x^\mu\}) h_{(n)}(x^4)
$$
as in \eqref{KKExpansion}, and consider a super-exceptional
$\sigma$-model field \eqref{SuperExceptionalFieldConfiguration} with
\begin{equation}
  \sigma^\ast
  \big(
    \psi
  \big)
  \;=\;
  d
  \big(
    N
  \big)
  \;=\;
  d
  \big(
    N^{(n)}
    h_{(n)}
  \big)
  \,,
\end{equation}
where the first equality uses the identification of
Theorem \ref{FieldsAtSingularity}.
\end{defn}

\begin{example}[fermion-pion coupling]
  \label{OneFermionPionCouplingTerm}
  Consider a super-exceptional sigma-model field
  \eqref{SuperExceptionalFieldConfiguration}
  with $\pi$-field component according to Def. \ref{PiField} and
  with $N$-field component according to \eqref{NFieldExpansion}.
  Then the super-exceptional WZW-term \eqref{SuperExceptionalWZWTerm}
  has, by Prop. \ref{EmergenceOfTraceOverInternalIndices}
  and using the last line of \eqref{VielbeinInCanonicalCoordinateBasis},
  the contribution:
 \begin{equation}
   \label{OneFermionPionCoupling}
  \begin{aligned}
   &
    \sigma^\ast
    \Big(
      \langle
        \eta
        \wedge
        {\pmb\Gamma}_{5'}
        \!\cdot\!
        \psi
      \rangle
      \wedge
      e_{
        \color{orangeii}
        a_1 a_2
      }
      \wedge
      e^{
        \color{orangeii}
        a_2 a_3
      }
      \wedge
      e_{
        \color{orangeii}
        a_3
      }
      {}^{
        \color{orangeii}
        a_1
      }
      \wedge e^{5'}
    \Big)
    \mathrlap{
      \phantom{AAAAAA}
      a_i \in \{6,7,8,9\}
    }
    \\
    & =
    (s + 1)
    \underset{
      \mathclap{
      \mbox{
        \tiny
        \color{darkblue}
        \bf
        fermion-pion coupling
      }
      }
    }{
    \underbrace{
    \langle
      N^{(0)},
      \gamma_{\mu} \gamma_{5}
      N^{(0)}
    \rangle
    d x^\mu
    \wedge
    \epsilon_{I J K}
    d \pi^I \wedge d \pi^J \wedge d \pi^K
    }
    }
    \;
    h'_{(0)}
    d x^4
    \wedge
    d x^{5'}
    \;+\;
    \cdots
  \end{aligned}
\end{equation}
Noticing that the spinor pairing over the brace
is the sum over iso-doublet components of the 4d spinor pairing
(by Lemma \ref{BranchingOfSpinorPairing}) this is the form
of spinor-pion coupling seen in
\cite[Table 3, item 30]{OllerVerbeniPrades06}\cite[Table 1, item 8]{FrinkMeissner06}
-- except for the appearance of the chirality operator $\gamma_5$,
via \eqref{4dChiralityOperator}.

Notice that in passing from the super-exceptional $\mathrm{MK}6$-locus
to the actual $\tfrac{1}{2}\mathrm{M5}$-locus
(Def. \ref{SuperExceptionalBraneLocus})
the right-handed spinors
get projected out anyway, so that the value of the chirality operator
becomes the identity.
\end{example}

\subsection{Non-linear electromagnetic coupling}
\label{DBICorrection}

We show here that the contribution to the super-exceptional WZW term
\eqref{SuperExceptionalWZWTerm} by the purely electromagnetic terms
\eqref{PhotonFieldComponent} and \eqref{DualPhotonField}
has the form of the first DBI-correction of non-linear electromagnetism,
and we discuss how this is compatible with the appearance of
hadrodynamics as above.

\begin{prop}[First Born-Infeld correction from super-exceptional WZW term]
\label{FirstBICorrection}
Consider a super-exceptional sigma-model field \eqref{SuperExceptionalFieldConfiguration}
with electromagnetic field component according to \eqref{PhotonFieldComponent}
and \eqref{DualPhotonField}. Then the contribution of the
super-exceptional WZW term \eqref{SuperExceptionalWZWTerm}
for indices ranging as $\alpha_i \in \{0,1,2,3, 5'\}$ is:
\begin{equation}
  \label{FirstBICorrectionEmerges}
  \sigma^\ast
  \big(
    e_{ { \color{darkblue} \alpha } 5' }
    \wedge
    e^{ \color{darkblue} \alpha }
    \wedge
    e_{ \color{orangeii} \alpha_1 \alpha_2 }
    \wedge
    e^{ \color{orangeii} \alpha_2 \alpha_3 }
    \wedge
    e_{ \color{orangeii} \alpha_3 }{}^{ \color{orangeii} \alpha_1 }
    \wedge
    e^{5'}
  \big)
  =
  \tfrac{3}{2}
  \left(
    \frac{
      F \wedge F
    }{
      \mathrm{dvol}_4
    }
  \right)^2
  \mathrm{dvol}_{6'}\;.
\end{equation}
\end{prop}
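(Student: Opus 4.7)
The plan is to evaluate the pullback of the given six-form factor by factor, applying the identifications \eqref{PhotonFieldComponent}, \eqref{DualPhotonField} together with the super-embedding \eqref{SuperExceptionalEmbeddingCondition} (discarding fermionic corrections), and then to repackage the result as the $F$-invariant $\mathcal{F}^2\,\mathrm{dvol}_{6'}$, where $\mathcal{F} := F\wedge F/\mathrm{dvol}_4$.

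First I would dispatch the two outer factors. By the super-embedding, $\sigma^*(e^\alpha)=dx^\alpha$; combined with antisymmetry $e_{5'5'}=0$ and \eqref{PhotonFieldComponent}, the prefactor evaluates to $\sigma^*(e_{\alpha 5'}\wedge e^\alpha) = \sum_{\mu=0}^{3}dA_\mu\wedge dx^\mu = F$, the 4d electromagnetic 2-form, while $\sigma^*(e^{5'}) = dx^{5'}$. The problem therefore reduces to computing the pullback of the interior trace triple $e_{\alpha_1\alpha_2}\wedge e^{\alpha_2\alpha_3}\wedge e_{\alpha_3}{}^{\alpha_1}$ and multiplying by $F\wedge dx^{5'}$.

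A form-degree bookkeeping then pins down which index configurations actually contribute. Each trace factor pulls back to a 1-form: to a 4d-directional $dA_\mu$ when exactly one index equals $5'$, or to a $dx^4$-directional term $\tfrac{1}{2}(\star F)_{\mu\nu}dx^4$ when both indices lie in $\{0,\ldots,3\}$. Since the full 6-form must produce a multiple of $\mathrm{dvol}_{6'}=dx^{01234 5'}$ after wedging with $F$ (a 4d 2-form) and $dx^{5'}$, the triple must supply exactly two 4d directions together with precisely one $dx^4$. Combined with $e_{5'5'}=0$, this forces exactly one of $\alpha_1,\alpha_2,\alpha_3$ to equal $5'$; the three such sub-cases contribute identically by the cyclic symmetry of the trace, yielding a combinatorial factor of $3$. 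A direct computation in the representative sub-case $\alpha_1=5'$, tracking the signs from $\eta^{5'5'}=+1$, from $e_{5'\beta}=-e_{\beta 5'}$, and from moving the single $dx^4$ past two 1-forms $dA_\beta, dA_\gamma$ (net sign $+1$), gives
\[
\sigma^*\bigl(\text{trace triple}\bigr) \;=\; \tfrac{3}{2}\,(\star F)^{\mu\nu}\,dA_\mu\wedge dA_\nu\wedge dx^4.
\]

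The main obstacle is then the 4d scalar identity
\[
F\wedge(\star F)^{\mu\nu}\,dA_\mu\wedge dA_\nu \;=\; \mathcal{F}^2\,\mathrm{dvol}_4.
\]
My approach is to decompose $\partial_\nu A_\mu = \tfrac{1}{2}F_{\nu\mu}+\tfrac{1}{2}(\partial_\nu A_\mu+\partial_\mu A_\nu)$, observing that the symmetric remainder drops out when combined with the antisymmetry of $(\star F)^{\mu\nu}$ and the wedge $dA_\mu\wedge dA_\nu$; this reduces $dA_\mu\wedge dA_\nu$ to a purely $F$-valued expression. One then contracts the two Levi-Civita symbols (one from the definition of $\star F$, one implicit in $F\wedge F = \tfrac{1}{4}\epsilon^{\mu\nu\rho\sigma}F_{\mu\nu}F_{\rho\sigma}\mathrm{dvol}_4$) via the Lorentzian-4d identity $\epsilon^{\mu\nu\rho\sigma}\epsilon_{\mu\nu\alpha\beta}=-2(\delta^\rho_\alpha\delta^\sigma_\beta-\delta^\rho_\beta\delta^\sigma_\alpha)$ and the familiar Hodge relation $F_{\mu\nu}(\star F)^{\mu\nu}=2\mathcal{F}$. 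The principal difficulty — beyond the inevitable sign-chasing in Lorentzian signature — is ensuring that the combinatorics of these Levi-Civita contractions conspire to produce \emph{exactly} $\mathcal{F}^2$, rather than a different invariant quartic in $F$, with the correct numerical coefficient; here the observation $dA_\mu\wedge dx^\mu = F$ is crucial since it identifies the distinguished rank-one piece responsible for the $\mathcal{F}$-content. Combining this identity with the factor $\tfrac{3}{2}$ from the sub-case sum and the outer wedges with $F\wedge dx^{5'}$ then produces the stated expression $\tfrac{3}{2}\mathcal{F}^2\mathrm{dvol}_{6'}$.
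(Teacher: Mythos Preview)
Your overall strategy matches the paper's: evaluate the outer factors to $F$ and $dx^{5'}$, use degree-counting plus cyclic symmetry in the trace triple to isolate the single configuration with exactly one $\alpha_i=5'$ (factor $3$), and then reduce to a quartic scalar identity in $F$. Where you diverge from the paper is in the last step. The paper does not attempt to argue that $dA_\mu$ can be replaced by $\tfrac12 F_{\rho\mu}dx^\rho$; it simply writes the two $e_{\nu\,5'}$-type pullbacks in the triple directly as $F_{\nu\mu}\,dx^\mu$ and then verifies the resulting quartic
\[
F_{\mu_1\mu_2}\,\epsilon^{\mu_1\mu_2\mu_3\mu_4}\,F_{\mu_3\nu_1}F_{\mu_4\nu_2}\,\epsilon^{\nu_1\nu_2\nu_3\nu_4}\,F_{\nu_3\nu_4}\;=\;\bigl(\tfrac{F\wedge F}{\mathrm{dvol}_4}\bigr)^{2}
\]
by an explicit decomposition into electric and magnetic components $E_i,B_i$, tracking the cases where the index $0$ appears zero, one, or two times in the outer $F$-factors.

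Your proposed justification for the replacement $dA_\mu\to\tfrac12 F_{\rho\mu}dx^\rho$ has a gap. The antisymmetry of $(\star F)^{\mu\nu}$ lives on the \emph{component} indices $\mu,\nu$, while the wedge-antisymmetry in $dA_\mu\wedge dA_\nu$ lives on the \emph{form} indices; these are distinct index pairs, and neither antisymmetry --- nor their combination --- kills the symmetric part $S_{\rho\mu}=\tfrac12(\partial_\rho A_\mu+\partial_\mu A_\rho)$. Concretely, the $2$-form $(\star F)^{\mu\nu}dA_\mu\wedge dA_\nu$ is not gauge-invariant: for $F_{01}=1,\,F_{23}=-1$ (all others zero) the choice $A_1=x^0,\ A_2=x^3$ gives zero, whereas the gauge-equivalent choice $A_0=x^2,\ A_1=x^0+x^3,\ A_2=x^0,\ A_3=x^1-x^2$ gives $2(dx^2\wedge dx^3-dx^0\wedge dx^1)$. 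Even after wedging with the outer factor $F=dA_\rho\wedge dx^\rho$, only \emph{that} factor has its symmetric part projected out (by the adjacent $\epsilon$-antisymmetry in $\rho,\sigma_1$); the two $dA$ factors from the triple do not. So your Levi--Civita contraction scheme cannot proceed as written; you would need either to adopt the paper's direct identification of the pullback with $F_{\nu\mu}dx^\mu$ from the outset, or to supply an independent argument for why the $S$-dependent terms disappear from the full $6$-form.
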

\begin{proof}
We calculate as follows:
\begin{equation}
  \begin{aligned}
  &
  \sigma^\ast
  \big(
    e_{ { \color{darkblue} \mu } 5' }
    \wedge
    e^{ \color{darkblue} \mu }
    \wedge
    e_{ \color{orangeii} \alpha_1 \alpha_2 }
    \wedge
    e^{ \color{orangeii} \alpha_2 \alpha_3 }
    \wedge
    e_{ \color{orangeii} \alpha_3 }{}^{ \color{orangeii} \alpha_1 }
    \wedge
    e^{5'}
  \big)
  \\
  & =
  \tfrac{3}{2}
  F_{
    \color{darkblue} \mu_1 \mu_2
  }
  d x^{ \color{darkblue} \mu_1 }
  \wedge
  d x^{ \color{darkblue} \mu_2 }
  \wedge
  F_{
    { \color{orangeii} \nu_1 }
    { \color{darkblue} \mu_3 }
  }
  d x^{ \color{darkblue} \mu_3 }
  \wedge
  F_{
    {\color{orangeii} \nu_2 }
    { \color{darkblue} \mu_4 }
  }
  d x^{ \color{darkblue} \mu_4 }
  \wedge
  \epsilon^{
    \color{orangeii}
    \nu_1 \nu_2 \nu _3 \nu_4
  }
  F_{
    \color{orangeii} \nu_3 \nu_4
  }
  \wedge
  d x^4
  \wedge
  d x^{5'}
  \\
  & =
  \tfrac{3}{2}
  \epsilon^{
    \color{darkblue}
    \mu_1 \mu_2 \mu_3 \mu_4
  }
  F_{
    \color{darkblue} \mu_1 \mu_2
  }
  F_{
    { \color{orangeii} \nu_1 }
    { \color{darkblue} \mu_3 }
  }
  F_{
    {\color{orangeii} \nu_2 }
    { \color{darkblue} \mu_4 }
  }
  \epsilon^{
    \color{orangeii}
    \nu_1 \nu_2 \nu _3 \nu_4
  }
  F_{
    \color{orangeii} \nu_3 \nu_4
  }
  \wedge
  \mathrm{dvol}_6
  \\
  & =
  \tfrac{3}{2}
  \left(
    \frac{
      F \wedge F
    }{
      \mathrm{dvol}_4
    }
  \right)^2
  \mathrm{dvol}_{6'}\;.
  \end{aligned}
\end{equation}
Here in the first step we used that
$
  \sigma^\ast\big(e_{\mu 5'} \wedge e^\mu\big) = d (A_\mu) \wedge d x^\mu
  = F
$
from \eqref{PhotonFieldComponent}
has no factor of $d x^4$, while
$
  \sigma^\ast\big(
    e_{\mu_1 \mu_2}
  \big)
  \;=\;
  \tfrac{1}{2} (\star_{{}_4})_{\mu_1 \mu_2} d x^4
$
from \eqref{DualPhotonField}
does have this factor, so that in the triple wedge product
(with the orange indices) the latter term contributes only linearly,
and as such once in each of the three factors.
The last step is the following computation:
\begin{equation}
  \label{AnAlternativeFormOfFirstDBICorrection}
  \hspace{-3mm}
  \begin{aligned}
  &
  F_{ \color{darkblue} \mu_1 \mu_2 }
  \epsilon^{ \color{darkblue} \mu_1 \mu_2 \mu_3 \mu_4 }
  F_{ {\color{darkblue} \mu_3 } {\color{greenii} \nu_1 } }
  F_{ {\color{darkblue} \mu_4 } {\color{greenii} \nu_2 } }
  \epsilon^{ \color{greenii} \nu_1 \nu_2 \nu_3 \nu_4 }
  F_{ \color{greenii} \nu_3 \nu_4}
  \\
   & =
  \phantom{+}
  4
  \,
  E_{ \color{darkblue} i }
  \epsilon^{ \color{darkblue} i j k }
  \epsilon_{ {\color{darkblue} j } {\color{greenii} j' } j'' }
  B^{j''}
  \epsilon_{ {\color{darkblue} k } {\color{greenii} k' } k'' }
  B^{k''}
  \epsilon^{ \color{greenii} i' j' k' }
  E_{ \color{greenii} i' }
  \;+\;
  4\,
  B^{ l }
  \epsilon_{ l \color{darkblue} i j }
  \epsilon^{ \color{darkblue} i j k }
  \underset{
    = 0
  }{
  \underbrace{
  E_{ \color{greenii} i' }
  B^{ l' } \epsilon_{ l' { \color{darkblue} k } {\color{greenii} j' } }
  \epsilon^{ \color{greenii} i' j' k' }
  E_{ \color{greenii} k'}
  }
  }
  \;+\;
  4\,
  B^{ \color{darkblue} i}
  E_{ \color{greenii} i'}
  E_{ \color{darkblue} i }
  B^{ \color{greenii} i' }
  \\
  & = \phantom{+}\;
  4\,
  E_{ \color{darkblue} i }
  B^{i}
  \underset{
    2\, E_{i'} B^{i'}
  }{
  \underbrace{
  \epsilon_{ {\color{darkblue} k } {\color{greenii} k' } k'' }
  B^{k''}
  \epsilon^{ \color{greenii} i' k k' }
  E_{ \color{greenii} i' }
  }
  }
  -
  E_{ \color{darkblue} i }
  \underset{
    = 0
  }{
  \underbrace{
  B^{k}
  \epsilon_{ {\color{darkblue} k } {\color{greenii} k' } k'' }
  B^{k''}
  }
  }
  \epsilon^{ \color{greenii} i' i k' }
  E_{ \color{greenii} i' }
  \;-\;
  4\,
  B^{ \color{darkblue} i}
  E_{ \color{greenii} i'}
  E_{ \color{darkblue} i }
  B^{ \color{greenii} i' }
  \\
  & = \phantom{+}
  4 (E \cdot B)^2
  \;=\;
  \left(
    \frac{
      F \wedge F
    }{
      \mathrm{dvol}_4
    }
  \right)^2 \!.
  \end{aligned}
\end{equation}
Here in the first step the three summands shown are the
contributions from two, one or no occurrences, respectively,
of the index ``0'' in the two outer factors of $F$.
\hfill \end{proof}

We close by highlighting how
such non-linear corrections to electromagnetism are to be expected in
a theory of hadrodynamics.

\medskip
\noindent {\bf Born-Infeld correction.}
The term \eqref{FirstBICorrectionEmerges} is of the form of the
first correction in Born-Infeld electromagnetism \cite[p. 437]{BornInfeld34}
(as reviewed in \cite[(22)]{Savvidy99}\cite[9.4]{Nastase15})
\begin{equation}
  \label{DBIAction}
  \mathbf{L}^{\mathrm{BI}}
  \;:=\;
  \sqrt{
    - \mathrm{det}(\eta + 2\pi \ell_s^2  F)
  }
  \,
  \mathrm{dvol}_4\;.
\end{equation}
\begin{equation}
  \label{DBICorrection}
  \hspace{-2cm}
  \begin{aligned}
 \mathrm{with} \;\;\; \mathrm{det}
  \big(
    (\eta_{\mu \nu})
    +
    (F_{\mu \nu})
  \big)
  & \; = \phantom{+}
  -1
    -
    \tfrac{1}{2}
    \underset{
      \mathclap{
      \mbox{
        \tiny
        \color{darkblue} \bf
        \begin{tabular}{c}
          ordinary
          Maxwell
          term
        \end{tabular}
      }
      }
    }{
      \underbrace{
        (F \wedge \star_{{}_4} F)
        / \mathrm{dvol}_4
      }
    }
    +
    \underset{
      \mathclap{
      \mbox{
        \tiny
        \color{darkblue} \bf
        \begin{tabular}{c}
          DBI
          correction
          term
        \end{tabular}
      }
      }
    }{
      \underbrace{
        \big(
          \tfrac{1}{2}
          (F\wedge F) / \mathrm{dvol}_4
        \big)^2
      }
    }
  \\
  & \; = \phantom{+}
  -1
  \,+\;\;\;\;\;\;\,
  E \cdot E - B \cdot B
  \;\;\;\;+\;\;\;\;\;\;\;\;\;\;\;
  (B \cdot E)^2.
  \end{aligned}
\end{equation}
which string perturbation theory suggests
appears in the low-energy effective action
on D-branes \cite{FradkinTseytlin85}\cite{ACNY87} \cite{Leigh89}
(reviewed in
\cite{Savvidy99}\cite{Tseytlin00}\cite{Schwarz01}\cite[9.4]{Nastase15}).
Notice that DBI-action \eqref{DBIAction} encodes a critical value
of the electric field strength
\begin{equation}
  \label{CriticalDBIFieldStrength}
  E_{\mathrm{BI}}
  \;:=\;
  T
  \sqrt{
    \frac{
      T^2 + B^2
    }
    {
      T^2 + B_\parallel^2
    }
  }
\end{equation}
(as in \cite[(2.6)]{HashimotoOkaSonoda15}) since
\begin{equation}
  \begin{aligned}
       -
    \mathrm{det}
    \big(
      (\eta_{\mu \nu})
      +
      \tfrac{1}{T}
      (F_{\mu \nu})
    \big)
    \geq  0 \qquad
  & \Leftrightarrow
  \qquad
  E
  \;\leq\;
  T
  \sqrt{
    \frac{
      T^2 + B^2
    }
    {
      T^2 + B_\parallel^2
    }
  }
  \end{aligned}
\end{equation}
where
$B_\parallel := \tfrac{1}{\sqrt{E \cdot E}} B \cdot E$
for the component of the magnetic field parallel to the electric field.

\medskip
We may observe that the same critical field strength is implied
by electromagnetic vacuum polarization
if the electromagnetic field is assumed to fill a leptonic/hadronic
vacuum:

\medskip
\noindent {\bf The Schwinger effect.}
Consider a constant electromagnetic field
$(\vec E, \vec B)$ on 4d Minkowski spacetime,
away from the measure-0 subset of configurations with
$\vec E \cdot \vec B = 0$, hence assuming that $\vec E \cdot \vec B \neq 0$.
Then there exists a Lorentz transformation
$(\vec E, \vec B) \mapsto (\vec E', \vec B')$ such that
$\vec E' \parallel \vec B'$.
The \emph{Schwinger effect} of vacuum polarization,
in this case, predicts that
in a theory with electrically charged particles of
charge $e$ and mass $m$
(reviewed for electron/positron pair creation in \cite[(1.28)]{Dunne05},
and for quark/anti-quark pair creation in \cite[(2)]{HIS11})
an electric field strength around the \emph{Schwinger limit} scale
\begin{equation}
  \label{CriticalFieldStrength}
  E'_{\mathrm{crit}}
  \;:=\;
  \frac{
    m^2 \, c^3
  }{
    e\, \hbar
  }
\end{equation}
(reviewed in \cite[(1.3)]{Dunne05}\cite[(40)]{Martin08})
causes a sizable density of particle/anti-particle pairs
to be created out of the vacuum,\footnote{
Even without the detailed formula for the Schwinger mechanism
decay rate, in any of its variants,
one may understand the critical electric field strength \eqref{CriticalFieldStrength}
as that whose work done by its Lorentz force $e E'_{\mathrm{crit}}$
over the Compton wavelength $\lambda_{m} := \hbar / m c$
equals the rest mass $m c^2$ of the given particles:
$
  E'_{\mathrm{crit}}
  \;=\;
  \frac{m c^2}{e \lambda_m}
  \,.
$
}
causing non-linear corrections to the electromagnetic
dynamics or even decay of the vacuum, and hence in any case
a breakdown of ordinary electromagnetism.

\medskip
Now observe that a Lorentz-invariant expression
for the field strength $E'$ above,
as a function of the electromagnetic field $(\vec E, \vec B)$ in any frame,
is
given by the following expression (see also \cite[(1.6)]{Dunne05}):
$$
  E'(\vec E, \vec B)
  \;=\;
  \sqrt{
    \sqrt{
      \Big(
      \tfrac{1}{2}
      \big(
        \vec E \cdot \vec E - \vec B \cdot \vec B
      \big)
      \Big)^2
      +
      \big(
        \vec B \cdot \vec E
      \big)^2
    }
    +
    \tfrac{1}{2}
    \big(
      \vec E \cdot \vec E
      -
      \vec B \cdot \vec B
    \big)
  }\;.
$$
This follows immediately from the fact that this
expression is a Lorentz invariant (being a function of the
basic Lorentz invariants $\vec E\cdot \vec E - \vec B \cdot \vec B$
and $\vec E \cdot \vec B$, e.g. \cite{EscobarUrrutia14})
and that it evidently reduces to the absolute value
$E' = \sqrt{ \vec E' \cdot \vec E' }$ in any
Lorentz frame where $\vec E' \parallel \vec B'$.
But then basic algebra reveals (\cite[(2.6)]{HashimotoOkaSonoda15})
that
$$
  E'(\vec E, \vec B)
  \;=\;
  E'_{\mathrm{crit}}
  \qquad
  \Leftrightarrow
\qquad
  E \;=\;
  E'_{\mathrm{crit}}
  \;
  \sqrt{
    \frac{
      {E'}_{\!\mathrm{crit}}^2 + B^2
    }{
      {E'}_{\!\mathrm{crit}}^2 + B_{\parallel}^2
    }
  }\;.
$$
This is exactly the formula for the critical field strength
\eqref{CriticalDBIFieldStrength}
in Born-Infeld -theory,  if we identify
the string tension with the Schwinger limit
field strength $T = E'_{\mathrm{crit}}$,
as befits the picture that it is the flux tube strings
connecting a quark/anti-quark which counteracts their
separation.

\medskip

\section{Conclusion}
\label{ConcludingRemarks}

 Irrespective of the physics interpretation of the full field content
in the big table \eqref{TheFieldContent},
found to emerge in the super-exceptional M5-brane model \cite{FSS19d},
we have identified elusive field theoretic structure
in the previously somewhat mysterious expression \eqref{SuperExceptionalHFlux}
for the super-exceptional 3-flux form
(the ``hidden superalgebra of 11d supergravity''
\cite{DAuriaFre82}\cite{BAIPV04}\cite{ADR16})
including:
\begin{enumerate}[{\bf (i)}]
\vspace{-2mm}
\item a Skyrme current-type term (Lemma \ref{EmergenceOfTraceOverInternalIndices})
and
\vspace{-2mm}
\item the first DBI-correction term (Prop. \ref{FirstBICorrection}).
\end{enumerate}
\vspace{-.25cm}

\noindent Specifically the DBI-term \eqref{TheFirstDBICorrectionTerm} 
is the expected first-order interaction correction
to the free Perry-Schwarz Lagrangian \eqref{TheMaxwellLagrangian}
in the full interacting M5-brane model \cite[(63)]{PerrySchwarz97}
as seen explicitly after double dimensional reduction to the D4-brane model
\cite[6]{APPS97a}\cite[6]{APPS97b}. 

\vspace{.1cm}

\noindent In the same manner one may now identify further 
interaction terms in the full super-exceptinal PS-Lagrangian 
\eqref{SuperExceptionalPSLagrangianSummands}.
For instance, following Example \ref{OneFermionPionCouplingTerm}
there are evident 
vector-, axial- and isospin-current terms coupling the two fermion iso-doublets
to the various vector fields. This is to be discussed elsewhere.

\medskip

Thereby one possible interpretation of the field content
in \cref{FieldContent} certainly suggests itself:
We seem to be seeing the emergence of a variant/sector of
$\mathrm{SU}(2)$-flavor chiral hadrodynamics
(e.g. \cite{SerotWalecka92}\cite{Scherer03}\cite{BM07}),
on the single M5-brane at an $\mathrm{A}_1$-singularity,
in a way that is different from but akin to
(see also Remark \ref{RoleOfKKModeExpansion})
the hadrodynamics seen
on flavor D-branes \cite{KarchKatz02} in holographic QCD models
such as the Witten-Sakai-Sugimoto (WSS) model \cite{Rebhan14}\cite{RhoEtAl16}.
\footnote{Recently is has been pointed out \cite{IvanovaLechtenfeldPopov18}
that the core mechanism of the hadron sector in the
WSS model, which is KK-reduction of 5d Yang-Mills theory,
also appears on the M5-brane.}

\medskip
This seems remarkable, since the
only input for the super-exceptional M5-brane model \cref{SuperExceptionalMGeometry}
is the pair of
M-brane super-cocycles \eqref{H3exTrivializesPullbackOf4Flux} and \eqref{M5BraneCocycle},
which jointly form a single cocycle in super rational Cohomotopy theory
\cite[2.5]{Sati13}\cite{FSS15}\cite[3.2]{HSS18} (review in \cite[(57)]{FSS19a}).
Elsewhere we have shown that taking seriously this cohomotopical
nature of the M-brane charges (``Hypothesis H'')
implies a fair bit of topological/homotopical structure expected in M-theory
in general \cite{FSS19b}\cite{SS19a} and for the M5-brane specifically
\cite{FSS19c}\cite{SS19a} (reviewed in \cite{Sc20}). In particular,
it implies \cite{FSS20a}\cite{Roberts20}
the emergence of a topological sector of
an $\mathrm{SU}(2)_R$-gauge field on the heterotic M5-brane,
similar to that considered in \cite{SaemannSchmidt17}, whose
origin matches that of the $\mathrm{SU}(2)_R$-valued local differential form
data found here (see Remark \ref{SkyrmeBaryonCurrent}).
All taken together, we seem to be seeing a coherent non-abelian
M5-brane model specifically for single heterotic M5-branes with
$\mathrm{SU}(2)$-flavor gauge fields on their worldvolume.

\medskip

This raises the question of the extent to which
the $\mathrm{SU}(2)$-flavor gauge theory emerging in
the super-exceptional cohomotopical M5-brane model
might be related to actual real-world hadrodynamics,
possibly as a version in the deep M-theoretic regime (on $N =1$ brane!)
of the D-brane models for holographic QCD that
have to work in the (unrealistic) $N \gg 1$ supergravity regime.

\vspace{-2mm}
\begin{enumerate}[{\bf (i)}]
\item On the one hand, the super-exceptional
PS-Lagrangian \eqref{SuperExceptionalPSLagrangianSummands} contains
a wealth of interaction terms, all of which being
Spin-, Isospin- and Hypercharge-invariant
(due to Theorem \ref{FieldsAtSingularity},
by $\mathrm{Spin}(10,1)$-invariance of the unreduced model),
hence identifiable
among the list
\cite[5.2]{OllerVerbeniPrades06}\cite[Tab. 1]{FrinkMeissner06}\cite{Geng13}
of possible interactions in the effective field theory of hadrons.
Interestingly, the super-exceptional M5-brane model organizes these
hadron interaction terms in
a supersymmetric \eqref{SuperExceptionalBianchiIdentities}
KK-tower \eqref{KKExpansion}.
Notice here that while both

\begin{enumerate}[{\bf (a)}]
\vspace{-3mm}
\item  supersymmetry and
\vspace{-1mm}
\item KK-modes
\end{enumerate}

\vspace{-3mm}
\noindent remain notoriously hypothetical
in the unconfined color-charged sector of the standard model,
both are in fact observed, to a reasonable degree
of accuracy, in the confined flavor-sector of nature.
This is, respectively,
the phenomenon of

\begin{enumerate}[{\bf (a)}]
\vspace{-3mm}
\item  {\it hadron supersymmetry}
\cite{Miyazawa66}\cite{Miyazawa68}\cite{CG85}
\cite{CG88}
(reviewed in \cite{Lichtenberg99}\cite{deTeramond17}\cite{Brodsky18}\cite{Brodsky19})
and
\vspace{-1mm}
\item  the success of the {\it dimensional deconstruction}
of vector meson models \cite{SonStephanov03}\cite[5.2]{SakaiSugimoto04}\cite[3.3]{SakaiSugimoto04}
in holographic QCD (see \cite[Fig. 15.2]{Sugimoto16}).
\end{enumerate}

\vspace{-4mm}
 \item On the other hand, various standard Lagrangian terms
considered in the literature
on chiral perturbation theory and quantum hadrodynamics
are clearly missing in the super-exceptional PS-Lagrangian \eqref{SuperExceptionalPSLagrangianSummands};
not the least the kinetic terms for the mesons.
But since the one kinetic term that does appear, the one for the photon
\eqref{MaxwellLagrangianAppears},
crucially appears from the self-duality constraint \eqref{DualPhotonField}
on the ordinary 3-form flux, this might just indicate that the super-exceptional
version of the 3-flux self-duality remains to be understood and
to be implemented.
\footnote{Note that in the presence of $H_3$ one can consider a noncommutative worldvolume
for the M5-brane -- see \cite{MS}.}
\end{enumerate}

\vspace{-2mm}
It is worth recalling that none of the existing models of hadrodynamics,
be it
chiral perturbation theory,
quark-bag models,
vector meson dominance,
Walecka-type QHD models etc.,
have been derived from fundamental
principles -- notably not from QCD.
This is the open ``Confinement Problem'' \cite{Greensite11},
one of the ``Millennium Problems'' \cite{CMI}\cite{JaffeWitten},
the ``Holy Grail'' of nuclear physics
\cite{Holstein99}\cite[13.1.9]{Guidry08}. Instead
they are all models in effective field theory 
using clues from experiment.
Therefore, even a partial hint for a possible emergence of
confined hadrodynamics from first (cohomotopical) principles of
M5-brane theory seems profound and interesting.

\paragraph{\large Acknowledgements.}
The authors thank Laura Andrianopoli and Dmitri Sorokin 
for useful comments on an earlier version of this article.
We also thank the participants of the NYUAD
workshop {\it M-theory and Mathematics} in January 2020
for useful discussions, especially Neil Lambert, Christian Saemann
and Dmitri Sorokin.

\vspace{.5cm}
\noindent Domenico Fiorenza,  {\it Dipartimento di Matematica, La Sapienza Universita di Roma, Piazzale Aldo Moro 2, 00185 Rome, Italy.}

 \medskip
\noindent Hisham Sati, {\it Mathematics, Division of Science, New York University Abu Dhabi, UAE.}

 \medskip
\noindent Urs Schreiber,  {\it Mathematics, Division of Science, New York University Abu Dhabi, UAE; on leave from Czech Academy of Science, Prague.}

\end{document}